\crefname{algocf}{Algorithm}{Algorithms}
\Crefname{algocf}{Algorithm}{Algorithms}
\theoremstyle{plain}
\newtheorem{theorem}{Theorem}[section]
\newtheorem{lemma}[theorem]{Lemma}
\newtheorem{cor}[theorem]{Corollary}
\theoremstyle{definition}
\newtheorem{claim}[theorem]{Claim}
\newtheorem{definition}[theorem]{Definition}
\newtheorem{remark}[theorem]{Remark}
\newenvironment{proofof}[1]{\begin{proof}[Proof of #1]}{\end{proof}}
\DeclarePairedDelimiter{\tvdbasic}{\lVert}{\rVert}
\newcommand{\@tvdstar}[2]{\tvdbasic*{#1 - #2}_{\mathrm{TV}}}
\newcommand{\@tvdnostar}[3][]{\tvdbasic[#1]{#2 - #3}_{\mathrm{TV}}}
\newcommand{\tvd}{\@ifstar\@tvdstar\@tvdnostar}
\renewcommand{\left}{\mleft}
\renewcommand{\right}{\mright}
\renewcommand{\epsilon}{\varepsilon}
\renewcommand{\Pr}{\operatorname*{\mathbf{Pr}}}
\newcommand{\F}{\mathbbm{F}}
\newcommand{\Z}{\mathbbm{Z}}
\renewcommand{\hat}{\widehat}
\newtcolorbox{construction}[2][]
{
	breakable,
	colframe = gray!50,
	colback  = gray!10,
	coltitle = gray!10!black,
	before skip = 10pt,
	after skip = 10pt,
	title    = \textbf{#2},
	#1,
}
\newtcolorbox{graphview}[2][]
{
	breakable,
	colframe = black!30,
	colback  = black!0,
	coltitle = gray!10!black,
	before skip = 10pt,
	after skip = 10pt,
	title    = \textbf{#2},
	#1,
}
\newtcolorbox{notation}[2][]
{
	breakable,
	colframe = black!30,
	colback  = black!0,
	coltitle = gray!10!black,
	before skip = 10pt,
	after skip = 10pt,
	title    = \textbf{#2},
	#1,
}
\newcommand{\floor}[1]{\left\lfloor {#1} \right\rfloor}
\newcommand{\bk}[1]{\left( #1 \right)}
\newcommand{\bigbk}[1]{\big({#1}\big)}
\newcommand{\midbk}[1]{({#1})}
\newcommand{\Bigbk}[1]{\Big({#1}\Big)}
\newcommand{\biggbk}[1]{\bigg({#1}\bigg)}
\newcommand{\Bk}[1]{\left[ #1 \right]}
\newcommand{\bigBk}[1]{\big[ #1 \big]}
\newcommand{\BK}[1]{\left\{ #1 \right\}}
\newcommand{\midBK}[1]{\{ #1 \}}
\newcommand{\abs}[1]{\left|{#1}\right|}
\newcommand{\midabs}[1]{|{#1}|}
\newcommand{\bigabs}[1]{\big|{#1}\big|}
\newcommand{\angbk}[1]{\left\langle{#1}\right\rangle}
\newcommand{\E}{\mathop{\mathbb{E}}}
\renewcommand{\Pr}{\mathop{\mathrm{Pr}}}
\newcommand{\poly}{\mathrm{poly}}
\newcommand\numberthis{\addtocounter{equation}{1}\tag{\theequation}}
\newcommand{\CW}{\mathrm{CW}}
\newcommand{\sym}{\textup{sym}}
\newcommand{\degen}{\unrhd}
\newcommand{\Aut}{\textup{Aut}}
\newcommand{\BKmid}[2]{{\mleft\{ #1 ~\middle|~ #2 \mright\}}}
\renewcommand{\split}{\textsf{split}}
\newcommand{\Split}{\textsf{split}}
\newcommand{\eqdef}{\eqqcolon}
\newcommand{\defeq}{\coloneqq}
\newcommand{\rot}{\textup{rot}}
\newcommand{\swap}{\textup{swap}}
\newcommand{\tauthm}{\ensuremath{\text{Sch\"onhage's $\tau$ theorem (\Cref{thm:Schonhage})}}}
\newcommand{\A}{\textup{A}}
\newcommand{\B}{\textup{B}}
\newcommand{\T}{\mathcal{T}}
\renewcommand{\S}{\mathcal{S}}
\newcommand{\revtext}{\textup{rev}}
\newcommand{\nrot}{\textup{(nrot)}}
\let\save@mathaccent\mathaccent
\newcommand*\if@single[3]{%
  \setbox0\hbox{${\mathaccent"0362{#1}}^H$}%
  \setbox2\hbox{${\mathaccent"0362{\kern0pt#1}}^H$}%
  \ifdim\ht0=\ht2 #3\else #2\fi
  }
\newcommand*\rel@kern[1]{\kern#1\dimexpr\macc@kerna}
\newcommand*\widebar[1]{\@ifnextchar^{{\wide@bar{#1}{0}}}{\wide@bar{#1}{1}}}
\newcommand*\wide@bar[2]{\if@single{#1}{\wide@bar@{#1}{#2}{1}}{\wide@bar@{#1}{#2}{2}}}
\newcommand*\wide@bar@[3]{%
  \begingroup
  \def\mathaccent##1##2{%
    \let\mathaccent\save@mathaccent
    \if#32 \let\macc@nucleus\first@char \fi
    \setbox\z@\hbox{$\macc@style{\macc@nucleus}_{}$}%
    \setbox\tw@\hbox{$\macc@style{\macc@nucleus}{}_{}$}%
    \dimen@\wd\tw@
    \advance\dimen@-\wd\z@
    \divide\dimen@ 3
    \@tempdima\wd\tw@
    \advance\@tempdima-\scriptspace
    \divide\@tempdima 10
    \advance\dimen@-\@tempdima
    \ifdim\dimen@>\z@ \dimen@0pt\fi
    \rel@kern{0.6}\kern-\dimen@
    \if#31
      \overline{\rel@kern{-0.6}\kern\dimen@\macc@nucleus\rel@kern{0.4}\kern\dimen@}%
      \advance\dimen@0.4\dimexpr\macc@kerna
      \let\final@kern#2%
      \ifdim\dimen@<\z@ \let\final@kern1\fi
      \if\final@kern1 \kern-\dimen@\fi
    \else
      \overline{\rel@kern{-0.6}\kern\dimen@#1}%
    \fi
  }%
  \macc@depth\@ne
  \let\math@bgroup\@empty \let\math@egroup\macc@set@skewchar
  \mathsurround\z@ \frozen@everymath{\mathgroup\macc@group\relax}%
  \macc@set@skewchar\relax
  \let\mathaccentV\macc@nested@a
  \if#31
    \macc@nested@a\relax111{#1}%
  \else
    \def\gobble@till@marker##1\endmarker{}%
    \futurelet\first@char\gobble@till@marker#1\endmarker
    \ifcat\noexpand\first@char A\else
      \def\first@char{}%
    \fi
    \macc@nested@a\relax111{\first@char}%
  \fi
  \endgroup
}
\xpatchcmd\thmt@restatable{%
\csname #2\@xa\endcsname\ifx\@nx#1\@nx\else[{#1}]\fi
}{%
\ifthmt@thisistheone
\csname #2\@xa\endcsname\ifx\@nx#1\@nx\else[{#1}]\fi
\else
\csname #2\@xa\endcsname[{Restated}]
\fi}{}{}
\renewcommand{\tilde}{\widetilde}
\renewcommand{\bar}{\widebar}
\renewcommand{\F}{\mathbb F}
\newcommand{\Span}{\textup{span}}
\newcommand{\ourbound}{2.371866}
\renewcommand{\l}{\ell}
\newcommand{\smallsub}{\scriptscriptstyle}
\newcommand{\ind}{\mathbbm{1}}
\newcommand{\mymiddle}{\;\middle|\;}
\newcommand{\bigbinom}[2]{\left( \begin{array}{c} #1\\#2 \end{array} \right)}
\DeclareMathOperator*{\argmax}{arg\,max}
\newcommand{\hashx}{h_{\textup{X}}}  %
\newcommand{\hashy}{h_{\textup{Y}}}
\newcommand{\hashz}{h_{\textup{Z}}}
\newcommand{\alphx}[1][\alpha]{{#1}_{\smallsub \textup{X}}}
\newcommand{\alphy}[1][\alpha]{{#1}_{\smallsub \textup{Y}}}
\newcommand{\alphz}[1][\alpha]{{#1}_{\smallsub \textup{Z}}}
\let\alphax\alphx
\let\alphay\alphy
\let\alphaz\alphz
\newcommand{\valthree}{V_{\tau}^{(3)}}
\newcommand{\valone}{V_{\tau}^{(\textup{nrot})}}
\newcommand{\valsix}{V_{\tau}^{(6)}}
\newcommand{\lvl}{\ensuremath{\l}}  %
\newcommand{\nextlvl}{\ensuremath{(\l \! + \! 1)}}
\newcommand{\lastlvl}{\ensuremath{(\l \! - \! 1)}}
\newcommand{\numxblock}{N_{\textup{BX}}}
\newcommand{\numyblock}{N_{\textup{BY}}}
\newcommand{\numzblock}{N_{\textup{BZ}}}
\newcommand{\alphabx}[1][\alpha]{#1_{\smallsub \textup{BX}}}
\newcommand{\alphaby}[1][\alpha]{#1_{\smallsub \textup{BY}}}
\newcommand{\alphabz}[1][\alpha]{#1_{\smallsub \textup{BZ}}}
\newcommand{\alphant}[1][\alpha]{#1_{\smallsub \textup{N}}}
\newcommand{\numtriple}{N_{\alphax, \alphay, \alphaz}}
\newcommand{\numretain}{N_{\textup{ret}}}
\newcommand{\numalpha}[1][\alpha]{N_{#1}}
\newcommand{\distShareMargin}[1][\alpha]{D_{#1}}
\newcommand{\alphap}[1][\alpha]{#1_{\smallsub \textup{P}}}  %
\newcommand{\alphaval}[1][\alpha]{#1_{\smallsub \textit{V}_\tau}}  %
\newcommand{\pcomp}{p_{\textup{comp}}}
\newcommand{\splres}{\tilde{\alpha}}  %
\newcommand{\splresA}{\splres_{\textup{A}}}
\newcommand{\splresB}{\splres_{\textup{B}}}
\newcommand{\splresAvg}{\splres_{\textup{avg}}}
\newcommand{\Thole}{T_{\mathrm{hole}}}
\let\splresAverage\splresAvg
\newcommand{\Sym}{\mathcal{S}}
\newcommand{\shuf}{\phi}
\newcommand{\shufi}[1][i]{\shuf_{#1}}
\newcommand{\fracnonholeIJK}{\eta_{\smallsub I,J,K}}
\newcommand{\splresX}{\splres_{\smallsub \textup{X}}}
\newcommand{\splresY}{\splres_{\smallsub \textup{Y}}}
\newcommand{\splresZ}{\splres_{\smallsub \textup{Z}}}
\newcommand{\typicalset}[1][K]{B_{\textup{typical}, #1}}
\newcommand{\plusplusk}[1][k]{\textup{+}, \textup{+}, #1}
\newcommand{\splresavg}{\bar{\alpha}}
\newcommand{\complementshape}{i^{(r)} \!-\! i', \, j^{(r)} \!-\! j', \, k^{(r)} \!-\! k'}
\newcommand{\usefulset}[1][X_I, Y_J, Z_K]{B_{\textup{useful}}(#1)}
\newcommand{\strongset}[1][X_I, Y_J, Z_K]{B_{\textup{strong}}(#1)}
\newcommand{\param}{\mathcal{X}}
\newcommand{\paramglob}{\mathcal{X}_{\textup{glob}}}
\newcommand{\Vglob}{V_{\textup{glob}}}
\newcommand{\objfun}{\textsf{obj}}
\newcommand{\numzerotwotwo}{N_{022}}
\newcommand{\splresavgZ}{\bar{\alpha}_{\smallsub \textup{Z}}}
\begin{document}

\author{
  Ran Duan \thanks{Email: duanran@mail.tsinghua.edu.cn.} \\
  \small Tsinghua University \\
  \and
  Hongxun Wu \thanks{Email: wuhx@berkeley.edu.} \\
  \small UC Berkeley \\
  \and
  Renfei Zhou \thanks{Email: zhourf20@mails.tsinghua.edu.cn.} \\
  \small Tsinghua University
}

\title{Faster Matrix Multiplication via Asymmetric Hashing}
\date{}
\setcounter{page}{0} \clearpage
\maketitle
\thispagestyle{empty}
\begin{abstract}
  Fast matrix multiplication is one of the most fundamental problems in algorithm research. The exponent of the optimal time complexity of matrix multiplication is usually denoted by $\omega$. This paper discusses new ideas for improving the laser method for fast matrix multiplication. We observe that the analysis of higher powers of the Coppersmith-Winograd tensor [Coppersmith \& Winograd 1990] incurs a ``combination loss'', and we partially compensate for it using an asymmetric version of CW's hashing method. By analyzing the eighth power of the CW tensor, we give a new bound of $\omega < \ourbound$, which improves the previous best bound of $\omega < 2.372860$ [Alman \& Vassilevska Williams 2020]. Our result breaks the lower bound of $2.3725$ in [Ambainis, Filmus \& Le Gall 2015] because of the new method for analyzing component (constituent) tensors.
\end{abstract} \newpage

\maketitle

\section{Introduction}
The time complexity of multiplying two $n\times n$ matrices is usually denoted by $O(n^{\omega+o(1)})$ for some real number $\omega$. (It is easy to see that $2\leq \omega\leq 3$.) Although it has been studied for more than 50 years, the exact value of $\omega$ is still unknown, and many people believe that $\omega=2$. The determination of the constant $\omega$ would have wide implications. Not only do many matrix operations have similar complexities as fast matrix multiplication (FMM) algorithms, such as LUP decomposition, matrix inversion, determinant~\cite{AAH_book,Bunch1974}, algorithms for many combinatorial problems can also be accelerated by FMM algorithms, including graph problems like transitive closure (see~\cite{AAH_book}), unweighted all-pair shortest paths~\cite{seidel1995,zwick2002}, all-pair bottleneck path~\cite{APBP}, and other problems such as context-free grammar parsing~\cite{valiant1975} and RNA-folding~\cite{bringmann2019}.

The first algorithm breaking cubic time bound is by Strassen in 1969~\cite{strassen1969}, who showed that $\omega\leq\log_2 7<2.8074$. After that, there is a series of research (e.g. \cite{pan1978,bini1979,schonhage1981partial,romani1982,coppersmith1981,strassen1986,coppersmith1987matrix,stothers2010,williams2012,legall2014}) improving the upper bound for $\omega$. The current best bound is $\omega<2.3728596$ given by the refined laser method of Alman and Vassilevska Williams~\cite{alman2021}. Very recently, DeepMind~\cite{deepmind22} used a reinforcement learning method to improve the ranks of several small matrix multiplication tensors. They gave a time bound of $O(n^{2.778})$ over $\mathbb{F}_2$.

The most recent improvements along this line are all based on applying Strassen's laser method \cite{strassen1986} to Coppersmith-Winograd tensors \cite{coppersmith1987matrix}, $\mathrm{CW}_q$. The laser method first picks a tensor $T$ and takes its $N$-th power $T^{\otimes N}$. This tensor $T$ is usually efficient to compute (having a small asymptotic rank). Then it forces a subset of variables in $T^{\otimes N}$ to be zero (zeroing-out) and degenerates it into the direct sum of independent matrix multiplication tensors. This gives an efficient algorithm for computing matrix multiplications. 

The original work of Coppersmith and Winograd \cite{coppersmith1987matrix} not only analyzed $\mathrm{CW}_q$, but also its second power, $\mathrm{CW}_q^{\otimes 2}$. By a clever hashing technique using the Salam-Spencer set \cite{salem1942, Behrend1946}, their analysis of $\mathrm{CW}_q$ shows that $\omega < 2.38719$. For the second power, they applied the laser method to $\mathrm{CW}_q$ after merging some subtensors of $\mathrm{CW}_q^{\otimes 2}$ into larger matrix multiplications. This gives a better bound of $\omega < 2.375477$. Stothers \cite{stothers2010} and Vassilevska Williams \cite{williams2012} independently improved the analysis to the fourth and eighth powers of $\mathrm{CW}_q$ while using computer programs to automate complicated calculations. Le Gall \cite{legall2014} further improved it to the $32$-th power by formulating it as a series of convex optimization problems. 

Although analyzing higher and higher powers of $\mathrm{CW}_q$ gives improved bounds for $\omega$, the analysis is not tight. Alman and Vassilevska Williams \cite{alman2021} focus on the extra loss in hash moduli in the higher-power analysis, and they compensate for this loss with their refined laser method. This gives the current best algorithm. Note that these analyses are all recursive, namely, the value of the $q$-th power depends on the bounds for the $(q - 1)$-th power. 

In this paper, we identify a more implicit ``combination loss'', which is our main contribution. Such loss arises not from a single recursion level, but from the structure of adjacent levels. To demonstrate that this observation indeed leads to an improved algorithm, we compensate for this loss using an asymmetric hashing method. This improves the analysis of the second power by Coppersmith and Winograd \cite{coppersmith1987matrix} to $\omega < 2.374631$. We also generalize it to higher powers and obtain the improved bound of $\omega < \ourbound$.\footnote{This bound is slightly better than the previous version of this paper due to more flexibility in optimizing parameters.} A similar asymmetric hashing method was used in \cite{coppersmith1987matrix} to analyze an asymmetric tensor. Fast rectangular matrix multiplication algorithms (e.g. \cite{coppersmith1982rapid, coppersmith1997rectangular, huang1998fast, le2012faster, gall2018improved}) also use asymmetric hashing to degenerate the tensor power $T^{\otimes N}$ in the laser method into independent rectangular matrix multiplications. In this paper, we use asymetric hashing for a different purpose, that is, to compensate for the ``combination loss''.

On the limitation side, Ambainis, Filmus and Le Gall \cite{ambainis2015fast} showed that if we analyze the lower powers of $\mathrm{CW}_q$ using previous approaches, including the refined laser method, we cannot give a better upper bound than $\omega < 2.3725$ even if we go to arbitrarily high powers. Our bound of $\omega < \ourbound$ breaks such a limitation by analyzing only the $8$th power. This is because our analysis improves the values for lower powers and this limitation no longer applies. They also proved a limitation of $2.3078$ for a wider class of algorithms, which includes our algorithm. There are also works that prove more general barriers \cite{doi:10.1137/19M124695X, alman_et_al:LIPIcs:2018:8360,v017a001,v017a002,DBLP:conf/mfcs/BlaserL20}. Our approach is also subject to these barriers.

Another approach to fast matrix multiplication is the group theoretical method of Cohn and Umans \cite{1238217, 1530730, cohn2013fast}. There are also works on its limitations \cite{alon2012sunflowers, BCC, arxiv.1712.02302, DBLP:journals/corr/abs-2204-03826}. 

\subsection{Organization of this Paper}

In Section~\ref{sec:overview}, we will first give an overview of the laser method and our improved algorithm. In Section~\ref{sec:prelim}, we introduce the concepts and notations that we will use in this paper, as well as some basic building blocks, for example, the asymmetric hashing method. To better illustrate our ideas, in Section~\ref{sec:2nd}, we give an improved analysis of the second power of the CW tensor. In Sections~\ref{sec:hole_lemma}, \ref{sec:global}, and \ref{sec:component}, we will extend the analysis to higher powers. In Section~\ref{sec:result}, we discuss our optimization program and give the numerical results.

\section{Technical Overview}
\label{sec:overview}
Our improvement is based on the observation that a hidden ``combination loss'' exists. In the overview, we will explain our ideas based on the second-power analysis of Coppersmith-Winograd \cite{coppersmith1987matrix}. Their analysis implies that $\omega < 2.375477$ which is still the best upper bound via the second power of $\mathrm{CW}_q$. We will point out the ``combination loss'' in their analysis, and present our main ideas which serve to compensate for such loss. With certain twists, the same ideas also apply to higher powers and allow us to obtain the improved bound for $\omega$.

\subsection{Coppersmith-Winograd Algorithm} \label{sec:CW}

It is helpful to start with a high-level description of Coppersmith-Winograd~\cite{coppersmith1987matrix}. Our exposition here differs a little from their original work. Specifically, their work uses the values of subtensors in a black-box way. We open up this black box and look at the structure inside those subtensors. This change will make the combination loss visible. 

In the following, we will assume some familiarity with the Coppersmith-Winograd Algorithm. For an exposition of their algorithm, the reader may also refer to the excellent survey by Bläser~\cite{gs005}. 

\paragraph*{The 2nd-power of the CW tensor.}

To begin, we first set up some minimum notation. The starting point of the laser method is a tensor $T$ that can be efficiently computed (having a small asymptotic rank). The \textit{Coppersmith-Winograd Tensor} $\mathrm{CW}_q$, whose asymptotic rank is $q+2$, is widely used in previous works. The exact definition of $\CW_q$ is the following.\footnote{Readers may also refer to \Cref{sec:CWtensor}.}
\[\CW_q = \sum_{i=1}^q (x_i y_i z_0 + x_i y_0 z_i + x_0 y_i z_i) + x_0 y_{0} z_{q+1} + x_{0} y_{q+1} z_0 + x_{q+1} y_0 z_{0} .\]
We will only use several properties of $\CW_q$. First, it is a tensor over variable sets $X = \{x_0, x_1, \dots, x_{q + 1}\}$, $Y = \{y_0, y_1, \dots, y_{q + 1}\}$, and $Z = \{z_0, z_1, \dots, z_{q + 1}\}$. We let $X_0 = \{x_0\}, X_1 = \{x_1, x_2, \dots, x_q\}, X_2 = \{x_{q+1}\}$ and define $\{Y_j\}_{j = 0,1,2}, \{Z_k\}_{k= 0,1,2}$ similarly. Second, for the partition $X = X_0 \cup X_1 \cup X_2$, $Y = Y_0 \cup Y_1 \cup Y_2$, and $Z = Z_0 \cup Z_1 \cup Z_2$, the following holds:

\begin{itemize}
\item For all $i + j + k \neq 2$, the subtensor of $\mathrm{CW}_q$ over $X_i$, $Y_j$, $Z_k$ is zero. 
\item For all $i + j + k = 2$, the subtensor of $\mathrm{CW}_q$ over $X_i$, $Y_j$, $Z_k$ is a matrix multiplication tensor, denoted by $T_{i,j,k}$.
\end{itemize}

We call this partition the \emph{level-1 partition}. $\mathrm{CW}_q$ is the summation of all such $T_{i,j,k}$'s:
$$\mathrm{CW}_q = \sum_{i + j + k = 2} T_{i,j,k} = T_{0,1,1}+T_{1,0,1}+T_{1,1,0}+T_{0,0,2}+T_{0,2,0}+T_{2,0,0}.$$

For any two sets $A$ and $B$, their Cartesian product $A \times B$ is $\{(a, \, b) \mid a \in A, \, b \in B\}$. The second power $\mathrm{CW}_q^{\otimes 2}$ is a tensor over variable sets $\tilde{X} = X \times X$, $\tilde{Y} = Y \times Y$, and $\tilde{Z} = Z \times Z$:
$$\mathrm{CW}_q^{\otimes 2} = \sum_{i + j + k = 2} \ \sum_{i' + j' + k' = 2} T_{i,j,k} \otimes T_{i',j',k'}.$$

For $\tilde{X} = X \times X$, the product of two level-1 partitions gives $\tilde{X} = \bigcup_{i', i''} X_{i'} \times X_{i''}$. We define the level-2 partition to be a coarsening of this. For each $i \in \{0, 1, 2, 3, 4\}$, we define
\[\tilde{X}_i \defeq \bigcup_{i' + i'' = i} X_{i'} \times X_{i''}, \quad
  \tilde{Y}_j \defeq \bigcup_{j' + j'' = j} Y_{j'} \times Y_{j''}, \quad
  \tilde{Z}_k \defeq \bigcup_{k' + k'' = k} Z_{k'} \times Z_{k''}.\]
For example, $\tilde{X}_2 = (X_0 \times X_2) \cup (X_1 \times X_1) \cup (X_2 \times X_0)$. The level-2 partition is given by $\tilde{X} = \tilde{X}_0 \cup \tilde{X}_1 \cup \cdots \cup \tilde{X}_4$, $\tilde{Y} = \tilde{Y}_0 \cup \tilde{Y}_1 \cup \cdots \cup \tilde{Y}_4$, and $\tilde{Z} = \tilde{Z}_0 \cup \tilde{Z}_1 \cup \cdots \cup \tilde{Z}_4$.

For any $i + j + k = 4$, let $T_{i,j,k}$ be the subtensor of $\mathrm{CW}_q^{\otimes 2}$ over $\tilde{X}_i, \tilde{Y}_j, \tilde{Z}_k$. We have
$$T_{i,j,k} = \sum_{i' + j' + k' = 2} T_{i', j', k'} \otimes T_{i - i',j - j',k - k'}.$$
For example, $T_{1,1,2} = T_{1,1,0} \otimes T_{0,0,2} + T_{0,0,2} \otimes T_{1,1,0} + T_{1,0,1} \otimes T_{0,1,1} + T_{0,1,1} \otimes T_{1,0,1}$.\footnote{For notational convenience, we let $T_{i,j,k} = 0$ if any of $i,j,k$ is negative.}

The tensor $\mathrm{CW}_q^{\otimes 2}$ is the sum of all such $T_{i,j,k}$'s:
\begin{align*}
\mathrm{CW}_q^{\otimes 2} = &\sum_{i + j + k = 4} T_{i,j,k} \\ = &\phantom{{}+{}} T_{0,0,4} + T_{0,4,0} + T_{4,0,0} \\ &+ T_{0,1,3} + T_{0,3,1} + T_{1,0,3} + T_{1,3,0} + T_{3,0,1} + T_{3,1,0} \\ &+ T_{0,2,2} + T_{2,0,2} + T_{2,2,0} \\ &+ T_{1,1,2} + T_{1,2,1} + T_{2,1,1}.
\end{align*}
Note that these $T_{i,j,k}$'s ($i + j + k = 4$) may not be matrix multiplication tensors. For the subtensors of $\mathrm{CW}_q^{\otimes 2}$, it is true that $T_{0,0,4}, T_{0,1,3}, T_{0,2,2}$ (and their permutations) are indeed all matrix multiplication tensors, while $T_{1,1,2}$ (and its permutations) are not. We will need this important fact in the two-level analysis of Coppersmith and Winograd. 

\paragraph*{The Laser Method.} We say a set of subtensors $T_1, T_2, \dots, T_\ell$ of $T$ is \underline{independent} if and only if the following two conditions hold:
\begin{itemize}
\item These subtensors are supported on disjoint variables. 
\item The restriction of $T$ over the union of the supports of the subtensors $T_1, \ldots, T_\ell$ is exactly $T_1 + T_2 + \cdots + T_\ell$. (That is, there cannot be any additional term in $T$ other than those from $T_1, T_2, \dots, T_\ell$.)
\end{itemize}
As we have seen, the tensor $\mathrm{CW}_q$ is the sum of many matrix multiplication tensors $T_{i,j,k}$ ($i + j + k = 2$). If these $T_{i,j,k}$'s were independent, we would have succeeded, as we would be able to efficiently compute many independent matrix multiplications using $\mathrm{CW}_q$. A direct application of Sch\"onhage's $\tau$ theorem (See Theorem \ref{thm:Schonhage}) would give us an upper bound on $\omega$. 

Clearly, these $T_{i,j,k}$'s are not even supported on disjoint variables. The first step of the laser method is to take its $n$-th tensor power. Since the asymptotic rank of $\mathrm{CW}_q$ is $q + 2$, the $n$-th power needs $(q + 2)^{n}$ many multiplications to compute. ($n$ will later go to infinity.) Within $\mathrm{CW}^{\otimes n}_q$, there are certainly many non-independent subtensors that are matrix multiplication tensors. We will carefully zero out some variables, i.e. setting some of them to zero. After zeroing out, all remaining matrix multiplication tensors will be independent. Therefore, we can apply Sch\"onhage's $\tau$ theorem. This is the single-level version of the laser method.

Such a single-level analysis gives a bound of $\omega < 2.38719$. By considering the second power, Coppersmith and Winograd \cite{coppersmith1987matrix} get an improved bound of $\omega < 2.375477$. Note in the analysis above, for any two non-independent matrix multiplication tensors, only one of them will survive the zeroing-out. Looking ahead, the two-level analysis will further exploit the tensor power structure and ``merge'' some non-independent matrix multiplication tensors into a larger one.

\paragraph*{Variable Blocks.} In the two-level analysis, we take the $2n$-th power of $\mathrm{CW}_q$. After that, the level-1 partition of the X variables becomes $(X_{\hat i_1} \times X_{\hat 
 i_2}\times \cdots \times X_{\hat i_{2n}})_{\hat i_1,\hat i_2, \dots, \hat i_{2n} \in \{0,1,2\}}$ (similarly for the Y and Z variables). Let $\hat I\in \{0,1,2\}^{2n}$ be a sequence. We call each $X_{\hat I} \defeq X_{\hat i_1} \times X_{\hat i_2}\times \cdots \times X_{\hat i_{2n}}$ a \underline{level-1 variable block}. Same for $Y_{\hat J}$ and $Z_{\hat K}$.\footnote{In the overview, we always let $I=(i_1,\cdots,i_n)$ and $\hat I =(\hat i_1,\cdots,\hat i_{2n})$. Other sequences ($I',\hat{I}', J, K$, etc.) are defined similarly. }

In level-2, we first equivalently view $(\mathrm{CW}_q)^{\otimes 2n}$ as $(\mathrm{CW}_q^{\otimes 2})^{\otimes n}$, i.e., the $n$-th power of $\mathrm{CW}_q^{\otimes 2}$. For each sequence $I \in \{0,1,2,3,4\}^{n}$, we call its corresponding block $X_{I} \defeq \tilde{X}_{i_1} \times \tilde{X}_{i_2} \times \cdots \times \tilde{X}_{i_n}$ a \underline{level-2 variable block}. We define $Y_J, Z_K$ similarly. (Below, we will always state our definitions in terms of $X$ variable blocks. The same definitions always apply to Y/Z blocks as well.)

\paragraph*{Two-level Analysis.} We are now ready to describe the two-level analysis of Coppersmith-Winograd in detail.
\begin{itemize}
\item \textbf{The Analysis for Level-2.} At this level, our goal is to first zero out $(\mathrm{CW}_q)^{\otimes 2n}$ into independent subtensors, each isomorphic to 
\[\T^{\alpha} \defeq \bigotimes_{i + j + k = 4} T_{i,j,k}^{\otimes \alpha(i,j,k) n}\]
for some distribution $\alpha : \{(i,j,k)\}_{i + j + k = 4} \rightarrow [0,1]$. Moreover, each $\alpha(i,j,k)$ has to be a multiple of $1/n$, so that the exponent $\alpha(i,j,k) n$ is always an integer. There are at most $\poly(n)$ many such $\alpha$. We will select a typical distribution $\alpha$ and only keep the independent subtensors isomorphic to that corresponding $\T^\alpha$. By doing so, we incur a factor of $\frac{1}{\poly(n)}$ that is negligible compared to $(q + 2)^n$.

We use $[\alphx(i)]_{i \in \{0,1,\dots,4\}}$ to denote the X marginal distribution of $\alpha$. We say that a block $X_I$ obeys $\alphx$ if and only if the distribution of $i_1, i_2, \dots, i_n$ is exactly $\alphx$. (The same applies to Y and Z.) For each triple $(X_I, Y_J, Z_K)$ satisfying (1) $i_t + j_t + k_t = 4$ for all $t \in [n]$ and (2) the distribution $[(i_t, j_t, k_t)]_{t \in [n]}$ equals $\alpha$, the subtensor of $(\CW_q^{\otimes 2})^{\otimes n}$ over $(X_I, Y_J, Z_K)$ is exactly one copy of $\T^\alpha$. However, as $X_I$ may belong to many such triples, these copies are not independent.

We will zero out some level-2 blocks to obtain independent copies of $\T^\alpha$. By the symmetry of X, Y, and Z variables, we can assume that $\alphx = \alphy
 = \alphz$. So we only focus on Z-variable blocks. Let $\numzblock$ be the number of $Z_K$'s that obey $\alphz$. In order to make the Z blocks of all isomorphic copies of $\T^\alpha$ independent, there can be at most $\numzblock$ copies of them.

In fact, using an elegant construction using hashing and the Salem-Spencer set~\cite{salem1942,Behrend1946}, Coppersmith and Winograd~\cite{coppersmith1987matrix} showed that one could get $\numzblock^{1 - o(1)}$ such independent tensors. The $o(1)$ factor is negligible for our purpose.

\item \textbf{The Concrete Formula for Level-2.} To be more specific, let us spell out the concrete formula for $\T^\alpha$ in level 2. We define $\mathrm{sym}(T_{i,j,k}^{\alpha(i,j,k)n}) = T_{i,j,k}^{\alpha(i,j,k)n} \otimes T_{k,i,j}^{\alpha(k,i,j)n} \otimes T_{j,k,i}^{\alpha(j,k,i)n}$ and $\mathrm{sym}_6(T_{i,j,k}^{\alpha(i,j,k)n}) =\mathrm{sym}(T_{i,j,k}^{\alpha(i,j,k)n}) \otimes \mathrm{sym}(T_{j,i,k}^{\alpha(j,i,k)n})$. So we have
\begin{equation}\T^\alpha = \mathrm{sym}\big(T_{0,0,4}^{\otimes \alpha(0,0,4)n}\big) \otimes \mathrm{sym}_6\big(T_{0,1,3}^{\otimes \alpha(0,1,3)n}\big)\otimes \mathrm{sym}\big(T_{0,2,2}^{\otimes \alpha(0,2,2)n}\big) \otimes \mathrm{sym}\big(T_{1,1,2}^{\otimes \alpha(1,1,2)n} \big). \label{equ:Talpha} \end{equation}

Generally, for any tensor $T$, we define $\mathrm{sym}(T)$ as the tensor product of three tensors obtained by rotating $T$'s X, Y, and Z variables. We will need such a notation later.

\item \textbf{The Analysis for Level-1.} In this level, we are given many independent copies of $\T^\alpha$, and our goal is to obtain independent matrix multiplication tensors. Note that $\T^\alpha$ itself is not a matrix multiplication tensor because $\mathrm{sym}\big(T_{1,1,2}^{\otimes \alpha(1,1,2) n}\big)$ is not a matrix multiplication tensor. So now we will focus on $T_{1,1,2}$ only. Let $m = \alpha(1,1,2)n$ for convenience.

We further zero out $\mathrm{sym}\big(T_{1,1,2}^{\otimes m}\big)$ into independent matrix multiplication tensors. We select a distribution $\splres^{(1,1,2)}$ over $\{(0,0,2), (0,1,1),(1,0,1), (1,1,0)\}$ which we call the \underline{split distribution} of $(1,1,2)$. (Following the same spirit as before, we only need to consider one typical $\splres$.) We call it the split distribution because we will split $T_{1,1,2}^{\otimes m}$ into isomorphic copies of
\begin{equation}
\T_{1,1,2}^{\splres, m} \; \defeq \bigotimes_{\hat{i}+\hat{j}+\hat{k}=2} \bk{T_{\, \hat{i}, \, \hat{j}, \, \hat{k}} \otimes T_{1 - \hat{i}, \, 1 - \hat{j}, \, 2 -\hat{k}}}^{\otimes \splres^{(1,1,2)} \! (\hat{i}, \ \hat{j}, \ \hat{k}) \,\cdot\, m}.
\label{equ:tT112}
\end{equation}

Here, each $\T_{1,1,2}^{\splres, m}$ is a subtensor of $T_{1,1,2}^{\otimes m}$ that splits according to $\splres^{(1,1,2)}$. A single $T_{1,1,2}^{\otimes m}$ contains many such isomorphic copies. We also get $\splres^{(1,2,1)}$ and $\splres^{(2,1,1)}$ from the symmetry under rotation, which defines $\T_{1,2,1}^{\splres, m}$ and $\T_{2,1,1}^{\splres, m}$.

\bigskip

For concreteness, let us first specify the minimum number of parameters that determine $\splres^{(1,1,2)}$. Since there is a symmetry between $T_{0,0,2} \otimes T_{1,1,0}$ and $T_{1,1,0} \otimes T_{0,0,2}$, we can w.l.o.g$.$ assume that $\splres^{(1,1,2)}(0,0,2) = \splres^{(1,1,2)}(1,1,0) = \mu$ and $\splres^{(1,1,2)}(0,1,1) = \splres^{(1,1,2)}(1,0,1) = 1/2 - \mu$ for some $0 \leq \mu \leq \frac{1}{2}$. 

To obtain many independent matrix multiplication tensors from $T^{\alpha}$ (defined in \eqref{equ:Talpha}), we now degenerate the only term in it that is not a matrix multiplication tensor, $\mathrm{sym}\big(T_{1,1,2}^{\otimes m}\big)$, into independent matrix multiplication tensors. 

We zero out $\mathrm{sym}\big(T_{1,1,2}^{\otimes m}\big)$ into many independent subtensors isomorphic to
\begin{align}
  \mathrm{sym} \big(\T_{1,1,2}^{\splres, m}\big) = \mathrm{sym} \Big(
  &\left(T_{0,0,2} \otimes T_{1,1,0} \right)^{\otimes \mu m} \otimes  \left(T_{1,1,0} \otimes T_{0,0,2}  \right)^{\otimes \mu m}  \otimes \notag \\
  &\left(T_{0,1,1} \otimes T_{1,0,1} \right)^{\otimes (1/2-\mu) m} \otimes \left(T_{1,0,1} \otimes T_{0,1,1} \right)^{\otimes (1/2-\mu) m} \Big).
\end{align}

In $T_{1,2,1}^{\otimes m}$ and $T_{2,1,1}^{\otimes m}$, the marginal distribution for Z-variables is $\splresZ^{(1,2,1)}(0) = \splresZ^{(1,2,1)}(1) = \splresZ^{(2,1,1)}(0) = \splresZ^{(2,1,1)}(1) = 1/2$. In $T_{1,1,2}^{\otimes m}$, the marginal for Z-variables is $\splresZ^{(1,1,2)}(0) = \splresZ^{(1,1,2)}(2) = \mu, \; \splresZ^{(1,1,2)}(1) = 1 - 2\mu$.

Denote by $\numzblock^{(1,1,2)}$ the number of level-1 Z-variable blocks in $T_{1,1,2}^{\otimes m}$ obeying $\splresZ^{(1,1,2)}$, and similar for $(1,2,1)$ and $(2,1,1)$. With the same construction as level-$2$, there is a way to zero out $\mathrm{sym}\big(T_{1,1,2}^{\otimes m}\big)$, and get $\big(\numzblock^{(1,1,2)} \numzblock^{(1,2,1)} \numzblock^{(2,1,1)}\big)^{1 - o(1)}$ many such independent isomorphic copies of $\mathrm{sym} \big(\T_{1,1,2}^{\splres, m}\big)$.

\item \textbf{Putting everything together.} We first zero out $\mathrm{CW}_q^{\otimes 2n}$ to isomorphic copies of $\T^\alpha$ and then to independent isomorphic copies of tensor products of matrix multiplication tensors (which is still a matrix multiplication tensor). We can compute
  \[
    \numzblock^{1-o(1)} \cdot \left(\numzblock^{(1,1,2)} \numzblock^{(1,2,1)} \numzblock^{(2,1,1)}\right)^{1-o(1)}
  \]
  many independent matrix multiplication tensors with $(q + 2)^{2n}$ many multiplications. The distributions $\alpha$ and $\splres^{(1,1,2)}$ are carefully chosen to balance between the size of each matrix multiplication tensor and the total number of them.
\end{itemize}

\paragraph*{Merging after splitting.} In the analysis above, we did not further zero out $\mathrm{sym}\left(T_{0,j,k}\right)^{\otimes \alpha(0, j, k) n}$  ($j + k = 4$) in level-1 because $T_{0,j,k}$ (without symmetrization) is already a matrix multiplication tensor. If we instead use the same approach as the analysis of $T_{1,1,2}$, i.e., splitting and zeroing out into independent matrix multiplication tensors, we would get a much worse result. Intuitively, the reason is that, if we insist on (1) splitting it into non-independent matrix multiplication subtensors and (2) then zeroing out the subtensors into independent ones, we would have wasted all the terms we zeroed out. Here, we can avoid such waste by the fact that $T_{0,j,k}$ is already a matrix multiplication tensor. 

Essentially, this can be equivalently viewed as (1) splitting them into non-independent matrix multiplication subtensors and (2) merging these non-independent subtensors back into a single matrix multiplication tensor. As pointed out by \cite{ambainis2015fast}, such merging is the reason why higher power analyses improve the bound of $\omega$.

In our algorithm, we cannot view $T_{0,j,k}$ as a large matrix multiplication, since we need its split distribution $\tilde \alpha^{(0,j,k)}$. Hence we have to split it and merge it back. This gives a result as good as directly treating $T_{0,j,k}$ as a single matrix multiplication tensor.

\bigskip

Let us simply look at $T_{0,2,2}$ as an example. Let $m = \alpha(0, 2, 2) n$. We know that $T_{0,2,2}^{\otimes m}$ is isomorphic to the matrix multiplication tensor $\langle 1, 1, (q^2 + 2)^m\rangle$.

Now we are going to split it. First, choose a distribution $\splres^{(0,2,2)}$ over $\{(0,0,2),(0,2,0),(0,1,1)\}$. We pick $\splres^{(0,2,2)}(0,0,2) = \splres^{(0,2,2)}(0,2,0) = \lambda$, and $\splres^{(0,2,2)}(0,1,1) = 1 - 2\lambda$ for some $0 \leq \lambda \leq \frac{1}{2}$. Then, the tensor $T_{0,2,2}^{\otimes m}$ alone contains $\numzerotwotwo = \binom{m}{\lambda m, \lambda m, (1 - 2\lambda)m}$ many (non-disjoint) isomorphic copies of
\begin{equation} T^{\splres,m}_{0,2,2} = \left(T_{0,0,2} \otimes T_{0,2,0}\right)^{\otimes \lambda m} \otimes \left(T_{0,2,0} \otimes T_{0,0,2}\right)^{\otimes \lambda m} \otimes (T_{0,1,1} \otimes T_{0,1,1})^{\otimes (1 - 2\lambda)m}. \label{equ:tT022}\end{equation}

We need the fact that each $T_{0,0,2}$ is isomorphic to the matrix multiplication tensor $\langle 1, 1, 1\rangle$ while each $T_{0,1,1}$ is isomorphic to the matrix multiplication tensor $\langle 1, 1, q\rangle$. This implies each $T^{\splres,m}_{0,2,2}$ is isomorphic to the matrix multiplication tensor $\langle 1,1,q^{2(1 - 2\lambda)}\rangle$. 

After zeroing out all the X, Y, and Z variables that do not obey our choice of $\splres^{(0,2,2)}$, we merge all these $\numzerotwotwo$ non-disjoint isomorphic copies back into a matrix multiplication tensor $\langle 1, 1, \numzerotwotwo \cdot q^{2(1 - 2\lambda)} \rangle$. Taking $\lambda = \frac{1}{2 + q^2}$, we get $\langle 1, 1, \numzerotwotwo \cdot q^{2(1 - 2\lambda)} \rangle = \langle1,1,(q^2 + 2)^{m (1- o(1))}\rangle$, which is as good as $\langle 1, 1, (q^2 + 2)^m\rangle$ when $m$ goes to infinity.

\subsection{Combination Loss}\label{sec:combination-loss}

The key insight in our paper is that this analysis is actually wasteful. To see this, let us first summarize all the zeroing-outs we performed in the above analysis. 

\paragraph*{All zeroing-outs.}Taking the merging point of view for $T_{0,j,k}$'s, we get the following procedure:
\begin{itemize}
\item In level-2, we first zero out all the blocks $X_I$ that do not obey $\alphx$. Same for Y and Z-blocks. Then we further zero out some level-2 blocks according to hashing and the Salem-Spencer set. For each remaining block triple $(X_I, Y_J, Z_K)$, the subtensor of $(\CW_q^{\otimes 2})^{\otimes n}$ over $X_I, Y_J, Z_K$ gives one isomorphic copy of $\T^\alpha$ (defined in \eqref{equ:Talpha}). Let us call this subtensor $\T_{I,J,K}$.

\item Fix any $X_I, Y_J, Z_K$ and the corresponding subtensor $\T_{I,J,K}$. Consider level-1 blocks $X_{\hat{I}} \in X_I, \; Y_{\hat{J}} \in Y_J, \; Z_{\hat{K}} \in Z_K$. For sequence $\hat{K}$, we define its \underline{split distribution over set $S$} as
$$\split(\hat{K},S)(\hat{k}', \hat{k}'') = \frac{1}{|S|}\left| \big\{t \in S \mid \hat{k}_{2t - 1} = \hat{k}', \; \hat{k}_{2t} = \hat{k}''\big\}\right|.$$
Let $S_{i,j,k} = \{t \mid i_t = i, j_t = j, k_t = k\}$ be the set of positions that belong to $T_{i,j,k}$. We say $Z_{\hat{K}}$ obeys $\splresZ$ if and only if for all $i,j,k$, we have $\split(\hat{K}, S_{i,j,k})(\hat{k}', k - \hat{k}') = \splresZ^{(i,j,k)}(\hat{k}')$.\footnote{For simplicity, we write $\split(\hat{K}, S_{i,j,k}) = \splresZ^{(i,j,k)}$ when there is no ambiguity.} (Since we are taking the merging viewpoint, $T_{0,1,1}$ and $T_{0,0,2}$ have their split distributions as well.)

\bigskip

First, we zero out all the blocks $Z_{\hat{K}}$ that do not obey $\splresZ$. Same for the X and Y blocks. Then we further zero out some level-1 blocks according to hashing and the Salem-Spencer set. Note that this hashing is only over indices in $S_{1,1,2} \cup S_{1,2,1} \cup S_{2,1,1}$ because $T_{0,j,k}$'s are handled differently.

The subtensor of $\T_{I,J,K}$ over each remaining triple $(X_{I}, Y_{J}, Z_{K})$ is an isomorphic copy of 
\[\T^{\splres} \defeq \bigotimes_{i + j + k = 4} \T^{\splres, \alpha(i,j,k)n}_{i,j,k},\]
where $\T_{1,1,2}^{\splres, m}$ (and $\T_{1,2,1}^{\splres, m}$, $\T_{2,1,1}^{\splres, m}$) is defined in \eqref{equ:tT112} and other $\T_{0,j,k}^{\splres, m}$'s are defined in the same way as \eqref{equ:tT022}. Finally, some copies of $\T^{\splres}$ are merged together, and we get independent matrix multiplication tensors.
\end{itemize}

Fix any remaining level-2 triple $M = (X_I, Y_J, Z_K)$. We will show that many level-1 Z-blocks $Z_{\hat{K}} \in Z_K$ are actually not used.

\paragraph*{Which $Z_{\hat{K}}$'s are used?} To answer this, note that in the second step, we zeroed out the blocks $Z_{\hat{K}}$'s that do not obey $\splresZ$. By definition, all remaining $Z_{\hat{K}}$'s satisfy the following:
\begin{equation} \forall i + j + k = 4, \quad \split(\hat{K}, S_{i,j,k}) = \splresZ^{(i,j,k)}. \label{equ:split_condition}\end{equation}
We let $Z^M = \{Z_{\hat{K}} \in Z_K \mid \eqref{equ:split_condition}\text{ holds for }\hat{K}\}$ denote this set of $Z_{\hat{K}}$'s. This is the set of $Z_{\hat{K}}$'s that we actually used in $M = (X_I, Y_J, Z_K)$. We will argue that there are some other level-1 Z-blocks outside $Z^M$, that in a certain sense, is as ``useful'' as those in $Z^M$. To identify these blocks, we first define the \underline{average split distribution} of $k$ as
\begin{equation*}
  \splresavgZ^{(k)}(\hat{k}') = \frac{1}{\alphz(k)} \sum_{i + j = 4 - k} \alpha(i, j, k) \cdot \splresZ^{(i, j, k)}(\hat{k}').
\end{equation*}

Let $S_k = \cup_{i+j=4-k} \, S_{i,j,k}$ be the set of positions where $Z_t = k$. By definition, the condition \eqref{equ:split_condition} implies the following weaker condition which is independent of $I$ and $J$:
\begin{equation}
  \forall k \in \{0,1,\dots, 4\}, \quad \split(\hat{K}, S_{k}) = \splresavgZ^{(k)}. \label{equ:avg_condition}
\end{equation}

Let $Z' = \{Z_{\hat{K}} \in Z_K \mid \eqref{equ:avg_condition}\text{ holds for }\hat{K}\}$ be the set of level-1 Z-blocks that satisfy this weaker condition. Clearly, $Z^M \subseteq Z'$; below we will further show that $|Z^M| = |Z'| \cdot 2^{-\Theta(n)}$. However, all $Z_{\hat{K}} \in Z'$ are ``equivalent'' in the sense that they are isomorphic up to a permutation over $[2n]$. We could take a bold guess: Those $Z_{\hat{K}}$'s in $Z' \setminus Z^M$ should be as useful as those in $Z^M$! We call such ratio $|Z'| / |Z^M|$ the \underline{combination loss}.

\paragraph*{A Closer Look.} How large is the combination loss, $|Z'| / |Z^M|$? In order to affect the bound on $\omega$, the loss needs to be exponentially large. Let us examine the number of ways to split $K$ according to \eqref{equ:split_condition}, and compare it with that of \eqref{equ:avg_condition}. Recall that $S_{i,j,k} = \{t \mid i_t = i, j_t = j, k_t = k\}$ and $S_k = \cup_{i + j = 4 - k} \, S_{i,j,k}$. We use $h(p)$ to denote the binary entropy function $h(p) = - p \log p - (1 - p) \log (1 - p)$.

\begin{itemize}
\item When $k_t \in \{0, 4\}$, there is only one way to split it (i.e., $0 = 0 + 0, \; 4 = 2 + 2$). So such $t$ has no contribution\footnote{Here ``contribution'' means giving a multiplicative factor to $|Z'|$ or $|Z^M|$. $|Z'|$ equals the product of all contributions to it; so does $|Z^M|$.} to either $|Z'|$ or $|Z^M|$. 
\item When $k_t \in \{1, 3\}$, there are two symmetric ways to split it (i.e., $1 = 0 + 1 = 1 + 0$ and $3 = 1 + 2 = 2 + 1$). By this symmetry, we know $\splresZ^{(i,j,k)}$ is simply the uniform distribution, half and half. Fix $k = 1$ or $3$, the contribution to $|Z^M|$ from all $t \in [n]$ such that $k_t = k$ is
  \[\prod_{i+j = 4 - k} \binom{|S_{i,j,k}|}{|S_{i,j,k}| / 2} \approx \prod_{i+j = 4 - k} 2^{|S_{i,j,k}|} = 2^{|S_k|},\]
which equals their contribution to $|Z'|$. So in this case, their contributions to $|Z'|$ and $|Z^M|$ are equal. 

\item The only nontrivial case is when $k_t = 2$. There are three ways to split it: $2 = 0 + 2 = 1 + 1 = 2 + 0$. Taking its symmetry into account, there is still one degree of freedom. Recall that $\splresZ^{(0,2,2)}(0) = \splresZ^{(0,2,2)}(2) = \lambda$ and $ \splresZ^{(0,2,2)}(1) = 1 - 2\lambda$, while $\splresZ^{(1,1,2)}(0) = \splresZ^{(1,1,2)}(2) = \mu, \splresZ^{(1,1,2)}(1) = 1 - 2\mu$.

  Each $t \in S_{0,2,2} \cup S_{2,0,2}$ is either $0 + 2$ or $2 + 0$ with probability $2\lambda$ and is $1 + 1$ with probability $1 - 2 \lambda$. So the logarithm of their contribution to $|Z^M|$ approximately equals the entropy $h(2\lambda) \cdot |S_{0,2,2} \cup S_{2,0,2}|$. Similarly, the logarithm of the contribution of $t \in S_{1,1,2}$ to $|Z^M|$ is approximately $h(2 \mu) \cdot |S_{1,1,2}|$. In total, the contribution of all $t \in S_2$ to $|Z^M|$ is approximately
  $$\exp(h(2\lambda) \cdot |S_{0,2,2} \cup S_{2,0,2}| + h(2 \mu) \cdot |S_{1,1,2}|).$$

On the other hand, let $\bar p = \frac{\lambda |S_{0,2,2} \cup S_{2,0,2}| + \mu |S_{1,1,2}|}{|S_2|}$ be the weighted average of $\lambda$ and $\mu$. (That is, $\splresavgZ^{(2)}(0)=\splresavgZ^{(2)}(2)=\bar p$ and $\splresavgZ^{(2)}(1)=1-2\bar p$.) The contribution of all $t \in S_2$ to $|Z'|$ is approximately $$\exp(h(2 \bar{p}) \cdot |S_2|).$$ This is larger than their contribution to $Z^M$ because the even distribution has the maximum entropy. 

In the analysis of Coppersmith and Winograd, $\lambda = \frac{1}{2 + q^2}$, $\mu = \frac{1}{2 + q^{3 \tau}}$ (where $\tau = \frac{\omega}{3}$). So there is a constant gap between $\lambda$ and $\mu$ when $\omega > 2$. This implies an exponential gap between $|Z'|$ and $|Z^M|$. The combination loss is indeed exponentially large. Hence, compensating for such loss might improve $\omega$. 
\end{itemize}

\subsection{Compensate for Combination Loss}
\label{sec:compensate}

As we discussed above, since a level-2 variable block $Z_K$ is in only one independent copy of $\T^\alpha$, some index $2$'s in it will split according to $\lambda$ while other $2$'s will split according to $\mu$. This causes the combination loss. The same holds for all X, Y and Z dimensions. One natural attempt to compensate for it is to match a level-2 variable block multiple times, i.e., to let it appear in multiple triples.

\paragraph*{Sketch of the Idea.} During the hashing step, we randomly match a level-2 Z-variable block $Z_K$. Any index 2 in $Z_K$ will be in the parts of $T_{0,2,2}$, $T_{2,0,2}$, or $T_{1,1,2}$ randomly. Those index 2 of $Z_K$ in $T_{0,2,2}$ and $T_{2,0,2}$ parts will split according to $\lambda$ while those in $T_{1,1,2}$ part will split according to $\mu$. Even if $Z_K$ is matched multiple times, since each time the positions of these three parts are different, we will be using different level-1 blocks in $Z_K$. This observation allows us to obtain a subtensor that is mostly disjoint from other subtensors from each matching.

\paragraph*{Asymmetric Hashing.} In the original two-level analysis (\Cref{sec:CW}), the marginal distributions $\alphx=\alphy
=\alphz$ are picked to balance the number of level-2 blocks and the number of variables in a block. So after zeroing-out, every remaining block can only be in one triple. We carefully pick $\alphx = \alphy
$ and $\alphz$ so that there will be more level-2 X and Y-blocks than Z-blocks. (This asymmetric hashing method also appears in~\cite{coppersmith1987matrix}, but the base tensor is Strassen's tensor~\cite{strassen1986,pan1978}.) In return, in each level-2 Z-block, we can now have more variables. We will match each Z-variable block to multiple pairs of X and Y-blocks while keeping the matching for each X and Y-block unique, that is, every remaining X or Y-block is only in one triple but a remaining Z-block can be in multiple triples. Such uniqueness for X and Y-blocks is necessary for our method of removing the interfering terms.

\paragraph*{Sanity Check.} As a sanity check, let us first try to match a level-2 variable block $Z_K$ twice. (Recall that we defined the notations for variable blocks in Section \ref{sec:CW}.) We say $Z_K$ can be matched to $X_I$ and $Y_J$ with respect to $\alpha$ if and only if (1) $i_t + j_t + k_t = 4$ for all $t \in [n]$; and (2) the distribution $[(i_t, j_t, k_t)]_{t \in [n]}$ equals $\alpha$.

Suppose $Z_K$ is first matched to $X_I$ and $Y_J$ with respect to $\alpha$. Now among all pairs of level-2 blocks $X_{I'}, Y_{J'}$ that can be matched with $Z_K$ with respect to $\alpha$, we uniformly sample one of them. Then we let $Z_K$ also be matched to that $X_{I'}$ and $Y_{J'}$. We define $M = (X_I, Y_J, Z_K)$ and $M' = (X_{I'}, Y_{J'}, Z_{K})$. The subtensors of $\mathrm{CW}_q^{\otimes 2n}$ over $M$ and $M'$ give two copies of tensor $\T^\alpha$, denoted by $\T^\alpha_M$ and $\T^\alpha_{M'}$, respectively. As explained in Section \ref{sec:combination-loss}, zeroing out in $\T^\alpha_M$ (resp$.$ $\T^\alpha_{M'}$) would give us its subtensor $\T^{\splres}_M$ (resp$.$ $\T^{\splres}_{M'}$). If we actually perform these two zeroing-out procedures one after another, they would interfere with each other. So we instead just keep all the level-1 X, Y, Z variable blocks in the union of the supports of $\T^{\splres}_M$ and $\T^{\splres}_{M'}$ and zero out other level-1 blocks. \underline{We call the remaining subtensor after such zeroing out $\T^{\splres}_{M \cup M'}$.}

\begin{figure}
\centering
\includegraphics[width=0.9\textwidth]{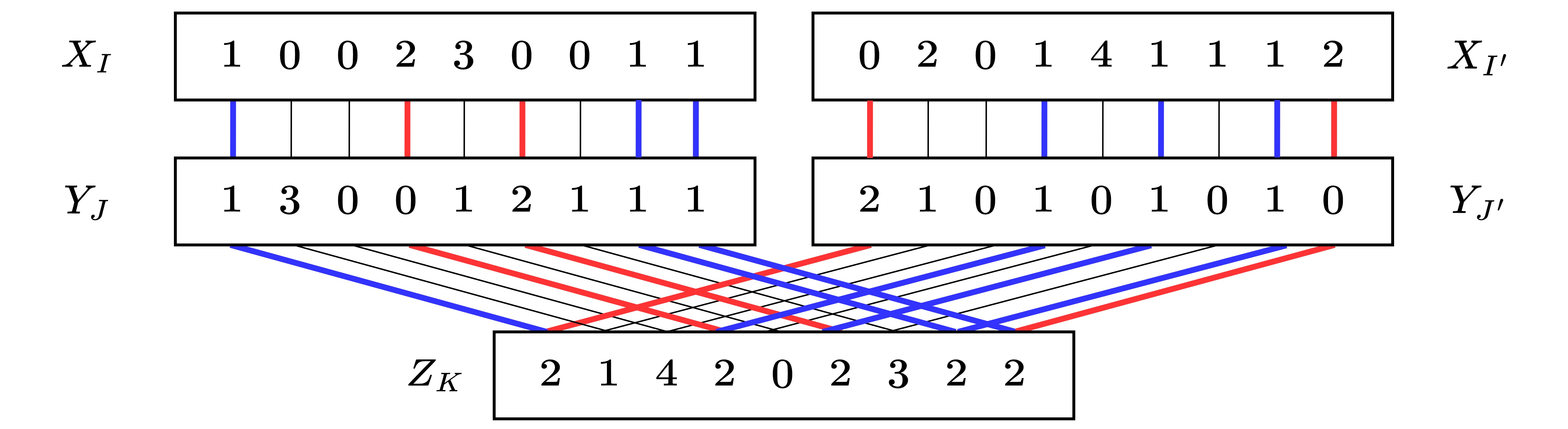}
\caption{Matching level-2 variable block $Z_K$ twice. The positions of $T_{0,2,2}/T_{2,0,2}$'s are marked with red color, while the positions of $T_{1,1,2}$'s are marked with blue. Suppose the index starts from $1$. In this case, $S_{1,1,2} = \{1, 8, 9\}$ and $S'_{1,1,2} = \{4, 6, 8\}$.}
\label{fig:matching}
\end{figure}

Let $Z^M$ be the set of level-1 Z-variable blocks in $\T^{\splres}_M$, and let $Z^{M'}$ be that of $\T^{\splres}_{M'}$. Our key observation is that $Z^M$ and $Z^{M'}$ are mostly disjoint. Recall that $S_{i,j,k} = \{t \in [n] \mid (i_t, j_t, k_t) = (i,j,k)\}$ is the positions of the $T_{i,j,k}$ part in $M$. Similarly, let $S'_{i,j,k}$ be that of $M'$. (See Figure \ref{fig:matching}.) For any set $S \subseteq [n]$ and a level-1 Z-variable block $Z_{\hat{K}} = Z_{(\hat{k}_1, \hat{k}_2, \dots, \hat{k}_{2n})}$, we use $\split(\hat{K},S)$ to denote the split distribution for $Z_{(\hat{k}_1, \hat{k}_2, \dots, \hat{k}_{2n})}$ restricted to set $S$. Recall that it is defined as
\[\split(\hat{K},S)(k', k'') = \frac{1}{|S|}\left| \big\{t \in S \; \big| \; \hat{k}_{2t - 1} = k', \hat{k}_{2t} = k''\big\}\right|.\]
 Fix any $Z_{\hat{K}} \in Z^M$. We claim that, by the randomness of $S'_{1,1,2}$, w.h.p$.$ $\split(\hat{K},S'_{1,1,2}) = \splresavgZ^{(2)}$ which is the average split distribution. On the contrary, for any $Z_{\hat{K}'} \in Z^{M'}$, we must have $\split(\hat{K}',S'_{1,1,2}) = \splresZ^{(1,1,2)}$. Since $\splresavgZ^{(2)} \ne \splresZ^{(1,1,2)}$, this shows w.h.p$.$ $Z_{\hat{K}} \not\in Z^{M'}$, which implies that $Z^M$ and $Z^{M'}$ are mostly disjoint.

Now let us justify our claim. Since we uniformly sampled the pair $(X_{I'}, Y_{J'})$, by symmetry, $S'_{1,1,2}$ is uniform among all $\alpha(1,1,2)n$-sized subsets of $S_2 = \{t \in [n] \mid k_t = 2\}$. Regardless of whether a position $t \in S_2$ is in $S_{0,2,2}, S_{2,0,2},$ or $S_{1,1,2}$, such position is in $S'_{1,1,2}$ with equal probability. As we fixed $Z_{\hat{K}} \in Z^M$, the corresponding split distribution $\split(\hat{K},S'_{1,1,2})$ is mostly likely to be the weighted average of $\splresZ^{(0,2,2)}$, $\splresZ^{(2,0,2)}$, and $\splresZ^{(1,1,2)}$, i.e., the average distribution $\splresavgZ^{(2)}$.

The idea above is generalized in our main algorithm so that each level-2 Z-block can be matched in $2^{\Theta(n)}$ triples. However, there are two remaining challenges:

\begin{itemize}
\item In order to be independent, being mostly disjoint is not sufficient. $\T^{\splres}_M$ and $\T^{\splres}_{M'}$ has to be completely disjoint. An easy fix would be zeroing out the intersecting variables. But this introduces missing Z-variables in the final matrix multiplication tensors we get. We use a random shuffling technique similar to that of \cite{doi:10.1137/1.9781611975482.31} to fix these ``holes'' in Z-variables.
\item Moreover, even being perfectly disjoint does not guarantee independence. There is also a second condition in the definition of independence: no additional terms, i.e., we have to make sure that we get exactly $\T^{\splres}_M + \T^{\splres}_{M'}$ without any extra terms. For example, let $X_{\hat I'}, Y_{\hat J'}, Z_{\hat K}$ be level-1 blocks in $X_{I'}$, $Y_{J'}$, and $Z^M$, respectively, then the subtensor of $\T^{\splres}_{M \cup M'}$ over these level-1 blocks should be zero. Vice versa for level-1 blocks in $X_I, Y_J$, and $Z^{M'}$. If these conditions are not satisfied, $T_M^{\tilde\alpha}$ and $T_{M'}^{\tilde\alpha}$ are not independent.
\end{itemize}

\paragraph*{Fixing Holes.} We now address the first challenge. Suppose we have a broken matrix multiplication tensor of size $\bar N \times \bar M \times \bar P$ (i.e., it corresponds to the matrix multiplication of an $\bar N \times \bar M$ matrix and an $\bar M \times \bar P$ matrix), in which half of the Z-variables are zeroed out. Let us denote the two multiplying matrices as $X$ and $Y$. The product is $(XY)_{i,j} := \sum_{k = 1}^{\bar M} X_{i,k} Y_{k,j}$, but we can only get half of the entries in $XY$. If we randomly select three permutations $\pi_1, \pi_2, \pi_3$ of $[\bar N], [\bar M], [\bar P]$, respectively, and fill in $X_{i,k} \gets A_{\pi_1\!(i), \pi_2\!(k)}, \; Y_{k,j} \gets B_{\pi_2\!(k), \pi_3\!(j)}$ instead, we would get the correct answer to a random half of the entries in $AB$. Then we just repeat this multiple times using other broken matrix multiplication tensors. Combine the answers to the entries in $AB$ together, we would get the correct answer for $AB$ with high probability. In other words, we solve the first challenge by ``gluing'' many randomly permuted broken matrix multiplication tensors together.

\paragraph*{Compatibility.} For the second challenge, we need the following key observation. Let $X_{\hat{I}'} \in X_{I'}, \; Y_{\hat{J}'}\in Y_{J'}, \; Z_{\hat{K}} \in Z^{M}$ be three level-1 blocks. If there is an interfering term involving variables in $X_{\hat{I}'}$, $Y_{\hat{J}'}$ and $Z_{\hat{K}}$, for those $t \in S'_{2,0,2}$, we must have $(\hat{i}'_{2t - 1}, \hat{i}'_{2t}) = (2 - \hat{k}_{2t - 1}, \, 2 - \hat{k}_{2t})$, because $\hat{j}'_{2t-1} = \hat{j}'_{2t} = 0$ and $(\hat{i}'_{2t - 1}, \hat{i}'_{2t}) + (\hat{j}'_{2t - 1}, \hat{j}'_{2t}) + (\hat{k}_{2t - 1}, \hat{k}_{2t}) = (2,2)$. This implies
\[
  \Split(\hat{I}', S'_{2,0,2})(a, 2 - a) = \Split(\hat{K}, S'_{2,0,2})(2 - a, a) \qquad \textup{for all} \quad a \in \{0,1,2\}.
  \numberthis \label{eq:split202agree}
\]
If this holds, we say $\Split(\hat I', S'_{2,0,2})$ agrees with $\Split(\hat K, S'_{2,0,2})$.

We claim that for any fixed $X_{\hat I'} \in X_{I'}$, $Y_{\hat J'} \in Y_{J'}$, and $Z_{\hat K} \in Z^{M}$ that are retained in $\T^{\splres}_{M \cup M'}$, the condition \eqref{eq:split202agree} is not satisfied with high probability:
\begin{enumerate}
\item $\Split(\hat I', S'_{2,0,2})$ agrees with $\splresZ^{(2,0,2)}$, the split distribution of Z indices for $T_{2,0,2}$. This is a necessary condition for $X_{\hat I'}$ to form a triple with $Z_{\hat K} \in Z^{M'}$, because $\Split(\hat K, S'_{2,0,2}) = \splresZ^{(2,0,2)}$ holds for all $Z_{\hat K} \in Z^{M'}$. Otherwise, if this condition does not hold, $X_{\hat I'}$ cannot form any triple with Z-blocks $Z_{\hat K} \in Z^{M'}$, so it has been zeroed out before forming $\T^{\splres}_{M \cup M'}$. (In this argument, it is crucial that $X_{I'}$ is only matched in a unique triple $M'$.)
\item $\Split(\hat K, S'_{2,0,2})$ is equal to $\splresavgZ^{(2)}$, the average split distribution of index 2, with high probability. This can be deduced from a similar argument as in the sanity check.
\end{enumerate}
These two arguments, combined with the fact that $\splresavgZ^{(2)} \ne \splresZ^{(2,0,2)}$, concludes that \eqref{eq:split202agree} is unlikely to hold.

Similarly, one could also look at the positions $S'_{0,2,2}$, which is symmetric to the case above.\footnote{The same argument would not work with $t \in S'_{1,1,2}$, because now $(\hat{j}'_{2t - 1}, \hat{j}'_{2t})$ could be either $(1,0)$ or $(0,1)$. This degree of freedom prevents us from getting a compatibility constraint between $\split(\hat{I}',S'_{1,1,2})$ and $\split(\hat{K},S'_{1,1,2})$.} For any level-2 blocks $X_{I}$, ${Y}_{J}$, and level-1 block $Z_{\hat{K}} \in Z^M$ where $M = (X_I, Y_J, Z_K)$, when $\split(\hat{K},S_{2,0,2})$ and $\split(\hat{K},S_{0,2,2})$ are both equal to $\splresZ^{(2,0,2)}$ ($S_{2,0,2}$ and $S_{0,2,2}$ are defined w.r.t$.$ $M$), we say $Z_{\hat{K}}$ is \underline{compatible} with $X_{I}$ and ${Y}_{J}$. Our argument above shows that w.h.p$.$ a level-1 block will only be compatible with one pair of $ X_{I}$ and ${Y}_{J}$, and there is no interfering term involving $Z_{\hat{K}}$ and $ X_{I}$, ${Y}_{J}$ if they are not compatible. If $Z_{\hat{K}}$ happens to be compatible with two pairs, we will zero out $Z_{\hat{K}}$ and leave it as a hole. Then it will be fixed by our hole-fixing technique.  See \cref{sec:2nd} and \cref{sec:hole_lemma} for more details.

\subsection{Beyond the Second Power}

In the analysis of higher and higher powers of the CW tensor, because we can perform merging for $T_{0,j,k}$ at each level, we get better and better upper bounds of $\omega$. Together with such gain, we also incur combination loss at each level. Our approach generalizes to high powers as well. For higher powers of the CW tensor, we apply this method to analyze both the global value (i.e., the value of the CW tensor) and the component values (e.g., the values of $T_{1,2,5},T_{1,3,4},T_{2,2,4},T_{2,3,3}$ and their permutations for the 4th power). Same as in previous works, we optimize all parameters by a computer program to obtain an upper bound of $\omega$.

\paragraph{Fixing holes in tensors.} Recall that in our two-level analysis, we obtained a subtensor $\T^{\splres}_M$ for each retained level-2 triple $M = (X_I, Y_J, Z_K)$, in which some Z-blocks $Z_{\hat K} \in Z^M$ are zeroed out and become holes. We first pretend there are no holes, zeroing out these tensors $\T^{\splres}_M$ to form matrix multiplication tensors, and then fix the holes in the matrix multiplication tensors. However, for higher powers, this approach meets a difficulty: The process of transforming $\T^{\splres}_M$ to matrix multiplication tensors is more involved\footnote{It will be a more general degeneration instead of just zeroing out.}, so it is not clear how one can control the number of holes in the final matrix multiplication tensors.

To solve this difficulty, we will fix the holes in $\T^{\splres}_M$ before transforming them to matrix multiplication tensors. Suppose all $\T^{\splres}_M$ are isomorphic to some tensor $\T^*$ except for the holes in their Z-variables. As long as $\T^*$ has a desired symmetric structure, we can shift the holes in each copy $\T^{\splres}_M$ to random places, like we did in \cref{sec:compensate} to repair the holes in matrix multiplication tensors. Then, by gluing several copies together, we can fix all the holes, resulting in a copy of $\T^*$ without holes. In \cref{sec:hole_lemma}, we will show how to fix the holes: We will first define the desired tensor structure $\T^*$ which will naturally appear in \cref{sec:global,sec:component}, then define a group of permutations used to move the holes, and finally fix the holes in $\T^*$ using the idea we discussed above.

\paragraph{Compatibility for higher powers.} Recall that in \cref{sec:compensate}, we defined the compatibility for the second power. A level-1 Z-block $Z_{\hat K}$ is compatible with $X_{I}, Y_{J}$, if the following conditions hold:
\begin{itemize}
\item $\split(\hat K, S_{0,2,2}) = \split(\hat K, S_{2,0,2}) = \splresZ^{(0,2,2)}$, where $S_{0,2,2}$ and $S_{2,0,2}$ are defined with respect to the triple $(X_I, Y_J, Z_K)$;
\item $\split(\hat K, S_2) = \splresavgZ^{(2)}$.
\end{itemize}
Here, we can make a constraint for the split distribution of $T_{0,2,2}$ because the index 0 ensures a one-to-one correspondence between the split distributions of Y and Z-indices. Similar for $T_{2,0,2}$. We generalize this definition to higher powers, creating a similar constraint for every component $T_{i,j,k}$ with $i = 0$ or $j = 0$:
\begin{itemize}
\item For all components $T_{i,j,k}$ with $i = 0$ or $j = 0$, $\split(\hat K, S_{i,j,k}) = \splresZ^{(i,j,k)}$, where $S_{i,j,k}$ is defined with respect to the triple $(X_I, Y_J, Z_K)$.
\item $\split(\hat K, S_k) = \splresavgZ^{(k)}$ for all $k$.
\end{itemize}
Based on this definition, we will show in \cref{sec:global} that there are no interfering terms between $Z_{\hat K}$ and $X_I, Y_J$ that are incompatible. Once some $Z_{\hat K}$ is compatible with two remaining triples, we zero it out as we did in \cref{sec:compensate}.

The other steps for higher powers are similar to \cref{sec:compensate}: First, we choose a distribution $\alpha$ over level-$\lvl$ components $T_{i,j,k}$ ($i + j + k = 2^\lvl$). Second, we apply the asymmetric hashing method to obtain many triples that do not share X or Y-blocks (sharing Z-blocks is allowed). Then, if some level-$\lastlvl$ Z-block $Z_{\hat K}$ is compatible with two remaining triples, it is zeroed out. (Several additional zeroing-out steps are applied for technical reasons.) Finally, all triples become independent, then we fix the holes and degenerate each triple independently, forming a desired lower bound of the value.

\paragraph{Analyzing component values.} Our approach is applied to obtain not only the value of the CW tensor $\CW_q$, but also the value of high-level components $T_{i,j,k}$, which is a subtensor of the CW tensor power. One more complication rises in this scenario: Denote by $\numxblock, \numyblock, \numzblock$ be the number of X, Y, and Z-blocks before the hashing process. The asymmetric method requires $\numxblock = \numyblock > \numzblock$ holds. However, if we apply the hashing method on
\[
  \sym(T_{i,j,k}^{\otimes n}) \defeq T_{i,j,k}^{\otimes n} \otimes T_{j,k,i}^{\otimes n} \otimes T_{k,i,j}^{\otimes n}
\]
like in previous works, we would have $\numxblock = \numyblock = \numzblock$ since X, Y, Z variables are symmetric. Our solution is to pick three different numbers $A_1, A_2, A_3$ and apply hashing on
\begin{align*}
  \T \coloneqq{} & T_{i,j,k}^{\otimes A_1n} \otimes T_{j,k,i}^{\otimes A_2n} \otimes T_{k,i,j}^{\otimes A_3n} \otimes{} \\
                 & T_{j,i,k}^{\otimes A_1n} \otimes T_{k,j,i}^{\otimes A_2n} \otimes T_{i,k,j}^{\otimes A_3n}.
\end{align*}
This makes Z-variables asymmetric from X and Y. Starting with this tensor $\T$ and performing similar analysis as in \cref{sec:global}, we obtain desired lower bounds for the components.

\paragraph*{Why did we break the 2.3725 lower bound?} By compensating for the combination loss at each level, we get the new upper bound of $\omega < 2.37187$ from the eighth power of the CW tensor. In the paper by Ambainis, Filmus, and Le Gall \cite{ambainis2015fast}, they proved that certain algorithms could not give a better upper bound than $\omega < 2.3725$. These algorithms include previous improvements in analyzing higher powers and the refined laser method \cite{alman2021}. Our algorithm is the first algorithm to break this lower bound.

In their lower bound, they start with an estimated partitioned tensor, which is a partitioned tensor with an estimated value for each subtensor. Then they defined the merging value of it, which, roughly speaking, is the maximum value one can get from (1) zeroing it out into independent subtensors; and (2) merging non-independent subtensors that are matrix multiplication tensors into larger matrix multiplications.

Then they start from the tensor $\mathrm{CW}_{q}^{\otimes 16}$. The values of its subtensors are estimated using previous algorithms \cite{williams2012, legall2014} but ignoring a penalty term: the penalty term arises in the laser method when multiple joint distributions correspond to the same marginal distribution. They proved that from such a starting point, the merging value of $\mathrm{CW}_{q}^{\otimes 16}$ could not give a better bound than $\omega < 2.3725$. Since the penalty is exactly the term that the refined laser method improved, their lower bound also applies to the refined laser method.

Our algorithm uses the same variable block multiple times starting from level 2, and this would already give improved value for level-3 subtensors $T_{1,2,5},T_{1,3,4},T_{2,2,4}$ and $T_{2,3,3}$. We break their lower bound because our lower bounds for values of the subtensors of $\mathrm{CW}_{q}^{\otimes 16}$ are better than the upper bounds they used for the estimation. 

However, at the end of the day, our algorithm is still zeroing out level-1 variable blocks and merging non-independent matrix multiplications. So it is still subject to their second lower bound starting from $\mathrm{CW}_{q}$, which says that such algorithms cannot prove $\omega < 2.3078$~\cite{ambainis2015fast}.

\section{Preliminaries}
\label{sec:prelim}
In this paper, $\log x$ means $\log_2 x$ by default, and $[n]=\{1,\ldots,n\}$. For a sequence $a_1,\ldots,a_k$ which sums to 1, $\binom{n}{a_1n, \, \cdots, a_kn}$ can be written as $\binom{n}{[a_i n]_{i\in[k]}}$, or simply $\binom{n}{[a_i n]}$ if there is no ambiguity. The notation $A_1\sqcup A_2\sqcup\cdots\sqcup A_k$ means the disjoint union of sets $A_1, \ldots, A_k$.

We use $\ind[P]$ as the indicator function in this paper. For a statement $P$,
\[
\ind[P] = \begin{cases}
  1 & \textup{if $P$ is true}, \\
  0 & \textup{otherwise}.
\end{cases}
\]
So $\sum_{x}\ind[P(x)]=\big|\{x|P(x)\}\big|$, that is, the number of $x$ satisfying the statement $P(x)$.

Most of our notations about tensors are similar to \cite{alman2021}.

\subsection{Tensors, Operations, and Ranks}

\paragraph*{Tensors.} Let $X = \{x_1, x_2, \dots, x_n\}$, $Y = \{y_1, y_2, \dots, y_m\}$, and $Z = \{z_1, z_2, \dots, z_p\}$ be three sets of variables. A \emph{tensor} $T$ over $X, Y, Z$ and field $\mathbb{F}$ is the summation
\[T \coloneqq \sum_{i = 1}^n \sum_{j=1}^m \sum_{k=1}^p a_{i,j,k} \cdot x_i y_j z_k,\]
where all $a_{i,j,k}$'s are from field $\mathbb{F}$. %

The \emph{matrix multiplication tensor} $\langle n, m, p\rangle$ is a tensor over sets $\{x_{i,j}\}_{i \in [n], j \in [m]}$, $\{y_{j,k}\}_{j \in [m], k \in [p]}$, and $\{z_{k,i}\}_{k \in [p], i \in [n]}$, defined as
\[ \langle n, m, p\rangle \coloneqq \sum_{i=1}^n \sum_{j=1}^m \sum_{k=1}^p x_{i,j} \cdot y_{j,k} \cdot z_{k,i}. \]

\paragraph*{Tensor isomorphisms.} If two tensors $T$ and $T'$ are equal up to a renaming of their variables, we say they are \emph{isomorphic} or \emph{equivalent}, denoted as $T \cong T'$. Formally, we have the following definition.

\begin{definition}
  Let $T$ be a tensor over variables $X, Y, Z$, written as
  \[T = \sum_{x \in X} \sum_{y \in Y} \sum_{z \in Z} a_{x,y,z} \cdot xyz,\]
  and $T'$ be a tensor over variables $X', Y', Z'$. We say $T$ is \emph{isomorphic to} $T'$ if there are bijections
  \[ \phi_1: X \to X',\qquad \phi_2: Y \to Y',\qquad \phi_3: Z \to Z', \]
  satisfying
  \[
    T' = \sum_{x \in X} \sum_{y \in Y} \sum_{z \in Z} a_{x,y,z} \cdot \phi_1(x) \cdot \phi_2(y) \cdot \phi_3(z).
  \]
  $\phi = (\phi_1, \phi_2, \phi_3)$ is called an \emph{isomorphism} from $T$ to $T'$, denoted by $\phi\big(T\big) = T'$.
\end{definition}

Moreover, an isomorphism $\phi$ from $T$ to itself is called an \emph{automorphism} of $T$. All automorphisms of $T$ form a group, called the \emph{group of automorphisms} of $T$, written $\Aut(T)$.

\paragraph*{Tensor operations.} Let $T$ and $T'$ be two tensors over $X, Y, Z$ and $X', Y', Z'$, respectively, written as
\[
  \displaystyle T = \sum_{x_i \in X}  \sum_{y_j \in Y} \sum_{z_k \in Z} a_{i,j,k} \cdot x_i y_j z_k, \qquad \displaystyle T' = \sum_{x'_{i'} \in X'} \sum_{y'_{j'} \in Y'} \sum_{z'_{k'} \in Z'} a'_{i',j',k'} \cdot x'_{i'} y'_{j'} z'_{k'}.
\]
We define the following operations:
\begin{itemize}
\item The \emph{direct sum} $T \oplus T'$ is a tensor over $(X \sqcup X'), (Y \sqcup Y')$, and $(Z \sqcup Z')$, defined as
  \[
    T \oplus T' \defeq \sum_{x_i \in X} \sum_{y_j \in Y} \sum_{z_k \in Z} a_{i,j,k} \cdot x_i y_j z_k + \sum_{x'_{i'} \in X'}  \sum_{y'_{j'} \in Y'} \sum_{z'_{k'} \in Z'} a'_{i',j',k'} \cdot x'_{i'} y'_{j'} z'_{k'}.
  \]
  When performing the direct sum $T \oplus T'$, we always regard the variables in $T$ and $T'$ as distinct variables.
  Specifically, $T^{\oplus n}$ is defined as $T \oplus T \oplus \cdots \oplus T$, i.e., the direct sum of $n$ copies of $T$.

\item The \emph{tensor product} $T \otimes T'$ is a tensor over $(X \times X'),~ (Y \times Y'),~ (Z \times Z')$ defined as
  \[
    T \otimes T' \defeq \sum_{x_i \in X} \sum_{y_j \in Y} \sum_{z_k \in Z} \sum_{x'_{i'} \in X'}  \sum_{y'_{j'} \in Y'} \sum_{z'_{k'} \in Z'} a_{i,j,k} \cdot a'_{i',j',k'} \cdot (x_i, x'_{i'}) \cdot (y_j, y'_{j'}) \cdot (z_k, z'_{k'}).
  \]
  Specifically, the \emph{tensor power} $T^{\otimes n}$ is defined as the tensor product $T \otimes T \otimes \cdots \otimes T$ of $n$ copies. 
\item The \emph{summation} $T + T'$ is well-defined only when $(X,Y,Z) = (X', Y', Z')$. We define $T + T'$ to be a tensor over $X, Y, Z$:
  \[T + T' \defeq \sum_{x_i \in X} \sum_{y_j \in Y} \sum_{z_k \in Z} (a_{i,j,k} + a'_{i,j,k}) \cdot x_i y_j z_k.\]
\end{itemize}

\paragraph*{Rotation, swapping, and symmetrization.} Let
\[
  T = \sum_{i \in [n]} \sum_{j \in [m]} \sum_{k \in [p]} a_{i,j,k}\cdot x_iy_jz_k
\]
be a tensor over $X = \BK{x_1, \ldots, x_n}$, $Y = \BK{y_1, \ldots, y_m}$, and $Z = \BK{z_1, \ldots, z_p}$. We define the \emph{rotation} of $T$, denoted by $T^\rot$, as
\[
  T^\rot \defeq \sum_{i \in [n]} \sum_{j \in [m]} \sum_{k \in [p]} a_{i,j,k} \cdot x_j y_k z_i.
\]
over $X' = \BK{x_1, \ldots, x_m}$, $Y' = \BK{y_1, \ldots, y_p}$, and $Z' = \BK{z_1, \ldots, z_n}$. Intuitively, rotation is changing the order of dimensions from $(|X|, |Y|, |Z|)$ to $(|Y|, |Z|, |X|)$ while keeping the structure of the tensor unchanged.

Similarly, we define the \emph{swapping} of $T$, denoted by $T^\swap$, as
\[
  T^\swap \defeq \sum_{i \in [n]} \sum_{j \in [m]} \sum_{k \in [p]} a_{i,j,k} \cdot x_j y_i z_k.
\]
over $X' = \BK{x_1, \ldots, x_m}$, $Y' = \BK{y_1, \ldots, y_n}$, and $Z' = \BK{z_1, \ldots, z_p}$. Swapping is changing the order of dimensions from $(|X|, |Y|, |Z|)$ to $(|Y|, |X|, |Z|)$, i.e., swapping X and Y dimensions.

Based on these two operations, we define the \emph{symmetrization} of $T$. The \emph{rotational symmetrization}, or \emph{3-symmetrization} of $T$, is defined by $\sym_3(T) = T\otimes T^\rot \otimes T^{\rot \; \rot}$. In $\sym_3(T)$, the X, Y and Z variables are symmetric. Further, we define the \emph{full symmetrization}, or \emph{6-symmetrization} of $T$, as $\sym_6(T) = \sym_3(T) \otimes \sym_3(T)^\swap$.

\paragraph*{Tensor rank.} The \emph{rank} $R(T)$ of a tensor $T$ is the minimum integer $r \geq 0$ such that we can decompose $T$ into 
\[
  T = \sum_{t=1}^r \bigg(\sum_{x_i \in X} a_{t,i} x_i \bigg) \cdot \bigg(\sum_{y_j \in Y} b_{t,j} y_j\bigg) \cdot \bigg(\sum_{z_k \in Z} c_{t,k} z_k\bigg).
\]
This equation is also called the \emph{rank decomposition} of $T$.

The \emph{asymptotic rank} $\tilde{R}(T)$ is defined as
\[
  \tilde{R}(T) \defeq \lim_{n \rightarrow \infty} R\left(T^{\otimes n}\right)^{1/n}.
\]

We need the following theorem linking asymptotic rank to the matrix multiplication exponent $\omega$.

\begin{theorem} [Sch\"onhage's $\tau$ theorem \cite{schonhage1981partial}] \label{thm:Schonhage}
  Let tensor $T = \bigoplus_{i=1}^\ell \langle n_i, m_i, p_i \rangle$ be a direct sum of matrix multiplication tensors where $n_i, m_i, p_i \geq 1$ are positive integers. Suppose for $\tau \in [2/3, 1]$,
  $$\sum_{i=1}^\ell (n_i m_i p_i)^{\tau} = \tilde{R}(T).$$
We have that $\omega \leq 3 \tau$.
\end{theorem}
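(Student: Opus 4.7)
The plan is to follow the classical proof of Schönhage's asymptotic sum inequality: raise $T$ to a high tensor power, identify a dominant matrix-multiplication summand, combine its many copies into a single large matrix multiplication, and compare its rank against the $\omega$-lower-bound coming from the definition of $\omega$.

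Using distributivity of $\otimes$ over $\oplus$, I would expand
\[
  T^{\otimes N} \cong \bigoplus_{\vec a\,:\,|\vec a|=N}\binom{N}{\vec a}\Bigl\langle \textstyle\prod_i n_i^{a_i},\;\prod_i m_i^{a_i},\;\prod_i p_i^{a_i}\Bigr\rangle.
\]
Writing $R_0 := \tilde R(T) = \sum_i (n_im_ip_i)^\tau$, the definition of asymptotic rank (taking a suitable subsequence via $R_0 = \inf_N R(T^{\otimes N})^{1/N}$) gives $R(T^{\otimes N}) \le R_0^{N(1+o(1))}$. Setting $\alpha^*_i := (n_im_ip_i)^\tau / R_0$ (a probability distribution by hypothesis) and rounding $\vec a^* \approx N\alpha^*$, restriction produces $s\cdot\langle A, B, C\rangle$ as a subtensor of $T^{\otimes N}$, where $A = \prod_i n_i^{a_i^*}$, $B = \prod_i m_i^{a_i^*}$, $C = \prod_i p_i^{a_i^*}$, and $s := \binom{N}{\vec a^*} \approx 2^{NH(\alpha^*)}$ by Stirling.

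The crucial step is to combine these $s$ copies into a single matrix multiplication and invoke $R(\langle u, v, w\rangle) \ge (uvw)^{\omega/3 - o(1)}$, which follows from $\langle u, v, w\rangle \otimes \langle v, w, u\rangle \otimes \langle w, u, v\rangle = \langle uvw, uvw, uvw\rangle$ together with the definition of $\omega$. Once a bound of the form $(sABC)^{\omega/3 - o(1)} \le R_0^{N(1+o(1))}$ is in hand, taking logs, dividing by $N$, and plugging in $\log s \approx NH(\alpha^*)$ and $\log ABC \approx N \sum_i \alpha^*_i \log(n_im_ip_i)$, together with the identity $H(\alpha^*) = -\tau \sum_i \alpha^*_i \log(n_im_ip_i) + \log R_0$, yields $\omega \le 3\tau$ after algebra.

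The main obstacle is the combining step. A naive stacking of the $s$ copies into a single $\langle sA, B, C\rangle$ absorbs the multiplicity along only one dimension, producing a bound that is strictly weaker than $\omega \le 3\tau$ whenever $H(\alpha^*) > 0$. To recover the sharp bound one must leverage the asymptotic direct-sum identity $\tilde R\bigl(\bigoplus_i \langle n_i, m_i, p_i\rangle\bigr) = \sum_i (n_im_ip_i)^{\omega/3}$, established by summing the $(\text{volume})^{\omega/3}$ contributions over all multi-indices $\vec a$ simultaneously via the multinomial identity $\sum_{\vec a}\binom{N}{\vec a}\prod_i x_i^{a_i} = (\sum_i x_i)^N$ (applied with $x_i = (n_im_ip_i)^{\omega/3}$), rather than isolating just the dominant $\vec a^*$. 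Executing this ``sum over all $\vec a$'' step, which sidesteps the general (unknown) direct-sum conjecture for bilinear complexity and instead uses that $(n_i m_i p_i)^{\omega/3}$ is exactly the asymptotic ``value'' of $\langle n_i, m_i, p_i\rangle$, is the technical heart of Schönhage's argument.
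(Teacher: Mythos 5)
Your skeleton is the right one (expand $T^{\otimes N}$ by types, pass to the dominant multi-index, use Stirling, use $R(\langle u,v,w\rangle)\ge (uvw)^{\omega/3-o(1)}$ via cyclic symmetrization), but there is a genuine gap at exactly the point you yourself label the ``technical heart'': you never prove it. After reducing to $s=\binom{N}{\vec a^*}$ disjoint copies of a single format $\langle A,B,C\rangle$ inside $T^{\otimes N}$, you correctly observe that stacking them into $\langle sA,B,C\rangle$ is too lossy, and you then appeal to ``the asymptotic direct-sum identity $\tilde R\bigl(\bigoplus_i\langle n_i,m_i,p_i\rangle\bigr)=\sum_i(n_im_ip_i)^{\omega/3}$''. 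In the direction you need, that identity is a restatement of Schönhage's theorem itself, so the argument as written is circular. The multinomial identity $\sum_{\vec a}\binom{N}{\vec a}\prod_i x_i^{a_i}=\bigl(\sum_i x_i\bigr)^N$ only reorganizes numbers; it does not show that a direct sum of $s$ disjoint copies of an equal-format product forces (asymptotic) rank on the order of $s\,(ABC)^{\omega/3}$, and that is precisely where the direct-sum-type difficulty lives. (For reference, the paper offers no proof to compare against: it quotes the theorem from Schönhage.)

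The missing ingredient is Schönhage's multiplicity lemma for equal formats: if $s$ independent $\langle n,m,p\rangle$ products can be computed with $r$ bilinear multiplications (after the standard conversion from asymptotic/border rank to rank with subexponential loss), then $s\,(nmp)^{\omega/3}\le r$. Its proof is a recursive simulation, not a degeneration to one big matrix product: run the $r$-multiplication algorithm with matrix blocks as entries; the $r$ block products it requires are themselves independent $\langle n^{k-1},m^{k-1},p^{k-1}\rangle$ products, so they can be batched into groups of $s$ and fed back into the same algorithm one level down, giving roughly $s\,(r/s)^k$ multiplications for $s$ independent $\langle n^k,m^k,p^k\rangle$ products and hence $(nmp)^{\omega/3}\le r/s$. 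With that lemma in hand, your reduction does close, and---contrary to your final remark---isolating the dominant $\vec a^*$ is harmless: there are only $\binom{N+\ell-1}{\ell-1}=\mathrm{poly}(N)$ types, a factor that vanishes after taking $N$-th roots, so the obstruction is the multiplicity $\binom{N}{\vec a^*}$ within a single type, not the failure to sum over all types simultaneously.
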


\subsection{Restrictions, Degenerations, and Values}
\label{sec:degen_and_value}

\paragraph*{Restrictions of a tensor.} Let $T$ be a tensor over $X, Y, Z$ and $T'$ be a tensor over $X', Y', Z'$. We say $T'$ is a \emph{restriction} of $T$ (or $T$ restricts to $T'$) if there is a mapping $f_1: X \to \Span_\F(X')$, where $\Span_\F(X')$ is the set of linear combinations over $X'$ with coefficients in $\F$; and similarly $f_2: Y \to \Span_\F(Y')$, $f_3: Z \to \Span_\F(Z')$, satisfying
\[T' = \sum_{i=1}^{|X|} \sum_{j=1}^{|Y|} \sum_{k=1}^{|Z|} a_{i,j,k} \cdot f_1(x_i) \cdot f_2(y_j) \cdot f_3(z_k).\]
It is easy to verify that $R(T') \le R(T)$ and $\tilde R(T') \le \tilde R(T)$ by applying $f_1, f_2, f_3$ on each variable that appeared in the rank decomposition.

\paragraph*{Degenerations.} Suppose $T$ is a tensor over $X, Y, Z$ while $T'$ is a tensor over $X', Y', Z'$. Also, there are mappings
\[
  f_1 : X \to \Span_{\F[\lambda]}(X'), \qquad f_2 : Y \to \Span_{\F[\lambda]}(Y'), \qquad f_3 : Z \to \Span_{\F[\lambda]}(Z'),
\]
where $\F[\lambda]$ is the ring of polynomials of a formal variable $\lambda$. If there exists $d \in \mathbb N$, such that
\[
  T' = \lambda^{-d} \cdot \bigg(\sum_{i=1}^{|X|} \sum_{j=1}^{|Y|} \sum_{k=1}^{|Z|} a_{i,j,k} \cdot f_1(x_i) \cdot f_2(y_j) \cdot f_3(z_k)\bigg) + O(\lambda),
\]
then we say $T'$ is a degeneration of $T$, written $T \degen T'$. It is clear that restriction is a special type of degeneration. One can also verify that $\tilde R(T') \le \tilde R(T)$.

\paragraph*{Zeroing out.} \emph{Zeroing out} is a special case of restrictions. While zeroing out, we select a subset $X' \subseteq X$ and set all X-variables outside $X'$ to zero. Similarly, $Y' \subseteq Y$ and $Z' \subseteq Z$ are chosen and all other Y and Z-variables are set to zero. Namely, zeroing out is the degeneration with
\[
  f_1(x_i) =
  \begin{cases}
    x_i, & x_i \in X', \\
    0, & x_i \in X \backslash X',
  \end{cases}
\]
and $f_2, f_3$ are defined similarly. The resulting tensor $T'$ is also called the \emph{subtensor of} $T$ \emph{over} $X', Y', Z'$, written $T' = T \vert_{X', Y', Z'}$.

Zeroing out suffices for previous works, but we need one more type of degeneration below for technical reasons.

\paragraph*{Identifications.} Let $X^{(1)}$ and $X^{(2)}$ be two disjoint sets that identify with the same set $X$, in the sense that there exists bijections $i_{X^{(1)}} : X^{(1)} \rightarrow X$ and $i_{X^{(2)}}: X^{(2)} \rightarrow X$. Similarly, for $Y,\; Y^{(1)}, Y^{(2)}$ and $Z,\; Z^{(1)}, Z^{(2)}$, there are bijections $i_{Y^{(1)}}, i_{Y^{(2)}}$ and $i_{Z^{(1)}}, i_{Z^{(2)}}$ similarly.

Suppose $T^{(1)}$ is a tensor defined on the sets $X^{(1)}, Y^{(1)}, Z^{(1)}$, and $T^{(2)}$ is defined on $X^{(2)}, Y^{(2)}, Z^{(2)}$. For their direct sum $T^{(1)} \oplus T^{(2)}$, we can define the following degeneration.
$$f_1(x_i) = \begin{cases} i_{X^{(1)}}(x_i) & x_i \in X^{(1)}, \\ i_{X^{(2)}}(x_i) & x_i \in X^{(2)}.
\end{cases}$$
The definitions of $f_2, f_3$ are similar. The resulting tensor after degeneration is exactly $T^{(1)} + T^{(2)}$ as if they were both defined on $X, Y, Z$. We call such a degeneration an \emph{identification} because it identifies the different copies of the same variable. 

Moreover, for $m$ tensors $T^{(1)}, T^{(2)}, T^{(3)}, \dots, T^{(m)}$, we can similarly define their identification, written as 
\[
  \bigoplus_{i=1}^m T^{(i)} \degen \sum_{i=1}^{m}  T^{(i)}.
\]

\paragraph*{Values.} The value of a tensor captures the asymptotic ability of its symmetrization, $\sym_3(T)$ or $\sym_6(T)$, in computing matrix multiplication.

\begin{definition}
  \label{def:value}
  The 3-symmetrized $\tau$-value of a tensor $T$, denoted as $V^{(3)}_\tau(T)$, is defined as
  \[ V^{(3)}_\tau(T) = \limsup_{n \to \infty} \max \BKmid{\bk{\sum_{i=1}^m (a_i b_i c_i)^{\tau}}^{\frac{1}{3n}}}{\sym_3(T)^{\otimes n} \degen \bigoplus_{i=1}^m \angbk{a_i, b_i, c_i}}. \]
  The 6-symmetrized $\tau$-value is defined by replacing $\sym_3$ with $\sym_6$ (and replacing $3n$ with $6n$) in the above definition, denoted by $V^{(6)}_\tau(T)$.
\end{definition}

Note that the previous works only use the 3-symmetrized values; however, we need 6-symmetrized values due to technical reasons. $V^{(6)}_\tau(T) \ge V^{(3)}_\tau(T)$ holds for any tensor $T$.

It is easy to verify that, for tensors $T$ and $T'$, their values satisfy $V^{(6)}_\tau(T \otimes T') \ge V^{(6)}_\tau(T) \cdot V^{(6)}_\tau(T')$ and $V^{(6)}_\tau(T \oplus T') \ge V^{(6)}_\tau(T) + V^{(6)}_\tau(T')$. Similar properties hold for 3-symmetrized values. These properties are called the \emph{super-multiplicative} and \emph{super-additive} properties, which allow us to bound the values of some complex tensors based on the values of their ingredients.

\subsection{Partitions of a Tensor}
\label{sec:partition-tensor}

\paragraph*{Partitions of a tensor.} The \emph{partition} of variable sets $X, Y, Z$ is defined as the disjoint unions
\[
  X = X_0 \sqcup X_1 \sqcup \cdots \sqcup X_{\ell_1 - 1}, \qquad Y = Y_0 \sqcup Y_1 \sqcup \cdots \sqcup Y_{\ell_2 - 1}, \qquad Z = Z_0 \sqcup Z_1 \sqcup \cdots \sqcup Z_{\ell_3 - 1}.
\]
Then, we define the corresponding partition of $T$ to be $T = \sum_{i = 0}^{\ell_1 - 1} \sum_{j = 0}^{\ell_2 - 1} \sum_{k = 0}^{\ell_3 - 1} T_{i,j,k}$ where $T_{i,j,k} \defeq T \vert_{X_i, Y_j, Z_k}$ is the subtensor of $T$ over subsets $X_i, Y_j, Z_k$. In this case, we call $T$ a \emph{partitioned tensor} and call $T_{i,j,k}$ a \emph{component} of $T$. For a variable $x \in X_i$, we say $i$ is the \emph{index} of $x$. It is similar for Y and Z-variables.

An important class of partitioned tensors for matrix multiplication is the \emph{$P$-partitioned tensors}, which has been heavily used in prior works because it enables the use of the laser method. A partitioned tensor $T = \sum_{i = 0}^{\ell_1 - 1} \sum_{j = 0}^{\ell_2 - 1} \sum_{k = 0}^{\ell_3 - 1} T_{i,j,k}$ is called $P$-partitioned if $T_{i,j,k} = 0$ for all $i + j + k \ne P$. Most partitioned tensors we consider in this paper are $P$-partitioned tensors.

\paragraph*{Partitions of a tensor power.} Consider the tensor power $\mathcal{T} \defeq T^{\otimes n}$, which is a tensor over $X^{n}$, $Y^{n}$, and $Z^{n}$. Here $T$ is called the \emph{base tensor}, and the partition of $T$ we take is called the \emph{base partition}. Given any base tensor together with a base partition, it naturally induces the partition for the tensor power $\mathcal{T}$, as described below.

Given the base partition of $T$, we let $I = (I_1, I_2, \dots, I_n)$ be a sequence in $[\ell_1]^n$. We call $X_I \defeq X_{I_1} \times X_{I_2} \times \cdots \times X_{I_n}$ an \emph{X-variable block}, or equivalently an \emph{X-block}. Similarly, we define Y-blocks and Z-blocks. These blocks form a partition of variables in $\T$. Moreover, the sequence $I$ here is called the \emph{index sequence} of block $X_I$. We define the index sequences for Y-blocks and Z-blocks similarly.

Using such notations, $\T$ is partitioned into
\[
  \T = \sum_{I \in [\ell_1]^n}\sum_{J \in [\ell_2]^n}\sum_{K \in [\ell_3]^n} \mathcal{T}_{I, J, K},
\]
where $\T_{I, J, K} = \T \vert_{X_I, Y_J, Z_K}$ is the subtensor of $\T$ over blocks $X_I$, $Y_J$, and $Z_K$. The index sequence of a variable is defined as the index sequence of the block it belongs to, i.e., any variable $x \in X_I$ has the index sequence $I$ (similar for Y and Z-variables).

\subsection{Coppersmith-Winograd Tensor}
\label{sec:CWtensor}

The most important tensor in the fast matrix multiplication literature is the Coppersmith-Winograd tensor \cite{coppersmith1987matrix}. It is a partitioned tensor over
\begin{align*}
  X &= \{x_0\} \sqcup \{x_1, x_2, \dots, x_{q}\} \sqcup \{x_{q+1}\}, \\
  Y &= \{y_0\} \sqcup \{y_1, y_2, \dots, y_{q}\} \sqcup \{y_{q+1}\}, \\
  Z &= \{z_0\} \sqcup \{z_1, z_2, \dots, z_{q}\} \sqcup \{z_{q+1}\}.
\end{align*}
We define $X_0 = \{x_0\}, X_1 = \{x_1, x_2, \dots, x_{q}\}$, and $X_2 =  \{x_{q+1}\}$. The partition is therefore $X = X_0 \sqcup X_1 \sqcup X_2$.\footnote{Here we let the index start from $0$ to be consistent with previous works.} Similarly, we define the partitioned sets for $Y$ and $Z$.

The tensor $\CW_q$ is defined as
\[\CW_q = \sum_{i=1}^q (x_i y_i z_0 + x_i y_0 z_i + x_0 y_i z_i) + x_0 y_{0} z_{q+1} + x_{0} y_{q+1} z_0 + x_{q+1} y_0 z_{0} .\]
By the definition of partition, $\CW_q$ can be written as
\[ \CW_q = T_{1, 1, 0} + T_{1, 0, 1} + T_{0, 1, 1} + T_{0, 0, 2} + T_{0, 2, 0} + T_{2, 0, 0}, \]
where $T_{i,j,k} = \CW_q \vert_{X_i, Y_j, Z_k}$, i.e., the corresponding subtensor of $\CW_q$.

Note here the nonzero $T_{i,j,k}$'s all satisfy that $i + j + k = 2$, i.e., $\CW_q$ is a 2-partitioned tensor. This is a crucial property of $\CW_q$, which enables the use of the laser method. Another important property is that its asymptotic rank $\tilde R(\CW_q) = q + 2$~\cite{coppersmith1987matrix}. This benefits the use of Theorem~\ref{thm:Schonhage}.

Similar to the previous subsection, the above partition of $\CW_q$ induces a partition of $\T = \CW_q^{\otimes N}$:
\[
  \T = \sum_{I, J, K \in \BK{0, 1, 2}^{N}} \T_{I,J,K}, \qquad \textup{where} \qquad \T_{I,J,K} \defeq \T \vert_{X_I, Y_J, Z_K}.
\]
Because $\CW_q$ is 2-partitioned, $\T_{I,J,K}$ is nonzero only when $I + J + K = (2, 2, \ldots, 2)$. We call it the \emph{level-1 partition} of the tensor power $\T$. (We will introduce higher-level partitions of $\T$ in the next subsection.) The \emph{variable blocks} $X_I, Y_J, Z_K$ and \emph{index sequences} $I, J, K$ are also defined according to the previous subsection.

\subsection{Leveled Partitions of CW Tensor Power}
\label{subsec:leveled}

The prior works obtained better bounds on $\omega$ by applying the laser method on the tensor powers of the CW tensor, $\CW_q^{\otimes 2^{\lvl}}$, instead of directly on $\CW_q$. Such analyses are recursive: The analysis of the $2^{\lvl}$-th power makes use of the bounds of $\CW_q^{2^{\lvl - 1}}$'s components, namely $T_{i,j,k}$, where $i + j + k = 2^{\lvl}$. When analyzing the tensor $T = \CW_q^{\otimes 2^{\lvl}}$, we fix a large number $N = n \cdot 2^{\lvl}$, and view the tensor power $\T = (\CW_q^{\otimes 2^{\lvl}})^{\otimes n} \cong \CW_q^{\otimes N}$ under a proper variable partition (described below). In this paper, we also place our analysis under the same view, while further fixing $N$ to be the same for all levels $\lvl$ (thus $n = N / 2^{\lvl}$ varies between levels). After that, the partitions in different levels become partitions of the same tensor $\T$ of different granularities. Below, we formally define such multi-level partitions of $\T$.

\paragraph*{Level-$1$ partition.} We consider the $N$-th power of the CW tensor, $\mathcal T = \CW_q^{\otimes N}$, for a large integer $N$. As we discussed before, the level-1 partition of $\T$ is written as $\mathcal T = \sum_{I, J, K \in \{0,1,2\}^N} \mathcal T_{I, J, K}$. The summation is over all $(I, J, K)$ where $I + J + K = (2, 2, \ldots, 2)$. %

\paragraph*{Level-$\lvl$ partition.} Since we will let $N$ go to infinity, without loss of generality, we can assume that $N$ is a multiple of $2^{\lvl - 1}$, i.e., $N = n \cdot 2^{\lvl - 1}$. Then we can alternatively view $\CW_q^{\otimes N}$ as $\big(\CW_q^{\otimes 2^{\lvl - 1}}\big)^{\otimes n}$. If we take $T = \CW_q^{\otimes 2^{\lvl - 1}}$ as the base tensor and adopt the base partition defined below, we will get another partition of $\CW_q^{\otimes N}$, which we call the \emph{level-$\lvl$ partition}.

The base partition of $T = \CW_q^{\otimes 2^{\lvl - 1}}$ is defined as follows. Note that $T$ is a tensor over $X^{2^{\lvl - 1}}, Y^{2^{\lvl - 1}}, Z^{2^{\lvl - 1}}$. If we view $T$ as a power of $\CW_q$ and observe its level-1 partition, each variable $x \in X^{2^{\lvl - 1}}$ has an index sequence $(i_1, \ldots, i_{2^{\lvl - 1}}) \in \BK{0, 1, 2}^{2^{\lvl - 1}}$. The base partition of $T$ is then a coarsening of this level-1 partition formed by grouping variables according to $\sum_{t=1}^{2^{\lvl - 1}} i_t$. Formally, we partition the variable set $X^{2^{\lvl - 1}}$ into $X^{2^{\lvl - 1}} = \tilde{X}_0 \sqcup \tilde{X}_1 \sqcup \cdots \sqcup \tilde{X}_{2^{\lvl}}$, where
\[
  \tilde{X}_i \defeq \bigsqcup_{\substack{(i_1, \ldots, i_{2^{\lvl-1}}) \in \BK{0, 1, 2}^{2^{\lvl-1}} \\ i_1 + \cdots + i_{2^{\lvl-1}} = i}} X_{i_1} \times X_{i_2} \times \cdots \times X_{i_{2^{\lvl-1}}}.
\]
We partition $Y^{2^{\lvl-1}} = \tilde{Y}_0 \sqcup \tilde{Y}_1 \sqcup \tilde{Y}_2 \sqcup \cdots \sqcup \tilde{Y}_{2^\lvl}$ and $Z^{2^{\lvl-1}} = \tilde{Z}_0 \sqcup \tilde{Z}_1 \sqcup \tilde{Z}_2 \sqcup \cdots \sqcup \tilde{Z}_{2^\lvl}$ similarly. 
Then the base partition of $T$ is written as
\[
  T = \sum_{i, j, k \in \{0, 1, \dots, 2^\lvl\}} T_{i,j,k}
\]
where $T_{i,j,k} = T \vert_{\tilde{X}_i, \tilde{Y}_j, \tilde{Z}_k}$. Note that by the property of CW tensor, $T_{i,j,k}$ is non-zero only when $i + j + k = 2^\lvl$. Here $T_{i,j,k}$ is called a \emph{level-$\lvl$ component}. Sometimes, we also write $(i, j, k)$ to represent the component $T_{i,j,k}$ if there is no ambiguity.

The level-$\lvl$ partition of $\mathcal T \defeq T^{\otimes n}$ is induced naturally by such base partition. For sequence $I = \bk{I_1, \ldots, I_{n}} \in \BK{0, 1, \ldots, 2^\lvl}^{n}$, the variable block $X_I$ is $\tilde{X}_{I_1} \times \tilde{X}_{I_2} \times \cdots \times \tilde{X}_{I_n}$. Similarly, we define $Y_J$ and $Z_K$ for $J, K \in \BK{0, 1, \ldots, 2^\lvl}^{n}$. $X_I, Y_J, Z_K$ are called \emph{level-$\lvl$ variable blocks} (or \emph{level-$\lvl$ blocks}). Formally, the level-$\lvl$ partition of $\T$ is written as
\[ \mathcal T = \sum_{I, J, K \in \BK{0, 1, \ldots, 2^\lvl}^{n}} \mathcal T_{I, J, K} \]
where $\mathcal T_{I,J,K} = \T \vert_{X_I, X_J, Z_K}$. Note that $\mathcal T_{I, J, K} \ne 0$ only if $\forall t \in [n]$, $I_t + J_t + K_t = 2^{\lvl}$. We call such $(I, J, K)$, or corresponding variable blocks $(X_I, Y_J, Z_K)$, a \emph{triple}. (These two notations are equivalent for a triple.)

\paragraph*{Level-$\lvl$ index sequences.}
Recall that the partitions of different levels are different partitions of the same tensor $\mathcal T = \CW_q^{\otimes N}$ over $X^N, Y^N, Z^N$ (recall that $X, Y, Z$ are variable sets of $\CW_q$). Therefore, the same variable $x \in X^N$ has different index sequences in different levels. We use \emph{level-$\lvl$ index sequence} to denote its index sequence in the level-$\lvl$ partition. The level-$\lvl$ index sequence of any variable is a sequence in $\BK{0, 1, \ldots, 2^\lvl}^{\bk{N/2^{\lvl-1}}}$.

The level-$\lvl$ partition is a coarsening of the level-$\lastlvl$ partition, because each level-$\lastlvl$ index sequence $I = (I_1, \ldots, I_{(N/2^{\lvl-2})})$ corresponds to a unique level-$\lvl$ index sequence
\[
  I' = \bk{I_1 + I_2,\ I_3 + I_4,\ \ldots\ ,\ I_{\midbk{N/2^{\lvl-2} - 1}} + I_{\midbk{N/2^{\lvl-2}}}}
\]
by adding every two consecutive entries together. Conversely, each level-$\lvl$ index sequence corresponds to a collection of level-$\lastlvl$ index sequences.

Similar many-to-one correspondence also appears between level-$\lastlvl$ blocks and level-$\lvl$ blocks: Each level-$\lastlvl$ block belongs to a unique level-$\lvl$ block, while each level-$\lvl$ block is the disjoint union of several level-$\lastlvl$ blocks. A level-$\lvl$ block can be regarded as not only a set of variables but also a collection of level-$\lastlvl$ blocks. If $X_I$ is some level-$\lvl$ block and $X_{\hat I}$ is level-$\lastlvl$ block, we use $X_{\hat I} \in X_I$ to denote that $X_{\hat I}$ is contained in $X_I$. In this case, we say $X_I$ is the \emph{parent} of $X_{\hat I}$.

(In many places of this paper, we observe two adjacent levels $\lvl$ and $\lvl - 1$ simultaneously. We often use notations $I, J, K$ to represent level-$\lvl$ index sequences while using $\hat I, \hat J, \hat K$ to represent level-$\lastlvl$ index sequences.)

\subsection{Distributions and Entropy}

Throughout this paper, we only need to consider discrete distributions supporting on a finite set. For such a distribution $\alpha$ supporting on $S$, we require it to be normalized (i.e. $\sum_{s \in S} \alpha(s) = 1$) and non-negative (i.e., $\forall s \in S, \ \alpha(s) \geq 0$). We define its entropy in the standard way:
\[ H(\alpha) \defeq -\sum_{\substack{s \in S \\ \alpha(s) > 0}} \alpha(s) \log \alpha(s). \]
We need the following lemma in our analysis.
\begin{lemma}\label{thm:stirling1}
  Let $\alpha$ be a discrete distribution that $\alpha(1)+\cdots+\alpha(k)=1$, then
  $$\binom{N}{\alpha(1)N,\cdots,\alpha(k)N}=2^{N(H(\alpha)+o(1))}.$$
\end{lemma}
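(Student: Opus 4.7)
The plan is to reduce the claim to Stirling's approximation applied to each factorial in the multinomial coefficient. Writing
\[
  \binom{N}{\alpha(1)N,\,\cdots,\,\alpha(k)N}=\frac{N!}{\prod_{i=1}^{k}(\alpha(i)N)!},
\]
I would take logarithms (base 2) and use the standard form $\log(n!)=n\log n-n\log e+O(\log n)$. The $N\log N$ term from the numerator and the $\sum_i \alpha(i)N\log N=N\log N$ contribution from the denominator cancel exactly, as do the linear terms $N\log e$ and $\sum_i \alpha(i)N\log e$, using $\sum_i\alpha(i)=1$. What remains is $-\sum_i \alpha(i)N\log\alpha(i)=N\cdot H(\alpha)$ plus an $O(\log N)$ error coming from the $O(\log n)$ terms in Stirling (there are only $k+1=O(1)$ such terms since $k$ is fixed). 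Dividing by $N$ gives $H(\alpha)+o(1)$ in the exponent, which is exactly the claim.

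Two minor bookkeeping issues need to be handled. First, if some $\alpha(i)=0$ then the factor $(\alpha(i)N)!=1$ contributes nothing and by convention $\alpha(i)\log\alpha(i)=0$, so the sum defining $H(\alpha)$ (which ranges only over indices with $\alpha(i)>0$) matches the surviving terms in the Stirling expansion. Second, the lemma is stated as though each $\alpha(i)N$ is an integer; in the applications throughout the paper one takes $N$ along a subsequence where this holds, or replaces $\alpha(i)N$ by $\lfloor\alpha(i)N\rfloor$ (adjusting one entry so the parts still sum to $N$), which perturbs each factorial by a multiplicative $2^{O(\log N)}$ factor and is absorbed into the $o(1)$ term.

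I do not anticipate a real obstacle: this is the standard entropy estimate of a multinomial coefficient, and the only thing to be careful about is tracking that the error terms from Stirling (applied $k+1$ times with $k$ fixed) and any integer-rounding contribute only $O(\log N)$ in the exponent, hence $o(N)$, which is what the $o(1)$ in the statement encodes.
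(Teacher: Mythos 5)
Your proposal is correct and follows essentially the same route as the paper: apply Stirling's approximation $\log(n!)=n\log n-n\log e+O(\log n)$ to each factorial, cancel the $N\log N$ and $N\log e$ terms using $\sum_i\alpha(i)=1$, and read off $NH(\alpha)+o(N)$ in the exponent. The extra remarks about zero-mass entries and integer rounding are fine but not needed beyond what the paper's one-line Stirling computation already covers.
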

\begin{proof}
  By Stirling's approximation, $\log(N!)=N\log N-N\log e+O(\ln N)$, so
  \begin{align*}
    &\log \binom{N}{\alpha(1)N,\cdots,\alpha(k)N} \\ 
    ={} & N\log N-N\log e+o(N)-\sum_{i=1}^k\big(\alpha(i)N\log(\alpha(i)N)-\alpha(i)N\log e+o(\alpha(i)N)\big) \\
    ={} &-N\sum_{i=1}^k\alpha(i)\log\alpha(i)+o(N) = NH(\alpha) + o(N). \qedhere
  \end{align*}
\end{proof}

\subsection{Distributions of Index Sequences}
\label{sec:def_distributions}

As in all prior works, when the laser method is applied on $T^{\otimes n}$ for some $P$-partitioned tensor $T$, the first step is to choose a distribution over all components. It is the same for this paper where we choose $T = \CW_q^{\otimes 2^{\lvl - 1}}$. Below, we clarify the terminologies and notations about these distributions, which are basically consistent with prior works.

\begin{definition}[Component Distributions]
  A level-$\lvl$ (joint) \emph{component distribution} $\alpha$ is a distribution over all level-$\lvl$ components $T_{i,j,k}$ where $i+j+k=2^\lvl$. In the level-$\lvl$ partition of the tensor $\T = \CW_q^{\otimes N}$ where $N = n \cdot 2^{\lvl-1}$, let $(X_I, Y_J, Z_K)$ be a level-$\lvl$ triple with index sequences $I,J,K$. We say $(X_I, Y_J, Z_K)$ is \emph{consistent} with a joint component distribution $\alpha$, if for each level-$\lvl$ component $T_{i,j,k}$, the proportion of index positions $t \in [n]$ where $(I_t, J_t, K_t) = (i,j,k)$ equals $\alpha(i,j,k)$.\footnote{Without loss of generality, as $n \to \infty$, we assume that each entry of $\alpha$ is a multiple of $1/n$. Therefore, there are always some triples consistent with $\alpha$.}

  A level-$\lvl$ X-marginal component distribution $\alphx$ is a distribution over $\BK{0, 1, \ldots, 2^\lvl}$. We say a level-$\lvl$ X-block $X_I$ is consistent with $\alphx$, if for every $i = 0, 1, \ldots, 2^\lvl$, the proportion of positions $t \in [n]$ where $I_t = i$ equals $\alphx(i)$. The Y and Z-marginal distributions are defined similarly, written as $\alphy$ and $\alphz$.

  A joint component distribution $\alpha$ induces marginal component distributions $\alphx$, $\alphy$, and $\alphz$, by
  \[
    \alphx(i) \defeq \sum_{j + k = 2^\lvl - i} \alpha(i,j,k), \qquad
    \alphy(j) \defeq \sum_{i + k = 2^\lvl - j} \alpha(i,j,k), \qquad
    \alphz(k) \defeq \sum_{i + j = 2^\lvl - k} \alpha(i,j,k).
    \numberthis \label{eq:shape_joint_to_margin}
  \]
  These induced marginal distributions are called the \emph{marginals} of $\alpha$.
\end{definition}

Denote the set of all distributions $\alpha(i,j,k)$ as $D$. As in \cite{coppersmith1987matrix,williams2012}, we have the following fact:

\begin{lemma}
  \label{thm:unique-dis-level1}
  In the level-1 partition, given marginal distributions $\alphx(i), \alphy(j), \alphz(k)$, the joint distribution $\alpha(i,j,k)$ is uniquely determined if exists.
\end{lemma}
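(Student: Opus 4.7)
The plan is to directly enumerate the components of the level-$1$ partition and observe that the marginal constraints determine the six non-zero entries of $\alpha$ uniquely. Recall that the level-$1$ partition is $2$-partitioned, so the support of $\alpha$ is exactly the six triples $(i,j,k)\in\{0,1,2\}^3$ with $i+j+k=2$, namely $(2,0,0),(0,2,0),(0,0,2),(1,1,0),(1,0,1),(0,1,1)$.

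First I would read off the three ``corner'' entries immediately: from the marginal definition \eqref{eq:shape_joint_to_margin}, the only triple with $i=2$ is $(2,0,0)$, so $\alphx(2)=\alpha(2,0,0)$. Symmetrically, $\alphy(2)=\alpha(0,2,0)$ and $\alphz(2)=\alpha(0,0,2)$. Hence these three entries of $\alpha$ are forced.

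Next, for the remaining three unknowns $\alpha(1,1,0),\alpha(1,0,1),\alpha(0,1,1)$, the marginal equations at index $1$ give
\[
  \alpha(1,1,0)+\alpha(1,0,1)=\alphx(1),\quad
  \alpha(1,1,0)+\alpha(0,1,1)=\alphy(1),\quad
  \alpha(1,0,1)+\alpha(0,1,1)=\alphz(1).
\]
This is a $3\times 3$ linear system whose coefficient matrix $\begin{pmatrix}1&1&0\\1&0&1\\0&1&1\end{pmatrix}$ has determinant $-2\ne 0$, so the system has at most one solution. Solving (by adding all three equations and subtracting each one) I would obtain the explicit formulas
\[
  \alpha(1,1,0)=\tfrac{\alphx(1)+\alphy(1)-\alphz(1)}{2},\quad
  \alpha(1,0,1)=\tfrac{\alphx(1)+\alphz(1)-\alphy(1)}{2},\quad
  \alpha(0,1,1)=\tfrac{\alphy(1)+\alphz(1)-\alphx(1)}{2}.
\]
Combined with the three corner values above, this determines $\alpha$ uniquely on its support, establishing the lemma.

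There is essentially no main obstacle here; the only thing to be a bit careful about is to confirm that existence is not part of the claim (the lemma explicitly says ``if exists''), so we do not need to verify that the closed-form solution is non-negative or that the derived $\alpha$ has marginals matching the prescribed $\alphx,\alphy,\alphz$. Uniqueness of the joint distribution is simply the uniqueness of the solution of the linear system above.
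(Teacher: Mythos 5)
Your proof is correct and takes essentially the same approach as the paper: both arguments read the entries of $\alpha$ off the marginals by direct linear elimination (the paper uses $\alpha(0,0,2)=\alphz(2)$ and $\alpha(0,1,1)=\alphx(0)-\alphy(2)-\alphz(2)$, while you use the index-$2$ corners plus the invertible $3\times 3$ system from the index-$1$ marginals). The difference is only in which marginal equations are invoked, not in the underlying idea.
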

\begin{proof}
  Suppose the marginals $\alphx$, $\alphy$, and $\alphz$ are given. We can determine $\alpha(0,0,2) = \alphz(2)$ and $\alpha(0,1,1) = \alphx(0) - \alphy(2) - \alphz(2)$. Other entries can be determined similarly.
\end{proof}

In higher levels, marginal distributions usually do not uniquely determine the joint distribution.\footnote{This is the cause of a loss in moduli in the analysis of higher powers, which can be reduced by the refined laser method~\cite{alman2021}.} Given marginal distributions $\alphx,\alphy,\alphz$, define $D(\alphx,\alphy,\alphz)\subseteq D$ to be the set of joint distributions inducing marginal distributions $\alphx,\alphy,\alphz$:
\[ D(\alphx,\alphy,\alphz) \defeq \{\alpha' \in D \mid \text{the marginal distribution of $\alpha'$ is $\alphx, \alphy, \alphz$} \}. \]
For convenience, we further define $\distShareMargin \defeq D(\alphx, \alphy, \alphz)$ to be the collection of distributions that share the same marginals with $\alpha$; define $D^*(\alphx, \alphy, \alphz) \defeq \argmax_{\alpha' \in D(\alphx, \alphy, \alphz)} H(\alpha')$.

\paragraph{Split distributions.} Suppose we view a level-$\lvl$ component $T_{i,j,k}$'s tensor power $\T \defeq T_{i,j,k}^{\otimes n}$ as a subtensor of $(\CW_q^{\otimes 2^{\lvl - 1}})^{\otimes n}$, in the sense that $\T = (\CW_q^{\otimes 2^{\lvl - 1}})^{\otimes n} \vert_{X_I, Y_J, Z_K}$ where $(I_t, J_t, K_t) = (i, j, k)$ for all $t \in [n]$. When we look at the level-$\lastlvl$ partition of this tensor, each factor $T_{i,j,k}$ ($i + j + k = 2^\lvl$) further splits into
\[
  T_{i,j,k} = \sum_{i_l + j_l + k_l = 2^{\lvl-1}} T_{i_l, j_l, k_l} \otimes T_{i-i_l,\, j-j_l,\, k-k_l}.
\]
It corresponds to the fact that, as explained at the end of \Cref{subsec:leveled}, each index in the level-$\lvl$ index sequence is the sum of two consecutive indices in the level-$\lastlvl$ index sequence. A \emph{split distribution} is a distribution over the terms on the right hand side. (Such concept is similar to the component distribution and was used in prior works.) Formally:

\begin{definition}[Split Distributions]
  A (joint) \emph{split distribution} of level-$\lvl$ component $T_{i,j,k}$, namely $\alpha^{(i, j, k)}$, is a distribution over all $(i_l, j_l, k_l)$ such that both $T_{i_l, j_l, k_l}$ and $T_{i - i_l, \, j - j_l, \, k - k_l}$ are level-$\lastlvl$ components. (Equivalently, it satisfies $0 \le i_l \le i$, $0 \le j_l \le j$, $0 \le k_l \le k$, and $i_l + j_l + k_l = 2^{\lvl - 1}$.)

  Consider the level-$\lastlvl$ partition of $\T \defeq T_{i,j,k}^{\otimes n}$ and let $(X_{\hat I}, Y_{\hat J}, Z_{\hat K})$ be a level-$\lastlvl$ triple of $\T$. We say $(X_{\hat I}, Y_{\hat J}, Z_{\hat K})$ is \emph{consistent} with the joint split distribution $\alpha^{(i, j, k)}$, if for every $(i_l, j_l, k_l)$,
  \[
    \abs{\BK{t \in [n] \;\middle|\; (\hat I_{2t-1}, \hat J_{2t-1}, \hat K_{2t-1}) = (i_l, j_l, k_l)}} = \alpha^{(i, j, k)}(i_l, j_l, k_l) \cdot n.
  \]
  (I.e., $\alpha^{(i, j, k)}(i_l, j_l, k_l)$ fraction of the factors $T_{i,j,k}$ split into $T_{i_l, j_l, k_l} \otimes T_{i - i_l, \, j - j_l, \, k - k_l}$.)

  The marginals of $\alpha^{(i,j,k)}$ on variables $i_l, j_l, k_l$ are called the \emph{marginal split distributions}, written $\alphx^{(i,j,k)}$, $\alphy^{(i,j,k)}$, and $\alphz^{(i,j,k)}$, respectively. We say a level-$\lastlvl$ X-block $X_{\hat I}$ is consistent with $\alphx^{(i,j,k)}$, if for every $i_l$, we have $|\{t \in [n] \mid \hat I_{2t-1} = i_l\}| = \alphx^{(i,j,k)}(i_l) \cdot n$. Similar for Y and Z-blocks.
\end{definition}

When we focus on an arbitrary level-$\lvl$ triple $(X_I, Y_J, Z_K)$ of $\T' \defeq (\CW_q^{\otimes 2^{\lvl - 1}})^{\otimes n}$ (not necessarily $T_{i,j,k}^{\otimes n}$), $\T' \vert_{X_I, Y_J, Z_K}$ still contains some factors $T_{i,j,k}$. We use the notation $S_{i,j,k}(I,J,K) \defeq \{t \in [n] \mid (I_t, J_t, K_t) = (i, j, k)\}$ to represent the set of positions $t$ where the component $T_{i,j,k}$ appears. (When $I, J, K$ are clear from context, we also write $S_{i,j,k}$ for short.) Based on this notation, we define the joint split distribution $\alpha^{(i,j,k)}$ of a level-$\lastlvl$ triple $(X_{\hat I}, Y_{\hat J}, Z_{\hat K})$ within the position set $S_{i,j,k}$ as
\[
  \alpha^{(i,j,k)}(i_l, j_l, k_l) = \frac{1}{|S_{i,j,k}|} \abs{\BK{t \in S_{i,j,k} \;\middle|\; (\hat I_{2t-1}, \hat J_{2t-1}, \hat K_{2t-1}) = (i_l, j_l, k_l) }}.
\]
(The only difference from the earlier variant is that we replaced $[n]$ with $S_{i,j,k}$.)

\subsection{Salem-Spencer Set}

As in previous works, we also need the Salem-Spencer set to construct independent matrix products. 

\begin{theorem}[\cite{salem1942,Behrend1946}]
  \label{thm:salem-spencer}
  For any positive integer $M$, there is a set $A\subset \{0,\cdots,M-1\}$ with no three-term arithmetic progression modulo $M$, which satisfies $|A|>M^{1-o(1)}$. (Namely if $a, b, c \in A$ satisfy $a + c \equiv 2b \pmod M$, then $a=b=c$.) $A$ is called a \emph{Salem-Spencer set}. 
\end{theorem}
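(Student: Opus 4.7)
The plan is to adapt the classical Behrend sphere construction. Fix parameters $B, b, d$ with $B > 2b$, to be chosen later. Consider the set of integers whose base-$B$ representation uses exactly $d$ digits drawn from $\{0, 1, \ldots, b-1\}$, i.e., $\BK{\sum_{i=0}^{d-1} a_i B^i : a_i \in \{0, \ldots, b-1\}}$. This gives us $b^d$ distinct integers, all bounded by $B^d$. For each such integer $x$, let $\norm{x}^2 \defeq \sum_{i=0}^{d-1} a_i^2$, which takes values in $\{0, 1, \ldots, d(b-1)^2\}$. By pigeonhole, there exists some level set $A = \{x : \norm{x}^2 = s\}$ with $|A| \geq b^d / (d(b-1)^2 + 1)$.

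Next I would set $M = 2 B^d$ (say), and verify that $A \subseteq \{0, \ldots, M-1\}$ has no non-trivial 3-term AP modulo $M$. Suppose $a, b', c \in A$ satisfy $a + c \equiv 2b' \pmod M$. Since each of $a, c, 2b' < B^d < M$, we have $a + c < 2B^d = M$, so actually $a + c = 2b'$ as integers. Because $B > 2b$, addition of any two elements incurs no carries in base $B$, so the identity holds digit-wise: writing $a, b', c$ as digit vectors $\vec a, \vec{b'}, \vec c \in \{0, \ldots, b-1\}^d$, we get $\vec a + \vec c = 2\vec{b'}$ in $\mathbbm{R}^d$. Thus $\vec{b'}$ is the midpoint of $\vec a$ and $\vec c$. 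Since $\norm{\vec a}_2 = \norm{\vec{b'}}_2 = \norm{\vec c}_2 = \sqrt{s}$, all three vectors lie on a common sphere in $\mathbbm{R}^d$; but a sphere is strictly convex, so its intersection with the line segment from $\vec a$ to $\vec c$ can contain the midpoint $\vec{b'}$ only when $\vec a = \vec c$, which then forces $\vec a = \vec{b'} = \vec c$ and hence $a = b' = c$.

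Finally, I would optimize the parameters. With $|A| \geq b^d / (d b^2)$ and $M \leq 2 B^d \leq 2 (2b)^d$, the ratio satisfies
\[
  \frac{|A|}{M} \;\geq\; \frac{1}{2 d b^2} \cdot \left(\frac{b}{2b}\right)^{d} \cdot \frac{1}{b^0} \;=\; \frac{2^{-d}}{2 d b^2}.
\]
Taking $d = \lceil \sqrt{\log_2 M} \rceil$ and $b$ of order $M^{1/d}$ (so $B = 2b+1$ and $B^d \approx M$), one checks $|A| \geq M \cdot 2^{-O(\sqrt{\log M})} = M^{1 - o(1)}$, as required.

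The only mildly delicate step is ensuring the ``no carries'' property lines up with ``no wraparound mod $M$'' to force the additive identity to hold both over $\mathbbm{Z}$ and componentwise; all quantitative estimates are routine. The crux of the argument — the use of a sphere to forbid non-trivial 3-APs — is the geometric observation due to Behrend and gives the $M^{1-o(1)}$ density through the parameter balancing above.
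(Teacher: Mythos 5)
Your proposal is correct: it is exactly the classical Behrend sphere construction, which is what the paper relies on here — the theorem is only cited from Behrend (1946) and Salem–Spencer (1942), with no proof given in the paper itself. The only loose end is the bookkeeping between ``set $M=2B^d$'' in the middle and ``$B^d\approx M$'' at the end; this is harmless, since every digit is at most $b-1<B/2$, so all elements of $A$ are below $B^d/2\le M/2$, which rules out wraparound modulo $M$ and the $M^{1-o(1)}$ bound survives the constant-factor slack.
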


\subsection{Restricted-Splitting Tensor Power}
\label{sec:restricted_splitting}

In addition to \emph{values}, we also define the \emph{restricted-splitting values} to capture the ability of the subtensor of $T_{i,j,k}^{\otimes n}$ with a specific split distribution on Z-variable blocks. We first define the \emph{restricted-splitting tensor power}. (It is a new concept introduced in this paper.)

\begin{definition}
  \label{def:restricted_splitting}
  Let $T_{i,j,k}$ be a level-$\lvl$ component and $\alphz^{(i,j,k)}(k_l)$ be a Z-marginal split distribution of $(i,j,k)$. Consider the level-$\lastlvl$ partition of $T_{i,j,k}^{\otimes n}$. We zero out every level-$\lastlvl$ Z-variable block $Z_{\hat K}$ which is not consistent with $\alphz^{(i,j,k)}$. The remaining subtensor is denoted as $T_{i,j,k}^{\otimes n}[\alphz^{(i,j,k)}]$. It is called \emph{restricted-splitting tensor power} of $T_{i,j,k}$.

 We similarly define $T_{i,j,k}^{\otimes n}[\alphx^{(i,j,k)}]$ and $T_{i,j,k}^{\otimes n}[\alphy^{(i,j,k)}]$ to capture the cases when the split distribution of X or Y dimension is restricted.
\end{definition}

In this paper, we often use the notation $\tilde\alpha_{i,j,k}$ instead of $\alphz^{(i,j,k)}$ to denote the Z-split distribution of component $(i, j, k)$. Under this notation, the restricted dimension is Z by default if not otherwise stated.

Furthermore, we define the value with restricted-splitting distribution:
\[
  V^{(6)}_\tau(T_{i,j,k}, \tilde\alpha_{i,j,k}) = \limsup_{n\to\infty} V^{(6)}_\tau(T_{i,j,k}^{\otimes n}[\tilde\alpha_{i,j,k}])^{1/n}.
\]
It is easy to verify that the above definition is equivalent to
\[
  V^{(6)}_\tau(T_{i,j,k}, \tilde\alpha_{i,j,k}) = \limsup_{n\to\infty} \max
  \BKmid{\bk{\sum_{i=1}^m (a_i b_i c_i)^{\tau}}^{1/(6n)}}{\sym_6\left(T_{i,j,k}^{\otimes n}[\tilde\alpha_{i,j,k}]\right) \degen \bigoplus_{i=1}^m \angbk{a_i, b_i, c_i}}.
\]
We call this concept \emph{restricted-splitting values}. We also denote by $V^{(3)}_\tau(T_{i,j,k}, \tilde\alpha_{i,j,k})$ the 3-symmetrized restricted-splitting value:
\[
  V^{(3)}_\tau(T_{i,j,k}, \tilde\alpha_{i,j,k}) = \limsup_{n\to\infty} V^{(3)}_\tau(T_{i,j,k}^{\otimes n}[\tilde\alpha_{i,j,k}])^{1/n}.
\]

\begin{remark}
  In the prior works, a lower bound of the value of each level-$\lvl$ component, $\valsix(T_{i,j,k})$, was determined using the values of level-$\lastlvl$ components. The concept of \emph{values} thus serves as a bridge between different levels, acting as the interface for the recursive analysis. In this paper, we introduce \emph{restricted-splitting values} to play the same role. This is why we use a different notation $\tilde\alpha_{i,j,k}$ instead of $\alphz^{(i,j,k)}$: When we analyze the restricted-splitting value $\valsix(T_{i,j,k}, \tilde\alpha_{i,j,k})$, the splitting restriction $\tilde\alpha_{i,j,k}$ is regarded as a predetermined input parameter given in advance. Its role is different from the component distribution $\alpha$ and the splitting distribution $\alpha^{(i,j,k)}$, which we treat as variables to be carefully selected in order to optimize the value's lower bound.
\end{remark}

\subsection{Hashing Methods}
\label{sec:hashing}
The hashing method is an important step in the laser method. It was first introduced in \cite{coppersmith1987matrix}. Below we mostly consider the hashing method applied on the level-$\lvl$ partition of the CW tensor power, $\T \defeq \big( \CW_q^{\otimes 2^{\lvl-1}} \big)^{\otimes n}$. Initially, the tensor has many triples, and each X, Y, or Z-block appears in multiple triples. We zero out some blocks during the hashing method to remove the triples containing those blocks. We aim to let some types of blocks only appear in a single triple.

Most previous works use the hashing method with Salem-Spencer set to zero out some blocks, so that finally each block $X_I$, $Y_J$, or $Z_K$ only appears in at most one retained triple $(X_I, Y_J, Z_K)$. Here, X, Y, and Z variables are symmetric, so we call this setting \emph{symmetric hashing}. In our paper, besides the symmetric setting, we also use a generalized asymmetric setting appeared in~\cite{coppersmith1987matrix} so that an X or Y-block only appears in a single retained triple, but a Z-block can be in multiple retained triples.

Both symmetric and asymmetric hashing begins by choosing a distribution $\alpha$ over level-$\lvl$ components. Given such distribution $\alpha$, let $\alphx, \alphy, \alphz$ be its inducing marginal distributions. As the goal of the hashing method, we want to keep a number of triples consistent with $\alpha$ while zeroing out others; we call the triples consistent with $\alpha$ \emph{good} triples.

The very first step is to zero out variable blocks inconsistent with $\alphx, \alphy$, and $\alphz$, since these blocks do not appear in good triples. After that, let $\numxblock$ be the number of remaining X-blocks $X_I$, and define $\numyblock, \numzblock$ similarly. We expect that the distribution $\alpha$ satisfies the following:
\begin{itemize}
\item $\numxblock = \numyblock \ge \numzblock$.
\item Let $\numtriple$ be the number of triples whose blocks are consistent with $\alphx, \alphy, \alphz$ (i.e., the remaining blocks so far). For every block $X_I$ or $Y_J$, we require the number of triples containing it to be exactly $\numtriple / \numxblock$, which is the same for all such blocks.
\item Let $\numalpha$ be the number of triples consistent with the joint distribution $\alpha$, i.e., the number of good triples. For every block $X_I$ or $Y_J$, we require that the number of \emph{good} triples containing it equals the same number $\numalpha / \numxblock$.
\item The number of triples containing every $Z_K$ is $\numtriple / \numzblock$, and the number of good triples containing every $Z_K$ is $\numalpha / \numzblock$.
\end{itemize}

Pick $M$ as a prime which is at least $4 \numtriple / \numxblock$, and construct a Salem-Spencer set $B$ of size $M^{1-o(1)}$ in which no three numbers form an arithmetic progression (modulo $M$). Select $n+1$ independently uniformly random integers $0 \leq b_0, w_t < M$ for $t\in \{0,\cdots,n\}$. For blocks $X_I, Y_J, Z_K$, compute the hash functions:
\begin{align*}
  \hashx(I) & = b_0 + \Big(\sum_{t=1}^n w_t\cdot I_t\Big) \bmod M, \\
  \hashy(J) & = b_0 + \Big(w_0 + \sum_{t=1}^n w_t\cdot J_t\Big) \bmod M, \\
  \hashz(K) & = b_0 + \frac{1}{2}\Big(w_0 + \sum_{t=1}^n  w_t\cdot (2^\lvl - K_t)\Big) \bmod M.
\end{align*}
(Since $M$ is odd, division by 2 modulo $M$ is well defined.)
We can see that for any triple $(X_I, Y_J, Z_K)$ in $\mathcal T$, $\hashx(I) + \hashy(J) \equiv 2 \hashz(K) \pmod M$. Zero out all blocks $X_I, Y_J, Z_K$ whose hash values $\hashx(I)$, $\hashy(J)$, or $\hashz(K)$ are not in $B$, then all remaining triples $(X_I, Y_J, Z_K)$ must satisfy $\hashx(I) = \hashy(J) = \hashz(K)\in B$ by \cref{thm:salem-spencer}.

We may think the hash function maps all variable blocks into buckets $b \in \BK{0, \ldots, M - 1}$; for a triple $(X_I, Y_J, Z_K)$, it is retained in this zeroing-out step only if the three variable blocks are mapped to the same bucket $b \in B$.

It is easy to calculate the expected number of remaining triples after the above zeroing-out step. For each of the $\numalpha$ good triples $(X_I, Y_J, Z_K)$, the probability that $\hashx(I) = \hashy(J) = \hashz(K) = b$ is $M^{-2}$ since $\hashz(K)$ can be determined by $\hashx(I)$ and $\hashy(J)$. ($\hashx(I)$ and $\hashy(J)$ are independent because of the randomness of $w_0$.) So the expected number of remaining good triples with hash value $b$ is $\numalpha / M^2$. Multiplied by the size $|B| = M^{1 - o(1)}$ of the Salem-Spencer set, we get $\numalpha \cdot M^{-1 - o(1)}$ which is the expected number of remaining good triples in total.

For each $b\in B$, we have a list of remaining (not necessarily good) triples $(X_I, Y_J, Z_K)$ satisfying $\hashx(I) = \hashy(J) = \hashz(K) = b$. If all remaining triples with hash value $b$ were disjoint (i.e., do not share variables), then our goal could be achieved easily. Otherwise, we resolve collisions by zeroing out some blocks. This second zeroing-out step depends on the setting: whether we allow sharing Z-blocks or not.

\paragraph{Asymmetric Hashing.}

We first see the case where sharing Z-blocks is allowed. Then what we need to do is just to eliminate remaining triples sharing an X or Y-block. We greedily find a pair of triples sharing X or Y-blocks, and zero out any involved\footnote{For example, when $(X_I, Y_J, Z_K)$ and $(X_I, Y_{J'}, Z_{K'})$ share a block $X_I$, we may zero out all of $X_I, Y_J, Y_{J'}$, or just any of them. But we cannot zero out Z-blocks.} X or Y-blocks to resolve the collision; this process is repeated until no X or Y-blocks are shared. After that, no two remaining triples can share X or Y-blocks. Finally, we zero out every $X_I$ if its triple $(X_I, Y_J, Z_K)$ is not consistent with $\alpha$ (i.e., the triple is not good).

To analyze the expected number of remaining good triples, we only need to count the number of good triples (with hash value $b$) that do not share X or Y-blocks with any other triple. These triples will not be removed regardless of the order of checking triples in the greedy process.

Fix a hash value $b \in B$. Initially there are $\numalpha M^{-2}$ good triples mapped to $b$ in expectation. Then, assume $(X_I, Y_J, Z_K)$ and $(X_I, Y_{J'}, Z_{K'})$ are two triples sharing an X-block, where the former one is good. If they were mapped to the same value $b$, the good triple $(X_I, Y_J, Z_K)$ no longer meets the requirement and we need to substract one from the total number of good triples.\footnote{Although zeroing out $X_I$ may affect good triples other than $(X_I, Y_J, Z_K)$ and $(X_I, Y_{J'}, Z_{K'})$, its loss will be counted when we regard it as the former triple in the pair.}
This probability for a fixed triple pair is $M^{-3}$ according to the following lemma:

\begin{restatable}[Implicit in \cite{coppersmith1987matrix}]{lemma}{HashIndependence}
  \label{lemma:hash_independence}
  For two different triples $(X_I, Y_J, Z_K)$ and $(X_{I'}, Y_{J'}, Z_K)$ sharing a Z-block, $\Pr[\hashx(I') = \hashz(K) \mid \hashx(I) = \hashz(K)] = M^{-1}$. Moreover, the probability that $\hashx(I) = \hashx(I') = \hashz(K) = b$ (which implies $\hashy(J) = \hashy(J') = b$) for a fixed $0 \le b < M$ is exactly $M^{-3}$. Same for triple pairs sharing X or Y-blocks.
\end{restatable}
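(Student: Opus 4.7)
The approach is to view the hash values $\hashx(I), \hashx(I'), \hashz(K)$ as affine $\Z_M$-linear forms in the random seeds $(b_0, w_0, w_1, \ldots, w_n)$, and to reduce both claims to a rank calculation on a small linear system over $\Z_M$. Since $M$ is an odd prime and (for sufficiently large $M$) satisfies $M > 2^\lvl$, the element $2$ is a unit in $\Z_M$, and every nonzero integer of absolute value at most $2^\lvl$ remains a unit modulo $M$.

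First I would eliminate $J_t$ from $\hashx(I) - \hashz(K)$ using $J_t = 2^\lvl - I_t - K_t$. A direct manipulation yields
\[
  \hashx(I) - \hashz(K) \;\equiv\; \tfrac{1}{2}\Bigl(\sum_{t=1}^n w_t(I_t - J_t) - w_0\Bigr) \pmod M,
\]
so $\hashx(I) = \hashz(K)$ is equivalent to the single linear equation $w_0 \equiv \sum_t w_t(I_t - J_t) \pmod M$, which determines $w_0$ uniquely from $(w_1,\ldots,w_n)$. The analogous reduction of $\hashx(I') = \hashz(K)$ is $w_0 \equiv \sum_t w_t(I'_t - J'_t) \pmod M$. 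Subtracting (equivalently, writing $\hashx(I) - \hashx(I') = 0$ and using $I_t + J_t = I'_t + J'_t$) gives the key relation
\[
  \sum_{t=1}^n w_t (I_t - I'_t) \;\equiv\; 0 \pmod M.
\]

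Because the two triples share a Z-block, the constraint $I_t + J_t + K_t = I'_t + J'_t + K_t = 2^\lvl$ forces $I \neq I'$. Picking any $t$ with $I_t \neq I'_t$, the size bound on $M$ guarantees $I_t - I'_t$ is a unit in $\Z_M$, so the linear form $(w_1,\ldots,w_n) \mapsto \sum_t w_t(I_t - I'_t)$ is surjective onto $\Z_M$. Hence this relation holds with probability $M^{-1}$ over the uniform choice of $(w_1,\ldots,w_n)$; combined with the $w_0$-equation (which lives on a disjoint set of variables), the joint event $\{\hashx(I) = \hashx(I') = \hashz(K)\}$ has probability $M^{-2}$, so the claimed conditional probability equals $M^{-1}$. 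For the $M^{-3}$ bound, I would append the event $\hashz(K) = b$: its defining equation involves $b_0$ with coefficient $1$ and is algebraically independent of the two preceding constraints (which do not involve $b_0$), so the three equations form a rank-$3$ system and the event has probability exactly $M^{-3}$.

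For pairs sharing an X- or Y-block the analysis is entirely symmetric. For instance, for $(X_I, Y_J, Z_K)$ and $(X_I, Y_{J'}, Z_{K'})$ sharing $X_I$, the difference $\hashy(J) - \hashy(J')$ collapses to $\sum_t w_t(J_t - J'_t) \pmod M$, which is a nontrivial linear constraint because $J \neq J'$; the rest of the counting is identical. The only subtle point in the whole argument is bookkeeping the factor $\tfrac{1}{2}$ appearing in $\hashz$, which is why the construction requires $M$ odd; aside from this the proof is a routine linear-algebra count and I do not anticipate a serious obstacle.
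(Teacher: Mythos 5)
Your proposal is correct and is essentially the paper's own argument: the paper's proof defines $h(I) = \hashx(I) - \hashz(K)$ and invokes its pairwise independence (from the randomness of $w_0$ and the $w_t$'s) to get $\Pr[h(I)=h(I')=0]=M^{-2}$, then uses the fresh randomness of $b_0$ for the extra factor $M^{-1}$ — exactly the triangular rank-3 system you describe, just stated as "pairwise independent uniform hash" instead of an explicit row reduction. Your explicit unit check ($M > 2^\lvl$, so $I_t - I'_t$ is invertible) and the observation that distinct triples sharing $Z_K$ force $I \neq I'$ are correct and merely spell out what the paper leaves implicit.
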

\begin{proof}
  We first show that the events $\hashx(I) = \hashz(K)$ and $\hashx(I') = \hashz(K)$ are independent. Fixing the Z-block $Z_K$, we define
  \[
    h(I) \defeq \hashx(I) - \hashz(K) \equiv \frac{w_0}{2} + \sum_{t=1}^n w_t \cdot \bk{I_t + \frac{1}{2} K_t - 2^{\lvl - 1}}.
  \]
  It is a pairwise independent uniform hash function for all $I$. Therefore, $\Pr[\hashx(I) = \hashx(I') = \hashz(K)] = \Pr[h(I) = h(I') = 0] = M^{-2}$. It follows that $\Pr[\hashx(I') = \hashz(K) \mid \hashx(I) = \hashz(K)] = M^{-1}$.

  Next, conditioned on $\hashz(K) = \hashx(I) = \hashx(I')$, the probability that $\hashz(K) = b$ for a certain $b$ is exactly $1/M$ according to the randomness of $b_0$. (The definition of hash functions here is slightly different than \cite{coppersmith1987matrix} for shorter analysis.\footnote{In \cite{coppersmith1987matrix}, they do not have the random constant term $b_0$; they use Chebyshev's inequality to bound the number of good triples.}) Hence, $\Pr[\hashz(K) = \hashx(I) = \hashx(I') = b] = \Pr[\hashz(K) = \hashx(I) = \hashx(I')] \cdot M^{-1} = M^{-3}$.
\end{proof}

The number of such triple pairs (where the former one is a good triple) equals $\numxblock \cdot (\numalpha / \numxblock) \cdot (\numtriple / \numxblock) = \numalpha \numtriple / \numxblock$. For each pair, with probability $|B| \cdot M^{-3}$ we lose a good triple. The same loss is counted for triples sharing Y-blocks. Thus the expected number of remaining good triples is at least
\[
  |B| \cdot \bk{\numalpha M^{-2} - 2 \cdot \frac{\numalpha \numtriple}{\numxblock} \cdot M^{-3}}
  \ge M^{1 - o(1)} \cdot \frac{1}{2} \numalpha M^{-2} \ge \numalpha M^{-1 - o(1)},
\]
where the first inequality holds as $\numtriple / \numxblock \cdot M^{-1} \le 1/4$. A typical value of $M$ is $\Theta(\numtriple / \numxblock)$, which keeps at least $\numxblock \numalpha / \numtriple \cdot 2^{-o(n)}$ good triples.

\paragraph{Hash Loss.}
Ideally, when $R \defeq \numalpha / \numtriple = 2^{-o(n)}$, almost every of the $\numxblock$ X-blocks can survive the hashing process. (Strictly, $\numxblock \cdot 2^{-o(n)}$ X-blocks are retained, while the factor $2^{-o(n)}$ will disappear as we only care the exponent when $n \to \infty$.) Such utilization rate of X-blocks is the best possible outcome of the hashing process. However, when the factor $R = 2^{-\Omega(n)}$ is non-negligible, $1-R$ fraction of the X-blocks are wasted, which we call the \emph{hash loss}. The quantity of the hash loss is measured by $R$.

In the following lemma, we prove that $\numtriple = \poly(n) \cdot \max_{\alpha' \in \distShareMargin} \numalpha[\alpha']$. We see that if $\alpha$ is the distribution with the maximum number of triples, we do not suffer from the hash loss ($R = 1/\poly(n) = 2^{-o(n)}$).

\begin{lemma}
  \label{lem:numtriple_singledist}
  For any fixed distribution $\alpha$ and its marginals $\alphx, \alphy, \alphz$, we have $\numtriple = \poly(n) \cdot \max_{\alpha' \in \distShareMargin} \numalpha[\alpha']$.
\end{lemma}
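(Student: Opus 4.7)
The plan is to show the two inequalities
\[
  \max_{\alpha' \in \distShareMargin} \numalpha[\alpha'] \;\le\; \numtriple \;\le\; |\distShareMargin| \cdot \max_{\alpha' \in \distShareMargin} \numalpha[\alpha'],
\]
and then argue $|\distShareMargin| = \poly(n)$.

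The first step is to observe that the set of triples counted by $\numtriple$ partitions cleanly according to their induced joint distribution. For any triple $(X_I, Y_J, Z_K)$ with $X_I$ consistent with $\alphx$, $Y_J$ consistent with $\alphy$, $Z_K$ consistent with $\alphz$, and satisfying $I_t + J_t + K_t = 2^\lvl$ for all $t \in [n]$ (which is forced, since otherwise $\T_{I,J,K} = 0$ and the triple is not counted), the frequencies
\[
  \alpha'(i,j,k) \defeq \frac{1}{n}\,\bigabs{\BK{t \in [n] \mid (I_t, J_t, K_t) = (i,j,k)}}
\]
define a joint component distribution whose marginals are exactly $\alphx, \alphy, \alphz$. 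Hence $\alpha' \in \distShareMargin$, and this assignment is unique. Summing over possible joint distributions gives the identity
\[
  \numtriple \;=\; \sum_{\alpha' \in \distShareMargin} \numalpha[\alpha'],
\]
from which both displayed inequalities are immediate.

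Finally, I bound $|\distShareMargin|$. Every $\alpha' \in \distShareMargin$ is a distribution over the set of $(i,j,k)$ with $i + j + k = 2^\lvl$ and $0 \le i, j, k \le 2^\lvl$, so it is specified by a constant number $C = C(\lvl)$ of entries. Under the standing assumption that each entry is a multiple of $1/n$, each entry takes at most $n + 1$ possible values, giving $|\distShareMargin| \le (n+1)^C = \poly(n)$. Combining this with $\numtriple \le |\distShareMargin| \cdot \max_{\alpha'} \numalpha[\alpha']$ and the trivial lower bound yields the claim. No serious obstacle arises; the statement is a straightforward counting observation, with the only mild subtlety being that one must notice every triple in $\numtriple$ is already automatically consistent with some joint distribution sharing the prescribed marginals.
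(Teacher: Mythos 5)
Your proposal is correct and follows essentially the same argument as the paper: partition the $\numtriple$ triples by their induced joint distribution, note that every such distribution lies in $\distShareMargin$ with all entries multiples of $1/n$ so there are only $\poly(n)$ categories, and conclude that the largest category accounts for at least a $1/\poly(n)$ fraction. Your version just spells out the counting identity and the $(n+1)^C$ bound slightly more explicitly.
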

\begin{proof}
  For all triples with marginal distributions $\alphx, \alphy, \alphz$, we categorize them by its joint distribution. The number of categories is $\poly(n)$ as all entries of the joint distribution must be multiples of $1/n$. Then, let $\alpha' = \argmax_{\alpha' \in \distShareMargin} \numalpha[\alpha']$ be the category with the maximum number of triples, it contains at least $1/\poly(n)$ fraction of all $\numtriple$ triples, which concludes the proof.
\end{proof}

\paragraph{Symmetric Hashing.} In most of the prior work, they do not allow Z-blocks to be shared among triples. In this case, we only need to change the greedy process a bit: Not only when X or Y-blocks are shared, but also when a Z-block $Z_K$ is shared between triples $(X_I, Y_J, Z_K)$ and $(X_{I'}, Y_{J'}, Z_{K})$, we zero out all five involved blocks $X_I, X_{I'}, Y_J, Y_{J'}, Z_K$ or just any of them. This process is repeated until there are no triples sharing any variable blocks.

The symmetric hashing is only applied when $\numxblock = \numyblock = \numzblock$. The analysis is again similar: we count the number of good triples (i.e., triples consistent with $\alpha$) that do not share any blocks with other triples. The expected number of remaining good triples is at least
\[
  |B| \cdot \bk{
    \numalpha M^{-2} - 3 \cdot \frac{\numalpha \numtriple}{\numxblock} \cdot M^{-3}
  }
  \ge M^{1-o(1)} \cdot \frac{1}{4} \numalpha M^{-2}
  \ge \numalpha M^{-1-o(1)}.
\]
When we choose $M = \Theta(\numtriple / \numalpha)$, we can keep $\numxblock \numalpha / \numtriple \cdot 2^{-o(n)}$ good triples. Similar to above, when $R = \numalpha / \numtriple \ne 2^{-o(n)}$, we cannot utilize all the variables due to the hash loss.

\paragraph{More general settings.} Above, we discussed the case where the starting tensor is $\T = (\CW_q^{\otimes 2^{\lvl - 1}})^{\otimes n}$. In fact, we only need to use the fact that $\T$ is a tensor power of a $P$-partitioned tensor ($P = 2^{\lvl}$ in our setting, see \cref{sec:partition-tensor} for definition). Formally, suppose $T_1$ is a $P$-partitioned tensor, we can apply the hashing method to $T_1^{\otimes n}$ as long as the requirements (e.g., $\numxblock = \numyblock \ge \numzblock$) are met, obtaining independent triples consistent with the given distribution $\alpha$ over $T_1$'s components. More generally, we can also apply the hashing method onto a subtensor of the product of tensor powers $T_1^{\otimes n_1} \otimes T_2^{\otimes n_2} \otimes \cdots \otimes T_s^{\otimes n_s}$, as long as:
\begin{itemize}
\item $s$ is a constant integer, and $T_1, \ldots, T_s$ are $P$-partitioned tensors with the same $P$.
\item For each region $r \in [s]$, a joint distribution $\alpha^{(r)}$ over the components of $T_r$ is given in advance, with marginals $\alphx^{(r)}$, $\alphy^{(r)}$, and $\alphz^{(r)}$.
\item Let $\numxblock$ be the number of X-blocks $X_I$ that is consistent with $\alphx^{(r)}$ in all regions $r \in [s]$. Similarly define $\numyblock$ and $\numzblock$. We require $\numxblock = \numyblock \ge \numzblock$ (for the asymmetric hashing) or $\numxblock = \numyblock = \numzblock$ (for the symmetric hashing).
\item Let $\numtriple$ be the number of triples consisting of variable blocks satisfying the distribution constraint (there are $\numxblock, \numyblock, \numzblock$ many such blocks). The number of such triples containing each $X_I$ or $Y_J$ is the same number $\numtriple / \numxblock$. The number of such triples containing each $Z_K$ is $\numtriple / \numzblock$.
\item Let $\numalpha$ be the number of triples consistent with $\alpha^{(r)}$ in all regions $r$. We call these triples \emph{good} triples. The number of good triples containing each $X_I$ or $Y_J$ should be the same number $\numalpha / \numxblock$. That of Z-blocks $Z_K$ is $\numalpha / \numzblock$.
\end{itemize}
Then, we can apply the hashing method to obtain $\numxblock \numalpha / \numtriple \cdot 2^{-o(n)}$ independent good triples where $n \defeq n_1 + \cdots + n_s$. The proof of the more general setting is the same as above, so we omit it here.

\section{Improving the Second Power of CW Tensor}
\label{sec:2nd}
This section offers a formal analysis of the second power to illustrate our main ideas. We will obtain a bound $\omega < 2.375234$ by analyzing the second power, which beats the previous best bound on the second power ($\omega < 2.375477$ in \cite{coppersmith1987matrix}).\footnote{Since the optimization problem for the second power can be solved exactly by hand, such an improvement can only come from new ideas, rather than new heuristics for optimization.} The technique will be generalized to higher powers in later sections. To help the reader keep track of the notation, we will summarize all the notations that will be used in \Cref{table:sec-4}.

\begin{table}
  \renewcommand{\arraystretch}{1.2}
  \begin{center}
    \begin{tabular}{|c l|}
      \hline
      \textbf{Notation}  & \textbf{Meaning} \\ \hline
      $\ell$ & The level we are considering ($\ell = 1, 2$). \\
      $\T$ & The $n$-th power of first/second power of CW tensor $\T = (\CW_q^{2^{\ell - 1}})^{\otimes n}$\;\;($\ell = 1, 2$).\\
      $T_{i,j,k}$ & The component of CW tensor $\CW_q^{\otimes 2^{\ell - 1}} \vert_{X_i, Y_j, Z_k} $. \\
      $I, J, K$ & Index sequences of level-2 variable blocks. \\
      $\hat{I}, \hat{J}, \hat{K}$ & Index sequences of level-1 variable blocks. \\
      $X_I, Y_J, Z_K$ & Level-2 variables blocks. \\
      $X_{\hat{I}}, Y_{\hat{J}}, Z_{\hat{K}}$ & Level-1 variables blocks. \\
      $(X_I, Y_J, Z_K)$ & A level-2 triple of three variable blocks. \\
      $\T \vert_{X_I, Y_J, Z_K}$ & The subtensor of $\T$ over the triple $(X_I, Y_J, Z_K)$. \\
      $S_{i,j,k}$ & The set of positions with component $(i, j, k)$ of a triple: \\
                         & \qquad $S_{i,j,k} = \{t \in [n] \mid I_t = i, J_t = j, K_t = k\}$. \\
      $\alpha(i,j,k)$ & The joint distribution $\alpha(i,j,k) = \frac{1}{n}|S_{i,j,k}|$ of a triple. \\
      $\alphx, \alphy, \alphz$ & The marginal distributions of $\alpha$. \\
      $D_{\alpha}$ & The set of all joint distributions that have the same marginals as $\alpha$. \\
      $\split(\hat{K}, S)$ & The split distribution  of $\hat{K}$ when restricted to positions in $S$. \\
      $\numxblock, \numyblock, \numzblock$ & Number of X/Y/Z blocks consistent with $\alphx, \alphy, \alphz$. \\
      $\numalpha$ & The number of level-2 triples with joint distribution $\alpha$. \\
      $\numtriple$ & The number of level-2 triples with marginal distributions $\alphx, \alphy, \alphz$. \\
      $\pcomp$ & The probability that a uniformly random triple obeying $\alpha$ and containing $Z_K$ \\
                         & \qquad is consistent with a fixed $Z_{\hat{K}} \in Z_K$. \\
			\hline
		\end{tabular}
		\caption{Summary of Notations in Section 4} \label{table:sec-4}
	\end{center}
\end{table}

\subsection{Coppersmith-Winograd Algorithm}
\label{sec:recap_cw}
The framework of our improved algorithm is similar to \cite{coppersmith1987matrix}. They analyzed the CW tensor's first and second power ($\lvl = 1$ or $\lvl = 2$) using the following steps:
\begin{enumerate}
\item Lower bound the value of each level-$\lvl$ component $T_{i,j,k}$ (the subtensor of $\CW_q^{\otimes 2^{\lvl -1}}$ over $X_i, Y_j, Z_k$ where $i + j + k = 2^\lvl$, as defined in Section~\ref{subsec:leveled}). Throughout this section,  we have either $\lvl = 1$ or $\lvl = 2$. 

\item Choose a distribution $\alpha$ over all level-$\lvl$ components $T_{i,j,k}$'s. Let $\alpha(i,j,k)$ be its probability mass on $T_{i,j,k}$. This distribution $\alpha$ has to be symmetric about its X, Y, and Z dimensions, i.e., $\alpha(i,j,k) = \alpha(i,k,j) = \alpha(j,i,k) = \alpha(j,k,i) = \alpha(k,i,j) = \alpha(k,j,i).$

\item Apply the symmetric hashing method (see \Cref{sec:hashing}) on $\T \defeq (\CW_q^{\otimes 2^{\lvl-1}})^{\otimes n}$ to zero out some (level-$\lvl$) variable blocks. All remaining triples are guaranteed to be independent and consistent with $\alpha$.

Formally, if $(X_I, Y_J, Z_K)$ is a triple retained in the hashing method, then (1) $\alpha(i,j,k)$ equals the proportion of positions $t \in [n]$ which satisfies $(I_t, J_t, K_t) = (i,j,k)$, and (2) the blocks $X_I, Y_J$, and $Z_K$ do not appear in any other retained triples.
\item Degenerate each retained triple to a direct sum of matrix multiplication tensors. Specifically, we will use the value lower bounds from Step 1 and obtain the value lower bound of each triple. (This implicitly gives a degeneration into matrix multiplications.)
\end{enumerate}
After these steps, $\mathcal T$ has degenerated into a direct sum of matrix multiplication tensors, which leads to a lower bound on the value of $\CW_q^{\otimes 2^{\lvl-1}}$. Finally, $\tauthm$ is applied to obtain an upper bound of $\omega$. Below, we instantiate such procedure for the first and second power separately.

\paragraph{Analysis of the First Power.} For the first power, $\T = \CW_q^{\otimes n}$. The level-1 components $T_{i,j,k}$'s ($i + j + k = 2$) are simply matrix multiplication tensors. So we can skip the first and the fourth step. In the second step, a symmetric distribution $\alpha$ over $T_{0,1,1}, T_{0,0,2}$ (and their permutations) is selected:
\begin{align*}
  \alpha(0, 1, 1) = \alpha(1, 0, 1) = \alpha(1, 1, 0) &= a, \\
  \alpha(0, 0, 2) = \alpha(0, 2, 0) = \alpha(2, 0, 0) &= b,
\end{align*}
where $3a + 3b = 1$. Let $\alphx, \alphy, \alphz$ be its marginal distributions. By \Cref{thm:unique-dis-level1}, $\alpha$ is the only distribution consistent with these marginal distributions, which allows us to use the hashing method described in \Cref{sec:hashing} without a hash loss. We can see that $\alphx{(0)} = a + 2b, \, \alphx{(1)} = 2a, \, \alphx{(2)} = b$. Since $\alpha$ is symmetric, $\alphy$ and $\alphz$ are the same as $\alphx$. Let $\numxblock$ be the number of X-blocks consistent with $\alphx$.

After hashing and zeroing out, the number of remaining triples is
\[
  \numxblock \cdot 2^{-o(n)} = \binom{n}{(a + 2b)n, \; 2an, \; bn} \cdot 2^{-o(n)}.
\]
We have degenerated $\mathcal T \defeq \CW_q^{\otimes n}$ into the direct sum of the retained triples, while each retained triple is isomorphic to a matrix multiplication tensor
\[(T_{0,1,1} \otimes T_{1,0,1} \otimes T_{1,1,0})^{\otimes an} \otimes (T_{0,0,2} \otimes T_{0,2,0} \otimes T_{2,0,0})^{\otimes bn} \cong \angbk{q^{an}, q^{an}, q^{an}}.\]
This leads to the bound
\begin{align*}
  \valthree(\CW_q) &\ge \lim_{n\to\infty} \bk{q^{3\tau an} \binom{n}{(a + 2b)n, \; 2an, \; bn}}^{1/n} = \frac{q^{3\tau a}}{(a+2b)^{a+2b} \cdot (2a)^{2a} \cdot b^b}.
\end{align*}
Let $q = 6$ and $b = 0.016$, it gives $\omega < 2.38719$.

\paragraph{Analysis of the Second Power.} For the second power, $\T= (\CW_q^{\otimes 2})^{\otimes n}$. We first lower bound the value of level-2 components $T_{i,j,k}$ ($i + j + k = 4$).

\begin{lemma}[\cite{coppersmith1987matrix}]
  \label{lem:order-2-values}
  The values of level-2 components $T_{i,j,k}$ are:
  \begin{itemize}
  \item $\valthree(T_{0, 0, 4}) = 1$, $\valthree(T_{0, 1, 3}) = (2q)^{\tau}$, $\valthree(T_{0, 2, 2}) = (q^2 + 2)^{\tau}$;
  \item $\valthree(T_{1, 1, 2}) \geq 2^{2/3} q^{\tau} (q^{3\tau} + 2)^{1/3}$.
  \end{itemize}
\end{lemma}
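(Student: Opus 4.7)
The plan is to handle the four claims separately. For $T_{0,0,4}, T_{0,1,3}, T_{0,2,2}$, each is (after expansion via the second-power formula $T_{i,j,k} = \sum_{i'+j'+k'=2} T_{i',j',k'} \otimes T_{i-i',j-j',k-k'}$) isomorphic to a matrix multiplication tensor, and the claimed value then follows from the identity $V_\tau^{(3)}(\angbk{a,b,c}) = (abc)^\tau$, which holds because $\sym_3(\angbk{a,b,c}) \cong \angbk{abc,abc,abc}$ degenerates trivially to itself. Concretely, $T_{0,0,4} = T_{0,0,2} \otimes T_{0,0,2}$ is a single rank-$1$ monomial, so $T_{0,0,4} \cong \angbk{1,1,1}$. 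For $T_{0,1,3} = T_{0,0,2} \otimes T_{0,1,1} + T_{0,1,1} \otimes T_{0,0,2}$, the two summands share the single X-variable $(x_0,x_0)$ but have disjoint, matched Y- and Z-variables, so together they form $\angbk{1,1,2q}$. Similarly, $T_{0,2,2} = T_{0,0,2} \otimes T_{0,2,0} + T_{0,2,0} \otimes T_{0,0,2} + T_{0,1,1} \otimes T_{0,1,1} \cong \angbk{1,1,q^2+2}$.

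The bulk of the proof concerns $T_{1,1,2}$, which is not itself a matrix multiplication tensor. I would apply the laser method to $\sym_3(T_{1,1,2})^{\otimes n} = T_{1,1,2}^{\otimes n} \otimes T_{1,2,1}^{\otimes n} \otimes T_{2,1,1}^{\otimes n}$. First, parameterize by $\mu \in [0, 1/2]$ the reflection-symmetric split distribution $\splres^{(1,1,2)}(1,1,0) = \splres^{(1,1,2)}(0,0,2) = \mu$ and $\splres^{(1,1,2)}(1,0,1) = \splres^{(1,1,2)}(0,1,1) = 1/2 - \mu$; rotational symmetry determines the analogous splits for $T_{1,2,1}$ and $T_{2,1,1}$. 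Then invoke the symmetric hashing method of \Cref{sec:hashing}. The key verification is the block count, which requires counting level-$1$ blocks whose consecutive index pairs arise from a valid split. For $T_{1,1,2}^{\otimes n}$ this gives $\binom{n}{n/2,\,n/2} \approx 2^n$ valid X-blocks and the same count for Y-blocks, while the valid Z-blocks number $\binom{n}{\mu n,\,\mu n,\,(1-2\mu) n} \approx 2^{nH}$ with $H = -2\mu\log\mu - (1-2\mu)\log(1-2\mu)$. By rotational symmetry across the three factors of $\sym_3$, the combined counts agree: $\numxblock = \numyblock = \numzblock \approx 2^{2n + nH}$, so symmetric hashing applies, and because level-$1$ joint distributions are determined by their marginals (\Cref{thm:unique-dis-level1}) we have $\numalpha = \numtriple$ and no hash loss. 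This yields $2^{2n + nH - o(n)}$ independent retained triples, each isomorphic to $\sym_3(\T^{\splres,n}_{1,1,2})$ defined as in~\eqref{equ:tT112}.

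To finish, identify the matrix multiplication structure explicitly: using $T_{1,1,0} \cong \angbk{1,q,1}$, $T_{0,0,2} \cong \angbk{1,1,1}$, $T_{1,0,1} \cong \angbk{q,1,1}$, $T_{0,1,1} \cong \angbk{1,1,q}$, a direct tensor-product calculation gives $\T^{\splres,n}_{1,1,2} \cong \angbk{q^{(1-2\mu)n}, q^{2\mu n}, q^{(1-2\mu)n}}$ and hence $\sym_3(\T^{\splres,n}_{1,1,2}) \cong \angbk{q^{(2-2\mu)n}, q^{(2-2\mu)n}, q^{(2-2\mu)n}}$. Combining,
\[
  V_\tau^{(3)}(T_{1,1,2}) \ge \lim_{n \to \infty} \bk{2^{2n + nH} \cdot q^{3(2-2\mu)n\tau}}^{1/(3n)} = 2^{(2 + H)/3} \cdot q^{(2-2\mu)\tau}.
\]
A one-variable optimization, via the first-order condition $(1-2\mu)/\mu = q^{3\tau}$, gives $\mu = 1/(q^{3\tau}+2)$; substituting back yields the claimed bound $2^{2/3} q^\tau (q^{3\tau}+2)^{1/3}$.

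The principal obstacle is the block count: valid level-$1$ Z-blocks are not arbitrary sequences with the per-position marginal $(\mu, 1-2\mu, \mu)$, but sequences whose consecutive pairs lie in $\{(0,2), (2,0), (1,1)\}$; the correct contribution is $2^{nH(\mu,\mu,1-2\mu)}$ rather than the naive $2^{2nH(\mu,1-2\mu,\mu)}$. Miscounting here would inflate $\numzblock$ and destroy the stated bound. Once the pair-level structure is correctly accounted for, rotational symmetry gives $\numxblock = \numyblock = \numzblock$ automatically and the remainder of the argument is mechanical.
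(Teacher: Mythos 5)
Your proposal is correct and follows essentially the same route as the cited Coppersmith--Winograd argument, which the paper itself reproduces in Appendix~\ref{appendix:missing_proofs} for the closely related \Cref{lem:non-rot-values}: the components with a zero index are identified directly as matrix multiplication tensors ($\angbk{1,1,1}$, $\angbk{1,1,2q}$, $\angbk{1,1,q^2+2}$), and $T_{1,1,2}$ is handled by symmetrizing, fixing the reflection-symmetric split distribution with parameter $\mu$, applying symmetric hashing with the pair-level block counts $\binom{n}{n/2}$ and $\binom{n}{\mu n,\,\mu n,\,(1-2\mu)n}$, and optimizing to $\mu = 1/(q^{3\tau}+2)$, exactly matching the paper's choice $b' = 1/(2+q^{3\tau})$ and the resulting bound $2^{2/3}q^{\tau}(q^{3\tau}+2)^{1/3}$. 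Your block-count caveat is precisely the right point to be careful about, and your accounting of it agrees with the paper's.
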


\begin{remark} \label{remark:alpha_sym}
    Note the value $\valthree(T_{i,j,k})$ is defined (in \Cref{def:value}) using the symmetrization of $T_{i,j,k}$, i.e. $\sym_3(T_{i,j,k}) = T_{i,j,k} \otimes T_{j,k,i} \otimes T_{k,i,j}$. This implies that to use these values, we must ensure that $\alpha(i,j,k) = \alpha(j,k,i) = \alpha(k,i,j)$ in Step 2.\footnote{These constraints are incompatible with our algorithm which relies on the asymmetric hashing method. We will relax them in \Cref{sec:non-rot} and \Cref{sec:compa}.} Specifically, $\valthree(T_{i,j,k})$ captures the capability of $\sym_3(T_{i,j,k})^{\otimes m}$ for matrix multiplication, as $m\rightarrow \infty$. If we want to apply it to $T_{i,j,k}^{\otimes \alpha(i,j,k) n} \otimes T_{j,k,i}^{\otimes \alpha(j,k,i) n} \otimes T_{k,i,j}^{\otimes \alpha(k,i,j) n}$, we must ensure $\alpha(i,j,k)n = \alpha(j,k,i) n = \alpha(k,i,j) n = m$. 
\end{remark}

Next, in step 2, we pick a symmetric distribution $\alpha$ over all level-2 components $T_{i,j,k}$ where $i + j + k = 4$:
\begin{align*}
  &\alpha(0,0,4) = \alpha(0,4,0) = \alpha(4,0,0) = a, \\
  &\alpha(0,1,3) = \alpha(0,3,1) = \alpha(1,0,3) = \alpha(1,3,0) = \alpha(3,0,1) = \alpha(3,1,0) = b, \\
  &\alpha(0,2,2) = \alpha(2,0,2) = \alpha(2,2,0) = c, \\
  &\alpha(1,1,2) = \alpha(1,2,1) = \alpha(2,1,1) = d.
\end{align*}

Step 3 uses the symmetric hashing method to zero out some level-2 triples. 

\begin{remark} \label{remark:hash_loss1}
  Here is one subtlety (that readers may skip for the first read). Zeroing out can only distinguish between different marginal distributions but not joint distributions (See e.g. \cite{alman2021}). Hence all joint distributions consistent with $\alphx, \alphy, \alphz$ will remain after zeroing out. If $\alpha$ is not the maximum entropy distribution among all joint distributions consistent with $\alphx, \alphy, \alphz$ (which we denote by $D^*(\alphx, \alphy, \alphz)$), the symmetric hashing method would incur an extra hash loss. For Coppersmith and Winograd's analysis, $\alpha$ indeed equals $D^*(\alphx, \alphy, \alphz)$.
\end{remark}
Notice that $\alphx{(0)} = 2a + 2b + c, \, \alphx{(1)} = 2b + 2d, \, \alphx{(2)} = 2c + d, \, \alphx{(3)} = 2b, \, \alphx{(4)} = a$. Same for $\alphy$ and $\alphz$. Hence after Step 3, the number of retained triples is
\[
  \numxblock \cdot 2^{-o(n)} = \binom{n}{(2a + 2b + c)n, \, (2b + 2d)n, \, (2c + d)n, \, 2bn, \, an} \cdot 2^{-o(n)},
\]
where $\numxblock$ is the number of level-2 X-blocks consistent with $\alphx$. The remaining subtensor of $\T \defeq (\CW_q^{\otimes 2})^{\otimes n}$ is the direct sum of the remaining triples, while each remaining triple is isomorphic to
\[
  \begin{aligned}
    \T^\alpha &\defeq \bigotimes_{i+j+k=4} T_{i,j,k}^{\otimes n\alpha(i,j,k)} = \bigotimes_{\substack{
               (i,j,k) \in \{
               (0,0,4), \\ (0,1,3), (0,3,1), \\ (0,2,2), (1,1,2)
    \}
    }} \sym_3(T_{i,j,k}^{n\alpha(i,j,k)}).
  \end{aligned}
  \numberthis \label{eq:def_Talpha}
\]
In Step 4, we apply \Cref{lem:order-2-values} to each symmetrized term, by the super-multiplicative property of values, we get a lower bound of $\valthree(\CW_q^{\otimes 2})$:
\begin{align*}
  \valthree(\CW_q^{\otimes 2}) &\ge \lim_{n\to\infty} \binom{n}{(2a + 2b + c)n, \, (2b + 2d)n, \, (2c + d)n, \, 2bn, \, an}^{1/n} \cdot \prod_{i+j+k=4} \valthree(T_{i,j,k})^{\alpha(i,j,k)} \\
  &= \frac{1}{(2a + 2b + c)^{2a + 2b + c} (2b + 2d)^{2b + 2d} (2c + d)^{2c + d} (2b)^{2b} a^a} \cdot \prod_{i + j + k = 4} \valthree(T_{i,j,k})^{\alpha(i,j,k)}.
\end{align*}
Coppersmith and Winograd~\cite{coppersmith1987matrix} found that when $a = 0.000233,~ b = 0.012506,~ c = 0.102546,~ d = 0.205542$, and $q = 6$, we can get $\omega < 2.375477$. 

\subsection{Non-rotational Values}
\label{sec:non-rot}

As explained in \Cref{remark:alpha_sym}, to apply $\valthree(T_{i,j,k})$, we need the distribution $\alpha$ to be symmetric. However, to apply the asymmetric hashing method in our improved algorithm, at least for some $T_{i,j,k}$'s, the distribution $\alpha$ has to be asymmetric. Hence we must consider the values of some $T_{i,j,k}$'s without such symmetrization. We call them \emph{non-rotational values}. 

\begin{definition} \label{def:non-rot}
  The \emph{non-rotational value} of a tensor $T$, denoted by $V_\tau^{(\text{nrot})}(T)$, is defined as
  \[
    \valone(T) \defeq \limsup_{m\to\infty} \max
    \BKmid{\Big(\sum_{i=1}^s (a_i b_i c_i)^\tau\Big)^{1/m}}{T^{\otimes m} \degen \bigoplus_{i=1}^s \angbk{a_i,b_i,c_i}}.
  \]
  The only difference between this definition and the original definition of values is that we do not allow $T$ to be symmetrized before degeneration.
\end{definition}

Also, recall the definition of \emph{restricted-splitting values} $\valthree(T_{i,j,k}, \splres)$ in \Cref{sec:restricted_splitting}. We further define the non-rotational version of it.

\begin{definition} The \emph{non-rotational restricted-splitting value} is defined as 
\[
  \valone (T_{i,j,k}, \splres) = \limsup_{n\to\infty} \valone(T_{i,j,k}^{\otimes n}[\splres])^{1/n}.
\]
\end{definition}

For matrix multiplication tensors, $T_{0,j,k}$, their non-rotational value matches its symmetrized value. This is stated in the following lemma. Moreover, for $T_{1,1,2}$, we only give a lower bound on its symmetrized restricted-splitting value. Since almost all the contributions are from the most typical splitting distribution, it is not surprising that its restricted-splitting value matches the original value. We defer the proof of this lemma to \Cref{appendix:missing_proofs}. 

\begin{restatable}{lemma}{MoreValues}
  \label{lem:non-rot-values}
  We have:
  \begin{enumerate}[label=\textup{(\alph*)}]
  \item $\valone(T_{0,0,4}) = 1$,
  \item $\valone(T_{0,1,3}) = (2q)^\tau$,
  \item $\valone(T_{2,2,0}) = \valone(T_{0,2,2}, \splresB) = \valone(T_{2,0,2}, \splresB) = (q^2 + 2)^\tau$,
  \item $\valthree(T_{1,1,2}, \splresA) \ge 2^{2/3} q^{\tau} (q^{3\tau} + 2)^{1/3}$,
  \end{enumerate}
  where
  \begin{align*}
    \splresA(0) = \splresA(2) = \frac{1}{2+q^{3\tau}}, &\qquad \splresA(1) = \frac{q^{3\tau}}{2 + q^{3\tau}}; \numberthis \label{eq:tilde_A} \\
    \splresB(0) = \splresB(2) = \frac{1}{2+q^2}, &\qquad \splresB(1) = \frac{q^2}{2 + q^2}. \numberthis \label{eq:tilde_B}
  \end{align*}
\end{restatable}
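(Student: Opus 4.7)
I plan to handle parts (a), (b), and the unrestricted piece of (c) by direct identification of each $T_{i,j,k}$ as a matrix multiplication tensor; the restricted-splitting pieces of (c) by identifying the restricted tensor power as a sub-matrix-multiplication tensor and counting its variables via Stirling; and (d) by a level-1 laser-method analysis on $\sym_3(T_{1,1,2}^{\otimes n}[\splresA])$ that parallels the Coppersmith--Winograd argument with the Z-splitting restriction absorbed into the choice of inner distribution.  For the first batch, I would expand $T_{i,j,k} = \CW_q^{\otimes 2}\vert_{\tilde X_i, \tilde Y_j, \tilde Z_k}$ and verify the explicit isomorphisms
\[
  T_{0,0,4} \cong \angbk{1, 1, 1}, \qquad T_{0,1,3} \cong \angbk{1, 1, 2q}, \qquad T_{2,2,0} \cong \angbk{1, q^2+2, 1},
\]
by collecting the surviving terms and reading off their pairing structure (e.g., in $T_{0,1,3}$ the only contributing factorizations of $\CW_q^{\otimes 2}$ are $(x_0 y_0 z_{q+1}) \otimes (x_0 y_i z_i)$ and its swap, giving $2q$ paired $(y, z)$-variables).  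Since each of these is a matrix multiplication tensor and $\valone(\angbk{a, b, c}) = (abc)^\tau$ (the lower bound from the identity degeneration of $\angbk{a,b,c}^{\otimes m} = \angbk{a^m, b^m, c^m}$; the matching upper bound is a standard consequence of Sch\"onhage's $\tau$ theorem), the equalities (a), (b), and the unrestricted piece of (c) follow.

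For the restricted-splitting claims of (c), I would start from $T_{0,2,2} \cong \angbk{1, 1, q^2+2}$ by the same kind of computation, so that $T_{0,2,2}^{\otimes n} \cong \angbk{1, 1, (q^2+2)^n}$ with its Z-variables partitioned into level-1 Z-blocks.  Restricting to blocks consistent with $\splresB$ only deletes Z-variables, producing a sub-matrix-multiplication tensor $\angbk{1, 1, M_n}$ where
\[
  M_n = \binom{n}{[\splresB(i) n]_{i \in \BK{0,1,2}}} \cdot q^{2 \splresB(1) n}.
\]
By \Cref{thm:stirling1}, $\log M_n = n \bigl(H(\splresB) + 2\splresB(1) \log q\bigr) + o(n)$, and a short calculation using the definition of $\splresB$ verifies the identity $H(\splresB) + 2\splresB(1) \log q = \log(q^2+2)$ exactly.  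Hence $M_n = (q^2+2)^{n - o(n)}$, giving $\valone(T_{0,2,2}, \splresB) = (q^2+2)^\tau$; the case of $T_{2,0,2}$ is identical by swapping the X and Y dimensions.

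For (d), I would apply symmetric hashing (\Cref{sec:hashing}) at level 1 on $\sym_3(T_{1,1,2}^{\otimes n}[\splresA])$, which after unwinding rotations becomes the product of three copies of $T_{1,1,2}^{\otimes n}$, $T_{1,2,1}^{\otimes n}$, $T_{2,1,1}^{\otimes n}$ in which the Z-restriction on the first factor rotates into a Y-restriction on the second and an X-restriction on the third.  On the four nonzero level-1 components of $T_{1,1,2}$, I would choose the joint distribution
\begin{align*}
  \alpha'\bigl((1,1,0) \otimes (0,0,2)\bigr) = \alpha'\bigl((0,0,2) \otimes (1,1,0)\bigr) &= p \defeq \tfrac{1}{2+q^{3\tau}}, \\
  \alpha'\bigl((0,1,1) \otimes (1,0,1)\bigr) = \alpha'\bigl((1,0,1) \otimes (0,1,1)\bigr) &= r \defeq \tfrac{q^{3\tau}}{2(2+q^{3\tau})}.
\end{align*}
A short case analysis shows that $\alpha'$ is the unique joint distribution with its marginals, its X- and Y-marginals are uniform, and its Z-marginal is exactly $\splresA$, so every triple consistent with $\alpha'$ lies automatically inside the restricted tensor power.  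Rotating $\alpha'$ produces compatible distributions on $T_{1,2,1}$ and $T_{2,1,1}$ that respect their inherited restrictions, and the combined marginals on $\sym_3$ are rotationally symmetric, so symmetric hashing retains $2^{-o(n)}$ fraction of good triples as independent matrix multiplication tensors.  Each retained triple has value $q^{3\tau n(2p+4r)}$; the count of retained triples is $2^{n(2 + H(\splresA)) - o(n)}$ by Stirling; and the identity $3\tau \splresA(1) \log q + H(\splresA) = \log(q^{3\tau}+2)$ lets the product collapse to $2^{2n} q^{3\tau n} (q^{3\tau}+2)^n \cdot 2^{-o(n)}$, whose $(1/(3n))$-th root is $2^{2/3} q^\tau (q^{3\tau}+2)^{1/3} \cdot 2^{-o(1)}$, matching the claimed bound.

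The main obstacle is ensuring that the splitting restriction in (d) incurs no asymptotic cost.  The key observation is that $\splresA$ is precisely the Z-marginal induced by the Coppersmith--Winograd-optimal $\alpha'$, so every level-1 triple the symmetric hashing would have chosen unrestrictedly already lies in $T_{1,1,2}^{\otimes n}[\splresA]$; the restriction only removes triples the laser method would have discarded anyway.  The secondary bookkeeping item running through the proof is the entropy identity that makes $\splresA$ (resp.\ $\splresB$) extremal, which drives both the variable count in (c) and the final collapse in (d).
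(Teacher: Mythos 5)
Your proposal is correct and takes essentially the same route as the paper's own proof: parts (a)--(c) by identifying each component (and, for the restricted-splitting cases, the restricted power $T_{0,2,2}^{\otimes n}[\splresB]$) as a matrix multiplication tensor and counting its Z-variables, and part (d) by a Coppersmith--Winograd-style symmetric-hashing analysis of $\sym_3(T_{1,1,2}^{\otimes n})$ using the level-1 split distribution whose Z-marginal is exactly $\splresA$. The only cosmetic difference is that the paper keeps the split parameter free and optimizes it, finding the optimum at $\splresB$ (resp.\ $\splresA$), whereas you substitute these distributions directly and verify the corresponding entropy identities, which is the same computation.
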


\subsection{Compatibility} \label{sec:compa}

In \Cref{lem:non-rot-values}, we presented the restricted-splitting values for those $T_{i,j,k}$'s with $k = 2$, and these values match their original values in \Cref{lem:order-2-values}.  Intuitively, this means that, for any level-2 triple $(X_I, Y_J, Z_K)$, within $Z_K$, the only useful level-1 Z-blocks are those with a specific splitting distribution over the positions $\{t \in [n] \mid K_t = 2\}$. When one such level-1 Z-block is useful for $(X_I, Y_J, Z_K)$, we say that they are \emph{compatible}. 

Before we give the formal definition, we first set up some notations. For each triple $(X_I, Y_J, Z_K)$, we let $S_2$ be the set of positions $t$ where $K_t = 2$. We define $S_{i,j,k} = \BK{t \in [n] \mid (I_t, J_t, K_t) = (i,j,k)}$. Then $S_2 = S_{0,2,2} \cup S_{2,0,2} \cup S_{1,1,2}$. Let $Z_{\hat{K}}$ be a level-1 Z-block in $Z_K$. We define its split distribution over a subset of positions as follows.

\begin{definition} Fix a level-1 Z-block $Z_{\hat{K}}$ where $\hat{K}$ is the level-1 index sequence $(\hat{K}_1, \hat{K}_2, \cdots, \hat{K}_{2n})$. For any subset $S \subseteq [n]$, we define $\split(\hat K, S)$ to be the (marginal) split distribution of positions of $S$ in $\hat K$.
\[
  \split(\hat K, S)(k_l, k_r) = \frac{1}{|S|}\Big|\{t \in S \mid \hat K_{2t - 1} = k_l \land \hat K_{2t} = k_r \}\Big|.
\]
Since we are only considering the split distribution of $2$, we require that $S \subseteq S_2$ and $k_l = 0, 1, 2$. It has support $\{(0,2), (1,1), (2,0)\}$. For simplicity, we write $\split(\hat K, S)(k_l) \coloneqq \split(\hat K, S)(k_l, 2 - k_r)$. ($\split(\hat I, S)$ and $\split(\hat J, S)$ can be defined similarly.)
\end{definition}

Now we are ready to state the condition for $Z_{\hat{K}}$ to be useful for $(X_I, Y_J, Z_K)$. 

\begin{definition}[Compatibility]
  A level-1 block $Z_{\hat K}$ is said to be \emph{compatible} with a level-2 triple $(X_I, Y_J, Z_K)$ obeying distribution $\alpha$ if the following conditions are satisfied:
  \begin{itemize}
  \item $\split(\hat K, S_{1,1,2}) = \splresA$;
  \item $\split(\hat K, S_{0,2,2}) = \split(\hat K, S_{2,0,2}) = \splresB$.
  \end{itemize}
  Here $\splresA$ and $\splresB$ are defined as in \Cref{lem:non-rot-values}.
  \label{def:complv2}
\end{definition}

\Cref{def:complv2} directly implies that, in order for a level-1 block $Z_{\hat K}$ to be compatible with any triple $(X_I, Y_J, Z_K)$ obeying $\alpha$, it must satisfy
\[
  \numberthis \label{eq:avg_split_correct_sec4}
  \begin{aligned}
    \split(\hat K, S_2) &= \frac{\alpha(1,1,2)}{\alpha(1,1,2) + 2 \alpha(0,2,2)} \cdot \splresA + \frac{2 \alpha(0,2,2)}{\alpha(1,1,2) + 2\alpha(0,2,2)} \cdot \splresB \eqdef \splresAverage.
  \end{aligned}
\]

We will call this quantity $\splresAvg$. To see that this equality holds, note that $\split(\hat{K}.S_{1,1,2}) = \splresA$ and $|S_{1,1,2}| / |S_2| = \frac{\alpha(1,1,2)}{\alpha(1,1,2) + 2\alpha(0,2,2)}$. Similarly for $\split(\hat{K}, S_{0,2,2})$ and $\split(\hat{K}, S_{2,0,2})$. \\

Next, we will analyze the probability for a fixed $Z_{\hat{K}} \in Z_K$ to be compatible with a random triple $(X_I, Y_J, Z_K)$. (Here the randomness is over $I, J$.) We will call it $ p_{\textup{comp}}$. This probability is exactly the ``combination loss'': Suppose $Z_K$ is only in one remaining triple after hashing. Inside $Z_K$, each $Z_{\hat{K}}$ has only $p_{\textup{comp}}$ probability of being compatible with that triple. Hence intuitively, $1 - p_{\textup{comp}}$ fraction of $Z_{\hat{K}}$'s are simply wasted.

\begin{lemma}
  \label{lem:p_comp_sec4}
  Fix a level-2 block $Z_K$ and a level-1 block $Z_{\hat K} \in Z_K$ satisfying \eqref{eq:avg_split_correct_sec4}. Let $(X_I, Y_J, Z_K)$ be a uniformly random triple among all triples that obey $\alpha$ and contain $Z_K$. Suppose $\alpha(0,2,2) = \alpha(2,0,2) = c$ and $\alpha(1,1,2) = d$. Then we have
  \[
    \begin{aligned}
      p_{\textup{comp}} &\defeq \phantom{{}={}} \Pr_{X_I, Y_J}[Z_{\hat K}\textup{ is compatible with }(X_I, Y_J, Z_K)] \\
      &=\binom{cn}{[cn \cdot \splresB(k')]_{k'=0,1,2}}^2 \binom{dn}{[dn \cdot  \splresA(k')]_{k'=0,1,2}} \bigg/ \binom{(2c + d)n}{[(2c + d)n \cdot \splresAvg(k')]_{k'=0,1,2}},
    \end{aligned}
  \]
  where $\splresA$, $\splresB$, and $\splresAverage$ are defined in \eqref{eq:tilde_A}, \eqref{eq:tilde_B}, and \eqref{eq:avg_split_correct_sec4}, respectively. This probability is the same for any fixed $Z_{\hat K}$.
\end{lemma}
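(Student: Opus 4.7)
The plan is to compute $p_{\textup{comp}}$ by a direct counting argument, after first reducing it to the problem of choosing a uniformly random ordered partition of $S_2$ into three parts.

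First, I would observe the following reduction. Once $Z_K$ (hence the index sequence $K$) is fixed, a triple $(X_I, Y_J, Z_K)$ obeys $\alpha$ if and only if (i) $I_t + J_t = 4 - K_t$ at every position $t \in [n]$, and (ii) the $(i,j)$-assignments on each level set $\{t \in [n] : K_t = k\}$ have the prescribed multiset of values dictated by $\alpha$. Thus the choices on positions with different values of $K_t$ are independent, and restricting attention to $S_2 = \{t : K_t = 2\}$, picking $(X_I, Y_J)$ uniformly at random induces a uniform ordered partition $(S_{0,2,2}, S_{2,0,2}, S_{1,1,2})$ of $S_2$ with part sizes $(cn, cn, dn)$. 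This is the only randomness relevant to the compatibility event.

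Next, I would set up the counting. Let $A_{k_l} = \{t \in S_2 : (\hat{K}_{2t-1}, \hat{K}_{2t}) = (k_l, 2 - k_l)\}$ for $k_l \in \{0,1,2\}$. The hypothesis \eqref{eq:avg_split_correct_sec4} gives $|A_{k_l}| = (2c+d)n \cdot \splresAvg(k_l)$. Unpacking \Cref{def:complv2}, compatibility holds exactly when, for every $k_l$,
\[
  |S_{0,2,2} \cap A_{k_l}| = |S_{2,0,2} \cap A_{k_l}| = cn \cdot \splresB(k_l), \qquad |S_{1,1,2} \cap A_{k_l}| = dn \cdot \splresA(k_l),
\]
and these three numbers sum to $|A_{k_l}|$ precisely because $\splresAvg$ is the convex combination defined in \eqref{eq:avg_split_correct_sec4}. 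Since the three desired intersections inside each $A_{k_l}$ may be chosen independently across $k_l$, the number of favorable partitions is $\prod_{k_l=0}^{2} \binom{|A_{k_l}|}{cn \splresB(k_l),\, cn \splresB(k_l),\, dn \splresA(k_l)}$, while the total number of partitions is $\binom{(2c+d)n}{cn, cn, dn}$.

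Finally, I would compute the ratio and simplify. Writing both multinomials as factorials and regrouping gives the equivalent ``dual'' expression stated in the lemma, namely $\binom{cn}{[cn \splresB(k')]}^{2} \binom{dn}{[dn \splresA(k')]} \big/ \binom{(2c+d)n}{[(2c+d)n \splresAvg(k')]}$; this identity amounts to saying that the probability is unchanged under swapping the roles of ``fix the triple, choose the block'' and ``fix the block, choose the triple'', which makes sense because both marginals are invariant under permuting $S_2$. The invariance of $p_{\textup{comp}}$ with respect to the specific $Z_{\hat{K}}$ then follows for free: the count depends on $\hat{K}$ only through the sizes $|A_{k_l}|$, which are pinned down by \eqref{eq:avg_split_correct_sec4}. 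There is no real obstacle; the one step that deserves care is the justification of uniformity of the partition of $S_2$, which rests on the position-independence across distinct values of $K_t$ and the permutation symmetry within $S_2$.
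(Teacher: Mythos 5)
Your proof is correct, and it reaches the stated formula by a genuinely different (dual) route from the paper's. The paper fixes a triple and counts blocks: it first argues by symmetry that $\pcomp = \E_I[\Pr_{\hat K}[\cdot]]$ where $\hat K$ is uniform over blocks satisfying \eqref{eq:avg_split_correct_sec4}, so that $\pcomp$ equals, for any fixed triple consistent with $\alpha$, the number of compatible $Z_{\hat K}$ divided by the number of $Z_{\hat K}$ satisfying \eqref{eq:avg_split_correct_sec4}; the two counts are then exactly the numerator and denominator in the lemma, so no further manipulation is needed. You instead fix $Z_{\hat K}$ and randomize the triple, which is literally the experiment in the lemma statement: you justify (correctly, via independence across the level sets of $K$) that a uniform triple obeying $\alpha$ and containing $Z_K$ induces a uniform ordered partition of $S_2$ into parts of sizes $(cn,cn,dn)$, count the favorable partitions as $\prod_{k'}\binom{|A_{k'}|}{cn\splresB(k'),\,cn\splresB(k'),\,dn\splresA(k')}$ against the total $\binom{(2c+d)n}{cn,\,cn,\,dn}$, and then convert to the stated expression by a factorial identity (which indeed holds exactly, since $|A_{k'}| = (2c+d)n\,\splresAvg(k') = 2cn\splresB(k') + dn\splresA(k')$). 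What your approach buys is that it avoids the paper's exchange step and computes the probability exactly as defined, making the invariance over $Z_{\hat K}$ immediate; what it costs is the (routine, but here only asserted) multinomial regrouping to match the lemma's formula, whereas the paper's block-counting yields that formula directly. It would strengthen the write-up to display the one-line factorial verification rather than just asserting the "dual" identity, but there is no gap in substance.
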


\begin{proof}
Consider the following distribution: $(X_I, Y_J, Z_K)$ is a random level-2 triple consistent with $\alpha$, and $Z_{\hat K}$ is a random level-1 Z-block inside $Z_K$ satisfying \eqref{eq:avg_split_correct_sec4}. We have
\[
  \begin{aligned}
    p_{\textup{comp}}
    &= \Pr_{I, \hat K}[Z_{\hat K} \text{ is compatible with } (X_I, Y_J, Z_K)] \\
    &= \E_{I}\Big[ \Pr_{\hat K} [Z_{\hat K} \text{ is compatible with } (X_I, Y_J, Z_K)] \Big] \\
    &= \frac{\#(Z_{\hat K} \in Z_K \text{ compatible with } (X_I, Y_J, Z_K))}
      {\#(Z_{\hat K} \in Z_K \text{ satisfying \eqref{eq:avg_split_correct_sec4}})} \quad \text{for any fixed }(X_I, Y_J, Z_K).
  \end{aligned}
\]
In the second line, the content inside the expectation notation is independent of $I$ due to symmetry. Thus we can calculate this probability for any fixed triple $(X_I, Y_J, Z_K)$ consistent with $\alpha$:
\[
  \#(Z_{\hat K} \in Z_K \text{ compatible with } (X_I, Y_J, Z_K)) = \binom{cn}{[cn \cdot \splresB(k')]_{k'=0,1,2}}^2 \binom{dn}{[dn \cdot \splresA(k')]_{k'=0,1,2}}
\]
and
\[
  \#(Z_{\hat K} \in Z_K \text{ satisfying \eqref{eq:avg_split_correct_sec4}}) = \binom{(2c + d) n}{[ (2c + d) n \cdot \splresAvg(k')]_{k'=0,1,2}}.
\]
This finishes the proof.
\end{proof}

\subsection{Variant of the Coppersmith-Winograd Algorithm} \label{sec:modi}

We now present a slightly modified version of the Coppersmith-Winograd algorithm, which results in the same bound of $\omega$. It contains an additional zeroing-out step which explicitly emphasizes that only compatible Z-blocks contribute to the algorithm. Illustrating this important idea is to prepare for our improved algorithm in the next subsection. Compared to the original version of the CW algorithm, here are two main differences:
\begin{itemize}
\item We only ensure that $\alpha(1,1,2) = \alpha(1,2,1) = \alpha(2,1,1)$ for $T_{1,1,2}$. For all other $T_{i,j,k}$'s we do not require such symmetry since we will apply their non-rotational value.\footnote{\label{note:T112} Here $T_{1,1,2}$ is special because it is the only component that is not a matrix multiplication tensor. As a result, in \Cref{lem:non-rot-values}, it is the only component without a non-rotational value bound. Hence we need such symmetry to apply its value bound.}  This is crucial for our improved algorithm which benefits from asymmetric hashing. 
\item For those $T_{i,j,k}$ with $k = 2$, we replace the use of its value with its restricted-splitting value. Such change explicitly emphasizes the fact that only a few compatible level-1 blocks are used in each level-2 triple $(X_I, Y_J, Z_K)$.
\end{itemize}

\paragraph*{Modified Algorithm.} Same as before, we consider the tensor power $\T = (\CW_q^{\otimes 2})^{\otimes n}$. In Step 1, instead of lower bounding values $V^{(3)}_{\tau}(T_{i,j,k})$ for each $i,j,k$, we use the bounds given by \Cref{lem:non-rot-values}. Specifically, for $T_{1,1,2}$, the value we apply still requires the distribution $\alpha$ to be symmetric about $T_{1,1,2}$. 

In Step 2, we have to ensure that $\alpha(1,1,2) = \alpha(1,2,1) = \alpha(2,1,1)$. For other $i,j,k$'s, $\alpha$ may not be symmetric. Moreover, we also require that we get the same number of level-2 X, Y, and Z variables blocks obeying $\alpha$, i.e., $\numxblock = \numyblock = \numzblock$. 

As a result, in Step 3, we can still apply symmetric hashing. After hashing, we have independent level-2 triple $(X_I, Y_J, Z_K)$'s. Every level-2 variable block can only appear in a single triple. That is, we get a subtensor $\bigoplus_{I,J,K} \T \vert_{X_I,Y_J,Z_K}$, where the direct sum is taken over all remaining triples $(I,J,K)$, and $\T \vert_{X_I, Y_J, Z_K}$ is the subtensor of $\T$ over variable sets $X_I, Y_J, Z_K$. Each $\T \vert_{X_I,Y_J,Z_K}$ is isomorphic to $\T^{\alpha} \coloneqq \bigotimes_{i +j+k=4} T^{\otimes n \alpha(i,j,k)}_{i,j,k}$.

Before Step 4, we additionally zero out all level-1 blocks $Z_{\hat K} \in Z_K$ that are not compatible with $(X_I, Y_J, Z_K)$, where $(X_I, Y_J, Z_K)$ is the unique remaining triple containing the level-2 block $Z_K$. Let $S_{i,j,k}$ and compatibility be defined as in \Cref{def:complv2}. $Z_{\hat{K}}$ survives such zeroing-out only when $\split(\hat{K}, S_{1,1,2}) = \splresA$, $\split(\hat{K}, S_{0,2,2}) = \split(\hat{K}, S_{2,0,2}) = \splresB$. Hence the remaining subtensor in each $\T\vert_{X_I,Y_J,Z_K}$ is isomorphic to
\[
  \T^* \coloneqq \Big( \bigotimes_{\substack{i+j+k=4 \\ k \ne 2}} T_{i,j,k}^{\otimes n \alpha(i,j,k)} \Big) \otimes T_{0,2,2}^{\otimes n \alpha(0,2,2)}[\splresB] \otimes T_{2,0,2}^{\otimes n \alpha(2,0,2)}[\splresB] \otimes T_{1,1,2}^{\otimes n \alpha(1,1,2)}[\splresA].
  \numberthis \label{eq:def_Tstar}
\]
This additional step allows us to restrict the split distribution of all remaining $Z_{\hat{K}}$.

Finally, in Step 4, we will use the following values:
\begin{itemize}
\item For $T_{0,2,2}^{\otimes n \alpha(0,2,2)}[\splresB]$, we use its \emph{non-rotational restricted-splitting value} $\valone(T_{0,2,2}, \splresB)$. Similar for $T_{2,0,2}^{\otimes n \alpha(2,0,2)}[\splresB]$.
\item Recall that $\alpha(1,1,2) = \alpha(1,2,1) = \alpha(2,1,1) = d$. For $\sym_3(T_{1,1,2}^{\otimes n \alpha(1,1,2)})$, we use its \emph{restricted-splitting value} $\valthree(T_{1,1,2}, \splresA)$.
\item For other components $T_{i,j,k}$, we use their \emph{non-rotational value} $\valone(T_{i,j,k})$.
\end{itemize}

Note that the values in \Cref{lem:non-rot-values} match the values in the original analysis. So this analysis gives exactly the same bound on $\omega$ as the original analysis from the original parameter $\alpha$. The difference is that now from \eqref{eq:def_Tstar}, we can explicitly see that only those level-1 blocks $Z_{\hat{K}} \in Z_K$ with a specific splitting distribution are involved.  (More specifically, $T_{0,2,2}$'s and $T_{2,0,2}$'s must split according to $\splresB$ while $T_{1,1,2}$'s must split according to $\splresA$.) All other level-1 blocks are wasted (which we call combination loss) in both the original version and the variant. This motivates our improvement.

\subsection{Our Improved Algorithm} \label{sec:2nd-improved-algo}

Finally, we are ready to present the improved algorithm for the second power. The algorithm follows almost the same steps as \Cref{sec:modi} with the following differences.
\begin{enumerate}
\item Symmetric hashing in Step 3 is replaced with asymmetric hashing. Hence each level-2 block $Z_K$ is now in multiple remaining triples after Step 3. This is the crucial step that compensates for the ``combination loss''. 
\item Additional Zeroing-Out Step 1 is an adaptation of the additional zeroing-out step in \Cref{sec:modi}. The difference is that now $Z_K$ might be in multiple remaining triples. Fix one of them, say $(X_I, Y_J, Z_K)$. We cannot simply zero out all level-1 blocks $Z_{\hat{K}} \in Z_K$ that are incompatible with this triple, because there might be another remaining triple $(X_{I'}, Y_{J'}, Z_K)$ that $Z_{\hat{K}}$ is compatible with.
\item There is one more step which we call the Additional Zeroing-Out Step 2. In this step, we zero out all the level-1 blocks in $Z_K$ that are compatible with more than one remaining triples. This guarantees the independence of these triples. 
\item After the Additional Zeroing-Out Step 2, there will be holes in Z-variables. We have to fix them in Step 4 with the random shuffling technique similar to \cite{doi:10.1137/1.9781611975482.31}.
\end{enumerate}

In this section, we will be analyzing the tensor $\T = (\CW_q^{\otimes 2})^{\otimes n}$.

\paragraph*{Step 1: Lower bound the values of subtensors.} Similar as \Cref{sec:modi}, we use the lower bounds given by \Cref{lem:non-rot-values}. Note that for $T_{1,1,2}$ we only have a bound on its symmetrized value, so in Step 2, we must have $\alpha(1,1,2) = \alpha(1,2,1) = \alpha(2,1,1)$.

\paragraph{Step 2: Choose a distribution.} We specify the component distribution $\alpha$ by
\begin{align*}
  &\alpha(0,2,2) = \alpha(2,0,2) = a, \\
  &\alpha(2,2,0) = b, \\
  &\alpha(1,1,2) = \alpha(1,2,1) = \alpha(2,1,1) = c, \\
  &\alpha(0,0,4) = \alpha(0,4,0) = \alpha(4,0,0) = d, \\
  &\alpha(0,1,3) = \alpha(0,3,1) = \alpha(1,0,3) = \alpha(1,3,0) = \alpha(3,0,1) = \alpha(3,1,0) = e,
\end{align*}
where $2a + b + 3c + 3d + 6e = 1$. Although for $(i,j,k)$'s other than $(1,1,2)$, the distribution $\alpha$ does not necessarily have to be symmetric, we still make some of them symmetric just to reduce the number of our parameters.

Note the joint distribution here may not be the maximum entropy distribution that has the same marginals, i.e. $\alpha \neq D^*(\alphx, \alphy, \alphz)$. This may incur the hash loss mentioned in \cref{sec:hashing} and \cref{remark:hash_loss1}. We will take such loss into account in the analysis (specifically, in \Cref{remark:hash-loss-analysis}).

The marginal distributions can be calculated accordingly:
\[
  \begin{array}{ccccc}
    \alphx(0) = a + 2d + 2e, & \alphx(1) = 2c + 2e, & \alphx(2) = a + b + c, & \alphx(3) = 2e, & \alphx(4) = d, \\
    \alphz(0) = b + 2d + 2e, & \alphz(1) = 2c + 2e, & \alphz(2) = 2a + c, & \alphz(3) = 2e, & \alphz(4) = d.
  \end{array}
\]
By symmetry between X and Y, we always have $\alphy = \alphx$.

Denote by $\numxblock, \numyblock, \numzblock$ the number of X, Y, and Z-blocks in $\mathcal T$ consistent with these marginal distributions, given by
\[
  \numxblock = \numyblock = \binom{n}{[n\alphx(i)]_{0\le i \le 4}}, \qquad \numzblock = \binom{n}{[n\alphz(k)]_{0 \le k \le 4}}.
\]
We require that $\numxblock = \numyblock > \numzblock$.

\paragraph{Step 3: Asymmetric Hashing.} We apply the asymmetric hashing method (\cref{sec:hashing}) to $(\CW_q^{\otimes 2})^{\otimes n}$.
We require that each remaining level-2 X or Y-block appears in a unique remaining triple, while each Z-block is typically in multiple remaining triples.

As explained in Step 2, the distribution $\alpha$ may not equal to $D^*(\alphx, \alphy, \alphz)$. In this case, asymmetric hashing can still be applied with a proper modulus $M$ at the cost of introducing an extra hash loss. (We will explicitly consider it in \Cref{remark:hash-loss-analysis}.)

After hashing, the subtensor of $(\CW_q^{\otimes 2})^{\otimes n}$ over a remaining triple $(X_I, Y_J, Z_K)$ is isomorphic to
\[
  \T^{\alpha} = \bigotimes_{i+j+k=4} T_{i,j,k}^{\otimes n\alpha(i,j,k)}.
\]

\paragraph{Additional Zeroing-Out Step 1.} Fix any level-2 triple $(X_I, Y_J, Z_K)$ and level-1 block $Z_{\hat K} \in Z_K$. Similar to the additional zeroing-out step in \Cref{sec:modi}, this step aims to ensure that, if $Z_{\hat K}$ is not compatible with $(X_I, Y_J, Z_K)$, there will be no terms between $X_I, Y_J$ and $Z_{\hat K}$. In other words, for all $X_{\hat I} \in X_I$, $Y_{\hat J} \in Y_J$ such that $\hat{I} + \hat{J} + \hat{K} = (2, 2, \ldots, 2)$, at least one of the level-1 blocks $X_{\hat I}$, $Y_{\hat J}$ and $Z_{\hat K}$ has to be zeroed out.

In \Cref{sec:modi}, since each $Z_K$ was in a unique triple, we just zeroed out all $Z_{\hat{K}}$'s that are not compatible with $X_I$ and $Y_J$. Now $Z_K$ might be in multiple triples due to asymmetric hashing. Even when $Z_{\hat{K}}$ is not compatible with $X_I$ and $Y_J$, we still cannot zero it out because it might be compatible with some other $X_{I'}$ and $Y_{J'}$. 

The fix is to zero out $X_{\hat{I}}$ or $Y_{\hat{J}}$ instead. For any level-2 block $X_I$ (or $Y_J$), there is a unique remaining triple $(X_I, Y_J, Z_K)$ containing it. Recall that we have defined $S_{i,j,k} = \BK{t \in [n] \mid (I_t, J_t, K_t) = (i,j,k)}$ and $S_2 = S_{0,2,2} \cup S_{2,0,2} \cup S_{1,1,2}$.

Suppose $X_{\hat{I}} \in X_I$ and $Y_{\hat{J}} \in Y_J$ satisfy $\hat{I} + \hat{J} + \hat{K} = (2, 2, \ldots, 2)$. When $J_t = 0$, since $(\hat{I}_{2t - 1}, \hat{I}_{2t}) + (\hat{J}_{2t - 1}, \hat{J}_{2t}) + (\hat{K}_{2t - 1}, \hat{K}_{2t}) = (2, 2)$, we must have $(\hat{I}_{2t - 1}, \hat{I}_{2t}) = (2 - \hat{K}_{2t - 1}, 2 - \hat{K}_{2t})$. This implies that $\split(\hat I, S_{2,0,2})(i') = \split(\hat K, S_{2,0,2})(2 - i')$. Let $\splresB^{(\textup{rev})}$ be the marginal split distribution defined by
\[
  \splresB^{(\revtext)}(i') := \splresB(2 - i').
\]
Suppose $Z_{\hat{K}}$ is not compatible with $X_I, Y_J$ (violating \Cref{def:complv2}) due to $\split(\hat K, S_{2,0,2}) \neq \splresB$. Then by zeroing out all $X_{\hat{I}} \in X_I$ where $\split(\hat I, S_{2,0,2}) \neq \splresB^{(\revtext)}$, we can make sure that there is no term between $X_I, Y_J$ and $Z_{\hat{K}}$.

Similarly, if $Z_{\hat{K}}$ is not compatible with $X_I, Y_J$ due to $\split(\hat K, S_{0,2,2}) \neq \splresB$, we zero out all $Y_{\hat{J}} \in Y_J$ with $\split(\hat{K}, S_{0,2,2}) \neq \splresB^{(\revtext)}$. Finally, if $Z_{\hat{K}}$ is incompatible with $X_I, Y_J$ due to $\split(\hat K, S_{1,1,2}) \neq \splresA$ while $\split(\hat K, S_{0,2,2}) = \split(\hat K, S_{2,0,2}) = \splresB$, we must have $\split(\hat K, S_2) \neq \splresAvg$. In this case, $Z_{\hat{K}}$ cannot be compatible with any $X_I, Y_J$, and we simply zero out $Z_{\hat{K}}$. \bigskip

Formally, in this step, we do the following:

\begin{enumerate}
\item For all level-2 block $X_I$ and level-1 block $X_{\hat{I}} \in X_I$, define $S_{i,j,k} = \BK{t \in [n] \mid (I_t, J_t, K_t) = (i,j,k)}$ with respect to the unique remaining triple $(X_I, Y_J, Z_K)$ that $X_I$ is in. We zero out $X_{\hat{I}}$ if and only if $\split(\hat{I}, S_{2,0,2}) \neq \splresB^{(\revtext)}$.
\item For all level-2 block $Y_J$ and all level-1 block $Y_{\hat{J}} \in Y_J$, define $S_{i,j,k}$ with respect to the unique remaining triple $(X_I, Y_J, Z_K)$ that $Y_J$ is in. We zero out $Y_{\hat{J}}$ if and only if $\split(\hat{J}, S_{0,2,2}) \neq \splresB^{(\revtext)}$.
\item For all level-2 block $Z_K$ and level-1 block $Z_{\hat{K}} \in Z_K$, let $S_2 = \{t \in [n] \mid K_t = 2\}$. We zero out $Z_{\hat{K}}$ if and only if $\split(\hat{K}, S_2) \neq \splresAvg$.
\end{enumerate}

From our discussion above, we can conclude the following lemma. Its formal proof is deferred to \Cref{appendix:missing_proofs}. 

\begin{restatable}{lemma}{AdditionalZ}
  Let $(X_I, Y_J, Z_K)$ be a remaining level-2 triple. For all $X_{\hat{I}} \in X_I, Y_{\hat{J}} \in Y_J, Z_{\hat{K}} \in Z_K$ such that $\hat{I} + \hat{J} + \hat{K} = (2, 2, \ldots, 2)$, if $Z_{\hat{K}}$ is not compatible with $X_I, Y_J$, at least one of $X_{\hat{I}}, Y_{\hat{J}}, Z_{\hat{K}}$ is zeroed out in Additional Zeroing-Out Step 1.
  \label{lem:prop1}
\end{restatable}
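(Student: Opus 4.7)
The plan is to argue by a case analysis on which of the three split-distribution conditions in \Cref{def:complv2} fails for $Z_{\hat K}$. Since $Z_{\hat K}$ is incompatible with $(X_I, Y_J, Z_K)$, at least one of the following must hold: (a) $\split(\hat K, S_{2,0,2}) \ne \splresB$, (b) $\split(\hat K, S_{0,2,2}) \ne \splresB$, or (c) $\split(\hat K, S_{1,1,2}) \ne \splresA$. I will match each failure mode to the substep of Additional Zeroing-Out Step 1 that removes one of $X_{\hat I}, Y_{\hat J}, Z_{\hat K}$.

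For case (a), the key observation is that the constraint $\hat I + \hat J + \hat K = (2, \ldots, 2)$ combined with $J_t = 0$ at every $t \in S_{2,0,2}$ forces $\hat J_{2t-1} = \hat J_{2t} = 0$ (since level-1 indices are nonnegative and sum to the level-2 index). Hence for such $t$ we must have $(\hat I_{2t-1}, \hat I_{2t}) = (2 - \hat K_{2t-1},\, 2 - \hat K_{2t})$. Reading off the induced relation between the empirical split distributions gives the pointwise identity $\split(\hat I, S_{2,0,2})(i') = \split(\hat K, S_{2,0,2})(2 - i')$. So $\split(\hat I, S_{2,0,2})$ is the coordinate-reversed copy of $\split(\hat K, S_{2,0,2})$. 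Combined with the failure $\split(\hat K, S_{2,0,2}) \ne \splresB$, this yields $\split(\hat I, S_{2,0,2}) \ne \splresB^{(\revtext)}$, so substep 1 zeros out $X_{\hat I}$. Here I will flag the implicit use of the fact that the surviving level-2 block $X_I$ lies in a unique remaining triple (guaranteed by asymmetric hashing), so $S_{2,0,2}$ is well-defined as an attribute of $X_I$. Case (b) is completely symmetric under the X $\leftrightarrow$ Y swap, producing the same reversal identity on $\split(\hat J, S_{0,2,2})$ and triggering substep 2 to zero out $Y_{\hat J}$.

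The remaining case (c) is the one where both $\split(\hat K, S_{0,2,2})$ and $\split(\hat K, S_{2,0,2})$ already equal $\splresB$, while $\split(\hat K, S_{1,1,2}) \ne \splresA$. Here I will expand $\split(\hat K, S_2)$ as the weighted average
\[
  \split(\hat K, S_2) \;=\; \frac{|S_{0,2,2}| + |S_{2,0,2}|}{|S_2|}\, \splresB \;+\; \frac{|S_{1,1,2}|}{|S_2|}\, \split(\hat K, S_{1,1,2}),
\]
and compare it with the defining decomposition of $\splresAvg$ in \eqref{eq:avg_split_correct_sec4}, which (using $|S_{0,2,2}| = |S_{2,0,2}| = \alpha(0,2,2) n$ and $|S_{1,1,2}| = \alpha(1,1,2) n$) has exactly the same weights but with $\splresA$ replacing $\split(\hat K, S_{1,1,2})$. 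The difference is $\frac{|S_{1,1,2}|}{|S_2|}\bigl(\split(\hat K, S_{1,1,2}) - \splresA\bigr) \ne 0$, so $\split(\hat K, S_2) \ne \splresAvg$ and substep 3 zeros out $Z_{\hat K}$.

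I do not expect any real obstacle in this argument; the proof is essentially a bookkeeping exercise that unpacks the definitions. The only conceptual point worth highlighting is case (c), where the averaging identity is what makes the single Z-side zeroing-out criterion in substep 3 simultaneously capture every $T_{1,1,2}$-mismatch that is not already caught on the X or Y side. The X/Y substeps cannot be replaced by a direct zeroing-out of $Z_{\hat K}$ because, unlike in \Cref{sec:modi}, a Z-block may participate in several remaining triples, so the violating $\split(\hat K, S_{0,2,2}) / \split(\hat K, S_{2,0,2})$ is specific to the triple and must be attacked from the unique X or Y side.
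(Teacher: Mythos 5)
Your proposal is correct and is essentially the paper's argument: the paper proves the contrapositive (if all three level-1 blocks survive Additional Zeroing-Out Step 1, the reversal identities on $S_{2,0,2}$ and $S_{0,2,2}$ plus the averaging identity over $S_2$ force $Z_{\hat K}$ to be compatible), which uses exactly the same three facts as your direct case analysis. The only minor point the paper makes explicit that you leave implicit is the trivial edge case $\alpha(1,1,2)=0$, which does not affect the argument.
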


Fix a remaining triple $(X_I, Y_J, Z_K)$. In the modified CW algorithm described in \Cref{sec:modi}, after additional zeroing out, the remaining subtensor in $\T\vert_{X_I,Y_J,Z_K}$ is isomorphic to $\T^*$:
\[\T^* \defeq \Big( \bigotimes_{\substack{i+j+k=4 \\ k \ne 2}} T_{i,j,k}^{\otimes n \alpha(i,j,k)} \Big) \otimes T_{0,2,2}^{\otimes n \alpha(0,2,2)}[\splresB] \otimes T_{2,0,2}^{\otimes n \alpha(2,0,2)}[\splresB] \otimes T_{1,1,2}^{\otimes n \alpha(1,1,2)}[\splresA].\]
The following lemma says that this is also the case here.

\begin{lemma}
  \label{lem:prop_2}
  Let $(X_I, Y_J, Z_K)$ be a remaining level-2 triple. Let $\T^{(1)}$ be the tensor obtained after Additional Zeroing-Out Step 1. We have
  \[
    \T^{(1)}\vert_{X_I,Y_J,Z_K} \cong \T^*.
  \]
\end{lemma}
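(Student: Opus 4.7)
The plan is to track how each of the three zeroing-out operations in Step 1 acts on a natural tensor-product decomposition of $\T\vert_{X_I,Y_J,Z_K}$. First, since $(X_I,Y_J,Z_K)$ is consistent with $\alpha$, grouping the $n$ level-2 positions by their component $(I_t,J_t,K_t)=(i,j,k)$ gives the standard isomorphism
\[
  \T\vert_{X_I,Y_J,Z_K} \;\cong\; \T^\alpha \;=\; \bigotimes_{i+j+k=4} T_{i,j,k}^{\otimes n\alpha(i,j,k)},
\]
in which the factor $T_{i,j,k}^{\otimes n\alpha(i,j,k)}$ uses exactly the variables at positions in $S_{i,j,k}$. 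Under this isomorphism, a level-1 X-block $X_{\hat I}\in X_I$ is precisely a tuple of level-1 X-blocks, one per factor, indexed by the restrictions $\hat I\vert_{S_{i,j,k}}$; similarly for Y and Z.

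Next, I would observe that each Step 1 operation is localized in this decomposition. The X-side condition depends only on $\hat I\vert_{S_{2,0,2}}$, so it constrains only level-1 X-blocks inside the $T_{2,0,2}^{\otimes n\alpha(2,0,2)}$ factor and leaves every other factor untouched; likewise the Y-operation touches only $T_{0,2,2}^{\otimes n\alpha(0,2,2)}$, and the Z-operation touches only the joint Z-block across the three $k=2$ factors. Hence every factor with $k\neq 2$ is unchanged. For $T_{2,0,2}$, the constraints $\hat j_{2t-1}=\hat j_{2t}=0$ and $\hat i+\hat k=2$ at each level-1 position make the X-split and Z-split pointwise interchangeable via $\hat k=2-\hat i$, so zeroing out $X_{\hat I}$'s with $\split(\hat I,S_{2,0,2})\neq\splresB^{(\revtext)}$ eliminates exactly the same monomials as zeroing out $Z_{\hat K}$'s with $\split(\hat K,S_{2,0,2})\neq\splresB$, and the factor reduces to $T_{2,0,2}^{\otimes n\alpha(2,0,2)}[\splresB]$. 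The mirrored argument for $T_{0,2,2}$, using $\hat i=0$ and $\hat j+\hat k=2$, yields $T_{0,2,2}^{\otimes n\alpha(0,2,2)}[\splresB]$.

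The main obstacle is the remaining $T_{1,1,2}$ factor, since the Z-operation imposes an averaged constraint on all of $S_2$ rather than directly on $S_{1,1,2}$. Using the partition $S_2=S_{0,2,2}\sqcup S_{2,0,2}\sqcup S_{1,1,2}$, every $\hat K$ obeys the weighted-average identity
\[
  |S_2|\cdot \split(\hat K,S_2) \;=\; \sum_{(i,j,k)\in\{(0,2,2),(2,0,2),(1,1,2)\}} |S_{i,j,k}|\cdot \split(\hat K,S_{i,j,k}).
\]
On Z-blocks surviving the X- and Y-operations one already has $\split(\hat K,S_{0,2,2})=\split(\hat K,S_{2,0,2})=\splresB$, together with $|S_{0,2,2}|=|S_{2,0,2}|=n\alpha(0,2,2)$ and $|S_{1,1,2}|=n\alpha(1,1,2)$; substituting $\split(\hat K,S_2)=\splresAvg$ and comparing with the defining formula for $\splresAvg$ in \eqref{eq:avg_split_correct_sec4} uniquely forces $\split(\hat K,S_{1,1,2})=\splresA$. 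Conversely, any $\hat K$ with these three component splits automatically satisfies $\split(\hat K,S_2)=\splresAvg$, so no further Z-block is killed. This reduces the $T_{1,1,2}$ factor to $T_{1,1,2}^{\otimes n\alpha(1,1,2)}[\splresA]$, and tensoring with the untouched $k\neq 2$ factors and the restricted $T_{0,2,2},T_{2,0,2}$ reproduces $\T^*$ exactly.
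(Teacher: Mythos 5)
Your proof is correct and follows essentially the same route as the paper: the paper's own argument also reduces survival of a level-1 triple to the two equivalences you use, namely that the X/Y conditions on $S_{2,0,2}$ and $S_{0,2,2}$ translate (via $\hat j=0$, resp.\ $\hat i=0$) into $\split(\hat K,S_{2,0,2})=\split(\hat K,S_{0,2,2})=\splresB$, and that together with $\split(\hat K,S_2)=\splresAvg$ this forces $\split(\hat K,S_{1,1,2})=\splresA$ by \eqref{eq:avg_split_correct_sec4}. Your factor-by-factor packaging of the same computation is a fine, slightly more structured presentation of the identical argument.
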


\begin{proof}
Fix any level-1 triple $(X_{\hat I}, Y_{\hat J}, Z_{\hat K})$. In $\T^{(1)}\vert_{X_I,Y_J,Z_K}$, it survives if and only if (1) $\split(\hat K, S_2) = \splresAvg$; and (2) $\split(\hat I, S_{2,0,2}) = \split(\hat J, S_{0,2,2}) = \splresB^{(\text{rev})}$. 

First of all, since $\hat{I} + \hat{J} + \hat{K} = (2,2,\dots,2)$, we know (2) is equivalent to $\split(\hat K, S_{0,2,2}) = \split(\hat K, S_{2,0,2}) = \splresB$. Moreover, given (2) holds, (1) is equivalent to
\[
\begin{aligned}\split(\hat K, S_{1,1,2}) &= \frac{1}{\alpha(1,1,2)} \left(\alpha(2) \cdot \splresAvg - \alpha(2,0,2) \cdot \split(\hat K, S_{2,0,2}) - \alpha(0,2,2) \cdot \split(\hat K, S_{0,2,2})\right) \\
&=  \frac{1}{\alpha(1,1,2)} \left(\alpha(2) \cdot \splresAvg - 2\alpha(0,2,2) \cdot  \splresB \right) \\
&= \splresA. \qquad\qquad \text{(By \cref{eq:avg_split_correct_sec4})}
\end{aligned}
\]
Thus (1) and (2) together are equivalent to $\split(\hat{K}, S_{1,1,2}) = \splresA$ and $\split(\hat K, S_{0,2,2}) = \split(\hat K, S_{2,0,2}) = \splresB$. The survived level-1 triples $(X_{\hat{I}}, Y_{\hat{J}}, Z_{\hat{K}})$ are exactly those in $\T^*$. 
\end{proof}

\paragraph{Additional Zeroing-Out Step 2.} In this step, we zero out all level-1 Z-blocks $Z_{\hat K}$ that are compatible with more than one remaining level-2 triples. Namely, we zero out $Z_{\hat{K}} \in Z_K$ if and only if $Z_{\hat{K}}$ is compatible with both $X_I, Y_J$ and $X_{I'}, Y_{J'}$ for some $I \neq I', J \neq J'$, while $(X_I, Y_J, Z_K)$ and $(X_{I'}, Y_{J'}, Z_K)$ both remained in hashing. We call the obtained tensor $\T^{(2)}$.

\begin{figure}[ht]
  \centering
  \includegraphics[width = 0.95 \textwidth]{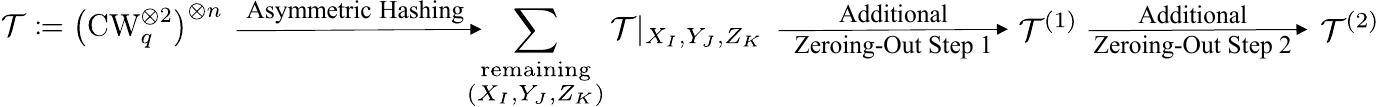}
  \caption{Tensor Degeneration Process}
  \label{fig:tensor6}
\end{figure}

Now, for a remaining triple $(X_I, Y_J, Z_K)$, the subtensor $\T^{(2)}\vert_{X_I,Y_J,Z_K}$ may not be isomorphic to $\T^*$, (in contrast with $\T^{(1)}\vert_{X_I,Y_J,Z_K}$ and \Cref{lem:prop_2}). But it is almost $\T^*$ except that some level-1 Z-blocks are zeroed out. We call these level-1 Z-blocks ``holes''. In \Cref{sec:analysis}, we will show that the fraction of holes $p_{\textup{hole}} < 1/2$. \footnote{Strictly, $p_{\text{hole}}$ is defined as the probability of each level-1 Z-block being a hole. The formal definition will be presented later.} Using this fact, we can fix the holes in Step 4.

\paragraph{Step 4: Degenerate Each Triple Independently and Fix Holes.} If there are no such holes, we can simply follow Step 4 in \Cref{sec:modi}. We will use the same values as \Cref{sec:modi}:
\begin{itemize}
\item For $T_{0,2,2}^{\otimes n \alpha(0,2,2)}[\splresB]$, we use its \emph{non-rotational restricted-splitting value} $\valone(T_{0,2,2}, \splresB)$. Similar for $T_{2,0,2}^{\otimes n \alpha(2,0,2)}[\splresB]$.
\item Recall that $\alpha(1,1,2) = \alpha(1,2,1) = \alpha(2,1,1) = c$. For $\sym_3(T_{1,1,2}^{\otimes n \alpha(1,1,2)})$, we use its \emph{restricted-splitting value} $\valthree(T_{1,1,2}, \splresA)$.
\item For other components $T_{i,j,k}$'s, we use their \emph{non-rotational value} $\valone(T_{i,j,k})$.
\end{itemize}

If we ignore the holes, we can directly use these values to get the bound
\begin{align}
  V_{\tau}^{\nrot}(\T^*) \geq &\left({\prod_{\substack{i + j + k = 4 \\ 0 \in \BK{i,j,k} \ \land \ k \neq 2}}} V_{\tau}^{\nrot}(T_{i,j,k})^{n\alpha(i,j,k) }\right) \cdot V^{(3)}_{\tau}(T_{1,1,2}, \splresA)^{n\alpha(1,1,2)} \notag \\ &\cdot V_{\tau}^{\nrot}(T_{0,2,2}, \splresB)^{n\alpha(0,2,2)} \cdot V_{\tau}^{\nrot}(T_{2,0,2}, \splresB)^{n\alpha(2,0,2)} \eqdef v^*. \label{eq:value_LB}
\end{align}

Any lower bound $V_{\tau}^{\nrot}(\T^*) \geq v$, by \Cref{def:non-rot}, gives a degeneration $\T^*\ \degen \ \bigoplus_{i=1}^s \angbk{a_i,b_i,c_i}$ with $\sum_{i=1}^s (a_i b_i c_i)^\tau \geq v$. \footnote{Strictly speaking, it gives a degeneration $(\T^*)^{\otimes m} \ \degen \ \bigoplus_{i=1}^s \angbk{a_i,b_i,c_i}$ with $\big(\sum_{i=1}^s (a_i b_i c_i)^\tau\big)^{1/m} \geq v$. It is easy to see that as $n\rightarrow \infty$, we can without loss of generality let $m = 1$ for $\T^*$.} To fix the holes, it is not enough to only use the lower bound \eqref{eq:value_LB}. We need to open the black box and use the following lemma about the corresponding degeneration. We also defer its proof to \Cref{appendix:missing_proofs}.

\begin{restatable}{lemma}{EqualSize}
  \label{lem:equal-size}
  The degeneration given by the lower bound \eqref{eq:value_LB} produces matrix multiplication tensors of the same size, i.e., $\T^* \ \degen \ \bigoplus_{i=1}^s \angbk{\bar{N},\bar{M},\bar{P}}$.
  Moreover, the degeneration is simply a zeroing out.
\end{restatable}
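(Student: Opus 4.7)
The plan is to analyze the value estimate \eqref{eq:value_LB} factor by factor, and to show that each factor's lower bound is realized by a zeroing-out degeneration whose output is a direct sum of matrix multiplication tensors of a single common size. Combining these degenerations via tensor product then yields the claim, since the tensor product of zeroing outs is itself a zeroing out, and tensoring together families of equal-sized matrix multiplications produces a family of equal-sized matrix multiplications (with joint dimensions being coordinatewise products of the per-factor dimensions).

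First I would handle the factors that are already matrix multiplications. By direct enumeration of the CW tensor, each $T_{i,j,k}$ with $i+j+k=4$ and $0 \in \BK{i,j,k}$ is itself a matrix multiplication tensor: for example, $T_{0,0,4} \cong \angbk{1,1,1}$, $T_{0,1,3} \cong \angbk{1,1,2q}$, $T_{0,2,2} \cong \angbk{1,1,q^2+2}$, $T_{2,2,0} \cong \angbk{1,q^2+2,1}$, and similarly for their permutations. The bounds in \Cref{lem:non-rot-values}(a-c) are therefore achieved by the identity degeneration whenever $k \ne 2$. For the restricted factors $T_{0,2,2}^{\otimes n\alpha(0,2,2)}[\splresB]$ and $T_{2,0,2}^{\otimes n\alpha(2,0,2)}[\splresB]$, I would use that $T_{0,2,2}^{\otimes n\alpha} \cong \angbk{1,1,(q^2+2)^{n\alpha}}$ pairs each Y-variable bijectively with a Z-variable, so the zeroing out built into $[\splresB]$ cuts this ambient matrix multiplication down to a smaller matrix multiplication $\angbk{1,1,M}$ (respectively $\angbk{M,1,1}$ for $T_{2,0,2}$) with $M$ chosen exactly so that the stated value $\valone(T_{0,2,2}, \splresB)^{n\alpha}$ is matched.

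The main obstacle is the $T_{1,1,2}$ factor, because $\valthree(T_{1,1,2}, \splresA)$ is a three-symmetrized value. To apply it, I would first extract from $\T^*$ the full $\sym_3(T_{1,1,2}^{\otimes n\alpha(1,1,2)}[\splresA])$-piece: the factor $T_{1,1,2}^{\otimes nc}[\splresA]$ is present explicitly, while the two rotated copies $T_{1,2,1}^{\otimes nc}$ and $T_{2,1,1}^{\otimes nc}$ appear unrestricted in $\T^*$ and need to be cut down by a further zeroing out that enforces the rotated analogues of $\splresA$ on their Y- and X-axes (using the CW-symmetry isomorphisms $T_{1,1,2}^{\rot} \cong T_{1,2,1}$ and $(T_{1,1,2}^{\rot})^{\rot} \cong T_{2,1,1}$). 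I would then invoke the proof of \Cref{lem:non-rot-values}(d) (deferred to the appendix), which realizes the value through an inner Coppersmith--Winograd-style Salem--Spencer hashing on $\sym_3(T_{1,1,2}^{\otimes n}[\splresA])$. This hashing is itself a zeroing out, and because the rotational symmetrization makes the construction symmetric across the three coordinates and the split distribution is held fixed, every surviving level-1 triple carries identical shape parameters, so the resulting independent matrix multiplications all share a single common size.

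Tensoring all per-factor zeroing-out degenerations together then produces a zeroing-out degeneration of $\T^*$, whose output is a direct sum of matrix multiplication tensors of the common size $\bar N \times \bar M \times \bar P$ obtained by multiplying the per-factor dimensions; this is exactly the conclusion of \Cref{lem:equal-size}. The only point that requires genuine verification is the uniform-size claim for the $T_{1,1,2}$ hashing; every other factor reduces to direct structural inspection of the CW tensor.
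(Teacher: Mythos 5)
Your proposal is correct and takes essentially the same route as the paper: the paper's proof simply observes that each factor's degeneration from \Cref{lem:non-rot-values} is a zeroing out yielding equal-sized matrix multiplication tensors, and that the tensor product of these zeroing outs is again a zeroing out producing equal-sized products. Your write-up just fills in the details the paper leaves implicit, including the correct handling of the $T_{1,2,1}$ and $T_{2,1,1}$ factors via the rotated split restrictions and the Salem--Spencer hashing inside the proof of \Cref{lem:non-rot-values}(d).
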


Now we start to fix the holes. We call a matrix multiplication tensor with holes \emph{broken}. We will fix them with the following lemma. The idea is straightforward: Suppose half of the Z-variables in a matrix multiplication tensor are holes. Then this broken tensor gives us the correct answer to half of the entries in the result matrix. If we can randomly permute the holes and repeat multiple times, then with high probability, we will get all the entries correct. The proof of \Cref{lemma:mat-hole-lm} will be given in \Cref{sec:mat-hole-lm}.

\begin{restatable}[Matrix Hole Lemma]{lemma}{MatrixHoleLemma}
  \label{lemma:mat-hole-lm}
  Let $s$ be an integer. For each $i \in [s]$, $T'_i$ is a broken copy of the matrix multiplication tensor $\angbk{\widebar N, \widebar M, \widebar P}$ in which $(1 - \eta_i)$ fraction of Z-variables are holes. If $\sum_{i=1}^s \eta_i \ge \log(\widebar N\widebar P) + 1$, then
  \[
    \bigoplus_{i=1}^s T'_i \;\degen\; \angbk{\widebar N,\widebar M,\widebar P}.
  \]
  That is, from the direct sum of broken matrix multiplication tensors, we can degenerate to an unbroken one.
\end{restatable}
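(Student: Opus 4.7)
\begin{proofsketch}[Proposal]
The plan is to prove the lemma by the probabilistic method: we will show that independent uniformly random permutations applied to the coordinates of each broken copy $T_i'$, with positive probability, yield a collection whose ``covered'' entries together exhaust $[\widebar N] \times [\widebar P]$. Given such a covering, we then write down an explicit restriction (a special case of degeneration, via identification) from $\bigoplus_{i=1}^s T_i'$ onto $\angbk{\widebar N, \widebar M, \widebar P}$.

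Setup and the key observation. Write $T_i' = \sum_{(a,c) \in S_i} \sum_{b=1}^{\widebar M} X^{(i)}_{a,b} Y^{(i)}_{b,c} Z^{(i)}_{a,c}$, where $S_i \subseteq [\widebar N] \times [\widebar P]$ with $|S_i| = \eta_i \widebar N \widebar P$. For any permutations $\pi_1 \in \mathrm{Sym}([\widebar N])$, $\pi_2 \in \mathrm{Sym}([\widebar M])$, $\pi_3 \in \mathrm{Sym}([\widebar P])$, the linear maps $X^{(i)}_{a,b} \mapsto X_{\pi_1(a), \pi_2(b)}$, $Y^{(i)}_{b,c} \mapsto Y_{\pi_2(b), \pi_3(c)}$, $Z^{(i)}_{a,c} \mapsto Z_{\pi_1(a), \pi_3(c)}$ restrict $T_i'$ to the partial matrix multiplication tensor that computes precisely the entries $\{(\pi_1(a), \pi_3(c)) : (a,c) \in S_i\}$ of $AB$; the internal sum over $b$ is invariant under relabeling by $\pi_2$.

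Probabilistic covering. Sample permutations $(\pi_1^{(i)}, \pi_2^{(i)}, \pi_3^{(i)})$ independently and uniformly for each $i \in [s]$. For a fixed target entry $(a',c') \in [\widebar N] \times [\widebar P]$, the preimage $((\pi_1^{(i)})^{-1}(a'), (\pi_3^{(i)})^{-1}(c'))$ is jointly uniform on $[\widebar N] \times [\widebar P]$, so it lies in $S_i$ with probability exactly $\eta_i$. By independence across copies and the union bound,
\[
\Pr[\text{some } (a',c') \text{ is uncovered}] \leq \widebar N \widebar P \prod_{i=1}^s (1 - \eta_i) \leq \widebar N \widebar P \cdot e^{-\sum_i \eta_i} \leq \widebar N \widebar P \cdot e^{-\log(\widebar N \widebar P) - 1}.
\]
Because $\log$ is base $2$ and $e > 2$, we have $e^{-\log(\widebar N \widebar P)} \leq 1/(\widebar N \widebar P)$, so the bound is at most $1/e < 1$, and some deterministic choice covers every entry.

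Constructing the restriction. Fix covering permutations, and for each $(a',c') \in [\widebar N] \times [\widebar P]$ pick one ``responsible'' copy $t(a',c') \in [s]$ whose image contains $(a',c')$. Keep the $X$ and $Y$ maps as above, and define
\[
Z^{(i)}_{a,c} \;\mapsto\; \begin{cases} Z_{\pi_1^{(i)}(a), \, \pi_3^{(i)}(c)} & \text{if } t(\pi_1^{(i)}(a), \pi_3^{(i)}(c)) = i, \\ 0 & \text{otherwise.} \end{cases}
\]
After summing over the copies, each target $Z_{a',c'}$ receives contribution from exactly one copy (namely $i = t(a',c')$), whose internal sum $\sum_b X_{a', \pi_2^{(i)}(b)} Y_{\pi_2^{(i)}(b), c'}$ re-indexes to $\sum_{b'} X_{a', b'} Y_{b', c'}$; summing over $(a',c')$ recovers $\angbk{\widebar N, \widebar M, \widebar P}$ exactly, with no extraneous cross terms. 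The main thing to verify carefully is the absence of spurious or missing contributions after identifying variables across copies, which follows from permutation-invariance of the middle index together with the uniqueness of $t(a',c')$; the rest of the argument is a direct union-bound calculation.
\end{proofsketch}
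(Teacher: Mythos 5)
Your proposal is correct and follows essentially the same route as the paper's proof: random independent permutations of each broken copy, a union bound (equivalently, expected number of uncovered Z-positions) showing some choice covers all of $[\widebar N]\times[\widebar P]$ when $\sum_i \eta_i \ge \log(\widebar N \widebar P)+1$, and then identification of X/Y-variables across copies with Z-variables kept only in one responsible copy per position. No gaps; the re-indexing of the middle index and the uniqueness of the responsible copy are exactly the points the paper also verifies.
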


In this step, we first observe that each remaining triple gives an independent broken copy of $\T^*$. For triple $(X_I, Y_J, Z_K)$ and level-1 block $Z_{\widehat{K}} \in Z_K$, we formally define $p_{\text{hole}}(\widehat{K}, I, J, K)$ as the probability that $Z_{\widehat{K}}$ is a hole, conditioned on $(X_I, Y_J, Z_K)$ is retained in the hashing step, and provided that $Z_{\hat K} \in Z_K$ is compatible with $(X_I, Y_J, Z_K)$. In \Cref{sec:analysis}, we will show that $p_{\text{hole}}(\widehat{K}, I, J, K) < 1 / 2$ for any fixed $\left(X_I, Y_J, Z_K\right)$ and $Z_{\widehat{K}}$.

We then apply \Cref{lem:equal-size} to degenerate these broken copies of $\T^*$ into broken matrix multiplication tensors of the same size. Since the degeneration is simply zeroing out, each entry in the broken matrix multiplication tensor is mapped from a single variable in $\T^*$. Since $p_{\text {hole}}(\widehat{K}, I, J, K) < 1/2$, each entry in the Z-matrices is a hole with probability less than a half. There are no X or Y holes.

Now, we have degenerated $\T = (\CW_q^{\otimes 2})^{\otimes n}$ to a direct sum of $m$ broken matrix multiplication tensors $T'_1 \oplus T'_2 \oplus \cdots \oplus T'_m$. Each $T'_i$ is a broken copy of $\angbk{\widebar N,\widebar M,\widebar P}$ in which some Z-variables are marked as holes. Let $\eta_i$ be the fraction of non-hole variables over all Z-variables inside the $i$-th matrix multiplication tensor. By our previous arguments, we know $\E[\eta_i] \ge 1/2$. We can then apply \Cref{lemma:mat-hole-lm} to fix the holes.

Formally, we divide $T'_1, \ldots, T'_m$ into groups in which the sum of $\eta_i$'s satisfies $\log(\widebar N\widebar P) + 1 \le \sum \eta_i \le \log(\widebar N\widebar P) + 2$. We independently apply \Cref{lemma:mat-hole-lm} to each group of tensors, and each group will degenerate to a complete $\angbk{\widebar N, \widebar M, \widebar P}$.

Note $\widebar N\widebar P$ is less than the number of Z-variables in $\T$, which is bounded by $(q+2)^{2n}$. So $\log(\widebar N\widebar P) \le 2n \log (q+2) = O(n)$. Through this procedure, we fix all the holes by losing a factor of only $O(n)$ on the number of matrix multiplication tensors. This gives a lower bound of the value $\valone((\CW_q^{\otimes 2})^{\otimes n})$, which can be used to further obtain an upper bound of $\omega$. A more concrete calculation and the numerical result will be presented in the next subsection.

\begin{remark} \label{remark:diffculty}
In this analysis, we require that $\alpha(1,1,2) = \alpha(1,2,1) = \alpha(2,1,1)$ because we do not have a lower bound for $T_{1,1,2}$'s non-symmetric value in \Cref{lem:non-rot-values}. We also required $\alpha$ to be symmetric for $T_{0,0,4}, T_{0,1,3}$ just to minimize the number of parameters. 

We would like to point out this is not optimal. Breaking the symmetry of $T_{0,0,4}, T_{0,1,3}$ could help improve the bound. Another natural attempt would be to further break the symmetry for $T_{1,1,2}$: The challenge is that we do not have a non-rotational value for $T_{1,1,2}$. One attempt is to do another symmetrization of $\T^{(2)}$ and consider $\sym_3\left(\T^{(2)}\right) = \T^{(2)} \otimes \left(\T^{(2)}\right)^{\rot} \otimes \left(\T^{(2)}\right)^{\rot \; \rot}$. But as $\T^{(2)}$ has holes in its Z-variables, the symmetrization $\sym_3\left(\T^{(2)}\right)$ would have holes not only in Z-variables but also in X and Y-variables, then one can no longer apply \Cref{lemma:mat-hole-lm} to fix them. This is why we consider non-rotational values in this section to avoid such symmetrization. To solve this issue, we will introduce a more general Hole Lemma in \cref{sec:hole_lemma}, which can fix the holes in $\T^{(2)}$ before we symmetrize it. After that, we may break the symmetry for $T_{1,1,2}$, $T_{1,2,1}$ and $T_{2,1,1}$, obtaining better bounds than the current section. (See Section~\ref{sec:level-2-global}.)
\end{remark}

\subsection{Analysis}
\label{sec:analysis}

In the previous subsection, we described our improved algorithm for degenerating $\T \defeq (\CW_q^{\otimes 2})^{\otimes n}$ to independent matrix multiplication tensors. This subsection adds some calculations over the parameters and completes the analysis to obtain a bound of $\omega$.

\paragraph{Asymmetric Hashing.} First, let us recall our notations. 

\begin{itemize}
\item $\numxblock, \numyblock, \numzblock$ denote the number of X, Y, and Z-blocks consistent with $\alphx, \alphy, \alphz$, respectively.
\item $\numalpha$ is the number of triples $(X_I, Y_J, Z_K)$ whose joint distribution is consistent with $\alpha$, while $\numtriple$ is the number of triples $(X_I, Y_J, Z_K)$ whose marginal distributions are consistent with $\alphax, \alphay, \alphaz$.
\item $p_{\textup{comp}}$ is the probability that a uniformly random triple obeying $\alpha$ and containing $Z_K$ is consistent with a fixed level-1 block $Z_{\hat{K}} \in Z_K$.
\item We will use $\numretain$ to denote the number of retained triples with joint distribution $\alpha$ that we get after asymmetric hashing.
\end{itemize}

\begin{remark} \label{remark:hash-loss-analysis}
  Since $\alpha$ may not be the maximum entropy joint distribution over all joint distributions consistent with marginals $\alphx, \alphy, \alphz$ (i.e. $\alpha \ne D^*(\alphx, \alphy, \alphz)$), we might have $\numtriple > \numalpha$, which incurs the hash loss (see \cref{sec:hashing} for details). As a result, after hashing and zeroing out, only $\numxblock \cdot \frac{\numalpha}{\numtriple}$ triples consistent with $\alpha$ are retained.
\end{remark}

By \Cref{lem:p_comp_sec4}, since here $\alpha(0,2,2) = \alpha(2,0,2) = a, \alpha(1,1,2) = c$, we have
\[
  p_{\textup{comp}} =\binom{an}{[an \cdot \splresB(k')]_{k'=0,1,2}}^2 \binom{cn}{[cn \cdot  \splresA(k')]_{k'=0,1,2}} \bigg/ \binom{(2a + c)n}{[(2a + c)n \cdot \splresAvg(k')]_{k'=0,1,2}},
\]
(Note this is the same equation as \Cref{lem:p_comp_sec4} but with a different set of parameters for $\alpha$.) The parameters we select will satisfy the additional assumption
\[\numxblock \le \numzblock / p_\text{comp} \numberthis \label{eq:redundant_assumption},\]
so that after we apply asymmetric hashing method, each level-2 Z-block $Z_K$ will be matched to at most $\numxblock / \numzblock \leq 1 / p_{\text{comp}}$ many level-2 X/Y blocks on average. Then by the definition of $p_{\text{comp}}$, each level-1 block $Z_{\hat{K}} \in Z_K$ will in expectation be consistent with at most one of them.

We will set the hash modulus $M$ to a prime in $[4 \numtriple / \numxblock, \, 8 \numtriple / \numxblock]$. As explained in \Cref{remark:hash-loss-analysis}, $\numtriple$ might be larger than $\numalpha$. Then we apply asymmetric hashing (\Cref{sec:hashing}), getting $\E[\numretain] \geq \numxblock \cdot \frac{\numalpha}{\numtriple} \cdot 2^{-o(n)}$ many triples (i.e., loosely speaking, for an X-block consistent with $\alphx$, with probability  $\frac{\numalpha}{\numtriple}$, it is in one of the retained triples).

\paragraph{Probability of Being a Hole.} Let $(X_I, Y_J, Z_K)$ be a triple retained in the hashing step, and $Z_{\hat K} \in Z_K$ be some level-1 block that is compatible with $(X_I, Y_J, Z_K)$. Now we analyze the probability for $Z_{\hat K}$ to be a hole, denoted by $p_{\textup{hole}}(\hat K, I, J, K)$.

A necessary condition for $Z_{\hat K}$ to be a hole is that there exist some other blocks $X_{I'}$ and $Y_{J'}$ $(I' \ne I)$ such that (1) $(X_{I'}, Y_{J'}, Z_K)$ forms a triple consistent with $\alpha$; (2) $\hashx(I') = \hashx(I) = \hashz(K)$, i.e., $I'$ has the same hash value as the triple $(X_I, Y_J, Z_K)$; and (3) $Z_{\hat K}$ is compatible with $(X_{I'}, Y_{J'}, Z_K)$. (Note that $J'$ is determined by $I'$.) We will count the expected number of such $I'$ using the following two facts:
\begin{itemize}
    \item From \Cref{lemma:hash_independence}, we know that for arbitrary two triples $(X_{I}, Y_J, Z_K)$ and $(X_{I'}, Y_{J'}, Z_K)$ consistent with $\alpha$, $\Pr[\hashx(I') = \hashz(K) \mid \hashx(I) = \hashz(K)] = 1/M$.
    \item The choice of $M$ and Assumption \eqref{eq:redundant_assumption} implies that
\[ \frac{p_\text{comp}}{M} \le \frac{p_\text{comp}}{4 \numtriple / \numxblock} \le \frac{\numzblock}{4\cdot \numtriple}. \]
\end{itemize}
Hence 
\begin{align*}
  p_{\textup{hole}}(\hat K, I, J, K)
  &\le \sum_{I' \ne I} \Pr[\hashx(I') = \hashx(I) \mid \hashx(I) = \hashz(K)] \cdot \ind[Z_{\hat K} \textup{ is compatible with } (X_{I'}, Y_{J'}, Z_K)] \\
  & = \sum_{I' \ne I} \ind[Z_{\hat K} \textup{ is compatible with } (X_{I'}, Y_{J'}, Z_K)] \;/\; M \\
  & < \sum_{I'} \ind[Z_{\hat K} \textup{ is compatible with } (X_{I'}, Y_{J'}, Z_K)] \;/\; M \\
  & = \frac{\numalpha}{\numzblock} \Pr_{I'}[Z_{\hat K} \textup{ is compatible with } (X_{I'}, Y_{J'}, Z_K)] \; / \; M \\
  & = \frac{p_{\textup{comp}} \cdot \numalpha}{M \cdot \numzblock} \;<\; \frac{\pcomp \cdot \numtriple}{M \cdot \numzblock} \;<\; 1/2,
\end{align*}
where $I'$ on the fourth line is chosen uniformly at random to let $(X_{I'}, Y_{J'}, Z_K)$ form a triple consistent with $\alpha$.

\paragraph{Bounding the Value.} Recall that $\numretain$ is the number of retained triples in asymmetric hashing and $\E[\numretain] \geq \numxblock \cdot \frac{\numalpha}{\numtriple} \cdot 2^{-o(n)}$.  Here each of them is a broken copy of $\T^*$. From \Cref{lem:equal-size} (whose proof is in \Cref{appendix:missing_proofs}), we know that for some integer $s$, $\T^*$ can be degenerated into $s$ many matrix multiplication tensors of the same size, let us say $\langle\widebar N, \widebar M, \widebar P\rangle$. Moreover, $s$ satisfies that 
\[
  \begin{aligned}
    (\bar N \bar M \bar P)^\tau \cdot s
    &\ge \valone(T_{2,2,0})^{n\alpha(2,2,0)} \cdot \valone(T_{0,2,2}, \splresB)^{2n\alpha(0,2,2)} \cdot \valthree(T_{1,1,2}, \splresA)^{3n\alpha(1,1,2)} \\
    &\phantom{{} \ge {}} \times
      \prod_{\substack{(i,j,k) \in \{(0,0,4), \\ (0,1,3), (0,3,1)\}}} \valone(T_{i,j,k})^{3n \alpha(i,j,k)}
    \cdot 2^{-o(n)} \;\ge\; \alphaval^{n - o(n)}, \\
    \textup{where}\qquad \alphaval &\defeq (q^2+2)^{\tau(2a+b)} \bk{2^{2/3} q^{\tau} (q^{3\tau} + 2)^{1/3}}^{3c} (2q)^{6\tau e}.
  \end{aligned}
\]
Here we follow the notation of \cite{alman2021} and use $\alphaval$ to denote the lower bound on $\displaystyle\lim_{n \rightarrow \infty}((\bar N \bar M \bar P)^\tau \cdot s)^{1/n}$, that is, the total volume of matrix multiplication tensors we finally get normalized by taking the $n$-th root.

From $\T$, we can get in total $m = s \numretain$ matrix multiplication tensors with holes, and further fix the holes to obtain $m'$ copies of $\angbk{\widebar N, \widebar M, \widebar P}$ without holes. Recall that for all $i \in [m]$, we let $\eta_i$ be the fraction of non-hole entries in the $i$-th matrix multiplication tensor we get. From \Cref{lemma:mat-hole-lm}, we know that 
\[
  \begin{aligned}
    m' &\ge \frac{\sum_{i=1}^m \eta_i}{O(n)} - 1 \\
    \E[m'] &\ge \frac{\E[m]}{2 \cdot O(n)} \\
       &= \E[\numretain] \cdot s \cdot 2^{-o(n)}.
  \end{aligned}
\]
Let $v \defeq m' (\widebar N \widebar M \widebar P)^\tau$ be the total volume of the matrix multiplication tensors we get. We have
\begin{align*}
\E[\numretain] &= \numxblock \cdot \frac{\numalpha}{\numtriple}\\
  \E[v] &\ge \E[m'] \cdot (\bar N \bar M \bar P)^\tau \\ &\ge \numxblock \cdot \frac{\numalpha}{\numtriple} \cdot s \cdot (\bar N \bar M \bar P)^\tau \cdot 2^{-o(n)}.
\end{align*}

\paragraph*{How to Verify Our Numerical Results.}
Before we list our parameters that lead to the bound of $\omega$, let us first see how to verify a given set of parameters. Recall that our improved algorithm degenerates the tensor $\T$ into independent matrix multiplication tensors $\angbk{\widebar N, \widebar M, \widebar P}^{\oplus m'}$. Each possible outcome of the algorithm $v$ can lead to a lower bound $\valone(\CW_q^{\otimes 2}) \ge v^{1/n}$, thus $\valone(\CW_q^{\otimes 2}) \ge \E[v]^{1/n}$ by probabilistic method. Following previous works (e.g. Section 4 of \cite{alman2021}), we will use the following set of notations:
\begin{itemize}
\item $\alphabx \defeq \displaystyle \lim_{n\rightarrow\infty} \numxblock^{1/n}$, i.e. the number of level-2 X/Y-blocks normalized by taking $n$-th root.
\item Similarly, $\alphabz \defeq \displaystyle \lim_{n\rightarrow\infty} \numzblock^{1/n}$ is that of the Z-blocks.
\item $\displaystyle \alphant \defeq \lim_{n\rightarrow\infty} \numalpha^{1/n}$ is the number of triples that have joint distribution $\alpha$. 
\item $\displaystyle \alphap \defeq \lim_{n \rightarrow \infty} \pcomp^{1/n}$ is the (normalized) probability that a uniformly random triple obeying $\alpha$ and containing $Z_K$ is consistent with a fixed level-1 block $Z_{\hat{K}} \in Z_K$.  
\item In such notations, $\displaystyle \max_{\alpha' \in \distShareMargin} \alphant' = \lim_{n\rightarrow \infty} \numtriple^{1/n}$ as proved in \cref{lem:numtriple_singledist}.
\end{itemize}
Hence,
\begin{align*}
  & \valone(\CW_q^{\otimes 2})
  \;\ge\; \lim_{n\to\infty}\E[v]^{1/n} \\
  \ge{} & \lim_{n\to\infty} \numxblock^{1/n} \cdot \lim_{n\to\infty} \bk{\frac{\numalpha}{\numtriple}}^{1/n} \cdot \alphaval \\
  \ge{} & \frac{\alphabx \alphant \alphaval}{\max_{\alpha' \in \distShareMargin} \alphant'}. \numberthis \label{eq:vbound-s4}
\end{align*}
Assumption \eqref{eq:redundant_assumption} can be verified by checking
\[
  \lim_{n \to \infty} \numxblock^{1/n} = \alphabx \le \lim_{n \to \infty} \frac{\;\numzblock^{1/n}\;}{p_{\text{comp}}^{1/n}} = \frac{\alphabz}{\alphap}. \numberthis \label{eq:redundant_variant}
\]
By definition, we write the closed forms
\begin{align*}
  \alphabx &= 2^{H(\alphax)} = \prod_{i = 0}^{4} \alphax(i)^{-\alphax(i)}, \\
  \alphabz &= 2^{H(\alphaz)} = \prod_{k = 0}^{4} \alphaz(k)^{-\alphaz(k)}, \\
  \alphant &= 2^{H(\alpha)} = \prod_{i + j + k = 4} \alpha(i, j, k)^{-\alpha(i, j, k)}, \\
  \alphap &= \lim_{n \to \infty} \pcomp^{1/n} = 2^{2a H(\splresB) + c H(\splresA) - (2a + c) H(\splresAvg)},
\end{align*}
where $\splresA, \splresB$ are defined in \Cref{lem:non-rot-values} and are independent of $\alpha$; $\splresAvg$ depends on $a = \alpha(0, 2, 2)$ and $c = \alpha(1, 1, 2)$. Specifically, $\splresAvg = \frac{c}{c+2a} \splresA + \frac{2a}{c + 2a} \splresB$ as defined in \eqref{eq:avg_split_correct_sec4}.

Lastly, $\max_{\alpha' \in \distShareMargin} \alphant[\alpha']$ can be computed via a convex optimization. So far, assume the distribution $\alpha$ is given in advance, the following process (\cref{alg:verify_sec_power}) can compute the corresponding lower bound of $\valone(\CW_q^{\otimes 2})$ via our improved method.

\begin{figure}[h]
  \begin{center}
    \begin{tcolorbox}
      \captionof{algocf}{Verifying the Lower Bound}{\label{alg:verify_sec_power}}
      \vspace{-0.5em}
      Assume constants $a, b, c, d, e$ are given and $\alpha$ is defined by
      \begin{align*}
        &\alpha(0,2,2) = \alpha(2,0,2) = a, \\
        &\alpha(2,2,0) = b, \\
        &\alpha(1,1,2) = \alpha(1,2,1) = \alpha(2,1,1) = c, \\
        &\alpha(0,0,4) = \alpha(0,4,0) = \alpha(4,0,0) = d, \\
        &\alpha(0,1,3) = \alpha(0,3,1) = \alpha(1,0,3) = \alpha(1,3,0) = \alpha(3,0,1) = \alpha(3,1,0) = e.
      \end{align*}
      \begin{enumerate}
      \item Compute $\alphabx, \alphabz, \alphant, \alphap$ and verify that $\alphabx \le \alphabz / \alphap$.
      \item\label{step:conv_prog_entr} Solve the following convex optimization problem:
        \[
          \begin{array}{cc}
            \textup{maximize} & \alphant[\alpha'] = 2^{H(\alpha')} \\
            \textup{subject to} & \alpha' \in \distShareMargin. \\
          \end{array}
        \]
        Then we get $\max_{\alpha' \in \distShareMargin} \alphant[\alpha']$.
      \item Calculate the lower bound by \eqref{eq:vbound-s4}.
      \end{enumerate}
    \end{tcolorbox}
  \end{center}
\end{figure}

Note that in Step~\ref{step:conv_prog_entr}, the objective $\alphant[\alpha']$ is log-concave which allows efficient solvers that guarantee optimality. Although we have limitations on $\alpha$ so that it is determined by 5 variables $a, b, c, d, e$ (4 of which are free variables), $\alpha'$ has a much larger degree of freedom. For example, we require $\alpha(1, 1, 2) = \alpha(2, 1, 1) = \alpha(1, 2, 1)$ since we need to use the 3-rotational value of $T_{1,1,2}$, but $\alpha'$ is allowed to violate such symmetry. Hence, $\max_{\alpha' \in \distShareMargin} \alphant[\alpha']$ is usually strictly larger than $\alphant[\alpha]$.

\paragraph{Finding Good Parameters.} To find these parameters $a, b, c, d, e$, we need to optimize the following program:
\begin{align*}
  \begin{array}{cl}
    \text{maximize} & \displaystyle\frac{\alphabx \alphant \alphaval}{\max_{\alpha' \in \distShareMargin} \alphant'} \\
    \text{subject to} & \alphabx \le \alphabz / \alphap \\
                    & \alpha \textup{ is defined by parameters $a, b, c, d, e$ like above}.
  \end{array}
\end{align*}
However, this optimization problem is non-convex and has a complicated form. Instead of solving it perfectly, we will use heuristics described in \Cref{appendix:heuristics} to get a feasible (but not necessarily optimal) bound. 

\begin{remark}
  Although we use more complicated optimization techniques than prior work, it is clear that our improvement of $\omega$ comes from the new theoretical ideas instead of better calculation -- when applying the prior approach on the second power of the CW tensor, the optimal parameters are easy to derive and prove optimality (see \cite{coppersmith1987matrix}). I.e., better calculation can only improve the bound on higher powers, but not the second power. The best known bound by analyzing the second power remains unchanged since \cite{coppersmith1987matrix}.
\end{remark}

\paragraph{Numerical Results.}
We use \eqref{eq:vbound-s4} together with $\tauthm$ to obtain an upper bound of $\omega$.\footnote{The optimization and verification code for this section is available at \url{https://osf.io/dta6p/?view_only=cf30d3e1ca2f4fe5b4142f65d28b92fd}.} Set
\[
  a = 0.102787, \quad b = 0.102058, \quad c = 0.205540, \quad d = 0.000232, \quad e \approx 0.0125086667.
\]
Given the parameters, the bound can be verified via \cref{alg:verify_sec_power}. According to the definitions, one can calculate
\[
  \alphabx \approx 2.9595937152, \qquad \alphabz \approx 2.9570775659, \qquad 1/\alphap \approx 1.0008517216,
\]
hence \eqref{eq:redundant_variant} is satisfied. Also, by running \cref{alg:verify_sec_power} we can see that $\alphant / \max_{\alpha' \in \distShareMargin} \alphant[\alpha'] \approx 1 - 2.49 \times 10^{-7}$, which means the hash loss is very small (and thus the first heuristic in \cref{appendix:heuristics} is very accurate). The implied bound is $\omega < 2.375234$.

\subsection{Proof of Matrix Hole Lemma.}\label{sec:mat-hole-lm} The last ingredient of our analysis is the proof of \Cref{lemma:mat-hole-lm}. Guided by the explanation in the previous subsection, we formally state its proof. We first recall the lemma:

\MatrixHoleLemma*

\begin{proof}
  We prove by the probabilistic method.
  
  Let $\S_{U}$ denote the symmetric group over $U$. For each $T'_t$, we sample $\sigma_1^{(t)} \in \S_{[\widebar N]}$, $\sigma_2^{(t)} \in \S_{[\widebar M]}$, and $\sigma_3^{(t)} \in \S_{[\widebar P]}$ uniformly. Then, let $T''_t$ be the degeneration of $T'_t$ with the following mappings:
  \[
    x^{(t)}_{i,j} \mapsto \bar x^{(t)}_{\; \sigma^{\!(t)}_1\!(i), \; \sigma^{\!(t)}_2\!(j)}, \qquad
    y^{(t)}_{j,k} \mapsto \bar y^{(t)}_{\; \sigma^{\!(t)}_2\!(j), \; \sigma^{\!(t)}_3\!(k)}, \qquad
    z^{(t)}_{k,i} \mapsto \bar z^{(t)}_{\; \sigma^{\!(t)}_3\!(k), \; \sigma^{\!(t)}_1\!(i)}.
  \]
  Here $T'_t$ is a tensor over $\{x^{(t)}_{i,j}\}, \{y^{(t)}_{j,k}\}$, and $\{z^{(t)}_{k,i}\}$, for $i \in [\widebar N]$, $j \in [\widebar M]$, and $k \in [\widebar P]$; $T''_t$ is over $\{\bar x^{(t)}_{i,j}\}, \{\bar y^{(t)}_{j,k}\}$, and $\{\bar z^{(t)}_{k,i}\}$. Such degeneration is intuitively a ``renaming'' of the variables. The only effect of this degeneration is to shuffle the positions of the holes. One can observe that the preimage of some variable $\bar z^{(t)}_{k,i}$ is uniformly random over all Z-variables in $T'_t$, i.e., $\{z^{(t)}_{k',i'}\}_{k' \in [\widebar P],\, i' \in [\widebar N]}$. As a corollary,
  \[
    \Pr[\bar z^{(t)}_{k, i} \textup{ is a hole in } T''_t] = 1 - \eta_t.
  \]
  Moreover, events of this type are independent for different $t$'s when $k, i$ are fixed.

  If for some pair $(k, i) \in [\widebar P] \times [\widebar N]$, there exists some $t$ such that $\bar z^{(t)}_{k,i}$ is \emph{not} a hole in $T''_t$, we call $(k, i)$ a \emph{good position}; otherwise we call it a \emph{bad position}. The probability of $(k, i)$ being a bad position is
  \[
    \begin{aligned}
      \Pr[(k, i) \textup{ is bad}]
      &= \prod_{t=1}^{s} (1 - \eta_t) \le \prod_{t=1}^s \exp(-\eta_t) \\
      &\le \exp(-\log(\widebar N\widebar P)-1) < \exp(-\ln(\widebar N\widebar P)-1) = 1/(e\widebar N\widebar P).
    \end{aligned}
  \]
  From this formula, we know $\E[\textup{number of bad positions}] \le 1/e < 1$. By probabilistic method, we know there is a sequence $\{(\sigma_1^{(t)}, \sigma_2^{(t)}, \sigma_3^{(t)})\}_{t = 1, 2, \ldots, s}$ such that every position is good. We fix this sequence and continue with our construction.

  For each position $(k,i)$, we let $t_{k,i}$ be the first $t$ such that $\bar z^{(t)}_{k,i}$ is not a hole in $T''_t$. Then, we make the following degeneration from $\bigoplus_{t=1}^s T''_t$ to $\angbk{\widebar N, \widebar M, \widebar P}$:
  \[
    \begin{aligned}
      \bar x^{(t)}_{i,j} &\mapsto x^*_{i,j}, \qquad (\forall i, j, t) \\
      \bar y^{(t)}_{j,k} &\mapsto y^*_{j,k}, \qquad (\forall i, j, t) \\
      \bar z^{(t)}_{k,i} &\mapsto
                      \begin{cases}
                        z^*_{k,i}, & \text{if } t = t_{k,i}, \\
                        0, & \text{otherwise}.
                      \end{cases}
    \end{aligned}
  \]
  Here we denote by $\{x^*_{i,j}\}, \{y^*_{j,k}\}$, and $\{z^*_{k,i}\}$ the variable sets of the result tensor $\angbk{\widebar N, \widebar M, \widebar P}$, where $i \in [\widebar N], j \in [\widebar M]$, and $k \in [\widebar P]$. One can verify that the degeneration above correctly produces $\angbk{\widebar N, \widebar M, \widebar P}$. Notice that the X and Y-variables in different $T''_t$ are mapped to the same matrices through \emph{identification}; Z-variables are zeroed out properly to avoid duplicated terms. This degeneration, combined with the fact that $T'_t \degen T''_t$, concludes our proof.
\end{proof}

\paragraph{Discussion.} This section is only a minimum working example of compensating for the combination loss. It gives the bound $\omega < 2.375234$, which already improves upon the bound $\omega < 2.375477$ of Coppersmith-Winograd \cite{coppersmith1987matrix} for the second power. However, we can get a better bound for the second power via more careful modifications.

For example, here we require the distribution for $(1,1,2)$ to be symmetric to get around the difficulty mentioned in \Cref{remark:diffculty}. In \Cref{sec:hole_lemma}, we will extend \Cref{lemma:mat-hole-lm} to broken tensors like $\T^*$ (not only matrix multiplication tensors). With its help, we first fix holes in $\T^*$, then symmetrize the tensor, and finally degenerate it into matrix multiplication tensors. This would allow us to also break the symmetry for $(1,1,2)$. 

Another further improvement will be modifying the split restrictions $\splresA$ and $\splresB$. Recall that in this section, they are set to the distributions induced by the original parameters used in \cite{coppersmith1987matrix}. In later sections, we will modify these distributions $\splresA$ and $\splresB$. Although the restricted-splitting values will decrease, $p_{\textup{comp}}$ will be significantly smaller, so that each Z-block $Z_K$ can be shared by more triples. By setting proper parameters, we finally get the bound $\omega < 2.374631$ from the second power, whose parameters will be given in \cref{sec:level-2-global}. We will present the method for this result and its generalization to higher powers in the rest of the paper.

\section{Hole Lemma}
\label{sec:hole_lemma}
In this section, we will present a generalization of \Cref{lemma:mat-hole-lm}. The reason we need this generalization is mentioned in \Cref{remark:diffculty}, and now we will explain it in more detail: Suppose we have a tensor $\Thole$ that is the direct sum of many matrix multiplications of the same size but with holes in Z-variables (which is the case in \Cref{sec:2nd}), there is no problem in degenerating it into these matrix multiplications with holes and fixing them using \Cref{lemma:mat-hole-lm}. Now consider some other tensor $\Thole'$ that also has holes only in Z-variables. The complication arises when $\Thole'$ itself cannot be degenerated into enough matrix multiplications without symmetrizing it, but $\sym_3\left(\Thole'\right) = \Thole' \otimes \left(\Thole'\right)^{\rot} \otimes \left(\Thole'\right)^{\rot \; \rot}$ can. While $\Thole'$ only has holes in its Z-variables, the symmetrized tensor $\sym_3\left(\Thole'\right)$ will have holes in all its X, Y, Z-variables, so we can no longer apply \Cref{lemma:mat-hole-lm}. In \Cref{sec:2nd}, we avoided this complication using non-rotational values; but to generalize it to higher levels, it is easier to introduce a new hole lemma, and directly fix $\Thole'$ before symmetrizing it.

Instead of randomly permuting entries in a matrix, we will permute all lower-level blocks in a tensor. This only works for a specific type of tensor with a certain symmetry. We call them \emph{standard form tensors}. It will capture the tensor $\Thole'$ we need to handle in the laser method. Specifically, in \Cref{sec:2nd}, fixing any block triple $(X_I, Y_J, Z_K)$, we are handling $\Thole' = \T^{(2)}\vert_{X_I, Y_J, Z_K}$ which, except for the holes in its Z-variables, has the same structure as
\[\T^* = \Big( \bigotimes_{\substack{i+j+k=4 \\ k \ne 2}} T_{i,j,k}^{\otimes n \alpha(i,j,k)} \Big) \otimes T_{0,2,2}^{\otimes n \alpha(0,2,2)}[\splresB] \otimes T_{2,0,2}^{\otimes n \alpha(2,0,2)}[\splresB] \otimes T_{1,1,2}^{\otimes n \alpha(1,1,2)}[\splresA].\]

\begin{remark}
Notice that in such $\T^*$, we are only restricting the Z-split distribution of those $T_{i,j,k}$'s with $k = 2$. This was because there are two ways to split $2$ (up to reflection symmetry), i.e., $2 + 0$ (or $0 + 2$) and $1 + 1$. On the contrary, there is only one way to split $0$, $1$, and $3$, so their split distributions are trivial. This reason is specific to the second level. We might as well just restrict the split distribution of all $T_{i,j,k}$'s in $\T^*$, as what we will do in \Cref{def:standard_form_tensor}.
\end{remark}

\subsection{Standard Form Tensor}

Our definition of \emph{standard form tensor} captures this type of tensor like $\T^*$ above.

\begin{definition}[Standard form tensor]
  \label{def:standard_form_tensor}
  Let $(i_1, j_1, k_1), \, (i_2, j_2, k_2), \, \dots \, , \, (i_m, j_m, k_m)$ be a sequence of $m$ different level-$\lvl$ components, i.e., for all $t \in [m]$, $i_t + j_t + k_t = 2^\lvl$. Let $n_1, n_2, \dots, n_m$ be positive integers and $\tilde{\alpha}_1, \tilde{\alpha}_2, \dots, \tilde{\alpha}_m$ be Z-split distributions.

  The level-$\lvl$ \emph{standard form tensor} with parameters $\BK{n_t, i_t, j_t, k_t, \tilde\alpha_t}_{t \in [m]}$ is defined as
  \[
    \mathcal T^*
    \defeq \bigotimes_{t \in [m]} T_{i_t, j_t, k_t}^{\otimes n_t} [\tilde\alpha_t].
  \]
\end{definition}

As explained before, in \Cref{sec:2nd}, each $\Thole' = \T^{(2)}\vert_{X_I, Y_J, Z_K}$ is a copy of standard form tensor $\T^*$ with holes. These holes are lower-level blocks $Z_{\hat{K}}$ that have been zeroed out. Throughout this section, we will call the lower-level blocks \emph{small blocks}. In $\T^*$, each small Z-block is either zeroed out, forming a hole, or remains completely intact. Before we fix the holes in $\T^*$, we have to first define the small blocks formally.

\begin{definition}[Small blocks in $\T^*$]
Let $\T^*$ be a standard form tensor with parameters $\BK{n_t, i_t, j_t, k_t, \tilde\alpha_t}_{t \in [m]}.$ Set $N = \sum_{t=1}^m n_t$. A small Z-block in $\T^*$ is indexed by sequence $\hat K = (\hat{K}_1, \hat{K}_2, \dots, \hat{K}_{2N})$ such that the following holds:
$$\text{For all $1 \leq t \leq m$ and $s \in \left(\sum_{t'=1}^{t - 1} n_{t'}, \sum_{t'=1}^t n_{t'}\right]$, } \hat{K}_{2s - 1} + \hat{K}_{2s} = k_t.$$
The corresponding small Z-block in $\T^*$ is defined as the variable set $Z_{\hat{K}_1} \times Z_{\hat{K}_2} \times \dots \times Z_{\hat{K}_{2N}} \eqdef Z_{\hat K}$.
We similarly define small X-blocks and small Y-blocks. Throughout this section, whenever we mention ``blocks,'' we are refering to the \emph{small blocks}.
\end{definition}

All small X and Y-blocks defined in this way exist as variables in $\T^*$. However, some of the Z-blocks do not exist, because there is also restrictions of Z-split distributions $\tilde{\alpha}_t$. Not all small Z-blocks satisfy this restriction. This leads to the following definition.

\begin{definition}[Available small Z-blocks]
We say a small Z-block $Z_{\hat K}$ is available in $\T^*$ if for all $1 \leq t \leq m$ and $k' + k'' = k_t$, we have
\[ \tilde{\alpha}_t(k') = \frac{1}{n_t}\left|\left\{s \in \left(\sum_{t'=1}^{t - 1} n_{t'}, \sum_{t'=1}^t n_{t'}  \right] \ \middle \vert \ (\hat{K}_{2s - 1}, \hat{K}_{2s}) = (k', k'') \right\}\right|. \]
\end{definition}

All Z-variables in $\T^*$ is exactly the union of available small Z-blocks in $\T^*$. Now we define the standard form tensor with holes:

\begin{definition}[Broken standard form tensor]
  \label{def:broken_standard_form_tensor}
  A tensor $\mathcal T'$ is a \emph{broken copy of $\mathcal T^*$}, if it can be obtained from $\mathcal T^*$ by zeroing out several available small Z-blocks. 
  \begin{itemize}
  \item The zeroed-out small Z-blocks are called \emph{holes}, while other available small Z-blocks are called \emph{non-hole blocks}. 
  \item The proportion of holes among all available small Z-blocks in $\mathcal T^*$ is called the \emph{fraction of holes}; conversely, the proportion of non-hole blocks among all available Z-blocks is called the \emph{fraction of non-holes}.
  \end{itemize}
\end{definition}

\subsection{Fixing the Holes}

\begin{lemma}[Hole Lemma]
  \label{lemma:hole_lemma}
  Let $\T^*$ be a level-$\lvl$ standard form tensor with parameters $\BK{n_t, i_t, j_t, k_t, \tilde\alpha_t}_{t \in [m]}$ ($i_t + j_t + k_t = 2^\lvl$ for all $t$) and $N = \sum_{t=1}^m n_t$. Assume that we have $s$ broken copies of $\mathcal T^*$ named $\mathcal T'_1, \ldots, \mathcal T'_s$, with fractions of non-holes $\eta_1, \ldots, \eta_s$, satisfying $\sum_{t=1}^s \eta_t \ge N\lvl + 1$. Then there is a way of degenerating them into one complete copy of $\mathcal T^*$, i.e.,
  \[
    \bigoplus_{t=1}^s \mathcal T'_t \degen \mathcal T^*.
  \]
\end{lemma}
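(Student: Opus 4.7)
The plan is to generalize the proof of \Cref{lemma:mat-hole-lm}, replacing permutations of the rows/columns/layers of a matrix-multiplication tensor with a suitable group of automorphisms of $\T^*$. The central task is to identify a group that acts transitively on the available small Z-blocks (and consistently permutes small X, Y-blocks) so that a single random automorphism shuffles the holes in each broken copy $\T'_r$ to a uniformly random subset of the available Z-blocks.

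First, I would define
\[ G \defeq \Sym_{n_1} \times \Sym_{n_2} \times \cdots \times \Sym_{n_m}, \]
where $\Sym_{n_t}$ permutes the $n_t$ level-$\lvl$ positions within the factor $T_{i_t, j_t, k_t}^{\otimes n_t}[\tilde\alpha_t]$. Such a permutation acts on all X, Y, Z-variables of that factor simultaneously and preserves the restriction $[\tilde\alpha_t]$, since the latter only constrains the marginal Z-split distribution over those positions. Hence $G$ acts by automorphisms of $\T^*$ that permute small X, Y, and Z-blocks in a structure-preserving way. Moreover, within each factor the second-position index $\hat K_{2s}$ is determined by $\hat K_{2s-1}$ and $k_t$, so any two available small Z-blocks have the same multiset of pair-indices $(\hat K_{2s-1}, \hat K_{2s})$; this implies $G$ acts transitively on the available small Z-blocks of $\T^*$, and similarly on the (unrestricted) small X and Y-blocks.

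Next, I would run the probabilistic argument. The number of available small Z-blocks is at most $\prod_{t \in [m]} \binom{n_t}{[n_t \tilde\alpha_t(k')]_{k'}} \le \prod_{t \in [m]} (2^{\lvl-1}+1)^{n_t} \le 2^{N\lvl}$, where the middle step uses that $\tilde\alpha_t$ has support of size at most $2^{\lvl-1}+1$. For each $r \in [s]$, sample $\phi^{(r)} \in G$ uniformly. Identifying the variables of $\T'_r$ with those of $\T^*$ via a natural bijection, the holes of $\T'_r$ correspond to a subset $\mathcal H_r$ of the available small Z-blocks of $\T^*$ with $|\mathcal H_r|$ equal to $(1-\eta_r)$ times the total. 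For any fixed available small Z-block $Z^*_{\hat K'}$, transitivity gives $\Pr[Z^*_{\hat K'} \in \phi^{(r)}(\mathcal H_r)] = 1-\eta_r$, independently across $r$. Using $\prod_{r \in [s]} (1-\eta_r) \le \exp(-\sum_r \eta_r) \le e^{-N\lvl - 1}$, the expected number of available small Z-blocks lying in $\bigcap_r \phi^{(r)}(\mathcal H_r)$ is at most $2^{N\lvl} \cdot e^{-N\lvl-1} < 1$, so by the probabilistic method there exist permutations $\{\phi^{(r)}\}_{r \in [s]}$ under which every available small Z-block of $\T^*$ is a non-hole in at least one of the renamed broken copies.

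Finally, I would construct the degeneration $\bigoplus_{r \in [s]} \T'_r \degen \T^*$ as follows: identify all X-variables across the $s$ copies (mapping each X-variable of $\T'_r$ to the corresponding variable of $\T^*$ via $\phi^{(r)}$ composed with the natural bijection), do the same for Y-variables, and for each available small Z-block $Z^*_{\hat K'}$ let $r^*(\hat K')$ be the smallest $r$ with $Z^*_{\hat K'} \notin \phi^{(r)}(\mathcal H_r)$; keep the Z-variables only from copy $r^*(\hat K')$ while zeroing out the Z-variables of all other copies that would map to $Z^*_{\hat K'}$. Since each $\phi^{(r)}$ is an automorphism of $\T^*$ and exactly one non-hole contribution is retained per available Z-block, this produces one complete copy of $\T^*$. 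The main obstacle is the first step: correctly specifying the group action so that it both preserves $\T^*$ (including each $[\tilde\alpha_t]$-restriction) and acts transitively on the available small Z-blocks; once that is in place, the remainder closely follows the matrix-multiplication case.
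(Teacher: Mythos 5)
Your proposal is correct and follows essentially the same route as the paper: the same shuffling group $\Sym_{n_1}\times\cdots\times\Sym_{n_m}$ acting as automorphisms of $\T^*$, the same $2^{N\lvl}$ bound on available small Z-blocks, the same union-bound/probabilistic-method step, and the same final degeneration by zeroing out duplicate Z-blocks and identifying the copies. The only cosmetic difference is that you derive the uniform-shuffling property from transitivity of the group action, whereas the paper counts directly that the number of group elements mapping one available Z-block to another is the same for every pair; these are equivalent.
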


Since each $\T'_t$ has holes only in Z-blocks, the main idea of the proof is to utilize the symmetric structure in $\T^*$ to randomly permute every $\T'_t$, ensuring that each Z-block is non-hole in at least one (permuted) $\T'_t$.

To begin our proof, we define the collection of permutations $G$ that we will apply to $\T^*$, which we call the \emph{shuffling group} of the standard form tensor $\T^*$:

\begin{definition}
  Let $\T^*$ be a level-$\lvl$ standard form tensor with parameters $\BK{n_t, i_t, j_t, k_t, \tilde\alpha_t}_{t \in [m]}$. Denoting by $\Sym_A$ the set of permutations over $A$, we define the \emph{shuffling group} $G$ of $\T^*$ as
  \[
    G \defeq \Sym_{[n_1]} \times \Sym_{[n_2]} \times \cdots \times \Sym_{[n_{m}]}.
  \]
  Every element $\shuf = (\shufi[1], \ldots, \shufi[m]) \in G$ first induces a permutation over $[2N]$:
  \[
    \shuf : 2\bk{\sum_{t' = 1}^{t - 1} n_{t'} + a} + b \quad \mapsto \quad 2\bk{\sum_{t' = 1}^{t - 1} n_{t'} + \shufi[t](a)} + b
    \qquad
    \forall t \in [m], \, a \in [n_t], \, b \in \BK{0, 1}.
  \]
  (That is, it first considers every 2 consecutive numbers as a unit, so $[2N]$ is regarded as $N$ units; then, it permutes the first $n_1$ units according to $\shufi[1] \in \Sym_{[n_1]}$, permutes the next $n_2$ units according to $\shufi[2] \in \Sym_{[n_2]}$, and so on.) Further, it induces a permutation over all small blocks: It maps each small block of $\T^*$, say $Z_{\hat K}$, by reordering the entries of $\hat K$ according to the permutation $\shuf : [2N] \to [2N]$ above:
  \[
    \phi(Z_{\hat K}) \defeq Z_{\hat K'},
    \quad \textup{where} \quad
    \hat K_{i} = \hat K'_{\phi(i)}
    \quad
    \forall i \in [2N].
  \]
  It also permutes X and Y-blocks while $\phi(X_{\hat I})$ and $\phi(Y_{\hat J})$ are defined in the same way.
\end{definition}

So far, each element $\shuf \in G$ gives block-level permutations over small blocks $X_{\hat I}$, $Y_{\hat J}$, $Z_{\hat K}$. However, different from X and Y-blocks, some of the Z-blocks do not appear as variables of $\T^*$; only the available ones appear. So we add an observation that $\shuf$ is also a permutation over all \emph{available} Z-blocks:

\begin{claim}
  For every $\shuf \in G$ and Z-block $Z_{\hat K}$, let $Z_{\hat K'} = \shuf(Z_{\hat K})$. $Z_{\hat K}$ is available if and only if $Z_{\hat K'}$ is available. This implies that $\shuf$ restricted on all available Z-blocks is still a permutation.
\end{claim}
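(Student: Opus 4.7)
The plan is to unpack the definition of $\shuf$ and observe that the availability condition depends only on the multiset of pairs $(\hat K_{2s-1}, \hat K_{2s})$ within each block $t \in [m]$, and then show $\shuf$ preserves these multisets.

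First I would spell out the effect of $\shuf$ on an index sequence $\hat K = (\hat K_1, \dots, \hat K_{2N})$. By construction, the induced permutation on $[2N]$ treats every two consecutive positions $(2s-1, 2s)$ as a single ``unit,'' and within the units belonging to block $t$ (namely $s \in \midbk{\sum_{t'<t} n_{t'}, \, \sum_{t'\le t} n_{t'}}$) it applies $\shufi[t] \in \Sym_{[n_t]}$. In particular, the unit boundaries $\{(2s-1,2s)\}$ are respected, so if $\shuf(Z_{\hat K}) = Z_{\hat K'}$, then for each $t \in [m]$ and each $a \in [n_t]$, writing $s(a) = \sum_{t'<t} n_{t'} + a$, we have
\[
  (\hat K'_{2s(a)-1}, \; \hat K'_{2s(a)}) \;=\; (\hat K_{2s(\shufi[t]^{-1}(a))-1}, \; \hat K_{2s(\shufi[t]^{-1}(a))}).
\]
In other words, the length-$n_t$ sequence of ordered pairs $\bk{(\hat K_{2s-1}, \hat K_{2s})}_{s}$ restricted to block $t$ is merely reordered by $\shufi[t]$ when passing from $\hat K$ to $\hat K'$.

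Next I would invoke the availability definition: $Z_{\hat K}$ is available iff for every $t \in [m]$ and every $(k',k'')$ with $k'+k''=k_t$, the count of indices $s$ in block $t$ with $(\hat K_{2s-1}, \hat K_{2s}) = (k',k'')$ equals $\tilde\alpha_t(k') \cdot n_t$. Since $\shufi[t]$ is a bijection, reordering a length-$n_t$ sequence of pairs preserves the count of each pair value. Hence these counts coincide for $\hat K$ and $\hat K'$ in every block $t$, which shows $Z_{\hat K}$ is available if and only if $Z_{\hat K'}$ is available.

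Finally, $\shuf$ is already a bijection on the set of all small Z-blocks (it is induced by a bijection of $[2N]$, hence invertible on index sequences). Combined with the availability-preservation just established, $\shuf$ maps the set of available Z-blocks bijectively to itself, so its restriction is a permutation of the available Z-blocks. There is no real obstacle here; the only thing to be careful about is the indexing bookkeeping, in particular confirming that $\shuf$ never mixes positions $2s-1$ with $2s$ (so that ordered pairs, not just unordered multisets, are correctly tracked) and that it never mixes positions across different blocks $t$ (so that the per-block counts used in the availability condition are the ones preserved).
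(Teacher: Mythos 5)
Your proof is correct and follows essentially the same route as the paper: you observe that availability depends only on the per-block frequency counts of the ordered pairs $(\hat K_{2s-1}, \hat K_{2s})$, and that $\shuf$ preserves these counts because it shuffles pairs as units and keeps each unit within its block $t$. The extra bookkeeping you spell out (the explicit formula via $\shufi[t]^{-1}$ and the bijectivity of $\shuf$ on all Z-blocks) is consistent with the paper's definition $\hat K_i = \hat K'_{\shuf(i)}$ and fills in what the paper leaves implicit.
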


\begin{proof}
  The availability of $Z_{\hat K}$ only depends on the frequency of occurrence of $(\hat{K}_{2s - 1}, \hat{K}_{2s})$ accross all $s \in \left(\sum_{t' = 1}^{t - 1} n_{t'}, \sum_{t' = 1}^t n_{t'}\right]$ for every $t \in [m]$. This quantity remains unchanged between $\hat K$ and $\hat K'$, as our permutation $\shuf$ shuffles each pair of indices $(\hat{K}_{2s-1}, \hat{K}_{2s})$ as a unit, and its shuffling destination stays within the region $\phi(s) \in \left(\sum_{t' = 1}^{t - 1} n_{t'}, \sum_{t' = 1}^t n_{t'}\right]$. Thus, the claim follows.
\end{proof}

A similar observation is that $(X_{\hat I}, Y_{\hat J}, Z_{\hat K})$ is a triple in $\T^*$ if and only if $(\phi(X_{\hat I}), \phi(Y_{\hat J}), \phi(Z_{\hat K}))$ is a triple. Further, we keep the relative arrangement of variables within each block unchanged while shuffling the blocks. This induces a variable-level permutation of all variables of $\T^*$ which we still denote by $\shuf$; we claim that it keeps the structure of $\T^*$:

\begin{claim}
  Let $\shuf(x)$, $\shuf(y)$, $\shuf(z)$ denote the images of variables $x, y, z \in \T^*$ under the variable-level permutation $\shuf$. Then, there is a term $(x, y, z)$ in $\T^*$ if and only if the term $(\shuf(x), \shuf(y), \shuf(z))$ is in $\T^*$, which means $\shuf$ is an automorphism of $\T^*$.
\end{claim}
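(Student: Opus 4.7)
The plan is to exploit the product structure of $\T^*$ and reduce the claim to the standard fact that permuting factors of a tensor power is an automorphism. Recall that $\T^* = \bigotimes_{t=1}^m T_{i_t,j_t,k_t}^{\otimes n_t}[\tilde\alpha_t]$ decomposes naturally into $m$ blocks, and by construction the shuffling $\shuf = (\shufi[1],\ldots,\shufi[m])$ acts independently on each block, with $\shufi[t] \in \Sym_{[n_t]}$ acting on the $t$-th block. Since $\bigotimes_t \phi_t \in \Aut(\bigotimes_t A_t)$ whenever each $\phi_t \in \Aut(A_t)$, it suffices to verify that each $\shufi[t]$ is an automorphism of the restricted tensor power $T_{i_t,j_t,k_t}^{\otimes n_t}[\tilde\alpha_t]$.

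First I would unpack the definitions so that the variable-level action of $\shufi[t]$ is visibly the factor-permutation automorphism of $T_{i_t,j_t,k_t}^{\otimes n_t}$. Every X-variable of this factor is an $n_t$-tuple $(\bar x_1,\ldots,\bar x_{n_t})$ of X-variables of $T_{i_t,j_t,k_t}$, and analogously for Y and Z; the terms of the tensor power are exactly the products whose per-coordinate triples are terms of $T_{i_t,j_t,k_t}$. Reindexing each tuple by $\shufi[t]$ preserves the term structure, so $\shufi[t]$ is an automorphism of $T_{i_t,j_t,k_t}^{\otimes n_t}$. The pair-level action on $[2N]$ appearing in the definition of $\shuf$ just reflects the fact that each level-$\lvl$ factor refines into two consecutive level-$\lastlvl$ indices that must move together, so the variable-level permutation agrees with the factor-permutation automorphism above.

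Second, I would check that the restriction $[\tilde\alpha_t]$ is invariant under $\shufi[t]$. By the definition of availability, a small Z-block $Z_{\hat K}$ of $T_{i_t,j_t,k_t}^{\otimes n_t}$ is available iff the multiset of pairs $\{(\hat K_{2s-1},\hat K_{2s}) : s \text{ in the $t$-th range}\}$ has frequency profile $\tilde\alpha_t$. Since $\shufi[t]$ permutes these pairs among themselves, it leaves this multiset (and hence $\tilde\alpha_t$-availability) unchanged; zeroed-out Z-blocks map to zeroed-out Z-blocks, and retained Z-blocks map to retained Z-blocks. Combined with the fact that $\shufi[t]$ does not alter the internal structure of any small X, Y, or Z-block, this shows $(\shufi[t](x),\shufi[t](y),\shufi[t](z))$ is a term of $T_{i_t,j_t,k_t}^{\otimes n_t}[\tilde\alpha_t]$ iff $(x,y,z)$ is.

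Finally I would tensor these per-block automorphisms together to conclude $\shuf \in \Aut(\T^*)$, giving the claimed if-and-only-if. There is no real obstacle here — the whole argument is bookkeeping — but the one place to be careful is to make sure the three descriptions of $\shuf$ (factor-permutation at the tensor level, small-block permutation, and the pair-level permutation of $[2N]$) are consistently identified, and in particular that the level-$\lastlvl$ pair structure used to check $\tilde\alpha_t$-availability is the same pairing that $\shuf$ treats as a unit on $[2N]$.
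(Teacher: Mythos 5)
Your argument is correct and matches what the paper intends: the paper states this claim without proof, treating it as immediate from the fact that $\shuf$ shuffles whole factors within each region while leaving the interior of each small block untouched. Your write-up simply makes that explicit — factor permutations are automorphisms of each $T_{i_t,j_t,k_t}^{\otimes n_t}$, and the restriction $[\tilde\alpha_t]$ is invariant because availability depends only on the multiset of index pairs, which the pair-level action preserves — so there is nothing to add.
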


Suppose $\T'$ is a broken copy of $\T^*$, i.e., some of the available Z-blocks are holes in $\T'$. After using $\shuf \in G$ to rename the variables in $\T'$, we get another broken tensor denoted by $\shuf(\T')$, but the hole blocks in $\shuf(\T')$ is likely to be different from $\T'$. The last property we concern about $G$ is that a random $\shuf \in G$ can move a hole block $Z_{\hat K}$ to a random place, so that every available block of $\shuf(\T')$ has equal probability to become a hole:

\begin{claim}
  Let $Z_{\hat K}$ be a fixed available Z-block in $\T^*$. Picking $\shuf \in G$ uniformly at random, the image $Z_{\hat K'} = \shuf(Z_{\hat K})$ follows the uniform distribution over all available Z-blocks in $\T^*$.
\end{claim}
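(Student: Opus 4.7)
The plan is to show that $G$ acts transitively on the set of available Z-blocks, and then invoke a standard orbit-stabilizer argument to conclude uniformity. Concretely, I will first characterize an available Z-block $Z_{\hat K}$ region-by-region, then exhibit an explicit element of $G$ sending any one available block to any other, and finally observe that under a uniformly random $\shuf \in G$, every element of the orbit is hit with the same probability $|\mathrm{Stab}_G(Z_{\hat K})|/|G|$.

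For the transitivity step: recall that $\shuf = (\shufi[1], \dots, \shufi[m]) \in G$ acts on $Z_{\hat K}$ by, for each region $t \in [m]$, permuting the pairs $(\hat K_{2s-1}, \hat K_{2s})$ across indices $s \in \bigl(\sum_{t' < t} n_{t'}, \sum_{t' \le t} n_{t'}\bigr]$, treating each pair as a single unit and leaving the order \emph{inside} a pair fixed. The availability condition on $Z_{\hat K}$ is exactly that, within region $t$, the multiset of pairs $\{(\hat K_{2s-1}, \hat K_{2s})\}_s$ contains each pair $(k', k_t - k')$ with frequency $\tilde\alpha_t(k') \cdot n_t$. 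Hence if $Z_{\hat K}$ and $Z_{\hat K'}$ are both available, then in every region $t$ they share exactly the same multiset of pairs, so there exists $\shufi[t] \in \Sym_{[n_t]}$ mapping the pair sequence of $\hat K$ in region $t$ to that of $\hat K'$. The product $\shuf = (\shufi[1], \dots, \shufi[m])$ then satisfies $\shuf(Z_{\hat K}) = Z_{\hat K'}$.

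For the uniformity step: the previous paragraph shows that the $G$-orbit of $Z_{\hat K}$ is exactly the set $\mathcal A$ of all available Z-blocks in $\T^*$. By orbit-stabilizer, $|G| = |\mathcal A| \cdot |\mathrm{Stab}_G(Z_{\hat K})|$, and for any $Z_{\hat K'} \in \mathcal A$ the stabilizer $\mathrm{Stab}_G(Z_{\hat K'})$ is conjugate to $\mathrm{Stab}_G(Z_{\hat K})$ and hence of the same size. Therefore, for a uniformly random $\shuf \in G$,
\[
  \Pr[\shuf(Z_{\hat K}) = Z_{\hat K'}] = \frac{|\mathrm{Stab}_G(Z_{\hat K})|}{|G|} = \frac{1}{|\mathcal A|},
\]
which is independent of $Z_{\hat K'}$. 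This gives the desired uniform distribution. I do not anticipate any real obstacle here: the only technical point is to make sure the shuffling action really permutes pairs as atomic units so that availability is preserved (which was already established in the preceding claim), and once this is in hand, transitivity plus orbit-stabilizer is immediate.
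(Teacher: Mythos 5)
Your proposal is correct and rests on the same underlying fact as the paper's proof: for any two available Z-blocks, the set of $\shuf \in G$ mapping one to the other has the same size (the paper computes it explicitly as $\prod_{t}\prod_{k'} (\tilde\alpha_t(k')\cdot n_t)!$, which is exactly the stabilizer size appearing in your orbit-stabilizer formulation), and your transitivity step is implicit in that count being positive. So this is essentially the paper's argument, just phrased in orbit-stabilizer language.
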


\begin{proof}
  For every pair of available Z-blocks $Z_{\hat K}$ and $Z_{\hat K'}$, the number of permutations in $G$ that maps $Z_{\hat K} \mapsto Z_{\hat K'}$ is given by
  \[
    \prod_{t = 1}^{m} \prod_{k' = 0}^{2^{\lvl - 1}} (\tilde\alpha_{t}(k') \cdot n_t)!
  \]
  which is independent of $Z_{\hat K}, Z_{\hat K'}$. Thus the claim holds.
\end{proof}

As an implication, for a broken standard form tensor $\T'$ with a fraction $\eta$ of non-holes, and for a fixed available block $Z_{\hat K}$, we have
\[
  \numberthis
  \label{eq:random_hole_probability}
  \Pr_{\shuf \in G} \Bk{Z_{\hat K} \textup{ is non-hole in } \shuf(\T')} = \eta.
\]
(Here $\shuf$ is taken uniformly at random from $G$.)
With the help of \eqref{eq:random_hole_probability}, we now state the proof of \cref{lemma:hole_lemma}.

\begin{proofof}{\cref{lemma:hole_lemma}}
  We use the probabilistic method. First, the number of available small Z-blocks $Z_{\hat K}$ is at most $(2^{\lvl-1} + 1)^N \le 2^{N \lvl}$.

  Sample $s$ permutations $\phi_1, \ldots, \phi_s \in G$ independently uniformly at random from $G$. Let $\mathcal T''_t \defeq \phi_t(\mathcal T'_t)$ be a tensor isomorphic to $\mathcal T'_t$, obtained by renaming $\T'_t$'s variables according to $\shuf_t$. From \eqref{eq:random_hole_probability}, we know that for a fixed available Z-block $Z_{\hat K}$,
  \[
    \begin{aligned}
      \Pr\Bk{Z_{\hat K} \textup{ is a hole in } \mathcal T''_t \;\; \forall t \in [m]}
      &= \prod_{t=1}^s (1 - \eta_t) \le \prod_{t=1}^s e^{-\eta_t} \le e^{- N\lvl-1} \\
      &\le \frac{1}{e} \cdot \frac{1}{\textup{number of available Z-blocks}}.
    \end{aligned}
  \]
  Taking summation over all available small Z-blocks, we know the expected number of $Z_{\hat K}$ that is a hole in all $\mathcal T''_t$ is at most $1/e < 1$. Thus, there exists a sequence of permutations $\phi_1, \cdots, \phi_s$, ensuring that every available block $Z_{\hat K}$ has at least one $t \in [s]$ for which $Z_{\hat K}$ is non-hole in $\mathcal T''_t$, denoted as $t(\hat K)$.

  We do the following degeneration from $\bigoplus_{t=1}^{s} \mathcal T''_t$:
  \begin{itemize}
  \item For each available small block $Z_{\hat K}$, recall that it is not a hole in $\mathcal T''_{t(\hat K)}$. Zero out this small block in all other $\mathcal T''_{t'}$ where $t' \ne t(\hat K)$.
  \item Identify all $\mathcal T''_t$ after the previous step of zeroing out.
  \end{itemize}
  Here ``identify'' means to glue all copies together (see \cref{sec:degen_and_value}). Notice that there are no holes in the X and Y-variables, so the X and Y-variables of all broken copies $\mathcal T''_t$ are the same. After the first step of zeroing out, each available Z-block appears in exactly one broken copy, so in the target tensor
  \[
    \T^* = \sum_{(X_{\hat I}, Y_{\hat J}, Z_{\hat K})} \T^* \vert_{X_{\hat I}, Y_{\hat J}, Z_{\hat K}},
  \]
  each term on the RHS will be added exactly once from the copy $\T''_{t(\hat K)}$ in which the block $Z_{\hat K}$ is not zeroed out. It implies that the obtained tensor after the identification step is exactly $\mathcal T^*$.
\end{proofof}

The above lemma shows that when the sum of non-hole fractions $\sum \eta_t \ge N \lvl + 1$, the direct sum of broken tensors can degenerate to a single copy of $\T^*$. When $\sum \eta_t$ is much larger, we can naturally obtain multiple copies of $\T^*$, as described below:

\begin{cor}
  \label{cor:hole_lemma}
  Let $\T^*$ be a level-$\lvl$ standard form tensor with parameters $\BK{n_t, i_t, j_t, k_t, \tilde\alpha_t}_{t \in [m]}$ ($i_t + j_t + k_t = 2^\lvl$ for all $t$) and $N = \sum_{t=1}^m n_t$. Assume that we have $s$ broken copies of $\mathcal T^*$ named $\mathcal T'_1, \ldots, \mathcal T'_s$, with fractions of non-holes $\eta_1, \ldots, \eta_s$. Then there is a way of degenerating them into $s'$ complete copies of $\T^*$, i.e.,
  \[
    \bigoplus_{t=1}^s \mathcal T'_t \degen (\mathcal T^*)^{\oplus s'}, \quad \textup{where} \quad s' \defeq \floor{\frac{\sum_{t=1}^s \eta_t}{N \lvl + 2}}.
  \]
\end{cor}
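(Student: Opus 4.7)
The plan is to reduce \cref{cor:hole_lemma} directly to \cref{lemma:hole_lemma} via a greedy partitioning argument. The idea is to group the $s$ broken copies into $s'$ disjoint bundles, each of which satisfies the hypothesis of \cref{lemma:hole_lemma} (sum of non-hole fractions at least $N\lvl + 1$), and then apply \cref{lemma:hole_lemma} independently to each bundle. Since the resulting complete copies of $\T^*$ live on disjoint variable sets (as the broken copies were disjoint to begin with), their direct sum yields $(\T^*)^{\oplus s'}$, which is precisely what we want.

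The main step is the combinatorial packing. First I would process the indices $t = 1, 2, \ldots, s$ in order, maintaining a running ``current bundle'' with accumulated non-hole fraction. Whenever the accumulated fraction inside the current bundle first reaches or exceeds $N\lvl + 1$, close this bundle and start a new one. Because every individual $\eta_t \in [0, 1]$, each closed bundle has accumulated fraction in the interval $[N\lvl + 1,\, N\lvl + 2]$ (the lower bound is by the closing rule; the upper bound is because just before adding the last element the sum was below $N\lvl + 1$, and the last element contributes at most $1$).

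Next I would verify that this procedure closes at least $s'$ bundles. Suppose for contradiction that only $k < s'$ bundles are closed. Then the total $\sum_{t=1}^s \eta_t$ is at most the sum over closed bundles plus the leftover in the still-open bundle, which is at most $k(N\lvl + 2) + (N\lvl + 1) \le (s' - 1)(N\lvl + 2) + (N\lvl + 1) < s'(N\lvl + 2)$. This contradicts the definition $s' = \lfloor \sum_t \eta_t / (N\lvl + 2) \rfloor$, which gives $\sum_t \eta_t \ge s'(N\lvl + 2)$. Hence the greedy process produces at least $s'$ bundles.

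Finally, for each of the (first) $s'$ bundles, the sum of non-hole fractions is at least $N\lvl + 1$, so by \cref{lemma:hole_lemma} the direct sum of the broken copies in that bundle degenerates to a complete copy of $\T^*$. Taking the direct sum of these $s'$ independent degenerations (which is valid because degeneration respects direct sums, and the bundles use disjoint sets of broken copies) yields
\[
  \bigoplus_{t=1}^{s} \T'_t \;\degen\; (\T^*)^{\oplus s'},
\]
completing the proof. Since the packing argument is completely elementary and the heavy lifting is already in \cref{lemma:hole_lemma}, I do not anticipate any serious obstacle; the only subtle point is keeping track of constants to ensure the denominator $N\lvl + 2$ (rather than $N\lvl + 1$) is the right threshold, which is dictated by the upper bound $N\lvl + 2$ on each closed bundle's total.
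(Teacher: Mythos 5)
Your proposal is correct and follows essentially the same route as the paper: partition the broken copies into groups whose non-hole fractions sum to between $N\lvl+1$ and $N\lvl+2$, note there are at least $s'$ such groups, and apply \cref{lemma:hole_lemma} to each group independently. Your greedy packing and the counting argument just make explicit the step the paper states without detail.
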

\begin{proof}
  We divide $\mathcal T'_1, \ldots, \mathcal T'_s$ into subsets such that the sum of $\eta_t$ in each subset satisfies $N\lvl+1 \le \sum \eta_t \le N\lvl+2$. The number of subsets is at least $s'$. Then, we follow \cref{lemma:hole_lemma} to degenerate each subset into a complete copy of $\mathcal T^*$.
\end{proof}

\section{Improving High-Power Global Values}
\label{sec:global}
\label{sec:global_value}
In this section, we explain how to upper bound the value of $\CW_{q}^{\otimes 2^{\lvl-1}}$, based on known restricted-splitting values of the level-$\lvl$ components. Formally, the inputs to this algorithm are:
\begin{itemize}
\item the Z-split distribution $\splres_{i,j,k}$ of component $(i,j,k)$, for all $i+j+k = 2^\lvl$; 
\item the lower bound on $\valsix(T_{i,j,k},\, \tilde\alpha_{i,j,k})$ for all level-$\lvl$ components ($i + j + k = 2^\lvl$);
\end{itemize}
This algorithm will optimize the following parameter:
\begin{itemize}
    \item distribution $\alpha$ over all level-$\lvl$ components $(i,j,k)$, i.e., $i + j + k = 2^\lvl$. We will denote its marginals by $\alphx, \alphy, \alphz$.
\end{itemize}
Finally, it will output:
\begin{itemize}
\item a lower bound on $\valsix(\T)$ where $\T \defeq (\CW_{q}^{\otimes 2^{\lvl-1}})^{\otimes n}$. It implies a lower bound on $\valsix(\CW_q^{\otimes 2^{\lvl-1}})$.
\end{itemize}

\paragraph*{Notations.} We will let $N = n \cdot 2^{\lvl-1}$. The algorithm will focus on both level-$\lvl$ and level-$\lastlvl$ partitions of $\CW_{q}^{\otimes N}$. In this section, for convenience, we will refer to the level-$\lvl$ variable blocks as \emph{large blocks} and the level-$\lastlvl$ blocks as \emph{small blocks}. We will use $I, J, K$ to denote \emph{large index sequences} and $\hat I, \hat J, \hat K$ to denote \emph{small index sequences}. Furthermore, we will consider the large block $X_I$ as a collection of small blocks, using $X_{\hat I} \in X_I$ to indicate the inclusion relationship between small and large blocks.

\subsection{Algorithm Description} \label{sec:global-algo}

\paragraph*{Step 1. Lower bound the values of level-$\lastlvl$ components.} In this section, these lower bounds are given as the input. Hence this step is trivial.

\paragraph*{Step 2. Choose a distribution.} We can choose any parameters $\alpha(i, j, k)$ where $i + j + k = 2^\lvl$. The only constraint is that they have to form a probability distribution. (Details on how we will optimize these parameters will be presented in \Cref{sec:result}.)

\paragraph*{Step 3. Asymmetric Hashing.} Let $M_0$ be a parameter which will be specified in \cref{sec:global_analysis}, and $M$ be a prime number in $[M_0, 2M_0]$. We will apply the asymmetric hashing in \cref{sec:hashing} with $M$ as the modulus. After hashing and zeroing out, for any pair of retained large block triples $(X_I, Y_J, Z_K)$ and $(X_{I'}, Y_{J'}, Z_{K'})$, we have $I \ne I',~ J \ne J'$. That is, those retained triples can only share Z-blocks. Moreover, all retained triples obey the joint distribution $\alpha$. 

(Note that when $\alpha \notin D^*(\alphx, \alphy, \alphz)$, i.e., when it is not the maximum entropy distribution consistent with these marginals, there will be a certain hash loss. We already took this hash loss into account in \cref{sec:hashing}. Also, we will explicitly compute the value of such hash loss in \cref{sec:global_analysis}.)

\paragraph{Additional Zeroing-Out Step 1.} We now fix a triple $(X_I, Y_J, Z_K)$. For each level-$\lvl$ component $(i,j,k)$, we use $S_{i,j,k}$ to denote the set of positions $t \in [n]$ where $(I_t, J_t, K_t) = (i, j, k)$. Moreover, $S_{*,*,k}$ denotes the set of positions $t$ with $K_t = k$.

For a fixed small block $Z_{\hat K} \in Z_K$ and a set $S \subseteq S_{*,*,k}$ for some $k$, we will use $\split_k(\hat K, S)$ to denote the distribution of $\hat K_{2t - 1}$ over all $t \in S$. (Note that for those $t \in S$, we always have $\hat K_{2t} = k - \hat{K}_{2t - 1}$. Hence such distribution captures how those $k$'s split into $\hat K_{2t - 1} + \hat K_{2t}$.) When the index $k$ to split is clear from context, we also omit the subscript and simply write $\split(\hat K, S)$.

For each $k \in [0, 2^\lvl]$, we define its average split distribution
\[
  \splresavg_{*,*,k} (k') = \frac{1}{\alphz(k)} \sum_{i+j = 2^\lvl - k} \alpha(i,j,k) \cdot \splres_{i,j,k} (k').
  \numberthis \label{eq:def_splresavg_star_g}
\]
Here $\splres_{i,j,k}$ are given in the inputs.

\begin{definition} \label{def:global-compatible}
  A small block $Z_{\hat K} \in Z_K$ is said to be \emph{compatible} with a large triple $(X_I, Y_J, Z_K)$ if the following two conditions are satisfied:
  \begin{enumerate}
  \item For each $k \in \midBK{0, \ldots, 2^\lvl}$, $\split(\hat K, S_{*,*,k}) = \tilde\alpha_{*,*,k}$. \label{item:split-match}
  \item For each large component $(i,j,k)$ (i.e., $i + j + k=  2^\lvl$) satisfying $i = 0$ or $j = 0$, $\split(\hat K, S_{i,j,k}) = \tilde\alpha_{i,j,k}$. \label{item:average}
  \end{enumerate}
\end{definition}

\noindent In this step, we will do the following zeroing-out on small blocks:
\begin{itemize}
\item For each $Z_{\hat K}$, we check \cref{item:split-match} and zero out $Z_{\hat{K}}$ if the condition is not satisfied.
\item For each $X_{\hat I} \in X_I$, since $X_I$ (if retained) is in a unique triple $(X_I, Y_J, Z_K)$, we can define the set $S_{i,j,k}$ w.r.t. that triple. For all components $(i,0,k)$, we define 
  \[ \splres^{\textup{(X)}}_{i,0,k}(i') \defeq \splres_{i,0,k}(2^{\lvl - 1} - i'). \]
  This is the X-split distribution corresponding to the Z-split distribution $\splres_{i,0,k}$. If for any $(i, 0, k)$, $\split(\hat I, S_{i, 0, k}) \ne \splres^{\textup{(X)}}_{i,0,k}$, we then zero out this $X_{\hat I}$.
\item For each $Y_{\hat J}$, we do a similar zeroing-out as $X_{\hat{I}}$. For all components $(0,j,k)$, we define
  \[ \splres^{\textup{(Y)}}_{0,j,k}(j') \defeq \splres_{0,j,k}(2^{\lvl - 1} - j'). \]
  Suppose for some component $(0, j, k)$, the split distribution $\split(\hat J, S_{0, j, k}) \neq \splres^{\textup{(Y)}}_{0,j,k}$, then $Y_{\hat J}$ will be zeroed out.
\end{itemize}

\begin{claim}
  \label{lemma:triple_implies_compatible}
  After Additional Zeroing-Out Step 1, a remaining small block $Z_{\hat{K}} \in Z_K$ can form a triple with remaining $X_{\hat{I}} \in X_I$, $Y_{\hat{J}} \in Y_J$ only when $Z_{\hat{K}}$ is compatible with triple $(X_I, Y_J, Z_K)$. 
\end{claim}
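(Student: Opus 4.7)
The plan is to verify the two conditions of Definition~\ref{def:global-compatible} separately, using the retention of the three small blocks to supply each condition. Condition~\ref{item:split-match} is immediate: after Additional Zeroing-Out Step 1, a small Z-block $Z_{\hat K}$ survives only if $\split(\hat K, S_{*,*,k}) = \splres_{*,*,k}$ holds for every $k$. So it remains to establish Condition~\ref{item:average} for every component $(i,j,k)$ with $i=0$ or $j=0$, and this is where the retention of $X_{\hat I}$ and $Y_{\hat J}$, together with the fact that $(X_{\hat I}, Y_{\hat J}, Z_{\hat K})$ forms a valid triple, will be used.

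First I would focus on a component of the form $(0,j,k)$ with $j+k = 2^\lvl$. Since $X_{\hat I}\in X_I$, at every large position $t$ with $I_t = 0$, both small X-indices are zero, i.e.\ $\hat I_{2t-1} = \hat I_{2t} = 0$. Because $(X_{\hat I}, Y_{\hat J}, Z_{\hat K})$ is a valid triple, at every small position we must have the entries summing to $2^{\lvl-1}$, so for $t \in S_{0,j,k}$ the identity $\hat J_{2t-1} = 2^{\lvl-1} - \hat K_{2t-1}$ holds (and similarly for the even index, which is determined once we fix the odd one). Rewriting the definition of $\split$ then gives the pointwise equality
\[
  \split(\hat J, S_{0,j,k})(j') \;=\; \split(\hat K, S_{0,j,k})(2^{\lvl-1} - j')
\]
for all $j'$. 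Since $Y_{\hat J}$ has survived Additional Zeroing-Out Step 1, its split distribution on $S_{0,j,k}$ equals $\splres^{\textup{(Y)}}_{0,j,k}(j') = \splres_{0,j,k}(2^{\lvl-1}-j')$. Combining the two equalities and re-indexing $k' := 2^{\lvl-1}-j'$ yields $\split(\hat K, S_{0,j,k}) = \splres_{0,j,k}$, which is exactly Condition~\ref{item:average} for $(0,j,k)$.

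The case $j=0$ is symmetric: at positions $t$ with $J_t = 0$ we have $\hat J_{2t-1} = \hat J_{2t} = 0$, so the triple constraint gives $\hat I_{2t-1} = 2^{\lvl-1} - \hat K_{2t-1}$ on $S_{i,0,k}$. Using that $X_{\hat I}$ survived, so $\split(\hat I, S_{i,0,k}) = \splres^{\textup{(X)}}_{i,0,k}$, the same change of variables produces $\split(\hat K, S_{i,0,k}) = \splres_{i,0,k}$. Together with Condition~\ref{item:split-match} obtained above, this verifies every clause of Definition~\ref{def:global-compatible}, so $Z_{\hat K}$ is compatible with $(X_I, Y_J, Z_K)$, as claimed.

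There is essentially no obstacle once the bookkeeping is in place; the only thing one must be careful with is the ``reflection'' between Z-split distributions and X/Y-split distributions induced by the triple relation (the replacement $k' \leftrightarrow 2^{\lvl-1}-k'$), which is precisely the reason the auxiliary distributions $\splres^{\textup{(X)}}_{i,0,k}$ and $\splres^{\textup{(Y)}}_{0,j,k}$ are defined the way they are in Additional Zeroing-Out Step 1. Note also that for components with both $i,j \geq 1$ no constraint is required by Definition~\ref{def:global-compatible}, which matches the fact that in such positions neither $\hat I$ nor $\hat J$ is forced to zero, so the triple relation alone does not determine $\hat K$ from $\hat I, \hat J$.
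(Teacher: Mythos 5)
Your proof is correct and follows essentially the same route as the paper's: the Z-zeroing-out rule supplies the average-split condition directly, and for components with a zero X- or Y-index the forced zeros in the small index sequence let the triple relation $\hat I+\hat J+\hat K=(2^{\lvl-1},\ldots,2^{\lvl-1})$ transfer the surviving X- or Y-split (via the reflected distributions $\splres^{\textup{(X)}}_{i,0,k}$, $\splres^{\textup{(Y)}}_{0,j,k}$) into the required Z-split $\splres_{i,j,k}$; the paper phrases this as a contradiction and treats only $j=0$ by symmetry, which is an immaterial difference. The only nitpick is notational: in your first paragraph the distribution for Condition~\ref{item:split-match} should be the average split distribution $\splresavg_{*,*,k}$ from \eqref{eq:def_splresavg_star_g}, not $\splres_{*,*,k}$.
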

\begin{proof}
To see this, first notice that after Additional Zeroing-Out Step 1, all $Z_{\hat{K}}$ that are not zeroed out satisfy \Cref{item:split-match} in \Cref{def:global-compatible}. Hence, if $Z_{\hat{K}}$ is not compatible with $(X_I, Y_J, Z_K)$, it must be that for some $(i,j,k)$ with $i = 0$ or $j = 0$, $\split(\hat{K}, S_{i,j,k}) \neq \splres_{i,j,k}$ (\Cref{item:average}).

Due to symmetry, we only have to discuss the case where $j = 0$. By our zeroing-out rules, we know that $\split(\hat{I}, S_{i,0,k}) = \splres^{\textup{(X)}}_{i,0,k}$. As $j = 0$ and $\hat{I} + \hat{J} + \hat{K} = (2^{\lvl - 1}, 2^{\lvl - 1}, \ldots, 2^{\lvl - 1})$, we necessarily have that $\split(\hat{K}, S_{i,0,k})(k') = \split(\hat{I}, S_{i,0,k})(2^{\lvl - 1} - k') = \splres^{\textup{(X)}}_{i,0,k}(2^{\lvl - 1} - k') = \splres_{i,0,k}(k')$. This shows that $\split(\hat{K}, S_{i,j,k}) = \splres_{i,j,k}$. Contradiction.
\end{proof}
We use $\mathcal T^{(1)}$ to denote the tensor after Additional Zeroing-Out Step 1.

\paragraph*{Additional Zeroing-Out Step 2.} Before we introduce this step, we have to make the following definition.
\begin{definition}
  \label{def:useful_g}
  A small block $Z_{\hat K}$ is said to be \emph{useful} for a large triple $(X_I, Y_J, Z_K)$ if and only if the following two conditions hold:
  \begin{itemize}
  \item $Z_{\hat K} \in Z_K$ and $(I, J, K)$ is consistent with $\alpha$;
  \item For each large component $(i,j,k)$, $\split(\hat K, S_{i,j,k}) = \tilde\alpha_{i,j,k}$.
  \end{itemize}
\end{definition}

\noindent In this step, we will zero out any small Z-block $Z_{\hat K} \in Z_K$ such that
\begin{itemize}
\item $Z_{\hat K}$ is compatible with more than one triple, or
\item $Z_{\hat{K}}$ is not useful for the unique triple $(X_I, Y_J, Z_K)$ that it is compatible with.
\end{itemize}
After such zeroing out, we call the obtained tensor $\T^{(2)}$. We claim $\T^{(2)} \cong \bigoplus_{(X_I, Y_J, Z_K)} \T^{(2)} \vert_{X_I, Y_J, Z_K}$ due to the first zeroing-out rule here.

\begin{figure}[ht]
  \centering
  \includegraphics[width = 0.95 \textwidth]{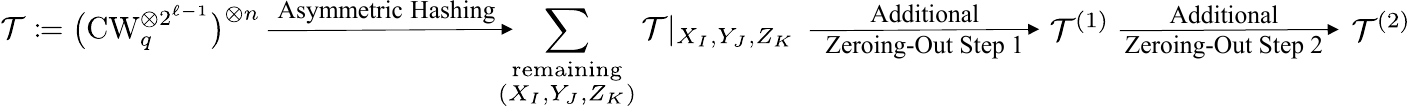}
  \caption{Tensor Degeneration Process}
  \label{fig:tensor7}
\end{figure}

Fixing a triple $(X_I, Y_J, Z_K)$, the structure of $\T^{(2)} \vert_{X_I, Y_J, Z_K}$ is as follows. Suppose no blocks are zeroed out due to the first rule, i.e., being compatible with multiple triples. In this ideal case, $\T^{(2)} \vert_{X_I, Y_J, Z_K}$ is isomorphic to
\[ \T^* \defeq \bigotimes_{i + j + k = 2^\lvl} T_{i,j,k}^{\otimes n \alpha(i,j,k)}[\splres_{i,j,k}]. \]
However, in reality, several small Z-blocks are additionally zeroed out from $\T^*$ due to the first rule, resulting in $\T^{(2)} \vert_{X_I, Y_J, Z_K}$ being a broken copy of $\T^*$ with some holes in its Z-variables.

One can see that $\T^*$ exactly matches \cref{def:standard_form_tensor} with parameters $\midBK{n\alpha(i,j,k), i,j,k, \tilde\alpha_{i,j,k}}_{i+j+k=2^\lvl}$. This will allow us to apply the hole lemma in \cref{sec:hole_lemma} to fix the holes in $\T^{(2)} \vert_{X_I, Y_J, Z_K}$.

\paragraph*{Step 4: Fix the holes and degenerate each triple independently.} Unlike Step 4 of \Cref{sec:2nd-improved-algo}, here we will first fix holes before degenerating into matrix multiplications. This is for two reasons: (1) The hole lemma in \Cref{sec:hole_lemma} allows us to directly fix holes for tensors, and (2) the degeneration here requires symmetrization, which may introduce holes to X/Y variables if they have not been fixed already. (Fixing holes is much easier when they are only in Z-variables.)

For each triple $(X_I, Y_J, Z_K)$, $\T^{(2)}\vert_{X_I, Y_J, Z_K}$ is a broken copy of $\mathcal T^*$. The \emph{fraction of non-holes} in $\T^{(2)}\vert_{X_I, Y_J, Z_K}$ (defined in \cref{def:broken_standard_form_tensor}) is denoted as $\fracnonholeIJK$. Letting
\[
  m' \defeq \floor{\frac{\sum_{(X_I, Y_J, Z_K)} \fracnonholeIJK}{ n \lvl + 2 }},
\]
it follows from \cref{cor:hole_lemma} that
\[
  \bigoplus_{(X_I, Y_J, Z_K) \textup{ remaining}} \mathcal T^{(2)} \vert_{X_I, Y_J, Z_K} \; \degen \; (\mathcal T^*)^{\oplus m'}.
\]

Recall that $\T^* = \bigotimes_{i+j+k = 2^\lvl} T_{i,j,k}^{\otimes n \alpha(i,j,k)}[\tilde\alpha_{i,j,k}]$. By given bounds of restricted-splitting values, we have $\valsix(\mathcal T^*) \ge \prod_{i+j+k=2^\lvl} \valsix(T_{i,j,k}, \tilde\alpha_{i,j,k})^{n\alpha(i,j,k)}$. As conclusion,
\[
  \valsix(\T) \;\ge\; v \;\defeq\; m' \cdot \prod_{i+j+k=2^\lvl} \valsix(T_{i,j,k}, \tilde\alpha_{i,j,k})^{n\alpha(i,j,k)}.
  \numberthis \label{eq:result_mid_g}
\]

The algorithm concludes here, and $v$ is the output that results in a value bound for the CW tensor. However, there is one more implicit step -- symmetrization -- which is hidden under the notation of $\valsix$. This occurs before degenerating into matrix multiplication tensors. This is because, by definition, $\valsix(\T)$ represents the (maximized) total volume of matrix multiplications that $\sym_6(\T)$ degenerates into. Here we are able to perform such symmetrization because we have already fixed all the holes in the broken copies of $\T^*$.

\subsection{Analysis}
\label{sec:global_analysis}

In the previous subsection, we have described our algorithm to degenerate $\T \defeq (\CW_q^{\otimes n})$ and bound its value. This subsection adds calculations and completes the proof.

\paragraph{Asymmetric Hashing.} Similar to the previous sections, we adopt the following notations:
\begin{itemize}
\item $\numxblock = \numyblock \ge \numzblock$ represent the number of large (level-$\lvl$) X, Y, and Z-blocks that are consistent with $\alphx, \alphy, \alphz$, respectively. We have $\numxblock = 2^{n H(\alphx) + o(n)}$ (similar for Y and Z).
\item $\numalpha$ is the number of triples $(X_I, Y_J, Z_K)$ that are consistent with $\alpha$; $\numtriple$ is the number of triples $(X_I, Y_J, Z_K)$ whose marginals are consistent with $\alphx, \alphy, \alphz$. We have $\numalpha = 2^{n H(\alpha) + o(n)}$.
\item $\numretain$ represents the number of retained triples after the asymmetric hashing process (which are all consistent with $\alpha$).
\item Let $\pcomp$ be a parameter to be defined later. Roughly speaking, it is the probability of a small block $Z_{\hat K}$ being compatible with a random triple $(X_I, Y_J, Z_K)$.
\end{itemize}
We let $M_0 = 8 \cdot \max\bk{\frac{\numtriple}{\numxblock}, \frac{\numalpha \cdot \pcomp}{\numzblock}}$ and let $M \in [M_0, 2M_0]$ be a prime. Applying the asymmetric hashing according to \cref{sec:hashing} with modulus $M$, we know the number of retained triples is
\[
  \E[\numretain] \ge \frac{\numalpha}{M} \cdot 2^{-o(n)} = \frac{\numalpha}{M_0} \cdot 2^{-o(n)} = \min\bk{\frac{\numalpha \cdot \numxblock}{\numtriple}, \, \frac{\numzblock}{\pcomp}} \cdot 2^{-o(n)}.
  \numberthis \label{eq:sec6_numretain}
\]

\paragraph{Probability of being compatible.} One of the remaining task is to define and calculate $\pcomp$. We start by defining its prerequisite:

\begin{definition}[Typicalness]
  Let $\gamma$ be a distribution over $\midBK{0, \ldots, 2^{\lvl-1}}^2$ whose probability density function is given by
  \[
    \gamma(k_l, k_r) \defeq \sum_{\substack{i + j + k = 2^\lvl \\ k = k_l + k_r}} \alpha(i, j, k) \cdot \splres_{i,j,k}(k_l).
  \]
  We say a small block $Z_{\hat K}$ is \emph{typical} if and only if the frequency of occurrences of $(\hat K_{2t-1}, \hat K_{2t})$ matches the probability distribution $\gamma$, i.e.,
  \[
    \forall 0 \le k \le 2^{\lvl} \textup{ and } k_l + k_r = k, \quad
    \abs{\BK{t \in [n] \mymiddle (\hat K_{2t - 1}, \hat K_{2t}) = (k_l, k_r)}} = \gamma(k_l, k_r) \cdot n.
  \]
\end{definition}

We show the following equivalent condition of typicalness:

\begin{claim}
  Assume $Z_K$ is a large triple consistent with $\alphz$.
  Then, a small block $Z_{\hat K} \in Z_K$ is \emph{typical} if and only if it was not zeroed out in Additional Zeroing-Out Step 1, i.e., $\Split(\hat K, S_{*, *, k}) = \splresavg_{*, *, k}$ for all $k = 0, 1, \ldots, 2^\lvl$, where $S_{*, *, k}$ is the set of positions with $K_t = k$; $\splresavg_{*, *, k}$ is defined according to \eqref{eq:def_splresavg_star_g}.
\end{claim}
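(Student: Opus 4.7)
\medskip
\noindent\textbf{Proof proposal.} The plan is to unpack both sides of the claimed equivalence and observe that they are literally the same set of equations after reindexing. The key observation is that the set $S_{*,*,k}$ depends only on $K$ (not on $I,J$), since $t \in S_{*,*,k}$ simply means $K_t = k$; and under the hypothesis that $Z_K$ is consistent with $\alphz$, we have $|S_{*,*,k}| = \alphz(k)\,n$. Moreover, because the level-$\lvl$ and level-$\lastlvl$ index sequences satisfy $K_t = \hat K_{2t-1} + \hat K_{2t}$, specifying $t \in S_{*,*,k}$ together with $\hat K_{2t-1} = k_l$ automatically forces $\hat K_{2t} = k - k_l \eqdef k_r$.

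Concretely, I would first rewrite $\Split(\hat K, S_{*,*,k})(k_l)$ as
\[
  \Split(\hat K, S_{*,*,k})(k_l) \;=\; \frac{1}{|S_{*,*,k}|}\,\bigl|\{t \in [n] : (\hat K_{2t-1}, \hat K_{2t}) = (k_l, k_r)\}\bigr|,
\]
using the observation above to drop the constraint $t \in S_{*,*,k}$ (it is implied by $\hat K_{2t-1} + \hat K_{2t} = k$). Then I would multiply the zeroing-out condition $\Split(\hat K, S_{*,*,k}) = \splresavg_{*,*,k}$ through by $|S_{*,*,k}| = \alphz(k)\,n$ and substitute the definition \eqref{eq:def_splresavg_star_g} of $\splresavg_{*,*,k}$. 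The right-hand side becomes $n \sum_{i+j = 2^\lvl - k} \alpha(i,j,k)\,\splres_{i,j,k}(k_l)$, which is exactly $\gamma(k_l, k_r)\,n$ by the definition of $\gamma$ (since $k = k_l + k_r$ fixes $k$, and the sum over $i+j+k = 2^\lvl$ with $k = k_l + k_r$ is the same as the sum over $i+j = 2^\lvl - k$). Thus the two conditions are equivalent for each $k \in \{0, 1, \ldots, 2^\lvl\}$ and each decomposition $k = k_l + k_r$, and ranging over all such $k, k_l, k_r$ yields the equivalence with typicalness.

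There is no real obstacle here; the claim is essentially a definitional unwinding, and the only thing to be careful about is that the hypothesis ``$Z_K$ is consistent with $\alphz$'' is exactly what allows us to convert the fractional condition $\Split(\hat K, S_{*,*,k}) = \splresavg_{*,*,k}$ into an absolute count that matches the typicalness requirement. I would write the proof as a short chain of equalities for each $(k, k_l, k_r)$, concluding that both sides encode the same $n$-many constraints on the sequence $(\hat K_{2t-1}, \hat K_{2t})_{t \in [n]}$.
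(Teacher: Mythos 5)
Your proposal is correct and follows essentially the same route as the paper's proof: both use the consistency of $Z_K$ with $\alphz$ to get $|S_{*,*,k}| = \alphz(k)\,n$, substitute the definition of $\splresavg_{*,*,k}$ so the $\alphz(k)$ cancels, and recognize the result as $\gamma(k_l,k_r)\,n$, yielding the equivalence with typicalness. The paper writes the forward direction as a chain of equalities and notes the converse is symmetric; your phrasing as a single reversible chain of equalities is the same argument.
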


\begin{proof}
  Suppose $Z_{\hat K}$ satisfies $\Split(\hat K, S_{*, *, k}) = \splresavg_{*, *, k}$ for all $k = 0, 1, \ldots, 2^\lvl$. Then, for any $0 \le k \le 2^\lvl$ and $k_l + k_r = k$, we can calculate
  \begin{align*}
    \abs{\BK{t \in [n] \;\middle|\; (\hat K_{2t-1}, \hat K_{2t}) = (k_l, k_r)}}
    &= \abs{S_{*, *, k}} \cdot \Split(\hat K, S_{*, *, k})(k_l) \\
    &= \alphz(k) \cdot n \cdot \splresavg_{*, *, k}(k_l) \\
    &= \alphz(k) \cdot n \cdot \frac{1}{\alphz(k)} \sum_{i + j = 2^{\lvl} - k} \alpha(i,j,k) \cdot \splres_{i,j,k}(k_l) \\
    &= \gamma(k_l, k_r) \cdot n.
  \end{align*}
  Thus, $Z_{\hat K}$ is typical. Similarly, if $Z_{\hat K}$ is typical, we can also determine $\Split(\hat K, S_{*, *, k}) = \splresavg_{*, *, k}$. This concludes the proof.
\end{proof}

In the rest of this subsection, if a triple $(X_I, Y_J, Z_K)$ is consistent with the joint component distribution $\alpha$ that we choose, we say these blocks $X_I, Y_J, Z_K$ are \emph{matchable} to each other. Next, we define $\pcomp$:

\begin{definition}
  Suppose $Z_{\hat K} \in Z_K$ is a typical block, and $X_I$ is an X-block matchable to $Z_{K}$ chosen uniformly at random (i.e., they form a triple $(X_I, Y_J, Z_K)$ consistent with $\alpha$). $\pcomp$ is defined as the probability of $Z_{\hat K}$ being compatible with $X_I, Y_J$. Similar to \cref{sec:2nd}, due to symmetry, $\pcomp$ is independent of which block $Z_{\hat K}$ we choose.
\end{definition}

We calculate $\pcomp$ by the following lemma:

\begin{lemma}
  \label{lemma:pcomp_g}
  We have $\pcomp = \alphap^{n + o(n)}$, where
  \begin{gather*}
    \alphap \defeq
    2^{H(\alphz) - H(\gamma)} \cdot \prod_{\substack{i + j + k = 2^\lvl \\ i = 0 \textup{ or } j = 0}} 2^{\alpha(i,j,k) \cdot H(\splres_{i,j,k})}
    \cdot \prod_{k = 0}^{2^\lvl} 2^{\alpha(\plusplusk) \cdot H(\splresavg_{\plusplusk})}, \qquad \text{and} \\
    \alpha(\plusplusk) \defeq \sum_{\substack{i, j > 0 \\ i + j + k = 2^\lvl}} \alpha(i,j,k), \qquad
    \splresavg_{\plusplusk} \defeq \frac{1}{\alpha(\plusplusk)} \sum_{\substack{i, j > 0 \\ i + j + k = 2^\lvl}} \alpha(i, j, k) \cdot \splres_{i,j,k},
  \end{gather*}
  and $\gamma$ is the typical distribution defined above. ($\splres_{\plusplusk}$ is a split distribution of Z-index $k$.)
\end{lemma}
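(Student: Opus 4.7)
The plan is to compute $\pcomp$ by a symmetric double counting, identical in spirit to \Cref{lem:p_comp_sec4} in \Cref{sec:2nd}. Because of the symmetry between the choices of $\hat K$ and $I$, for any fixed triple $(X_I, Y_J, Z_K)$ consistent with $\alpha$,
\[
\pcomp \;=\; \frac{\#\BK{Z_{\hat K} \in Z_K : Z_{\hat K} \text{ compatible with } (X_I, Y_J, Z_K)}}{\#\BK{Z_{\hat K} \in Z_K : Z_{\hat K} \text{ is typical}}}.
\]
So the task reduces to enumerating small blocks on the numerator and denominator, and then invoking \Cref{thm:stirling1} (Stirling) to obtain the exponent $\log \alphap$.

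For the \emph{denominator}, a typical $Z_{\hat K} \in Z_K$ is one whose restriction to positions $S_{*,*,k}$ (of size $\alphz(k) n$) has split distribution $\splresavg_{*,*,k}$, for every $k$. Choices at different $k$ are independent, so the count is a product of multinomial coefficients $\prod_{k=0}^{2^\lvl} \binom{\alphz(k) n}{[\splresavg_{*,*,k}(k') \alphz(k) n]_{k'}}$. I plan to show the identity
\[
H(\gamma) \;=\; H(\alphz) + \sum_{k=0}^{2^\lvl} \alphz(k)\, H\bk{\splresavg_{*,*,k}},
\]
which follows by a short entropy chain-rule computation after noting that $\gamma(k_l, k_r) = \alphz(k_l + k_r) \cdot \splresavg_{*,*, k_l + k_r}(k_l)$. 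Applying Stirling then gives the denominator as $2^{n (H(\gamma) - H(\alphz)) + o(n)}$.

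For the \emph{numerator}, I will split the positions $t \in [n]$ according to their component $(I_t, J_t, K_t)$. The compatibility conditions are:
\begin{itemize}
  \item For every $(i,j,k)$ with $i=0$ or $j=0$: a hard constraint $\Split(\hat K, S_{i,j,k}) = \splres_{i,j,k}$. These $|S_{i,j,k}| = \alpha(i,j,k) n$ positions contribute $\binom{\alpha(i,j,k) n}{[\splres_{i,j,k}(k') \alpha(i,j,k) n]_{k'}}$ many choices, independently across components.
  \item For each fixed $k$, the positions with $i, j > 0$ (there are $\alpha(\plusplusk) n$ of them) are only constrained \emph{in aggregate}: combining $\Split(\hat K, S_{*,*,k}) = \splresavg_{*,*,k}$ with the already-enforced per-component constraints on $(0,j,k)$ and $(i,0,k)$, one sees that the aggregate split over these positions must equal $\splresavg_{\plusplusk}$. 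The count is therefore $\binom{\alpha(\plusplusk) n}{[\splresavg_{\plusplusk}(k') \alpha(\plusplusk) n]_{k'}}$.
\end{itemize}
Multiplying these counts and applying Stirling, the numerator becomes $2^{n \bk{\sum_{(i,j,k): i=0 \text{ or } j=0} \alpha(i,j,k) H(\splres_{i,j,k}) + \sum_{k} \alpha(\plusplusk) H(\splresavg_{\plusplusk})} + o(n)}$. Taking the ratio and comparing with the definition of $\alphap$ yields $\pcomp = \alphap^{n + o(n)}$.

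The main subtlety is the second bullet above: correctly characterizing which joint splits on the $i,j>0$ positions are allowed. I expect the crux to be justifying that, once the hard per-component constraints are imposed on $i=0$ or $j=0$ positions, the only further condition inherited from $\Split(\hat K, S_{*,*,k}) = \splresavg_{*,*,k}$ on the remaining $i,j>0$ positions is exactly the aggregate $\splresavg_{\plusplusk}$ --- with no per-component constraints among them --- and thus that the multinomial count is tight. The routine pieces (entropy identity for $H(\gamma)$, application of \Cref{thm:stirling1}) then combine to give the stated formula.
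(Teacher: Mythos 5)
Your proposal matches the paper's proof essentially step for step: fix a triple by symmetry, reduce $\pcomp$ to the ratio of two block counts, decompose the numerator into hard per-component constraints for $\{i=0 \text{ or } j=0\}$ and a single aggregate constraint $\splresavg_{\plusplusk}$ for the $i,j>0$ positions (the ``subtract (a) from (b)'' step, which you correctly flag as the crux and which checks out), and apply \Cref{thm:stirling1}. The only small difference is in the denominator: the paper computes $|\typicalset|$ by the quotient $\binom{n}{[n\gamma(k_l,k_r)]}/\numzblock$, while you propose a direct product of multinomials over $k$ together with the chain-rule identity $H(\gamma) = H(\alphz) + \sum_k \alphz(k) H(\splresavg_{*,*,k})$; this identity does hold (it follows from $\gamma(k_l,k_r) = \alphz(k_l+k_r)\,\splresavg_{*,*,k_l+k_r}(k_l)$), so the two routes give the same $2^{n(H(\gamma)-H(\alphz))+o(n)}$ and neither is easier than the other.
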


\begin{proof}
  Fixing a large block $Z_K$ consistent with $\alphz$, we denote by $\typicalset$ the set of typical blocks within the large block $Z_K$. Since $\pcomp$ is identical for all $Z_{\hat K} \in \typicalset$, it will also be the same for a uniformly randomly chosen $Z_{\hat K} \in \typicalset$. Independently, we sample a large X-block $X_I$ that is matchable to $Z_K$ (i.e., they form a triple $(X_I, Y_J, Z_K)$ consistent with $\alpha$), also uniformly at random. Then, we have
  \begin{align*}
    \pcomp &= \Pr_{I, \hat K}[Z_{\hat K} \textup{ is compatible with } X_{I}] \\
           &= \E_{I} \Bk{ \Pr_{\hat K} \Bk{Z_{\hat K} \textup{ is compatible with } X_{I}} } \\
           &= \E_{I} \Bk{ \frac{\abs{\BK{Z_{\hat K'} \in \typicalset \;\middle|\; Z_{\hat K'} \textup{ is compatible with } X_I}}}{\abs{\typicalset}} }.
             \numberthis \label{eq:pcomp_mid_g}
  \end{align*}
  The expectation on the last line is taken over all $X_I$ matchable to $Z_K$. In fact, the content inside the expectation is identical for all $X_I$ due to symmetry. We arbitrarily fix an $X_I$ and further calculate the numerator and denominator, respectively.

  \paragraph{Numerator.} We count the number of desired $Z_{\hat K}$ by counting the number of ways to split all indices $k$ in the sequence $K$. The constraint of being compatible with $X_I$ is equivalent to the following two conditions:
  \begin{enumerate}
  \item[(a)] For $i + j + k = 2^\lvl$ where $i = 0$ or $j = 0$, $\split(\hat K, S_{i, j, k}) = \splres_{i, j, k}$.
  \item[(b)] For $k \in [0, 2^{\lvl}]$, $\split(\hat K, S_{*, *, k}) = \splresavg_{*, *, k}$.
  \end{enumerate}
  (The second condition is a requirement of both being compatible and being typical.) Then we ``subtract'' (a) from (b), obtaining a group of equivalent conditions as follows:
  \begin{enumerate}
  \item[(a)] For $i + j + k = 2^\lvl$ where $i = 0$ or $j = 0$, $\split(\hat K, S_{i, j, k}) = \splres_{i, j, k}$.
  \item[(c)] For $k \in [0, 2^{\lvl}]$, let $S_{\plusplusk}$ denote the set of positions $t \in [n]$ where $(I_t, J_t, K_t)$ satisfies $K_t = k$ and $I_t, J_t \ne 0$. We require $\split(\hat K, S_{\plusplusk}) = \splresavg_{\plusplusk}$, where $\splresavg_{\plusplusk}$ is defined in the statement of \cref{lemma:pcomp_g}.
  \end{enumerate}
  One can see that requiring (a) and (b) is equivalent to requiring (a) and (c). The advantage of doing so is that the set of positions involved in these requirements are disjoint. Specifically, all requirements of types (a) and (c) have the form $\split(\hat K, S) = \splres$ for some position set $S$ and split distribution $\splres$: For (a), they are $S = S_{i,j,k}$ and $\splres = \splres_{i,j,k}$; for (c), they are $S = S_{\plusplusk}$ and $\splres = \splresavg_{\plusplusk}$. Every index position $t \in [n]$ belongs to the set $S$ of exactly one requirement.

  For each of these requirements, namely $\split(\hat K, S) = \splres$, the number of ways to choose $(\hat K_{2t-1}, \hat K_{2t})$ for all $t \in S$ equals
  \[
    \binom{|S|}{|S| \splres(0), \, |S| \splres(1), \, \ldots \, , \, |S| \splres(2^{\lvl-1})}
    = \binom{|S|}{[|S| \cdot \splres(k_l)]_{k_l \in [0, 2^{\lvl - 1}]}}
    = 2^{|S| \cdot H(\splres) + o(|S|)}.
  \]
  Taking product over all requirements, we have
  \begin{align*}
    & \phantom{{}={}} \abs{\BK{Z_{\hat K'} \in \typicalset \;\middle|\; Z_{\hat K'} \textup{ is compatible with } X_I}} \\
    &= \prod_{\substack{i + j + k = 2^\lvl \\ i = 0 \text{ or } j = 0}} 2^{n \alpha(i, j, k) \cdot H(\splres_{i,j,k})} \cdot \prod_{k=0}^{2^\lvl} 2^{n \alpha(\plusplusk) \cdot H(\splresavg_{\plusplusk})} \cdot 2^{o(n)}.
    \numberthis \label{eq:lambda_denominator_g}
  \end{align*}

  \paragraph{Denominator.} To calculate the denominator $|\typicalset|$, we only need to notice the symmetry between different $K$, i.e., the size of $\typicalset$ should be identical for all $K$'s consistent with $\alphz$. Moreover, they are disjoint for different $K$'s. Therefore, we may calculate
  \begin{align*}
    \abs{\typicalset} &= \frac{\abs{\bigcup_{K' \textup{ consistent with } \alphz} \typicalset[K']}}{\numzblock} \\
                      &= \left. \binom{n}{[n\gamma(k_l, k_r)]_{k_l, k_r}} \middle/ \binom{n}{[n \alphz(k)]_{k}} \right. \\
                      &= \frac{2^{n H(\gamma)}}{2^{n H(\alphz)}} \cdot 2^{o(n)}.
  \end{align*}
  Combined with \eqref{eq:pcomp_mid_g} and \eqref{eq:lambda_denominator_g}, we have
  \begin{align*}
    \pcomp &= \frac{\abs{\BK{Z_{\hat K'} \in \typicalset \;\middle|\; Z_{\hat K'} \textup{ is compatible with } X_I}}}{\abs{\typicalset}} \\
           &= \left. \prod_{\substack{i + j + k = 2^\lvl \\ i = 0 \text{ or } j = 0}} 2^{n \alpha(i, j, k) \cdot H(\splres_{i,j,k})} \cdot \prod_{k=0}^{2^\lvl} 2^{n \alpha(\plusplusk) \cdot H(\splresavg_{\plusplusk})} \middle/ \frac{2^{n H(\gamma)}}{2^{n H(\alphz)}} \cdot 2^{o(n)} \right. \\
           &= \alphap^{n} \cdot 2^{o(n)} \;=\; \alphap^{n + o(n)},
  \end{align*}
  where $\alphap$ is defined in the statement of \cref{lemma:pcomp_g}. This concludes the proof.
\end{proof}

\paragraph{Probability of being holes.} Fixing a small block $Z_{\hat K}$ that is \emph{useful} for some retained triple $(X_I, Y_J, Z_K)$ (see \cref{def:useful_g}). Below, we analyze the probability of $Z_{\hat K}$ being a hole.

\begin{claim}
  \label{claim:hole_frac_low}
  Fixing a retained triple $(X_I, Y_J, Z_K)$ (it must be consistent with $\alpha$) and a small block $Z_{\hat K} \in Z_K$ useful for that triple, the probability of $Z_{\hat K}$ being a hole in $\T^{(2)} \vert_{X_I, Y_J, Z_K}$ (i.e., being compatible with a different remaining triple $(X_{I'}, Y_{J'}, Z_K)$) is at most $1/8$.
\end{claim}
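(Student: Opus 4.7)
}
The plan is to mirror the second-power argument from \cref{sec:analysis} that established $p_{\text{hole}} < 1/2$, with the constants updated to the choice $M_0 = 8 \cdot \max(\numtriple/\numxblock,\; \numalpha \cdot \pcomp / \numzblock)$ made in this section. The event that $Z_{\hat K}$ becomes a hole in $\T^{(2)}\vert_{X_I, Y_J, Z_K}$ is, by the rules of Additional Zeroing-Out Step 2, exactly the event that there exists some other retained triple $(X_{I'}, Y_{J'}, Z_K)$ with $I' \ne I$ such that $Z_{\hat K}$ is compatible with $(X_{I'}, Y_{J'}, Z_K)$ (note: $Y_{J'}$ is forced by $I'$ once $Z_K$ is fixed, since we restrict to triples consistent with $\alpha$). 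We will bound the probability of this event by a union bound over such $I'$.

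First I would note that since $Z_{\hat K}$ is useful for $(X_I, Y_J, Z_K)$, it is in particular typical, so the quantity $\pcomp$ from \cref{lemma:pcomp_g} applies to it. Next, I would use \cref{lemma:hash_independence}: conditioning on the retention of the original triple $(X_I, Y_J, Z_K)$ (which fixes $\hashz(K) = \hashx(I) = \hashy(J)$), for any other fixed $I' \ne I$ matchable to $Z_K$ the probability that $\hashx(I') = \hashz(K)$ is at most $1/M$. This yields
\[
  p_{\text{hole}}(\hat K, I, J, K) \;\le\; \sum_{I' \ne I,\, \text{matchable to } Z_K} \frac{1}{M} \cdot \ind\!\left[Z_{\hat K} \text{ is compatible with } (X_{I'}, Y_{J'}, Z_K)\right].
\]

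Next I would evaluate the remaining deterministic sum. The total number of $I'$ matchable to $Z_K$ (i.e., those forming a triple consistent with $\alpha$ with $Z_K$) equals $\numalpha / \numzblock$ by the symmetric counting property that justified the use of the asymmetric hashing in \cref{sec:hashing}. Among these, by the very definition of $\pcomp$ as the compatibility probability for a uniformly random matchable $X_{I'}$, exactly a $\pcomp$ fraction are compatible with $Z_{\hat K}$ (this is a deterministic count, not involving the hashing randomness). Therefore the sum above is at most $\pcomp \cdot \numalpha / \numzblock$, and
\[
  p_{\text{hole}}(\hat K, I, J, K) \;\le\; \frac{\pcomp \cdot \numalpha}{M \cdot \numzblock} \;\le\; \frac{\pcomp \cdot \numalpha}{M_0 \cdot \numzblock} \;\le\; \frac{1}{8},
\]
where the final step uses $M \ge M_0 \ge 8 \cdot \numalpha \cdot \pcomp / \numzblock$.

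The argument is almost entirely structural and parallels the level-2 analysis; the main thing to be careful about is the conditioning. The probability statement is conditional on $(X_I, Y_J, Z_K)$ being retained, and one must verify that (i) the pairwise-independent hash lemma gives $1/M$ conditionally for each other $I'$, and (ii) the indicator $\ind[Z_{\hat K} \text{ compatible with } (X_{I'}, Y_{J'}, Z_K)]$ depends only on the combinatorial data $(I', J', K, \hat K)$, not on the hashing randomness, so the count $\pcomp \cdot \numalpha/\numzblock$ is valid. The rest is just plugging in the chosen $M_0$.
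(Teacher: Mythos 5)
Your proposal is correct and follows essentially the same route as the paper's proof: relax the hole event to the necessary condition that some other matchable $I'\ne I$ both collides under the hash (probability $1/M$ conditionally, via \cref{lemma:hash_independence}) and is compatible with $Z_{\hat K}$, count the compatible matchable blocks as $\pcomp\cdot\numalpha/\numzblock$, and plug in $M\ge M_0\ge 8\numalpha\pcomp/\numzblock$. Your explicit remark that usefulness implies typicality (so that $\pcomp$ applies) is a point the paper leaves implicit, but it is a valid observation in this setting and does not change the argument.
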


\begin{proof}
  A necessary condition of $Z_{\hat K}$ being a hole is that there exists $I' \ne I$ matchable to $K$ such that (1) $Z_{\hat K}$ is compatible with $X_{I'}$; and (2) $I'$ is hashed to the same slot as $K$, i.e., $\hashx(I') = \hashz(K)$. We calculate the expected number of such $I'$ to establish an upper bound on the probability of the existence of such $I'$:
  \begin{align*}
    \Pr\Bk{Z_{\hat K} \textup{ is a hole}}
    &\le \sum_{I' \ne I} \ind\Bk{Z_{\hat K} \textup{ is compatible with } X_{I'}} \cdot \Pr[\hashx(I') = \hashz(K) \mid \hashx(I) = \hashz(K)] \\
    &= \sum_{I' \ne I} \ind\Bk{Z_{\hat K} \textup{ is compatible with } X_{I'}} \cdot \frac{1}{M} \\
    &< \sum_{I'} \ind\Bk{Z_{\hat K} \textup{ is compatible with } X_{I'}} \cdot \frac{1}{M} \\
    &= \frac{\numalpha}{\numzblock} \cdot \Pr_{I' \textup{ matchable to } K} [Z_{\hat K} \textup{ is compatible with } X_{I'}] \cdot \frac{1}{M} \\
    &= \frac{\numalpha \cdot \pcomp}{\numzblock \cdot M} \;\le\; \frac{\numalpha \cdot \pcomp}{\numzblock \cdot M_0} \;\le\; \frac{\numalpha \cdot \pcomp}{\numzblock} \cdot \frac{\numzblock}{8 \cdot \numalpha \cdot \pcomp} \;=\; \frac{1}{8},
  \end{align*}
  where the first equality above holds due to \cref{lemma:hash_independence} (restated below); the third equality holds according to the definition of $\pcomp$ above.
\end{proof}

\HashIndependence*
The proof of this lemma is in \cref{sec:hashing}.

\paragraph{Bounding the value.} We continue to obtain a bound for $\valsix(\T)$ where $\T \defeq (\CW_q^{\otimes 2^{\lvl - 1}})^{\otimes n}$. Recall that for each of the $\numretain$ retained triples $(X_I, Y_J, Z_K)$, the fraction of non-hole blocks in $\T^{(2)} \vert_{X_I, Y_J, Z_K}$ is denoted by $\fracnonholeIJK$. \cref{claim:hole_frac_low} tells that, provided some small block $Z_{\hat K}$ is useful for a retained triple $(X_I, Y_J, Z_K)$, the probability of $Z_{\hat K}$ being a hole is at most $1/8$. Therefore,
\[
  \fracnonholeIJK \defeq 1 - \E_{Z_{\hat K} \textup{ useful for } (X_I, Y_J, Z_K)} \Bk{ \Pr\Bk{Z_{\hat K} \textup{ is a hole in } \T^{(2)} \vert_{X_I, Y_J, Z_K}} } \ge \frac{7}{8}
\]
holds for all $(X_I, Y_J, Z_K)$. According to \cref{cor:hole_lemma}, these broken tensors can degenerate to
\[
  m' \defeq \floor{\frac{\sum_{(X_I, Y_J, Z_K)} \fracnonholeIJK}{n \lvl + 2}}
  \ge \floor{\frac{\numretain \cdot 7}{8(n \lvl + 2)}} = \frac{\numretain}{O(n)}
\]
many copies of standard form tensors $\T^*$. Combined with \eqref{eq:sec6_numretain}, we get
\[
  \E[m'] \ge \min\bk{\frac{\numalpha \cdot \numxblock}{\numtriple}, \, \frac{\numzblock}{\pcomp}} \cdot 2^{-o(n)}.
\]
By the probabilistic method, we know $\T$ can degenerate to at least this many copies of $\T^*$. Finally, we have
\begin{gather*}
  \valsix(\CW_q^{\otimes 2^{\lvl - 1}})^n = \valsix(\T)
  \ge \E[m'] \cdot \valsix(\T^*) \\
  \ge \min\bk{\frac{\numalpha \cdot \numxblock}{\numtriple}, \, \frac{\numzblock}{\pcomp}}
    \cdot \prod_{i + j + k = 2^\lvl} \valsix(T_{i,j,k}, \splres_{i,j,k})^{n \alpha(i,j,k)} \cdot 2^{-o(n)}.
    \numberthis \label{eq:value_before_nth_root}
\end{gather*}

Similar to \cref{sec:2nd}, we define the following notations:
\begin{itemize}
\item $\alphabx \defeq \lim\limits_{n \to \infty} \numxblock^{1/n} = 2^{H(\alphx)}$ is the number of large X-blocks normalized by taking the $n$-th root. Similarly, $\alphabz \defeq \lim\limits_{n \to \infty} \numzblock^{1/n} = 2^{H(\alphz)}$.
\item $\alphant \defeq \lim\limits_{n \to \infty} \numalpha^{1/n} = 2^{H(\alpha)}$ is the number of triples consistent with $\alpha$.
\item With these notations, we have $\max\limits_{\alpha' \in \distShareMargin} \alphant[\alpha'] = \lim\limits_{n \to \infty} \numtriple^{1/n}$ according to \cref{lem:numtriple_singledist}.
\item $\alphap = \lim\limits_{n \to \infty} \pcomp^{1/n}$. Its closed form is given in \cref{lemma:pcomp_g}.
\item $\alphaval \defeq \prod_{i + j + k = 2^\lvl} \valsix(T_{i,j,k}, \splres_{i,j,k})^{\alpha(i,j,k)}$.
\end{itemize}
Then, we take the $n$-th root on both sides of \eqref{eq:value_before_nth_root}, obtaining
\[
  \valsix(\CW_q^{\otimes 2^{\lvl - 1}}) \ge \min\bk{\frac{\alphant \cdot \alphabx}{\max_{\alpha' \in \distShareMargin} \alphant[\alpha']}, \, \frac{\alphabz}{\alphap}} \cdot \alphaval.
  \numberthis \label{eq:numeric_conclusion_g}
\]
Once the distribution $\alpha$ is given, we can verify the lower bound of $\valsix(\CW_q^{\otimes 2^{\lvl - 1}})$ via \cref{alg:verify_g}.

\begin{figure}[h]
  \begin{center}
    \begin{tcolorbox}
      \captionof{algocf}{Verifying a Lower Bound}{\label{alg:verify_g}}
      \vspace{-0.5em}
      Assume $\alpha$ is given. Moreover, for each level-$\lastlvl$ component $T_{i,j,k}$, a pair $(\splres_{i,j,k}, V_{i,j,k})$ is given, indicating that $\valsix(T_{i,j,k}, \splres_{i,j,k}) \ge V_{i,j,k}$.
      \begin{enumerate}
      \item Compute $\alphabx, \alphabz, \alphant, \alphap, \alphaval$ according to their closed forms.
      \item Solve the following convex optimization problem:
        \[
          \begin{array}{cc}
            \textup{maximize} & \alphant[\alpha'] = 2^{H(\alpha')} \\
            \textup{subject to} & \alpha' \in \distShareMargin. \\
          \end{array}
        \]
        This gives us $\max_{\alpha' \in \distShareMargin} \alphant[\alpha']$.
      \item Calculate the lower bound according to \eqref{eq:numeric_conclusion_g}.
      \end{enumerate}
    \end{tcolorbox}
  \end{center}
\end{figure}

We will explain the heuristics for optimizing the parameters $\alpha$ in \cref{sec:result}.

\subsection{Example -- Level-2 Global Value}\label{sec:level-2-global}
As an example, we further analyze the value of $\CW_{q}^{\otimes 2}$. Although in \cref{sec:2nd} we already obtained an improved bound of $\omega<2.375234$, which is better than \cite{coppersmith1987matrix} for the second power of the CW tensor, this bound can still be (significantly) improved by the method in this section. We will break the symmetry in the distributions of components $(1,1,2)$, $(1,2,1)$, and $(2,1,1)$, then use restricted-splitting values based on split distributions different from \cref{sec:2nd}.

\begin{itemize}
\item When we set $\tilde\alpha_{0,2,2}(0) = \tilde\alpha_{0,2,2}(2) = a$ and $\tilde\alpha_{0,2,2}(1) = 1-2a$, using the method from the proof of \cref{lem:non-rot-values}, we have
  \[
    V_\tau^\nrot(T_{0,2,2}, \tilde\alpha_{0,2,2})\geq \lim_{m\to\infty} \bk{\binom{m}{(1-2a)m, \, am, \, am} q^{2(1-2a)m}}^{\tau/m} = \left(\frac{q^{2(1-2a)}}{a^{2a}(1-2a)^{1-2a}}\right)^{\tau}.
  \]
  For $\tilde\alpha_{2,0,2}(0) = \tilde\alpha_{2,0,2}(2) = a$ and $\tilde\alpha_{2,0,2}(1) = 1-2a$, $V_\tau^\nrot\big(T_{2,0,2}, \tilde\alpha_{2,0,2}\big)$ equals the same value.
\item When $\tilde\alpha_{1,1,2}(0) = \tilde\alpha_{1,1,2}(2) = b$ and $\tilde\alpha_{1,1,2}(1) = 1-2b$, similarly to the proof of \cref{lem:non-rot-values},
  \begin{eqnarray*}
    V_\tau^{(3)}(T_{1,1,2}, \tilde\alpha_{1,1,2})
    & \ge & \lim_{m\to\infty} \bk{\binom{m}{m/2}^2 \binom{m}{(1-2b)m,\, bm,\, bm}}^{1/(3m)} \cdot q^{(2(1-2b) + 2b)\tau} \\
    & =& \left(\frac{4}{(1-2b)^{1-2b}b^{2b}}\right)^{1/3}q^{(2-2b)\tau}.
  \end{eqnarray*}
\item For all other components, we use the symmetric Z-marginal split distributions, that is, $\tilde\alpha_{i,j,1}(0) = \tilde\alpha_{i,j,1}(1)=1/2$ and $\tilde\alpha_{i,j,3}(1) = \tilde\alpha_{i,j,3}(2)=1/2$ for all valid $i,j$; for components $(i,j,0)$ or $(i,j,4)$, there is only one Z-marginal split distribution. So the values of all other components (including $(2,2,0),(1,2,1),(2,1,1)$) do not change from \cref{sec:2nd}.
\end{itemize}

\paragraph{Numerical Result.}

By a MATLAB program, we found the following parameters, which can lead to a better bound $\omega<2.374631$ than \cref{sec:2nd}.

\begin{table}[!h] \label{table:result-2nd}
  \captionsetup{font=small}
  \caption{The parameters for the bound $\omega<2.374631$, where $a$ and $b$ are the parameters in Z-marginal split distributions.}
  \centering
  \begin{tabular}{|c|c|c|c|}
    \hline
    \text{Component} & (0,0,4) & (0,4,0) & (4,0,0) \\
    $\alpha$ & 0.00020860 & 0.00024731 & 0.00024731 \\
    \hline
    \text{Component} & (0,1,3) & (0,3,1) & (1,0,3) \\
    $\alpha$ & 0.01211153 & 0.01333318 & 0.01211153 \\
    \hline
    \text{Component} & (1,3,0) & (3,0,1) & (3,1,0) \\
    $\alpha$ & 0.01251758 & 0.01333318 & 0.01251758 \\
    \hline
    \text{Component} & (0,2,2) & (2,0,2) & (2,2,0) \\
    $\alpha$ & 0.10366945 & 0.10366945 & 0.10045791 \\
    $a$ & 0.03477403 & 0.03477403 &  \\
    \hline
    \text{Component} & (1,1,2) & (1,2,1) & (2,1,1) \\
    $\alpha$ & 0.20088623 & 0.20734458 & 0.20734458 \\
    $b$ & 0.00021015 &  &  \\
    \hline
  \end{tabular}
\end{table}

\section{Improving Component Values}
\label{sec:component}
\label{sec:component_value}

In \cref{sec:global_value}, we saw how to obtain lower bounds of $\valsix(\CW_q^{\otimes 2^{\lvl}})$ based on the restricted-splitting values from level-$(\lvl - 1)$. In this section, we will apply similar ideas to analyze the restricted-splitting values of components $T_{i,j,k}$ at level-$(\lvl + 1)$, where $i + j + k = 2^{\lvl + 1}$. Our goal is to obtain a lower bound for $\valsix(T_{i,j,k}, \splresZ)$ for some split distribution $\splresZ$ of the Z-index $k$. This will allow us to plug it into the input of the algorithm in \cref{sec:global_value}. We will focus on $T_{i,j,k}$ with non-zero $i,j,k$ in \cref{subsec:component-algo}; when any of $i,j,k$ is zero, we will still apply the ``merging'' approach from the previous works, which will be discussed in \cref{sec:component_value_zero}.

In previous works (e.g., \cite{williams2012, legall2014}), to obtain a lower bound for a component $T_{i,j,k}$'s value, the tensor $\sym_3(T_{i,j,k}^{\otimes n}) = T_{i,j,k}^{\otimes n} \otimes T_{j,k,i}^{\otimes n} \otimes T_{k,i,j}^{\otimes n}$ was analyzed using the laser method. Note that in this tensor, the X/Y/Z-variables are completely symmetric.

Our new approach is based on the asymmetric hashing, where each Z-block is matched with more than one pair of X/Y-blocks. Hence, it is natural to introduce more asymmetry between X/Y and Z-variables. We will analyze a different tensor $\T_{\mathrm{final}}$ other than $\sym_3(T_{i,j,k}^{\otimes n})$ (explained later). But in the end, as we are lower bounding $\valsix(T_{i,j,k}, \splresZ)$, we will have to show that there exists a degeneration $\sym_6(T^{\otimes n}_{i,j,k}[\splresZ]) \degen \T_{\mathrm{final}}$.

\paragraph*{Choosing an Asymmetric Tensor.} To introduce more asymmetry, rather than analyzing $T_{i,j,k}^{\otimes n} \otimes T_{j,k,i}^{\otimes n} \otimes T_{k,i,j}^{\otimes n}$, we will introduce three parameters $A_1, A_2, A_3 \in [0,1]$ such that $A_1 + A_2 + A_3 = 1$, and then analyze
\begin{align*}
  \T_{\mathrm{asym}} \coloneqq{} &T_{i,j,k}^{\otimes A_1 n} \otimes T_{j,k,i}^{\otimes A_2 n} \otimes T_{k,i,j}^{\otimes A_3 n} \otimes {} \\ 
                               & T_{j,i,k}^{\otimes A_1 n} \otimes T_{k,j,i}^{\otimes A_2 n} \otimes T_{i,k,j}^{\otimes A_3 n}.
\end{align*}
Note that in this tensor, there is a significant asymmetry between X/Y and Z-variables. However, we still let X and Y-variables be symmetric, because in the asymmetric hashing, the X and Y-blocks are still matched one-to-one.

\paragraph*{Restricting Split Distributions.} The natural next step would be to apply the laser method and get a lower bound on $V^{(3)}_\tau(\T_{\mathrm{asym}})$. By doing so, we would actually be lower bounding the volume of matrix multiplications that $\T_{\mathrm{final}} = \sym_3(\T_{\mathrm{asym}})$ can degenerate into. However, to obtain a lower bound on $\valsix(T_{i,j,k}, \splresZ)$, it is necessary to ensure that there exists a degneration $ \sym_6(T_{i,j,k}[\splresZ]) \degen \T_{\mathrm{final}} $. This is not possible without restricting the split distribution of $k$'s in $\T_{\mathrm{asym}}$. We will now add these restrictions and obtain another tensor, denoted as $\T$. To proceed, we need the following claim.

\begin{claim} \label{claim:degen}
Let $T_{i,j,k}$ be a level-$\nextlvl$ component where $i + j + k = 2^{\lvl + 1}$. For any $A_1, A_2, A_3 \in [0,1]$ such that $A_1 + A_2 + A_3 = 1$ and split distributions $\splresZ^{[1]}, \splresZ^{[2]}, \splresZ^{[3]}, \splresZ$ satisfying $A_1 \splresZ^{[1]} + A_2 \splresZ^{[2]} + A_3 \splresZ^{[3]} = \splresZ$, there exists a degeneration 
\[
  T_{i,j,k}^{\otimes n}[\splresZ] \;\degen\;  T_{i,j,k}^{\otimes A_1 n}[\splresZ^{[1]}] \otimes T_{i,j,k}^{\otimes A_2 n}[\splresZ^{[2]}] \otimes T_{i,j,k}^{\otimes A_3 n}[\splresZ^{[3]}].
  \numberthis \label{eq:claim_degen}
\]
\end{claim}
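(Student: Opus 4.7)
The plan is to realize the degeneration as a pure zeroing-out of Z-variables, using the natural identification
\[
  T_{i,j,k}^{\otimes n} \;\cong\; T_{i,j,k}^{\otimes A_1 n} \otimes T_{i,j,k}^{\otimes A_2 n} \otimes T_{i,j,k}^{\otimes A_3 n}
\]
that splits the $n$ tensor-power positions into three consecutive regions of sizes $A_1 n$, $A_2 n$, $A_3 n$. Under this identification, a level-$\lvl$ Z-block in the product tensor (i.e., a product of one level-$\lvl$ Z-block from each factor) corresponds to exactly one level-$\lvl$ Z-block in $T_{i,j,k}^{\otimes n}$ whose index sequence $\hat K \in \{0,1,\ldots,2^{\lvl}\}^{2n}$ is obtained by concatenating the three regional index sub-sequences $\hat K^{[1]}, \hat K^{[2]}, \hat K^{[3]}$. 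The same remark applies to X- and Y-blocks, but since neither side restricts X or Y variables, these are identical on both sides.

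The key observation is the following compatibility. Suppose $\hat K^{[r]}$ is consistent with $\splresZ^{[r]}$ on region $r$, for every $r \in \{1,2,3\}$. Then for any $k_l \in \{0, 1, \ldots, 2^{\lvl - 1}\}$,
\[
  \abs{\{t \in [n] : \hat K_{2t-1} = k_l\}}
  \;=\; \sum_{r=1}^{3} \abs{\{t \in \textup{region } r : \hat K_{2t-1}^{[r]} = k_l\}}
  \;=\; \sum_{r=1}^{3} A_r n \cdot \splresZ^{[r]}(k_l)
  \;=\; \splresZ(k_l) \cdot n,
\]
where the last equality uses the hypothesis $A_1\splresZ^{[1]} + A_2\splresZ^{[2]} + A_3\splresZ^{[3]} = \splresZ$. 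Hence the combined block $Z_{\hat K}$ is consistent with $\splresZ$. In other words, the set of Z-blocks retained on the RHS is a subset of the set of Z-blocks retained on the LHS.

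With this in hand, the degeneration is obtained by the following zeroing-out: keep every X-variable and every Y-variable of $T_{i,j,k}^{\otimes n}[\splresZ]$, and among its Z-variables, zero out exactly those whose containing level-$\lvl$ Z-block does \emph{not} satisfy the regional consistency conditions for all three $\splresZ^{[r]}$'s. Since the coefficients of $T_{i,j,k}^{\otimes n}$ factor cleanly as products over the three regions, the surviving subtensor is term-by-term identical to $T_{i,j,k}^{\otimes A_1 n}[\splresZ^{[1]}] \otimes T_{i,j,k}^{\otimes A_2 n}[\splresZ^{[2]}] \otimes T_{i,j,k}^{\otimes A_3 n}[\splresZ^{[3]}]$; because we only set Z-variables to zero, no spurious ``interfering'' terms between different retained blocks can be introduced. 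As zeroing-out is a special case of degeneration, \eqref{eq:claim_degen} follows. There is no genuine obstacle here beyond this bookkeeping; the only point that requires care is to check the identification of level-$\lvl$ Z-blocks between the two sides, which is handled by the compatibility calculation above.
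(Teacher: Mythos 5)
Your proof is correct and follows essentially the same route as the paper: split the $n$ positions into three regions, and zero out from $T_{i,j,k}^{\otimes n}[\splresZ]$ every Z-block whose regional split distribution fails to match $\splresZ^{[r]}$ in some region, leaving exactly the right-hand side. Your explicit compatibility check (that any regionally consistent block is automatically consistent with $\splresZ$ via $A_1\splresZ^{[1]}+A_2\splresZ^{[2]}+A_3\splresZ^{[3]}=\splresZ$) is a useful detail that the paper's proof leaves implicit, but the argument is the same.
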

\begin{proof}
First, we partition $T_{i,j,k}^{\otimes n}$ into $T_{i,j,k}^{\otimes A_1 n} \otimes T_{i,j,k}^{\otimes A_2 n} \otimes T_{i,j,k}^{\otimes A_3 n}$. Accordingly, we can index each level-$\lvl$ Z-block in $T_{i,j,k}^{\otimes n}$ by three index sequences $K^{(1)}$, $K^{(2)}$, $K^{(3)}$: $K^{(1)} \in \midBK{0, \ldots, k}^{2A_1 n}$ satisfying $K^{(1)}_{2t-1} + K^{(1)}_{2t} = k$ for all $t \in [A_1 n]$ and similar conditions for $K^{(2)}$ and $K^{(3)}$ whose lengths are $2A_2 n$ and $2A_3 n$, respectively. Starting from $T_{i,j,k}^{\otimes n}[\splresZ]$, we zero out all the level-$\lvl$ Z-blocks indexed by $K^{(1)}, K^{(2)}, K^{(3)}$ that $\split(K^{(r)}, [A_r n]) \ne \splresZ^{[r]}$ for at least one $r \in \{1, 2, 3\}$. The obtained tensor equals the RHS of \eqref{eq:claim_degen}, which concludes the proof.
\end{proof}

One immediate corollary of this claim is
\[
  \sym_6(T_{i,j,k}^{\otimes n}[\splresZ]) \;\degen\; \sym_6(T_{i,j,k}^{\otimes A_1 n}[\splresZ^{[1]}]) \otimes \sym_6(T_{i,j,k}^{\otimes A_2 n}[\splresZ^{[2]}]) \otimes \sym_6(T_{i,j,k}^{\otimes A_3 n}[\splresZ^{[3]}]).
  \numberthis \label{eq:cor_degen}
\]

We will choose three (possibly) different Z-split distributions $\splresZ^{[1]}, \splresZ^{[2]}, \splresZ^{[3]}$. If we take exactly the same split distribution for the index $k$ ($k$ is fixed as the Z-index of the component $T_{i,j,k}$ that we analyze), but apply it to X-variables, we will denote them as $\splresX^{[1]}, \splresX^{[2]}, \splresX^{[3]}$. Similarly for Y variables. Under these notations, we define
\begin{align*}
  \T \coloneqq{} & T_{i,j,k}^{\otimes A_1n}[\splresZ^{[1]}] \otimes T_{j,k,i}^{\otimes A_2n}[\splresY^{[2]}] \otimes T_{k,i,j}^{\otimes A_3n}[\splresX^{[3]}] \otimes{} \\
                 & T_{j,i,k}^{\otimes A_1n}[\splresZ^{[1]}] \otimes T_{k,j,i}^{\otimes A_2n}[\splresX^{[2]}] \otimes T_{i,k,j}^{\otimes A_3n}[\splresY^{[3]}].
\end{align*}
Here for example, $T_{j,k,i}^{\otimes A_2 n}[\splresY^{[2]}]$ means restricting $T_{j,k,i}^{\otimes A_2 n}$ to those level-$\lvl$ Y-blocks with split distribution $\splresY^{[2]}$, which is the same distribution as $\splresZ^{[2]}$ but is applied to Y-variables. We choose to add restrictions in this way because of the following claim.

\begin{claim}
  We have
  \begin{gather*}
    \sym_3(T_{i,j,k}^{\otimes A_1 n}[\splresZ^{[1]}] \otimes T_{j,i,k}^{\otimes A_1n}[\splresZ^{[1]}]) \cong \sym_6(T_{i,j,k}^{\otimes A_1 n}[\splresZ^{[1]}]); \\
    \sym_3(T_{j,k,i}^{\otimes A_2 n}[\splresY^{[2]}] \otimes T_{k,j,i}^{\otimes A_2n}[\splresX^{[2]}]) \cong \sym_6(T_{i,j,k}^{\otimes A_2 n}[\splresZ^{[2]}]); \\
    \sym_3(T_{k,i,j}^{\otimes A_3 n}[\splresX^{[3]}] \otimes T_{i,k,j}^{\otimes A_3n}[\splresY^{[3]}]) \cong \sym_6(T_{i,j,k}^{\otimes A_3 n}[\splresZ^{[3]}]).
  \end{gather*}
\end{claim}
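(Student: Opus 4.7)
The plan is to expand both sides of each claimed isomorphism as a six-fold tensor product of ``atomic'' tensors of the form $T_{a,b,c}^{\otimes An}[\splres_\bullet]$, and then check that the two resulting multisets of factors coincide. Since tensor product is commutative up to isomorphism, matching multisets immediately gives the claimed isomorphism.

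\medskip\noindent\textbf{Expanding the two sides.} Two generic facts do most of the work. First, both $(\cdot)^{\rot}$ and $(\cdot)^{\swap}$ distribute over tensor product, so for any tensors $A,B$ one has $\sym_3(A\otimes B)\cong A\otimes A^{\rot}\otimes A^{\rot\rot}\otimes B\otimes B^{\rot}\otimes B^{\rot\rot}$. Second, by definition
\[
  \sym_6(U)\;=\;\sym_3(U)\otimes \sym_3(U)^{\swap}\;\cong\; U\otimes U^{\rot}\otimes U^{\rot\rot}\otimes U^{\swap}\otimes U^{\swap\,\rot}\otimes U^{\swap\,\rot\rot}.
\]
So each side is a tensor product of six atoms, and the task reduces to identifying those atoms.

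\medskip\noindent\textbf{Tracking the splitting restriction under rotation and swap.} The crucial observation is that the restriction $[\splres_\bullet]$ lives on a particular axis (X, Y, or Z) of a power of $T_{i,j,k}$, and that axis is simply carried along by $\rot$ and $\swap$. Concretely, if $U=T_{i,j,k}^{\otimes m}[\splresZ]$ (a restriction on the Z-axis, which carries the index $k$), then the Z-axis of $U$ becomes the X-axis of $U^{\rot}$ and the Y-axis of $U^{\rot\rot}$; similarly, $\swap$ exchanges the X and Y-axes and leaves Z alone. Using the identifications $T_{i,j,k}^{\rot}\cong T_{j,k,i}$, $T_{i,j,k}^{\rot\rot}\cong T_{k,i,j}$, $T_{i,j,k}^{\swap}\cong T_{j,i,k}$, etc., I would compile a short dictionary such as
\begin{align*}
  U^{\rot}&\cong T_{j,k,i}^{\otimes m}[\splresY],\qquad &U^{\rot\rot}&\cong T_{k,i,j}^{\otimes m}[\splresX],\\
  U^{\swap}&\cong T_{j,i,k}^{\otimes m}[\splresZ],\qquad &U^{\swap\,\rot}&\cong T_{i,k,j}^{\otimes m}[\splresY],\qquad &U^{\swap\,\rot\rot}&\cong T_{k,j,i}^{\otimes m}[\splresX],
\end{align*}
and analogously for inputs of the form $T_{j,k,i}^{\otimes m}[\splresY]$ or $T_{k,i,j}^{\otimes m}[\splresX]$, where the restriction simply follows the axis carrying the index $k$ through the permutation.

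\medskip\noindent\textbf{Matching the six atoms in each identity.} With this dictionary in hand, each of the three isomorphisms is verified by an explicit listing. For the first identity, take $U=T_{i,j,k}^{\otimes A_1 n}[\splresZ^{[1]}]$; then $T_{j,i,k}^{\otimes A_1 n}[\splresZ^{[1]}]\cong U^{\swap}$, so the LHS expands as $U,U^{\rot},U^{\rot\rot},U^{\swap},U^{\swap\,\rot},U^{\swap\,\rot\rot}$, which is exactly $\sym_6(U)$. For the second identity, take $U=T_{i,j,k}^{\otimes A_2 n}[\splresZ^{[2]}]$; then $T_{j,k,i}^{\otimes A_2 n}[\splresY^{[2]}]\cong U^{\rot}$ and $T_{k,j,i}^{\otimes A_2 n}[\splresX^{[2]}]\cong U^{\swap\,\rot\rot}$, and writing out the three rotations of each of these two atoms yields precisely the six atoms of $\sym_6(U)$ (in a different order). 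The third identity is symmetric: with $U=T_{i,j,k}^{\otimes A_3 n}[\splresZ^{[3]}]$, one checks $T_{k,i,j}^{\otimes A_3 n}[\splresX^{[3]}]\cong U^{\rot\rot}$ and $T_{i,k,j}^{\otimes A_3 n}[\splresY^{[3]}]\cong U^{\swap\,\rot}$, so again the six rotated copies match $\sym_6(U)$.

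\medskip\noindent\textbf{The hard part} is nothing deep: it is just the bookkeeping of where the index $k$ (equivalently, where the restriction) sits after each element of the six-element permutation group generated by $\rot$ and $\swap$. Once the dictionary in the second paragraph is written down carefully and verified against the definitions of $\rot$ and $\swap$, all three identities become a matter of side-by-side comparison of two lists of six tensor factors.
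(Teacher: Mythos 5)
Your proposal is correct and follows essentially the same route as the paper: expand $\sym_3$ of the two-factor product into six axis-permuted copies, track the splitting restriction along the axis carrying the index $k$ under $\rot$ and $\swap$, and observe the resulting six factors are exactly those of $\sym_6(T_{i,j,k}^{\otimes A_rn}[\splresZ^{[r]}])$. The paper only writes out the second identity explicitly and appeals to symmetry for the other two, whereas you make the bookkeeping dictionary explicit; the content is the same.
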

\begin{proof}
  For example, the second equation holds because
  \begin{align*}
    \sym_3(T_{j,k,i}^{\otimes A_2 n}[\splresY^{[2]}] \otimes T_{k,j,i}^{\otimes A_2 n}[\splresX^{[2]}])
    \;\cong{}\; & T_{j,k,i}^{\otimes A_2 n}[\splresY^{[2]}] \otimes T_{k,i,j}^{\otimes A_2 n}[\splresX^{[2]}] \otimes T_{i,j,k}^{\otimes A_2 n}[\splresZ^{[2]}] \otimes{} \\
    & T_{k,j,i}^{\otimes A_2 n}[\splresX^{[2]}] \otimes T_{j,i,k}^{\otimes A_2 n}[\splresZ^{[2]}] \otimes T_{i,k,j}^{\otimes A_2 n}[\splresY^{[2]}] \\
    \;\cong{}\; & \sym_6(T_{i,j,k}^{A_2 n}[\splresZ^{[2]}]).
  \end{align*}
  The other two equations follow from similar calculations.
\end{proof}

Together with \eqref{eq:cor_degen}, this proves our desired statement below.

\begin{claim}\label{claim:degen-global}
There exists a generation from $\sym_6(T_{i,j,k}^{\otimes n}[\splresZ])$ to $\sym_3(\T)$. As a result, $\valsix(T_{i,j,k}, \splresZ)^{6n} \geq \valthree(\T)^{3}$.
\end{claim}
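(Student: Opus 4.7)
The plan is a two-step argument. First, I would establish the stated degeneration $\sym_6(T_{i,j,k}^{\otimes n}[\splresZ]) \degen \sym_3(\T)$; second, translate this into the value inequality via standard properties of $V^{(6)}_\tau$ and $\valsix$.

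For the degeneration, the key observation is that the rotation operator commutes with $\otimes$, so $\sym_3$ is multiplicative: $\sym_3(A \otimes B) \cong \sym_3(A) \otimes \sym_3(B)$. Applied to the six factors of $\T$, this lets me regroup $\sym_3(\T)$ into three pairs, each of which the preceding claim identifies with a single $\sym_6$ of a restricted-splitting component:
\[
  \sym_3(\T) \;\cong\; \bigotimes_{r=1}^{3} \sym_6\bigl(T_{i,j,k}^{\otimes A_r n}[\splresZ^{[r]}]\bigr).
\]
Composing this isomorphism with equation \eqref{eq:cor_degen}, which supplies a degeneration $\sym_6(T_{i,j,k}^{\otimes n}[\splresZ]) \degen \bigotimes_r \sym_6(T_{i,j,k}^{\otimes A_r n}[\splresZ^{[r]}])$, yields the desired degeneration.

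For the value inequality, I would unfold the definitions. By definition of $\valthree(\T)$, for every $\epsilon > 0$ there exists some $m$ such that $\sym_3(\T)^{\otimes m}$ degenerates to a direct sum of matrix multiplications with total $\tau$-volume at least $(\valthree(\T) - \epsilon)^{3m}$. Taking the $m$-th tensor power of the degeneration from the previous step and composing gives the same direct sum starting from $\sym_6(T_{i,j,k}^{\otimes n}[\splresZ])^{\otimes m}$. The definition of $V^{(6)}_\tau$, followed by letting $\epsilon \to 0$, then yields $V^{(6)}_\tau(T_{i,j,k}^{\otimes n}[\splresZ]) \geq \valthree(\T)^{1/2}$. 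Finally, a standard super-multiplicativity argument---$T_{i,j,k}^{\otimes nm}[\splresZ]$ degenerates to $(T_{i,j,k}^{\otimes n}[\splresZ])^{\otimes m}$ by zeroing out the Z-blocks whose per-chunk split distribution fails to match $\splresZ$, and then $V^{(6)}_\tau$ is super-multiplicative---implies $\valsix(T_{i,j,k}, \splresZ)^n \geq V^{(6)}_\tau(T_{i,j,k}^{\otimes n}[\splresZ])$. Raising both sides to the sixth power concludes $\valsix(T_{i,j,k}, \splresZ)^{6n} \geq \valthree(\T)^3$.

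I do not expect any deep obstacle, since the preceding two results in the excerpt do essentially all the heavy lifting. The main care needed is bookkeeping: when regrouping the six factors of $\T$, one must pair them correctly so that, in each pair, the X- and Y-split restrictions match the orientation expected by the preceding claim (for instance, the pair $(T_{j,k,i}^{\otimes A_2 n}[\splresY^{[2]}], T_{k,j,i}^{\otimes A_2 n}[\splresX^{[2]}])$ rotates into $\sym_6(T_{i,j,k}^{\otimes A_2 n}[\splresZ^{[2]}])$). This is mechanical but is the most likely place to introduce an indexing error.
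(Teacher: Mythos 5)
Your proposal is correct and follows essentially the same route as the paper: regroup $\sym_3(\T)$ into the three pairs using multiplicativity of $\sym_3$ under $\otimes$, identify each pair with $\sym_6\bigl(T_{i,j,k}^{\otimes A_r n}[\splresZ^{[r]}]\bigr)$ via the preceding claim, and compose with \eqref{eq:cor_degen}. The paper leaves the passage from the degeneration to the value inequality implicit, and your unfolding of the definitions (including the super-multiplicativity step for $T_{i,j,k}^{\otimes nm}[\splresZ] \degen (T_{i,j,k}^{\otimes n}[\splresZ])^{\otimes m}$) fills that in correctly.
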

To get our desired lower bound on $\valsix(T_{i,j,k}, \splresZ)$, we only have to lower bound $V^{(3)}_\tau(\T)$, or equivalently, degenerate $\T_{\textup{final}} \defeq \sym_3(\T)$ into matrix multiplication tensors.

\paragraph*{Specification.} Formally, the inputs to the algorithm are:
\begin{itemize}
\item Non-negative real numbers $A_1, A_2, A_3$ that sum up to $1$.
\item For each level-$\lvl$ component $T_{i',j',k'}$, we will give three (possibly different) Z-marginal split distributions of $k'$, $\splres^{(1)}_{i',j',k'}, \splres^{(2)}_{i',j',k'}, \splres^{(3)}_{i',j',k'}$, as part of the input, along with the corresponding lower bounds on restrict-splitting values $\valsix(T_{i',j',k'}, \splres^{(1)}_{i',j',k'})$, $\valsix(T_{i',j',k'}, \splres^{(2)}_{i',j',k'})$, $\valsix(T_{i',j',k'}, \splres^{(3)}_{i',j',k'})$.
\item Three Z-marginal split distributions of $k$, namely $\splresZ^{[1]}, \splresZ^{[2]}, \splresZ^{[3]}$, such that
  \[
    A_1 \splresZ^{[1]} + A_2 \splresZ^{[2]} + A_3 \splresZ^{[3]} = \splresZ.
  \]
\end{itemize}
Note that $\splres^{(1)}_{i',j',k'}, \splres^{(2)}_{i',j',k'}, \splres^{(3)}_{i',j',k'}$ specify how $k'$ splits; they are one level lower than the distributions $\splresZ^{[1]}, \splresZ^{[2]}, \splresZ^{[3]}$ which specify how $k$ splits. We are allowing three different split distributions for each level-$\lvl$ component $T_{i',j',k'}$ because it is natural to allows them to split differently when they are in different factors of $\T$.

Our algorithm will optimize the following parameter:
\begin{itemize}
\item Joint split distributions $\alpha^{(1)}, \alpha^{(2)}, \alpha^{(3)}$ of components $(i,j,k), (j,k,i), (k,i,j)$, respectively, such that the Z-marginal of $\alpha^{(1)}$ equals $\splresZ^{[1]}$, the Y-marginal of $\alpha^{(2)}$ equals $\splresY^{[2]}$, and the X-marginal of $\alpha^{(3)}$ equals $\splresX^{[3]}$.
\end{itemize}
In the end, the algorithm will output:
\begin{itemize}
\item A lower bound on the restricted-splitting value $\valsix(T_{i,j,k}, \splresZ)$. Here $T_{i,j,k}$ is a level-$\nextlvl$ component ($i + j + k = 2^{\lvl + 1}$).
\end{itemize}

\subsection{Algorithm Description} \label{subsec:component-algo}

The algorithm in this section follows the same steps as \Cref{sec:global-algo} but with a few small twists. These twists are necessary to adapt our idea to this tensor $\T$ we constructed.

\paragraph*{Step 1: Lower bound the restricted-splitting values of level-$(\lvl - 1)$ components.} As these lower bounds are given in the input, this step is trivial.

\paragraph*{Step 2: Choose distributions.} We will choose the parameters $\alpha^{(1)}, \alpha^{(2)}, \alpha^{(3)}$, which are split distributions of level-$\nextlvl$ components $(i, j, k)$, $(j, k, i)$, and $(k, i, j)$, respectively.

\paragraph*{Step 3: Asymmetric Hashing.} Same has before, we have a modulus $M$ to be specified later. Then we apply the asymmetric hashing in \cref{sec:hashing} to the tensor $\T$. We start by introducing some notations.

Recalling that
\begin{align*}
  \T \coloneqq{} & T_{i,j,k}^{\otimes A_1n}[\splresZ^{[1]}] \otimes T_{j,k,i}^{\otimes A_2n}[\splresY^{[2]}] \otimes T_{k,i,j}^{\otimes A_3n}[\splresX^{[3]}] \otimes {} \\
                 & T_{j,i,k}^{\otimes A_1n}[\splresZ^{[1]}] \otimes T_{k,j,i}^{\otimes A_2n}[\splresX^{[2]}] \otimes T_{i,k,j}^{\otimes A_3n}[\splresY^{[3]}]
\end{align*}
is a subtensor of $(\CW_q^{\otimes 2^{\lvl}})^{\otimes 2n}$, the variable blocks and index sequences of $\T$ are defined in the same way as those in the CW tensor power. Let $I \in \midBK{0, 1, \ldots, 2^{\lvl}}^{4n}$ be a level-$\lvl$ index sequence of $(\CW_q^{\otimes 2^\lvl})^{\otimes 2n}$, identifying an X-block $X_I$. If $X_I$ is present in the subtensor $\T$ of the CW tensor power, we also call it a level-$\lvl$ X-block of $\T$. Same for level-$\lastlvl$ blocks. Similar to \cref{sec:global_value}, we refer to level-$\lvl$ blocks as \emph{large blocks} and level-$\lastlvl$ blocks as \emph{small blocks} throughout this section.

Every level-$\lvl$ index sequence $I$ identifying an X-block of $\T$ can be divided into 6 parts corresponding to 6 factors of $\T$. We refer to them as \emph{regions}:

\begin{definition}[Regions]
  Let $I \subseteq \midBK{0, 1, \ldots, 2^{\lvl}}^{4n}$ be a level-$\lvl$ index sequence of $(\CW_q^{\otimes 2^{\lvl}})^{\otimes 2n}$. We divide $I$ into 6 subintervals which correspond to the 6 factors in $\T$. We call them \emph{regions} of $I$, numbered from 1 to 6. Let $S^{(r)} \subseteq [4n]$ ($r = 1, \ldots, 6$) denote the set of positions within region $r$, i.e.,
  \begin{align*}
    S^{(1)} &\defeq [1, \; 2A_1 n], & S^{(4)} &\defeq [2n + 1, \; 2(1 + A_1)n], \\
    S^{(2)} &\defeq [2A_1n + 1, \; 2(A_1 + A_2) n], & S^{(5)} &\defeq [2(1 + A_1)n + 1, \; 2(1 + A_1 + A_2)n], \\
    S^{(3)} &\defeq [2(A_1 + A_2)n + 1, \; 2n], & S^{(6)} &\defeq [2(1 + A_1 + A_2)n + 1, \; 4n].
  \end{align*}
  Suppose level-$\lvl$ blocks $X_I, Y_J, Z_K$ exist in $\T$. For every region $r$ and position $2t \in S^{(r)}$, we have $(I_{2t-1} + I_{2t}, J_{2t-1} + J_{2t}, K_{2t-1} + K_{2t}) = \bk{i^{(r)}, j^{(r)}, k^{(r)}}$, where
  \begin{align*}
    \bigbk{i^{(1)}, j^{(1)}, k^{(1)}} &\defeq (i, j, k), & \bigbk{i^{(4)}, j^{(4)}, k^{(4)}} &\defeq (j, i, k), \\
    \bigbk{i^{(2)}, j^{(2)}, k^{(2)}} &\defeq (j, k, i), & \bigbk{i^{(5)}, j^{(5)}, k^{(5)}} &\defeq (k, j, i), \\
    \bigbk{i^{(3)}, j^{(3)}, k^{(3)}} &\defeq (k, i, j), & \bigbk{i^{(6)}, j^{(6)}, k^{(6)}} &\defeq (i, k, j).
  \end{align*}
\end{definition}

The last 3 regions are obtained from the first 3 regions by swapping the order of X and Y dimensions, so we naturally define the following:

\begin{definition}[$\alpha^{(4)}, \alpha^{(5)}, \alpha^{(6)}$ and $A_4, A_5, A_6$.]
  Recall the parameters of the algorithm $\alpha^{(1)}, \alpha^{(2)}, \alpha^{(3)}$ are joint split distributions of components $(i,j,k)$, $(j,k,i)$, and $(k,i,j)$, respectively. We define $\alpha^{(4)}, \alpha^{(5)}, \alpha^{(6)}$ to be split distributions of $(j,i,k)$, $(k,j,i)$, $(i,k,j)$ obtained by swapping the X and Y dimensions of $\alpha^{(1)}, \alpha^{(2)}, \alpha^{(3)}$, respectively. Formally, $\alpha^{(r+3)}(i', j', k') = \alpha^{(r)}(j', i', k')$. We also let $A_{r + 3} \defeq A_r$ and $\splres^{(r + 3)}_{i', j', k'} \defeq \splres^{(r)}_{i', j', k'}$ for $r = 1, 2, 3$.
\end{definition}

$\T$ contains many large triples $(X_I, Y_J, Z_K)$. By applying the asymmetric hashing method, we retain some large triples that do not share X or Y-blocks. Moreover, every retained triple $(X_I, Y_J, Z_K)$ is consistent with the split distribution $\alpha^{(r)}$ in all regions $r \in [6]$. Formally:

\begin{definition}[Consistency with distributions]
  \label{def:consistency-component}
  We say a large (level-$\lvl$) triple $(X_I, Y_J, Z_K)$ is consistent with the split distributions $\alpha^{(r)}$, if for every region $r \in [6]$ and level-$\lvl$ component $(i', j', k')$, there is
  \[
    \frac{1}{|S^{(r)}|/2}|\{2t \in S^{(r)}, t \in \Z \mid (I_{2t-1}, J_{2t-1}, K_{2t-1}) = (i', j', k')\}| \;=\; \alpha^{(r)}(i', j', k').
  \]
  Moreover, we say a large X-block $X_I$ is consistent with marginal split distributions $\alphx^{(r)}$, if for every region $r \in [6]$ and $i' \in [0, 2^{\lvl}]$, there is
  \[
    \frac{1}{|S^{(r)}|/2}\BK{2t \in S^{(r)}, t \in \Z \mid I_{2t-1} = i'} = \alphx^{(r)}(i').
  \]
  Same for Y and Z-blocks.
\end{definition}

(Similar to \Cref{sec:global_value}, to ensure such consistency with joint distribution, we may have to suffer hash loss when the joint distributions $\alpha^{(1)},\alpha^{(2)},\alpha^{(3)}$ are not the maximum entropy distributions given their marginals. We will take such hash loss into account in \Cref{sec:component-analysis}.)

\paragraph*{Additional Zeroing-Out Step 1.}
To state this step, we need to first define compatibility. Recall that a small (level-$\lastlvl$) X-block is identified by an index sequence $\hat I \in \midBK{0, \ldots, 2^{\lvl - 1}}^{8n}$. Denote this small block by $X_{\hat I}$, and similarly, $Y_{\hat J}$ and $Z_{\hat K}$ for small Y and Z-blocks. The meaning of compatibility is the same as \cref{sec:global_value}; however, since different regions have different split distributions, we have to discuss different regions separately.

Fix a triple $(X_I, Y_J, Z_K)$. For region $r \in [6]$, we use $S^{(r)}_{i', j', k'}$ to denote the set of positions $t \in S^{(r)}$ such that $(I_t, J_t, K_t) = (i', j', k')$; use $S^{(r)}_{*, *, k'}$ to denote the set of positions $t \in S^{(r)}$ such that $K_t = k'$. For a small block $Z_{\hat K} \in Z_K$ and a subset $S \subseteq S^{(r)}_{*, *, k'}$ for some $k'$, we use $\split_{k'}(\hat K, S)$ to denote the distribution of $\hat K_{2t-1}$ over all $t \in S$. Since for each $t \in S$, the index $K_t$ splits into $\hat K_{2t-1} + \hat K_{2t}$, this distribution $\split_{k'}(\hat K, S)$ captures the distribution of how those $K_t$ split for $t \in S$. We omit the index $k'$ when it is clear from the context and simply write $\split(\hat{K}, S)$.

For each $k'$, we define the following average split distribution:
\begin{gather} \label{equ:avg-alph}
  \splresavg_{*,*,k'}^{(r)} = \frac{1}{\alphz^{(r)}(k') + \alphz^{(r)}(k^{(r)} \!-\! k')} \cdot \sum_{i' + j' = 2^\lvl - k'} \bk{\alpha^{(r)}(i', j', k') + \alpha^{(r)}(\complementshape)} \cdot \splres^{(r)}_{i',j',k'}.
\end{gather}

\begin{definition}[Compatibility]
  A small block $Z_{\hat K} \in Z_K$ is said to be \emph{compatible} with a large triple $(X_I, Y_J, Z_K)$ if the following two conditions are satisfied:
  \begin{enumerate}
  \item{\label{item:component-compat}} For each $k' \in \midBK{0, \ldots, 2^\lvl}$ and $r \in [6]$, $\split(\hat K, S^{(r)}_{*, *, k'}) = \splresavg^{(r)}_{*, *, k'}$.
  \item For each level-$\lvl$ component $(i', j', k')$ with $i' = 0$ or $j' = 0$, and for all $r \in [6]$, $\split(\hat K, S^{(r)}_{i',j',k'}) = \splres^{(r)}_{i',j',k'}$.
  \end{enumerate}
\end{definition}

\noindent
Based on this definition, we do the following zeroing out on small blocks:
\begin{itemize}
\item For each $Z_{\hat K}$, we check \cref{item:component-compat} and zero out $Z_{\hat K}$ if the condition is not satisfied.
\item For each $X_{\hat{I}} \in X_I$, since $X_I$ (if retained) is in a unique triple $(X_I, Y_J, Z_K)$, we can define our $S^{(r)}_{i',j',k'}$ w.r.t. that triple. For all level-$\lvl$ components $(i',0,k')$, we define
  \[
    \splres^{(r,\textup{X})}_{i',0,k'}(i'_l) \;\defeq\; \splres^{(r)}_{i',0,k'}(2^{\lvl-1} - i'_l).
  \]
  $\splres^{(r, \textup{X})}$ is the X-split distribution corresponding to the Z-split distribution $\tilde\alpha^{(r)}_{i',0,k'}$. If for any $(i',0,k')$ and $r \in [6]$, $\split(\hat I, S^{(r)}_{i', 0, k'}) \ne \splres^{(r,\textup{X})}_{i', 0, k'}$, we then zero out this $X_{\hat{I}}$.
\item For each $Y_{\hat{J}}$, we perform a similar zeroing-out as with $X_{\hat{I}}$. For all level-$\lvl$ components $(0,j',k')$, we define
  \[
    \splres^{(r,\textup{Y})}_{0,j',k'}(j'_l) \;\defeq\; \splres^{(r)}_{0,j',k'}(2^{\lvl-1} - j'_l).
  \]
  Suppose for some component $(0, j', k')$ and region $r \in [6]$, $\split(\hat J, S^{(r)}_{0, j', k'}) \neq \splres^{(r,\textup{Y})}_{0,j',k'}$, we then zero out $Y_{\hat{J}}$.
\end{itemize}

Same as \cref{sec:global_value}, it is easy to verify the following claim:
\begin{claim}
  \label{lemma:triple_implies_compatible_component}
  For our constructed tensor $\T$, after Additional Zeroing-Out Step 1, a remaining small block $Z_{\hat{K}}$ can form a small triple with $X_{\hat{I}} \in X_I$, $Y_{\hat{J}} \in Y_J$ only when $Z_{\hat{K}}$ is compatible with the large triple $(X_I, Y_J, Z_K)$.
\end{claim}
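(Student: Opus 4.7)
\textbf{Proof Proposal for \Cref{lemma:triple_implies_compatible_component}.}
The plan is to mirror the argument used for the analogous claim in the global-value analysis (\Cref{lemma:triple_implies_compatible}), adapted to the six-region structure of $\T$. First I would observe that condition~(1) of the compatibility definition is automatic: any $Z_{\hat K}$ that survives Additional Zeroing-Out Step 1 satisfies $\split(\hat K, S^{(r)}_{*,*,k'}) = \splresavg^{(r)}_{*,*,k'}$ for every $k'\in\{0,\dots,2^\lvl\}$ and every region $r\in[6]$ by the first zeroing-out rule. Thus only condition~(2) requires work, and the argument will be purely local in the region index $r$.

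Next, fix a remaining small triple $(X_{\hat I}, Y_{\hat J}, Z_{\hat K})$ inside $(X_I, Y_J, Z_K)$ and fix a level-$\lvl$ component $(i',j',k')$ with $i'=0$ or $j'=0$ and a region $r\in[6]$. By symmetry between the two subcases, assume $j'=0$. For each position $t\in S^{(r)}_{i',0,k'}$ we have $J_t = 0$, which forces $\hat J_{2t-1} = \hat J_{2t} = 0$ (since $\hat J_{2t-1}+\hat J_{2t}=J_t=0$ and all entries are non-negative). Combining this with the triple constraint $\hat I_s+\hat J_s+\hat K_s = 2^{\lvl-1}$ for all positions $s$, I obtain $\hat K_{2t-1} = 2^{\lvl-1}-\hat I_{2t-1}$ and similarly $\hat K_{2t} = 2^{\lvl-1}-\hat I_{2t}$ on every such position. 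Consequently
\[
\split(\hat K, S^{(r)}_{i',0,k'})(k'_l) \;=\; \split(\hat I, S^{(r)}_{i',0,k'})(2^{\lvl-1}-k'_l) \qquad \forall\, k'_l\in\{0,\dots,2^{\lvl-1}\}.
\]

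Now I invoke the second zeroing-out rule: because $X_{\hat I}$ survives Step~1, it must satisfy $\split(\hat I, S^{(r)}_{i',0,k'}) = \splres^{(r,\textup{X})}_{i',0,k'}$, which by construction equals $\splres^{(r)}_{i',0,k'}(2^{\lvl-1}-\cdot)$. Plugging this into the previous display cancels the $2^{\lvl-1}-\cdot$ twice and yields $\split(\hat K, S^{(r)}_{i',0,k'}) = \splres^{(r)}_{i',0,k'}$, which is exactly condition~(2) of compatibility in this case. The case $i'=0$ is symmetric, using the rule on $Y_{\hat J}$ and the corresponding $\splres^{(r,\textup{Y})}$. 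Repeating the argument for every region $r$ and every component with a zero index shows that all of condition~(2) holds, so $Z_{\hat K}$ is compatible with $(X_I,Y_J,Z_K)$.

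I do not foresee any serious obstacle: the proof is bookkeeping that parallels \Cref{lemma:triple_implies_compatible} in the global section. The only point requiring mild care is keeping the six regions separate, since the split restrictions $\splres^{(r)}_{i',j',k'}$ and the average distributions $\splresavg^{(r)}_{*,*,k'}$ depend on $r$; but because the zeroing-out rules themselves are stated per region, the region index just rides along and the derivation factors through it cleanly.
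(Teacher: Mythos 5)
Your proposal is correct and matches the paper's argument: the paper proves this claim by pointing back to the identical reasoning used for the global-value version (Claim~\ref{lemma:triple_implies_compatible}), namely that condition (1) of compatibility is guaranteed by the first zeroing-out rule, and condition (2) follows because a zero index $i'=0$ or $j'=0$ forces the small Z-split on $S^{(r)}_{i',j',k'}$ to be the reversal of the surviving X- or Y-block's split, which the second and third zeroing-out rules pin to $\splres^{(r,\textup{X})}$ or $\splres^{(r,\textup{Y})}$. Your only cosmetic deviation is arguing directly rather than by contradiction, and carrying the region index $r$ explicitly, both of which are fine.
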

We use the notation $\mathcal T^{(1)}$ to denote the tensor after Additional Zeroing-Out Step 1.

\paragraph*{Additional Zeroing-Out Step 2.} Similar to \cref{sec:global_value}, we make the following definition.

\begin{definition}
  \label{def:useful_c}
  A small block $Z_{\hat K}$ is said to be \emph{useful} for a large triple $(X_I, Y_J, Z_K)$ if the following conditions are met:
  \begin{itemize}
  \item $Z_{\hat K} \in Z_K$, and $(X_I, Y_J, Z_K)$ is consistent with $\alpha^{(r)}$ for all $r \in [6]$.
  \item For each level-$\lvl$ component $(i',j',k')$ and region $r \in [6]$, $\split(\hat K, S^{(r)}_{i',j',k'}) = \tilde\alpha^{(r)}_{i',j',k'}$.
  \end{itemize}
\end{definition}

\noindent
Based on this definition, we zero out any small Z-block $Z_{\hat{K}} \in Z_K$ such that
\begin{itemize}
\item $Z_{\hat{K}}$ is compatible with more than one triple, or
\item $Z_{\hat{K}}$ is not useful for the unique triple $(X_I, Y_J, Z_K)$ that it is compatible with.
\end{itemize}
After such zeroing out, we call the obtained tensor $\T^{(2)}$. We know it is the direct sum of disjoint triples, i.e., $\T^{(2)} = \bigoplus_{(X_I, Y_J, Z_K)} \T^{(2)}\vert_{X_I, Y_J, Z_K}$ (due to the first rule above).

For any retained triple $(X_I, Y_J, Z_K)$, in the ideal case where no small blocks $Z_{\hat{K}}$ are zeroed out due to the first rule mentioned above, we know $\T^{(2)} \vert_{X_I, Y_J, Z_K}$ is isomorphic to
\begin{align*}
  \T^* \defeq{} &(\T^*)^{(1)} \otimes (\T^*)^{(2)} \otimes (\T^*)^{(3)} \otimes{} \\
                &(\T^*)^{(4)} \otimes (\T^*)^{(5)} \otimes (\T^*)^{(6)}, \quad \textup{where}
\end{align*}
\[
  (\T^*)^{(r)} \defeq
  \bigotimes_{i' + j' + k' = 2^\lvl} T_{i',j',k'}^{\otimes (\alpha^{(r)}(i', j', k') + \alpha^{(r)}(\complementshape)) \cdot A_rn} \bigBk{\splres^{(r)}_{i',j',k'}}.
\]
For ease of notation, we define
\[
  \beta^{(r)}(i',j',k') \;\defeq\; \frac{1}{2}\left(\alpha^{(r)}(i', j', k') + \alpha^{(r)}(\complementshape) \right).
\]
It represents the proportion of level-$\lvl$ components $(i', j', k')$ within each region, i.e., $|S^{(r)}_{i',j',k'}| = 2A_r n \cdot \beta^{(r)}(i', j', k')$. Then, we can rewrite
\[
  \T^* \;=\; \bigotimes_{r=1}^6 \bigotimes_{i' + j' + k' = 2^\lvl} T_{i',j',k'}^{\otimes \beta^{(r)}(i', j', k') \cdot 2A_rn} \bigBk{\splres^{(r)}_{i',j',k'}}.
  \numberthis \label{eq:def_tstar_c}
\]
One can see that $\T^*$ matches \cref{def:broken_standard_form_tensor} with parameters
\[
  \BK{2 A_r n \cdot \beta^{(r)}(i', j', k'), \, i', j', k', \, \tilde\alpha^{(r)}_{i',j',k'}}_{i'+j'+k'=2^\lvl, \; r\in[6]}.
\]
Hence we can apply the hole lemma in \cref{sec:hole_lemma} to fix the holes in $\T^{(2)} \vert_{X_I, Y_J, Z_K}$ (similar to \cref{sec:global_value}, the first zeroing-out rule above produces holes in Z-blocks of $\T^{(2)} \vert_{X_I, Y_J, Z_K}$).

\paragraph*{Step 4: Fix the holes and degenerate each triple independently.}

For each retained triple $(X_I, Y_J, Z_K)$, $\T^{(2)} \vert_{X_I, Y_J, Z_K}$ is a broken copy of the standard form tensor $\T^*$. The \emph{fraction of non-holes} in $\T^{(2)} \vert_{X_I, Y_J, Z_K}$, as defined in \cref{def:broken_standard_form_tensor}, is denoted as $\fracnonholeIJK$. Letting
\[
  m' \defeq \floor{\frac{\sum_{(X_I, Y_J, Z_K)} \fracnonholeIJK}{\bk{\sum_{i' + j' + k' = 2^{\lvl}, \; r \in [6]} 2 A_r n \cdot \beta^{(r)}(i', j', k')} \cdot \lvl + 2}} = \floor{\frac{\sum_{(X_I, Y_J, Z_K)} \fracnonholeIJK}{4n \lvl + 2}},
\]
it follows from \cref{cor:hole_lemma} that
\[
  \bigoplus_{(X_I, Y_J, Z_K) \textup{ remaining}} \mathcal T^{(2)} \vert_{X_I, Y_J, Z_K} \; \degen \; (\mathcal T^*)^{\oplus m'}.
\]

Based on the given bounds on restricted-splitting values of level-$\lvl$ components, \cref{eq:def_tstar_c} implies that $\valsix(\T^*) \ge \prod_{r=1}^6 \prod_{i' + j' + k' = 2^{\lvl}} \valsix(T_{i', j', k'}, \splres^{(r)}_{i', j', k'})^{2A_r n \cdot \beta^{(r)}(i', j', k')}$. As conclusion,
\begin{equation}
  \label{eq:result_mid_c}
  \valthree(\T) \;\ge\; v \;\defeq\; m' \cdot \prod_{r=1}^6 \prod_{i' + j' + k' = 2^{\lvl}} \valsix(T_{i', j', k'}, \, \splres^{(r)}_{i', j', k'})^{2A_r n \cdot \beta^{(r)}(i', j', k')}.
\end{equation}
It is worth noting that $\valthree(\T) = \valsix(\T)$ because $\T$ is symmetric about X and Y variables, i.e., $\sym_3(\T)^{\otimes 2} \cong \sym_6(\T)$.

\subsection{Analysis}
\label{sec:component-analysis}

The idea of analysis is again similar to \cref{sec:global_value}.

\paragraph*{Asymmetric Hashing.} First, let us specify our notations.
\begin{itemize}
\item $\numxblock = \numyblock \ge \numzblock$ represent the number of large (level-$\lvl$) X, Y, and Z-blocks that are consistent with $\alphx^{(r)}, \alphy^{(r)}, \alphz^{(r)}$ in all regions $r \in [6]$, respectively. We have $\numxblock = 2^{\bk{\sum_{r=1}^6 A_r n H(\alphx^{(r)}) + o(n)}}$ (similar for Y and Z).
\item $\numalpha$ is the number of triples $(X_I, Y_J, Z_K)$ that are consistent with $\alpha^{(r)}$ in all regions $r \in [6]$. We have $\numalpha = 2^{\bk{\sum_{r=1}^6 A_r n H(\alpha^{(r)}) + o(n)}}$. $N_{\alphx, \alphy, \alphz}$ is the number of triples $(X_I, Y_J, Z_K)$ whose marginal distributions are consistent with $\alphx^{(r)}, \alphy^{(r)}, \alphz^{(r)}$, respectively.
\item $\numretain$ represents the number of retained triples after the asymmetric hashing process.
\item Let $\pcomp$ be a parameter to be defined later. Roughly speaking, it represents the probability of a small block $Z_{\hat K}$ being compatible with a random triple.
\end{itemize}

Similar to \Cref{sec:global_analysis}, we will let $M_0 = 8 \cdot \max\bk{\frac{\numtriple}{\numxblock}, \frac{\numalpha \cdot \pcomp}{\numzblock}}$ and let $M \in [M_0, 2M_0]$ be a prime. Then we apply the asymmetric hashing with modulus $M$. We know
\[
  \E[\numretain] \ge \frac{\numalpha}{M} \cdot 2^{-o(n)} = \frac{\numalpha}{M_0} \cdot 2^{-o(n)} = \min\bk{\frac{\numalpha \cdot \numxblock}{\numtriple}, \, \frac{\numzblock}{\pcomp}} \cdot 2^{-o(n)}.
  \numberthis \label{eq:sec7_numretain}
\]

\paragraph*{Typical distribution.} Recall that in \Cref{sec:global_analysis}, we defined the typical distribution
\[\gamma(k_l, k_r) \coloneqq \sum_{\substack{i + j + k = 2^{\ell} \\ k = k_l + k_r}} \alpha(i,j,k) \cdot \splres_{i,j,k}(k_l),\]
essentially by composing the joint distribution $\alpha(i,j,k)$ of level-$\ell$ components and their Z-split distributions $\splres_{i,j,k}$.

In this section, for each region, we have a level-$\lvl$ split distribution $\alpha^{(r)}(i',j',k')$. For each region $r \in [6]$ and level-$\lvl$ component $(i',j',k')$, we have a split distribution $\splres^{(r)}_{i,j,k}$. Composing these two quantities gives the following distribution:
\[
  \gamma^{(r)}(k_1, k_2, k_3, k_4) \;\defeq \sum_{\substack{i' + j' + k' = 2^{\lvl} \\ k' = k_1 + k_2}} \alpha^{(r)}(i', j', k') \cdot
  \splres_{i',j',k'}^{(r)}(k_1) \cdot \tilde\alpha_{i^{(r)}\!-i',\; j^{(r)}\!-j',\; k^{(r)}\!-k'}^{(r)}(k_3).
  \numberthis \label{eq:typical_def}
\]
We call it the \emph{typical distribution} $\gamma^{(r)}$ in region $r$. It captures how a single (level-$\nextlvl$) index $k^{(r)}$ in the region $r$ of the index sequence is (typically) split into four level-$\lastlvl$ indices $k_1 + k_2 + k_3 + k_4$.

For a small block $Z_{\hat K}$, if $\gamma^{(r)}$ matches the frequency of occurrence of $(\hat K_{4t-3}, \hat K_{4t-2}, \hat K_{4t-1}, \hat K_{4t})$ in region $r$, we say $Z_{\hat K}$ is a \emph{typical block}. Formally, $Z_{\hat K}$ is typical if and only if for every region $r \in [6]$ and $k_1 + k_2 + k_3 + k_4 = k^{(r)}$,
\[
  \abs{\BK{t \in \mathbb{Z}, \, 2t \in S^{(r)} \mymiddle \bigbk{\hat K_{4t-3}, \hat K_{4t-2}, \hat K_{4t-1}, \hat K_{4t}} = (k_1, k_2, k_3, k_4)}} = \gamma(k_1, k_2, k_3, k_4) \cdot \frac{|S^{(r)}|}{2}.
\]
Denote by $\typicalset$ the set of typical blocks $Z_{\hat K}$ within a fixed large block $Z_K$.

Recall the definition of \emph{useful blocks} (\cref{def:useful_c}). We denote by $\usefulset$ the set of small blocks $Z_{\hat K} \in Z_K$ useful for $(X_I, Y_J, Z_K)$. It is worth noting that, fixing a large triple $(X_I, Y_J, Z_K)$, the typicalness and usefulness of a small block $Z_{\hat K} \in Z_K$ do not imply each other. The following lemma shows that typical blocks make up a non-negligible part of $\usefulset$, as expected:

\begin{restatable}{lemma}{typicalnonneg}
  \label{lemma:typical_is_nonnegl}
  Let $(X_I, Y_J, Z_K)$ be a triple consistent with $\alpha^{(r)}$ in all regions $r$. Then,
  \[
    \abs{\usefulset \cap \typicalset} \ge \abs{\usefulset} \cdot 2^{-o(n)}.
  \]
\end{restatable}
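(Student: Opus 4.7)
The main difficulty is that ``useful'' and ``typical'' impose constraints at different granularities: usefulness fixes the marginal distribution of $\hat K_{2t-1}$ position-by-position within each component type $(i',j',k')$, while typicality fixes the joint distribution of the $4$-tuple $(\hat K_{4t-3},\hat K_{4t-2},\hat K_{4t-1},\hat K_{4t})$ aggregated over all pair types. Neither condition obviously implies the other. My plan is to exhibit a subset $B \subseteq \usefulset \cap \typicalset$ whose cardinality matches $|\usefulset|$ up to a factor of $2^{o(n)}$; this immediately yields the lemma.

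To construct $B$, I refine the useful constraint per pair type. Consistency of $(X_I,Y_J,Z_K)$ with each $\alpha^{(r)}$ implies that in region $r$, a pair $(2s-1,2s)$ with $2s \in S^{(r)}$ has first-position component $(I_{2s-1},J_{2s-1},K_{2s-1}) = (i'_1,j'_1,k'_1)$ with frequency $\alpha^{(r)}(i'_1,j'_1,k'_1)$, and the second-position component is then forced to equal $(i^{(r)}-i'_1,j^{(r)}-j'_1,k^{(r)}-k'_1)$. I define $B$ to be the set of small blocks $Z_{\hat K} \in Z_K$ such that in every region $r$ and every pair type $(i'_1,j'_1,k'_1)$, restricted to pairs of this type, the frequency of $\hat K_{4s-3}$ follows $\splres^{(r)}_{i'_1,j'_1,k'_1}$ and the frequency of $\hat K_{4s-1}$ follows $\splres^{(r)}_{i^{(r)}-i'_1,j^{(r)}-j'_1,k^{(r)}-k'_1}$.

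Verifying $B \subseteq \usefulset$ is direct: positions $t \in S^{(r)}_{i',j',k'}$ decompose into odd slots (first slot of a pair with first-component $(i',j',k')$) and even slots (second slot of a pair with first-component $(i^{(r)}-i',j^{(r)}-j',k^{(r)}-k')$); both groups contribute $\hat K_{2t-1}$ distributed as $\splres^{(r)}_{i',j',k'}$, so the useful marginal is preserved upon aggregation. For $B \subseteq \typicalset$, I would compute the $4$-tuple frequency: in region $r$ it equals
\[
  \sum_{(i'_1,j'_1,k'_1):\,k_1+k_2=k'_1} \alpha^{(r)}(i'_1,j'_1,k'_1) \cdot \splres^{(r)}_{i'_1,j'_1,k'_1}(k_1) \cdot \splres^{(r)}_{i^{(r)}-i'_1,\,j^{(r)}-j'_1,\,k^{(r)}-k'_1}(k_3),
\]
which by \eqref{eq:typical_def} is exactly $\gamma^{(r)}(k_1,k_2,k_3,k_4)$.

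The main technical step, and the one requiring careful bookkeeping, is the counting comparison via \Cref{thm:stirling1}. For $|B|$, each region $r$ and pair type $(i'_1,j'_1,k'_1)$ contains $A_r n \cdot \alpha^{(r)}(i'_1,j'_1,k'_1)$ pairs and contributes two independent multinomials (one for $\hat K_{4s-3}$, one for $\hat K_{4s-1}$), giving exponent $A_r n \cdot \alpha^{(r)}(i'_1,j'_1,k'_1)\bigl[H(\splres^{(r)}_{i'_1,j'_1,k'_1}) + H(\splres^{(r)}_{i^{(r)}-i'_1,\,j^{(r)}-j'_1,\,k^{(r)}-k'_1})\bigr]$. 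Re-indexing the second summand via $(i',j',k') = (i^{(r)}-i'_1,j^{(r)}-j'_1,k^{(r)}-k'_1)$ and using $\beta^{(r)}(i',j',k') = \tfrac12\bigl(\alpha^{(r)}(i',j',k') + \alpha^{(r)}(i^{(r)}-i',j^{(r)}-j',k^{(r)}-k')\bigr)$, the total exponent collapses to $\sum_{r,(i',j',k')} 2A_r n \cdot \beta^{(r)}(i',j',k') \, H(\splres^{(r)}_{i',j',k'})$. On the other hand, $|\usefulset|$ is a product of multinomials indexed by $(r,(i',j',k'))$ with total exponent $\sum_{r,(i',j',k')} |S^{(r)}_{i',j',k'}| \cdot H(\splres^{(r)}_{i',j',k'})$, and since $|S^{(r)}_{i',j',k'}| = 2A_r n \cdot \beta^{(r)}(i',j',k')$, the two exponents agree. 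Hence $|B| = |\usefulset| \cdot 2^{\pm o(n)}$, and therefore $|\usefulset \cap \typicalset| \ge |B| \ge |\usefulset| \cdot 2^{-o(n)}$.
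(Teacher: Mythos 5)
Your overall strategy is the paper's strategy, but there is a genuine gap at the step ``$B \subseteq \typicalset$.'' As you define it, $B$ only constrains the two \emph{marginal} frequencies inside each pair type: the frequency of $\hat K_{4s-3}$ must follow $\splres^{(r)}_{i'_1,j'_1,k'_1}$ and, separately, the frequency of $\hat K_{4s-1}$ must follow $\splres^{(r)}_{i^{(r)}-i'_1,\,j^{(r)}-j'_1,\,k^{(r)}-k'_1}$. Typicality, however, is a constraint on the \emph{joint} empirical distribution of the $4$-tuple $(\hat K_{4s-3},\hat K_{4s-2},\hat K_{4s-1},\hat K_{4s})$, and within a fixed pair type this $4$-tuple is determined by the pair $(\hat K_{4s-3},\hat K_{4s-1})$. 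Correct marginals do not force this pair to be distributed as the product $\splres^{(r)}_{i'_1,j'_1,k'_1}\times\splres^{(r)}_{i^{(r)}-i'_1,\,j^{(r)}-j'_1,\,k^{(r)}-k'_1}$: a block in which the two coordinates are, say, strongly correlated inside some pair type has the right marginals (so it lies in your $B$, and indeed in $\usefulset$), yet its aggregated $4$-tuple frequencies deviate from $\gamma^{(r)}$ as given by \eqref{eq:typical_def}, so it is not typical. The ``4-tuple frequency'' you display is exactly what a block in $B$ would satisfy \emph{if} the within-type joint were a product --- which is the property you need to enforce, not one you may derive from membership in $B$.

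The fix is small and turns your argument into the paper's: redefine $B$ by the joint condition that, in every region $r$ and every pair type, the empirical distribution of $(\hat K_{4s-3},\hat K_{4s-1})$ equals the product distribution above (this is precisely condition \eqref{eq:suf_cond} in the paper's proof, there phrased over the odd positions $S^{(r,\textup{L})}_{i',j',k'}$). Then $B \subseteq \usefulset \cap \typicalset$ genuinely holds, and your counting goes through verbatim, because the multinomial for the product type has exponent $H(\mu\times\nu)=H(\mu)+H(\nu)$, i.e.\ the same total exponent $\sum_{r}\sum_{i'+j'+k'=2^\lvl} 2A_r n\,\beta^{(r)}(i',j',k')\,H\big(\splres^{(r)}_{i',j',k'}\big)$ that you matched against $\log_2\abs{\usefulset}$. (Equivalently: among all couplings with the prescribed marginals the product maximizes entropy, so passing from your marginal-constrained set to the product-constrained one costs only a $2^{o(n)}$ factor.) Aside from the paper routing through an intermediate ``strongly useful'' set and comparing it to both $\usefulset$ and $\typicalset$, the corrected argument coincides with the paper's proof.
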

We defer its proof to \cref{appendix:missing_proof_7}.

\paragraph{Probability of being compatible.} Throughout the rest of this subsection, if a triple $(X_I, Y_J, Z_K)$ is consistent with $\alpha^{(r)}$ in all regions $r \in [6]$, we say these block $X_I, Y_J, Z_K$ are \emph{matchable} to each other. Next, we define $\pcomp$ similarly to \cref{sec:global_analysis}.

\begin{definition}
  Suppose $Z_{\hat K} \in Z_K$ is a typical block, and $X_I$ is a large X-block matchable to $Z_K$ chosen uniformly at random. $\pcomp$ is defined as the probability of $Z_{\hat K}$ being compatible with $X_I, Y_J$. Due to symmetry, $\pcomp$ is independent of the chosen small block $Z_{\hat K}$ (as long as it is typical).
\end{definition}

We calculate $\pcomp$ by the following lemma.

\begin{lemma} \label{lemma:pcomp_g_c}
Let $\gamma^{(r)}$ be the typical distribution we defined over $\{0,1,\dots,2^{\ell - 1}\}^4$. Let
\[
  \beta^{(r)}(\text{+},\text{+},k') = \sum_{\substack{i',j' > 0 \\i'+j'+k'=2^\lvl}} \beta^{(r)}(i',j',k'), \qquad \tilde\alpha^{(r)}_{\text{+},\text{+},k'} = \frac{1}{\beta^{(r)}(\text{+},\text{+},k')} \cdot \sum_{\substack{i', j' > 0 \\ i'+j'+k'=2^\lvl}} \beta^{(r)}(i',j',k') \cdot \tilde\alpha^{(r)}_{i',j',k'}.
\]
Then, we have
\[\pcomp \le \alphap^{2n} \cdot 2^{o(n)}, \quad \text{where} \quad \alphap \defeq \prod_{r=1}^6 \bigbk{\alphap^{(r)}}^{A_r / 2} = \prod_{r=1}^3 \bigbk{\alphap^{(r)}}^{A_r}, \quad \text{and}\]
\[\alphap^{(r)} \defeq 2^{\bigbk{H(\alphz^{(r)}) - H(\gamma^{(r)})}} \cdot  {\prod_{\substack{i' + j' + k' = 2^\lvl \\ i' = 0 \textup{ or } j' = 0}}} \; 2^{2\beta^{(r)}(i', j', k') \cdot H\bigbk{\splres^{(r)}_{i',j',k'}}} \cdot {\prod_{k'=0}^{2^\lvl}} \; 2^{2\beta^{(r)}(\plusplusk') \cdot H\bigbk{\splres^{(r)}_{\plusplusk'}}}.\]
\end{lemma}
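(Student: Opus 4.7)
The plan is to adapt the proof of \Cref{lemma:pcomp_g} to the six-region setting. The main observation is that every compatibility and typicalness constraint is localized within a single region $r \in [6]$, so both the numerator and denominator in the counting ratio for $\pcomp$ factor as products over regions, and the per-region computation directly parallels that of \Cref{lemma:pcomp_g}.

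Concretely, by symmetry among typical blocks within a fixed large Z-block $Z_K$, I would first rewrite
\[
  \pcomp = \frac{|\{Z_{\hat K'} \in \typicalset : Z_{\hat K'} \textup{ is compatible with } X_I\}|}{|\typicalset|}
\]
for an arbitrary large X-block $X_I$ matchable to $Z_K$; this ratio is independent of $I$ by symmetry, exactly as in \Cref{lemma:pcomp_g}. Using the same ``subtraction trick'' as in \Cref{lemma:pcomp_g}, the combined typicalness-and-compatibility constraints on $\hat K$ are rewritten as a collection of conditions on pairwise-disjoint position sets within each region: $\Split(\hat K, S^{(r)}_{i',j',k'}) = \splres^{(r)}_{i',j',k'}$ for each region $r$ and level-$\lvl$ component with $i' = 0$ or $j' = 0$, and $\Split(\hat K, S^{(r)}_{\plusplusk'}) = \splresavg^{(r)}_{\plusplusk'}$ for each region $r$ and $k'$. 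This makes the numerator a product over the six regions of products of multinomial coefficients, and $|\typicalset|$ also factors as $\prod_r \binom{A_r n}{[A_r n \cdot \gamma^{(r)}(k_1,k_2,k_3,k_4)]} / \binom{A_r n}{[A_r n \cdot \alphz^{(r)}(k')]}$.

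In each region $r$, the per-region ratio mirrors the calculation in \Cref{lemma:pcomp_g} with the substitutions: outer sequence length $n \mapsto A_r n$, component distribution $\alpha \mapsto \alpha^{(r)}$, and each level-$\lvl$ component $(i',j',k')$ weighted by $2\beta^{(r)}(i',j',k') = \alpha^{(r)}(i',j',k') + \alpha^{(r)}(\complementshape)$, reflecting both orderings of the level-$\lvl$ split within each level-$\nextlvl$ position (so that $|S^{(r)}_{i',j',k'}| = 2A_r n \cdot \beta^{(r)}(i',j',k')$). Applying \Cref{thm:stirling1} to each multinomial coefficient then yields a contribution of $(\alphap^{(r)})^{A_r n + o(n)}$ from region $r$. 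Combining all six regions and using $A_{r+3} = A_r$ together with $\alphap^{(r+3)} = \alphap^{(r)}$ gives
\[
  \pcomp \le \prod_{r=1}^6 (\alphap^{(r)})^{A_r n + o(n)} = \prod_{r=1}^3 (\alphap^{(r)})^{2A_r n + o(n)} = \alphap^{2n + o(n)}.
\]

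The main obstacle will be the bookkeeping in the denominator, where one must verify that the region-$r$ factor of $|\typicalset|$ contributes exactly $A_r n(H(\gamma^{(r)}) - H(\alphz^{(r)}))$ to the exponent. This amounts to identifying the joint distribution of consecutive level-$\lvl$ Z-index pairs in region $r$ (which sum to $k^{(r)}$) via its first entry with $\alphz^{(r)}$, so that the pair distribution has the same entropy as $\alphz^{(r)}$. The factor-of-two distinction between $\alpha^{(r)}$ (indexed by ``first'' level-$\lvl$ components) and $\beta^{(r)}$ (over all level-$\lvl$ positions) must also be tracked consistently throughout the numerator count.
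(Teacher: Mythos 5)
Your plan follows the paper's proof of \Cref{lemma:pcomp_g_c} almost step for step: the same symmetrization over typical blocks and a uniformly random matchable $X_I$, the same subtraction trick turning the constraints into split conditions on disjoint position sets, the same per-region multinomial counts with $\bigabs{S^{(r)}_{i',j',k'}} = 2A_r n \cdot \beta^{(r)}(i',j',k')$, the same factorized denominator $\prod_r \binom{A_r n}{[A_r n \cdot \gamma^{(r)}(k_1,k_2,k_3,k_4)]} / \binom{A_r n}{[A_r n \cdot \alphz^{(r)}(k')]}$, and the same recombination using $A_{r+3} = A_r$ and $\alphap^{(r+3)} = \alphap^{(r)}$.

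The one step that is wrong as stated is the claim that the \emph{combined} typicalness-and-compatibility constraints are ``rewritten'' (i.e., equivalently reformulated) as the conditions $\split(\hat K, S^{(r)}_{i',j',k'}) = \splres^{(r)}_{i',j',k'}$ for $i'=0$ or $j'=0$ together with $\split(\hat K, S^{(r)}_{\plusplusk'}) = \splresavg^{(r)}_{\plusplusk'}$. That equivalence does hold in \Cref{lemma:pcomp_g}, where typicalness is itself a statement about one-level split marginals; but in the component setting typicalness is a condition on the four-tuples $(\hat K_{4t-3}, \hat K_{4t-2}, \hat K_{4t-1}, \hat K_{4t})$, and it additionally pins down the correlation between the two level-$\lvl$ halves of each level-$\nextlvl$ position, which no collection of marginal split constraints on position sets can capture. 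Your rewritten conditions are equivalent to compatibility alone, so the set they define in general strictly contains the typical-and-compatible blocks. The paper handles exactly this by relaxing the numerator from $\bigabs{\typicalset \cap \{\textup{compatible}\}}$ to $\bigabs{\{Z_{\hat K'} \in Z_K : \textup{compatible with } X_I\}}$, which is also why the lemma is stated as an inequality $\pcomp \le \alphap^{2n} \cdot 2^{o(n)}$ rather than an equality. Since the count you then perform is precisely this relaxed count, your argument does prove the stated bound once ``rewritten as'' is replaced by ``upper-bounded after dropping the typicality constraint''; but as written the equivalence claim is false, and anything requiring a matching lower bound on $\pcomp$ would not follow.
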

\begin{proof}
  We fix $Z_K$ as a large block consistent with $\alphz^{(r)}$ in all regions $r$, and let $Z_{\hat K} \in \typicalset$ be a uniformly random typical block in $Z_K$. Since $\pcomp$ is the same for all typical blocks, it also has the same value for the random block $Z_{\hat K}$. As in the definition of $\pcomp$, we let $X_{I}$ be a random large X-block that is matchable to $Z_K$, which is independent of $Z_{\hat K}$ conditioned on $Z_K$. We have
\begin{align*}
  \pcomp
  &= \Pr_{\hat K,\; I} \Bk{Z_{\hat K} \textup{ is compatible with } X_{I}} \\
  &= \E_{I} \Bk{\Pr_{\hat K} \Bk{Z_{\hat K} \textup{ is compatible with } X_{I}}} \\
  &= \E_{I} \Bk{\frac
    { \left|\left\{Z_{\hat K'} \in \typicalset \ \middle\vert \ Z_{\hat K'} \text{ is compatible with } X_{I} \right\}\right|}
    { \abs{\typicalset} }} \\
  &\le \E_{I} \Bk{\frac
    { \left|\left\{Z_{\hat K'} \in Z_K \ \middle\vert \ Z_{\hat K'} \text{ is compatible with } X_{I} \right\}\right| }
    { \abs{\typicalset} }}.
\end{align*}
The content inside the expectation is identical for all $X_I$ due to symmetry, so we arbitrarily fix an $X_I$ and continue the calculation.

\paragraph*{Numerator.} We now count the number of (not necessarily typical) small blocks $Z_{\hat{K}}$ compatible with $X_{I'}$. Recall that $S^{(r)}_{i',j',k'}$ denotes the set of positions $t \in S^{(r)} \subset [2n]$ where $(I_t, J_t, K_t) = (i',j',k')$. Let
\[
  S^{(r)}_{*,*,k'} = \bigcup_{i'+j'=2^\lvl - k'} S^{(r)}_{i',j',k'}, \qquad S^{(r)}_{\plusplusk'} = \bigcup_{\substack{i'+j'=2^\lvl - k' \\ i',j' \neq 0}} S^{(r)}_{i',j',k'}.
\]
$Z_{\hat{K}}$ is compatible with $X_{I}$ if and only if:
\begin{enumerate}[label=(\alph*)]
\item For $i' + j' + k' = 2^\lvl$ where $i' = 0$ or $j' = 0$, and for all regions $r \in [6]$, $\split(\hat K, S^{(r)}_{i', j', k'}) = \splres^{(r)}_{i', j', k'}$.
\item For $k' \in [0, 2^{\lvl}]$ and $r \in [6]$, $\split(\hat K, S^{(r)}_{*, *, k'}) = \splresavg^{(r)}_{*, *, k'}$.
\end{enumerate}
(Recall that $\splresavg^{(r)}_{*, *, k'}$ is defined in \eqref{equ:avg-alph}.)
Similar to \cref{sec:global_analysis}, we transform these two conditions into the following equivalent conditions:
\begin{enumerate}[label=(\alph*)]
\item For $i' + j' + k' = 2^\lvl$ where $i' = 0$ or $j' = 0$, and for all regions $r \in [6]$, $\split(\hat K, S^{(r)}_{i', j', k'}) = \splres^{(r)}_{i', j', k'}$.
\item[(c)] For $k' \in [0, 2^{\lvl}]$ and $r \in [6]$, $\split(\hat K, S^{(r)}_{\plusplusk'}) = \splresavg^{(r)}_{\plusplusk'}$.
\end{enumerate}
All requirements of these two types are of the form $\split(\hat{K}, S) = \splres$, for which there are
$$\binom{|S|}{|S|\splres(0), |S|\splres(1), \dots, |S| \splres(2^{\lvl - 1})} = 2^{|S| \cdot H(\splres) + o(|S|)}$$
way to split those $K_t \; (t \in S)$ into $(\hat{K}_{2t - 1}, \hat{K}_{2t})$. Besides, the position sets of all these requirements, $S^{(r)}_{i', j', k'}$ for the first type and $S^{(r)}_{\plusplusk'}$ for the second type, are disjoint. (In fact, these position sets form a partition of $[2n]$.)

Note that $|S^{(r)}_{i',j',k'}| = \beta^{(r)}(i',j',k') \cdot 2A_rn$ and $|S^{(r)}_{\plusplusk'}| = \beta^{(r)}(\plusplusk') \cdot 2 A_r n$. Multiplying the number of ways together for all requirements, we obtain
  \begin{align*}
    & \phantom{{}={}} \abs{\BK{Z_{\hat K'} \in Z_K \;\middle|\; Z_{\hat K'} \textup{ is compatible with } X_I}} \\
    &= \prod_{r = 1}^6 {\prod_{\substack{i' + j' + k' = 2^\lvl \\ i' = 0 \textup{ or } j' = 0}}} \; 2^{\bigabs{S^{(r)}_{i', j', k'}} \cdot H\bigbk{\splres^{(r)}_{i',j',k'}}} \cdot {\prod_{k'=0}^{2^\lvl}} \; 2^{\bigabs{S^{(r)}_{\plusplusk'}} \cdot H\bigbk{\splresavg^{(r)}_{\plusplusk'}}} \cdot 2^{o(n)} \\
    &= \prod_{r = 1}^6 {\prod_{\substack{i' + j' + k' = 2^\lvl \\ i' = 0 \textup{ or } j' = 0}}} \; 2^{A_rn \cdot 2\beta^{(r)}(i', j', k') \cdot H\bigbk{\splres^{(r)}_{i',j',k'}}} \cdot {\prod_{k'=0}^{2^\lvl}} \; 2^{A_rn \cdot 2 \beta^{(r)}(\plusplusk') \cdot H\bigbk{\splresavg^{(r)}_{\plusplusk'}}} \cdot 2^{o(n)}. \label{eq:lambda_denominator_g}
  \end{align*}

\paragraph*{Denominator.} To calculate the denominator $\abs{\typicalset}$, like in \Cref{sec:global}, we use the fact that for all $K$ that are consistent with $\alphz^{(r)}$ for all $r \in [6]$, the sets $\abs{\typicalset}$ are of the same size.
\begin{align*}
    \abs{\typicalset} &= \frac{\abs{\bigcup_{K' \textup{ consistent with } \alphz} \typicalset[K']}}{\numzblock} \\
                      &= \prod_{r=1}^6 \left. \binom{A_r \cdot n}{[A_r n \cdot \gamma^{(r)}(k_1, k_2, k_3, k_4)]_{k_1,k_2,k_3, k_4}} \middle/ \binom{A_r \cdot n}{[A_r n \cdot \alphz^{(r)}(k')]_{k'}} \right. \\
                      &= \prod_{r=1}^6 2^{\bigbk{H(\gamma^{(r)}) - H(\alphz^{(r)})} \cdot A_r \cdot n}  \cdot 2^{o(n)}.
\end{align*}
Putting these two results together and combining with $\alphap^{(r)} = \alphap^{(r+3)}$ ($r = 1, 2, 3$) due to symmetry, we conclude the proof of this lemma.
\end{proof}

\paragraph{Probability of being holes.} Fixing a small block $Z_{\hat K}$ that is \emph{useful} for some retained triple $(X_I, Y_J, Z_K)$ (see \cref{def:useful_c}), we analyze the probability of $Z_{\hat K}$ being a hole. There are two cases:
\begin{itemize}
\item $Z_{\hat K}$ is not typical, in which case we apply the trivial bound $\Pr[Z_{\hat K} \textup{ is a hole}] \le 1$;
\item $Z_{\hat K}$ is typical. We analyze it below.
\end{itemize}
According to \cref{lemma:typical_is_nonnegl}, at least $2^{-o(n)}$ fraction of the useful blocks are typical. For every typical block $Z_{\hat K}$, we use an approach similar to \cref{sec:global_analysis} to bound its probability of being holes:

\begin{claim}
  \label{claim:hole_frac_low_c}
  Fixing a retained triple $(X_I, Y_J, Z_K)$ (it must be consistent with $\alpha^{(r)}$ in all regions $r \in [6]$) and a small typical block $Z_{\hat K} \in \typicalset$ useful for that triple, the probability of $Z_{\hat K}$ being a hole in $\T^{(2)} \vert_{X_I, Y_J, Z_K}$ (i.e., being compatible with a different remaining triple $(X_{I'}, Y_{J'}, Z_K)$) is at most $1/8$.
\end{claim}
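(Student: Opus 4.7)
The plan is to mirror the proof of \cref{claim:hole_frac_low} from \cref{sec:global_analysis}, adapted to the multi-region setting of this section, bounding the hole probability by a union bound over potential ``competing'' large X-blocks $X_{I'} \ne X_I$ whose associated triple $(X_{I'}, Y_{J'}, Z_K)$ could make $Z_{\hat K}$ compatible with two retained triples simultaneously.

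First I would record the necessary condition: for $Z_{\hat K}$ to become a hole in $\T^{(2)}\vert_{X_I, Y_J, Z_K}$, Additional Zeroing-Out Step 2 must have zeroed it out because of some other retained triple $(X_{I'}, Y_{J'}, Z_K)$ with $I' \ne I$. This forces three conditions: (a) $(X_{I'}, Y_{J'}, Z_K)$ is consistent with $\alpha^{(r)}$ in every region $r \in [6]$ (i.e., $I'$ is matchable to $K$), (b) the hash values agree, $\hashx(I') = \hashz(K)$, and (c) $Z_{\hat K}$ is compatible with the large triple $(X_{I'}, Y_{J'}, Z_K)$. A union bound over such $I'$ then gives
\[
  \Pr\!\bigl[Z_{\hat K} \textup{ is a hole}\bigr] \;\le\; \sum_{\substack{I' \ne I \\ I' \textup{ matchable to } K}} \ind\!\bigl[Z_{\hat K} \textup{ compatible with } (X_{I'}, Y_{J'}, Z_K)\bigr] \cdot \Pr\!\bigl[\hashx(I') = \hashz(K) \mid \hashx(I) = \hashz(K)\bigr].
\]

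Next I would plug in the two key quantitative inputs. By \cref{lemma:hash_independence} (with Y and Z symmetric across any of the six regions), the conditional hash probability is exactly $1/M$. For the indicator sum, here is where the \emph{typicality} hypothesis on $Z_{\hat K}$ matters: it is precisely the condition under which the definition of $\pcomp$ in this section applies uniformly. Extending the sum to all matchable $I'$ (only loosening the bound) and rewriting as an average,
\[
  \sum_{I' \textup{ matchable}} \ind\!\bigl[Z_{\hat K} \textup{ compatible with } (X_{I'}, Y_{J'}, Z_K)\bigr] \;=\; \frac{\numalpha}{\numzblock} \cdot \Pr_{I' \textup{ matchable to } K}\!\bigl[Z_{\hat K} \textup{ compatible with } X_{I'}\bigr] \;=\; \frac{\numalpha \cdot \pcomp}{\numzblock},
\]
using that every large Z-block matchable to $\alpha^{(r)}$'s in all regions lies in $\numalpha/\numzblock$ many large triples, together with the typicality-based definition of $\pcomp$.

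Combining the two pieces and invoking the choice $M \ge M_0 \ge 8\,\numalpha\,\pcomp/\numzblock$ from \cref{sec:component-analysis},
\[
  \Pr\!\bigl[Z_{\hat K} \textup{ is a hole}\bigr] \;\le\; \frac{\numalpha \cdot \pcomp}{\numzblock \cdot M} \;\le\; \frac{\numalpha \cdot \pcomp}{\numzblock} \cdot \frac{\numzblock}{8 \,\numalpha \,\pcomp} \;=\; \frac{1}{8},
\]
which is exactly the claim. The only conceptual difference from \cref{claim:hole_frac_low} is that here $\pcomp$ was defined only with respect to typical blocks, so the expected main obstacle is merely bookkeeping: one must verify that \cref{lemma:hash_independence} still applies verbatim to the tensor $\T$ (it does, since the hashing in \cref{sec:hashing} treats the six regions as a single index sequence of length $4n$), and one must make sure to extract the typicality assumption of $Z_{\hat K}$ as the hypothesis that licenses using $\pcomp$ as the compatibility probability over a uniformly random matchable $I'$. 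Once these are settled, the estimate is a direct adaptation of the global-value analysis.
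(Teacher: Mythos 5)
Your proposal is correct and follows essentially the same route as the paper's proof: the same union bound over competing matchable $X_{I'}$, the same use of \cref{lemma:hash_independence} to get the conditional probability $1/M$, the same rewriting of the indicator sum as $\numalpha \pcomp / \numzblock$ (licensed by the typicality hypothesis), and the same choice $M \ge M_0 \ge 8\numalpha\pcomp/\numzblock$ to conclude the $1/8$ bound. No gaps.
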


\begin{proof}
  A necessary condition of $Z_{\hat K}$ being a hole is that there exists $X_{I'} \ne X_I$ matchable to $Z_K$ such that (1) $Z_{\hat K}$ is compatible with $X_{I'}$; and (2) $I'$ is hashed to the same slot as $K$, i.e., $\hashx(I') = \hashz(K)$. We calculate the expected number of such $I'$ to establish an upper bound on the probability of the existence of such $I'$:
  \begin{align*}
    \Pr\Bk{Z_{\hat K} \textup{ is a hole}}
    &\le \sum_{I' \ne I} \ind\Bk{Z_{\hat K} \textup{ is compatible with } X_{I'}} \cdot \Pr[\hashx(I') = \hashz(K) \mid \hashx(I) = \hashz(K)] \\
    &= \sum_{I' \ne I} \ind\Bk{Z_{\hat K} \textup{ is compatible with } X_{I'}} \cdot \frac{1}{M} \\
    &< \sum_{I'} \ind\Bk{Z_{\hat K} \textup{ is compatible with } X_{I'}} \cdot \frac{1}{M} \\
    &= \frac{\numalpha}{\numzblock} \cdot \Pr_{I' \textup{ matchable to } K} [Z_{\hat K} \textup{ is compatible with } X_{I'}] \cdot \frac{1}{M} \\
    &= \frac{\numalpha \cdot \pcomp}{\numzblock \cdot M} \;\le\; \frac{\numalpha \cdot \pcomp}{\numzblock \cdot M_0} \;\le\; \frac{\numalpha \cdot \pcomp}{\numzblock} \cdot \frac{\numzblock}{8 \cdot \numalpha \cdot \pcomp} \;=\; \frac{1}{8},
  \end{align*}
  where the first equality above holds due to \cref{lemma:hash_independence}; the third inequality holds according to the definition of $\pcomp$ above. It is worth noting that $\pcomp$ is defined for typical blocks $Z_{\hat K}$ while this is also a premise of the current claim~\ref{claim:hole_frac_low_c}.
\end{proof}

Recall that for each of the $\numretain$ retained triples $(X_I, Y_J, Z_K)$, the fraction of non-hole blocks in $\T^{(2)} \vert_{X_I, Y_J, Z_K}$ is represented by $\fracnonholeIJK$. That is, the fraction of useful blocks in $\usefulset$ not being a hole. \Cref{lemma:typical_is_nonnegl} tells that at least $2^{-o(n)}$ fraction of the useful blocks are typical, each of which has $\Omega(1)$ probability not to be a hole due to \Cref{claim:hole_frac_low_c}. Thus, we conclude that $\fracnonholeIJK \ge 2^{-o(n)}$.

\paragraph*{Bounding the value.} We will now obtain the bound for $\valsix(T_{i,j,k}, \splresZ)$. By \Cref{claim:degen-global}, we know that $\valsix(T_{i,j,k}, \splresZ)^{6n} \geq \valthree(\T)^{3}$, i.e., $\valsix(T_{i,j,k}, \splresZ)^{2n} \geq \valthree(\T)$. In our algorithm, we degenerated $\T$ into tensor $\T^{(2)}$, in which every triple $\T^{(2)}\vert_{X_I, Y_J, Z_K}$ is a broken copy of a standard form tensor $\T^*$. The fraction of non-hole blocks in $\T^{(2)}\vert_{X_I, Y_J, Z_K}$ is $\fracnonholeIJK \ge 2^{-o(n)}$ as shown above.

Same as \Cref{sec:global}, according to \cref{cor:hole_lemma}, $\T^{(2)}$ can degenerate into
\[
  m' \defeq \floor{\frac{\sum_{(X_I, Y_J, Z_K)} \fracnonholeIJK}{4n \lvl + 2}}
  \ge \floor{\frac{\numretain \cdot 7}{8(4n \lvl + 2)}} = \frac{\numretain}{O(n)}
\]
many copies of standard form tensors $\T^*$. Together with \eqref{eq:sec7_numretain}, we know
\[
\E[m'] \geq \min\bk{\frac{\numalpha \cdot \numxblock}{\numtriple}, \, \frac{\numzblock}{\pcomp}} \cdot 2^{-o(n)}
\]
and
\begin{align*}
  & \phantom{{} \ge {}} \valsix(T_{i,j,k}, \splresZ)^{2n} \geq \valthree(\T) \\
&\geq  \min\bk{\frac{\numalpha \cdot \numxblock}{\numtriple}, \, \frac{\numzblock}{\pcomp}} \cdot \prod_{r = 1}^6 \prod_{i' + j' + k' = 2^\lvl} \valsix(T_{i',j',k'}, \tilde\alpha^{(r)}_{i', j', k'})^{2 A_r n \beta^{(r)}(i', j', k')} \cdot 2^{-o(n)}. \numberthis \label{eq:value_bound_c}
\end{align*}
We define the following quantities.
\begin{itemize}
\item $\alphabx^{(r)} \defeq 2^{H\bigbk{\alphx^{(r)}}}$, and similarly $\alphaby^{(r)} \defeq 2^{H\bigbk{\alphy^{(r)}}}$, $\alphabz^{(r)} \defeq 2^{H\bigbk{\alphz^{(r)}}}$. Then,
  \begin{align*}
    &\alphabx \;\defeq\; \lim_{n \to \infty} \numxblock^{1/2n} \;=\; \prod_{r = 1}^6 \bigbk{\alphabx^{(r)}}^{A_r/2} \;=\; \prod_{r = 1}^3 \sqrt{\alphabx^{(r)} \cdot \alphaby^{(r)}}^{A_r} \\
    \textup{and} \qquad &\alphabz \;\defeq\; \lim_{n \to \infty} \numzblock^{1/2n} \;=\; \prod_{r = 1}^6 \bigbk{\alphabz^{(r)}}^{A_r/2} \;=\; \prod_{r = 1}^3 \bigbk{\alphabz^{(r)}}^{A_r}
  \end{align*}
  are the number of large X and Z-blocks normalized by taking the $2n$-th root, respectively.
\item $\alphant \defeq \lim\limits_{n \to \infty} \numalpha^{1/2n} = \prod_{r = 1}^6 \bigbk{\alphant^{(r)}}^{A_r/2} = \prod_{r = 1}^3 \bigbk{\alphant^{(r)}}^{A_r}$ is the number of triples consistent with $\alpha^{(r)}$ in all regions $r$, where $\alphant^{(r)} \coloneqq 2^{H(\alpha^{(r)})}$.
\item With these notations, we have
  \[\displaystyle\prod_{r=1}^6 \biggbk{\max\limits_{\alpha' \in \distShareMargin[\alpha^{(r)}]} \alphant[\alpha']}^{A_r / 2} = \prod_{r=1}^3 \biggbk{\max\limits_{\alpha' \in \distShareMargin[\alpha^{(r)}]} \alphant[\alpha']}^{A_r} = \lim\limits_{n \to \infty} \numtriple^{1/2n}\]
  according to \cref{lem:numtriple_singledist}.
\item $\alphap = \lim\limits_{n \to \infty} \pcomp^{1/2n}$. Its closed form is given in \cref{lemma:pcomp_g_c}.
\item $\alphaval \defeq \prod_{r = 1}^6 \bigbk{\alphaval^{(r)}}^{A_r / 2} = \prod_{r = 1}^3 \bigbk{\alphaval^{(r)}}^{A_r}$ where $\alphaval^{(r)} \coloneqq \prod_{i' + j' + k' = 2^\lvl} \valsix(T_{i',j',k'}, \tilde\alpha^{(r)}_{i', j', k'})^{2\beta^{(r)}(i', j', k')}$.
\end{itemize}
Then, we take the $2n$-th root on both sides of \eqref{eq:value_bound_c}, obtaining
\[
 \valsix(T_{i,j,k}, \splresZ) \ge \min\bk{\frac{\alphant \cdot \alphabx}{\prod_{r=1}^3 \bigbk{\max_{\alpha' \in \distShareMargin[\alpha^{(r)}]} \alphant[\alpha']}^{A_r}}, \, \frac{\alphabz}{\alphap}} \cdot \alphaval.
 \numberthis \label{eq:numeric_conclusion_c}
\]
Once the distributions $\alpha^{(r)}$ are given, we can verify the lower bound of $\valsix(T_{i,j,k}, \splresZ)$ via \cref{alg:verify_c}.

\begin{figure}[ht]
  \begin{center}
    \begin{tcolorbox}
      \captionof{algocf}{Verifying the Lower Bound}{\label{alg:verify_c}}
      \vspace{-0.5em}
      Assume $\alpha^{(r)}$ and $A_r$ are given for $r \in [3]$. Moreover, for each level-$\lvl$ component $T_{i',j',k'}$ and region $r \in [3]$, a pair $(\splres^{(r)}_{i',j',k'}, V^{(r)}_{i',j',k'})$ is given, indicating that $\valsix(T_{i',j',k'}, \splres^{(r)}_{i',j',k'}) \ge V^{(r)}_{i',j',k'}$.
      \begin{enumerate}
      \item Compute $\alphabx, \alphabz, \alphant, \alphap, \alphaval$ according to their closed forms.
      \item For $r = 1, 2, 3$, solve the following convex optimization problem:
        \[
          \begin{array}{cc}
            \textup{maximize} & \alphant[\alpha'] = 2^{H(\alpha')} \\
            \textup{subject to} & \alpha' \in \distShareMargin[\alpha^{(r)}].
          \end{array}
        \]
        Then, compute ${\prod_{r=1}^3 \bigbk{\max_{\alpha' \in \distShareMargin[\alpha^{(r)}]} \alphant[\alpha']}^{A_r}}$ based on the optimal solutions.
      \item Calculate the lower bound of $\valsix(T_{i,j,k}, \splresZ)$ according to \eqref{eq:numeric_conclusion_c}.
      \end{enumerate}
    \end{tcolorbox}
  \end{center}
\end{figure}

\subsection{Value from Merging When \texorpdfstring{$\{i,j,k\}$}{\{i,j,k\}} Contains Zero}
\label{sec:component_value_zero}

In \cite{williams2012}, they gave the formula for computing the value of $T_{0,j,k}$ without restricted-splitting constraints:
\begin{lemma}[\cite{williams2012}]
  \label{thm:zero-i}
  For level-$\nextlvl$ component $T_{0,j,k}$ where $j\leq k$, its value is given by:
  \[ \valsix(T_{0,j,k})=\left(\sum_{b\leq j,\; 2\mid (b-j)} \binom{(j+k)/2}{b,(j-b)/2,(k-b)/2}\cdot q^b\right)^{\tau}. \]
  And $\valsix(T_{0,j,k})=\valsix(T_{0,k,j})=\valsix(T_{j,0,k})=\valsix(T_{k,0,j})=\valsix(T_{j,k,0})=\valsix(T_{k,j,0})$.
\end{lemma}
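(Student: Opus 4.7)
The plan is to recognize $T_{0,j,k}$ as (isomorphic to) a matrix multiplication tensor $\langle 1,1,N\rangle$, where $N$ equals the inner sum in the statement, and then to read off $\valsix(\langle 1,1,N\rangle)$ directly. First I would observe that in the level-$\nextlvl$ base partition the block $\tilde X_0$ consists of products $X_{i_1}\times\cdots\times X_{i_{2^\lvl}}$ with $i_1+\cdots+i_{2^\lvl}=0$, which forces $i_t=0$ for every $t$. Since $X_0=\{x_0\}$ is a singleton, $\tilde X_0$ contains a single variable $x^*\defeq x_0^{\otimes 2^\lvl}$, so $T_{0,j,k}=x^*\cdot M$ for some bilinear form $M$ in the Y and Z variables. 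This already tells us $T_{0,j,k}\cong\langle 1,1,r\rangle$ where $r$ is the bilinear complexity of $M$, so the task reduces to counting and showing that the natural term decomposition of $M$ is a rank decomposition.

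Next I would expand $T_{0,j,k}$ using the level-1 partition: each of the $2^\lvl$ factors of $\CW_q^{\otimes 2^\lvl}$ whose X-index is $0$ must be one of $T_{0,0,2}$, $T_{0,2,0}$, $T_{0,1,1}$. Writing $b$, $c$, $d$ for the number of factors of each of these three types, the constraints $b+2d=j$, $b+2c=k$, $b+c+d=2^\lvl$ (consistent since $j+k=2^{\lvl+1}$) give $c=(k-b)/2$ and $d=(j-b)/2$, valid exactly when $b\le j$ and $b\equiv j\pmod 2$. The number of ways to arrange the factor types among the $2^\lvl$ positions is the multinomial $\binom{(j+k)/2}{b,(j-b)/2,(k-b)/2}$, and each of the $b$ $T_{0,1,1}$ factors has a free internal index in $[q]$, giving $q^b$ extra choices. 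Summing over admissible $b$ reproduces the sum $N$ in the statement as the total number of rank-one terms.

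Then I would verify that different (arrangement, internal-index) choices give pairwise distinct pairs of Y/Z-variables. The resulting Y-side sequence $\vec a\in\{0,1,\ldots,q+1\}^{2^\lvl}$ reveals which positions hold which factor type---by whether $a_t\in[q]$, $a_t=q+1$, or $a_t=0$---and reveals every internal $[q]$-index directly, so the labeling map is injective. Hence $M=\sum_{t=1}^N y'_t z'_t$ with pairwise distinct Y and Z variables (after relabeling), which gives $T_{0,j,k}\cong\langle 1,1,N\rangle$. Finally, $\sym_3(\langle 1,1,N\rangle)\cong\langle 1,1,N\rangle\otimes\langle 1,N,1\rangle\otimes\langle N,1,1\rangle\cong\langle N,N,N\rangle$, so $\sym_6(\langle 1,1,N\rangle)^{\otimes n}\cong\langle N^n,N^n,N^n\rangle^{\otimes 2}$; evaluating the limsup in \Cref{def:value} on this single matrix multiplication tensor yields $\valsix(\langle 1,1,N\rangle)=N^\tau$. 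The equalities between the six permutations of $(0,j,k)$ are immediate because permuting the roles of X, Y, Z commutes with $\sym_6$. The only mildly delicate step is the injectivity/distinctness check that pins down $M$ as a genuine rank decomposition rather than a sum with collisions; this is routine combinatorial bookkeeping and is where I would spend most of the write-up.
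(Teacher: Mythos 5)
The paper does not prove this lemma; it only states it with a citation to [Williams 2012], so there is no internal proof to compare against. That said, your argument is correct and is essentially the standard derivation one finds in that reference (and implicitly already in Coppersmith--Winograd). The key observations are all in place: $\tilde X_0$ is a singleton, so $T_{0,j,k}\cong\langle 1,1,N\rangle$ for some $N$; the level-1 expansion forces each of the $2^\lvl$ factors into $\{T_{0,0,2},T_{0,2,0},T_{0,1,1}\}$; the multinomial-times-$q^b$ count of arrangements and internal indices; and, crucially, the injectivity of the Y-index sequence, which guarantees the term decomposition is a genuine rank decomposition with no collisions. The final symmetry claim is immediate from $\sym_6$.

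One small gap worth noting for a complete write-up: you explicitly establish only $\valsix(T_{0,j,k})\ge N^\tau$, by feeding the trivial identity degeneration of $\sym_6(\langle 1,1,N\rangle)^{\otimes n}\cong\langle N^{2n},N^{2n},N^{2n}\rangle$ into \Cref{def:value}. The lemma as stated asserts equality, which additionally requires the standard fact that no degeneration of $\langle N^{2n},N^{2n},N^{2n}\rangle$ into a direct sum of matrix multiplications can exceed $\tau$-volume $(N^{2n})^{3\tau}$ for $\tau\in[2/3,1]$ (the converse direction of Sch\"onhage's bound for a single matrix product). You should cite or sketch that if you want the equals sign; for the purposes of lower-bounding $\omega$ the inequality is what matters and your argument does deliver it.
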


Next, we show that a slightly modified version of it can be used to compute the restricted-splitting value $\valsix(T_{i,j,k}, \splresZ)$. Without loss of generality, we assume $i = 0$, and $(0, j, k)$ is a level-$\nextlvl$ component.

Notice that $T_{0,j,k}^{\otimes n}$ is isomorphic to $\angbk{1, 1, m}$ where $m$ is the number of Z-variables in $T_{0,j,k}^{\otimes n}$. Let $\alpha$ be the split distribution of $(0, j, k)$ with marginal $\splresZ$ (note that $\alpha$ is uniquely determined by $\splresZ$). Let $\mathcal T = T_{0,j,k}^{\otimes n}[\splresZ]$. $\mathcal T$ is also isomorphic to $\angbk{1, 1, m'}$ where $m'$ is the number of Z-variables that are consistent with $\splresZ$, i.e., have not been zeroed out. Let $\numzblock = 2^{nH(\splresZ) + o(n)}$ be the number of level-$\lvl$ Z-blocks consistent with $\splresZ$, we have
\[
  m' = \numzblock \cdot \prod_{j'+k'=2^\lvl} (\valsix(T_{0,j',k'}) \cdot \valsix(T_{0,\; j-j',\; k-k'}))^{n\alpha(0,j',k')/\tau},
\]
Then, the restricted-splitting value is exactly $\valsix(T_{0,j,k}, \splresZ) = \lim\limits_{n\to\infty} (m')^{\tau/n}$.

Thus, we just use the normal values for level-$\lvl$ components $(0,j',k')$ and $(0,j-j',k-k')$ as in \cref{thm:zero-i} to obtain the lower bound. The case where $j = 0$ is similar. When $k = 0$, the restricted-splitting value is equivalent to the value without split restriction, which can also be computed precisely by \cref{thm:zero-i}.

\section{Heuristics and Numerical results}
\label{sec:result}
In \cref{sec:global_value,sec:component_value}, we have introduced approaches to bound the value of the Coppersmith-Winograd tensor or its component, given some distributions as parameters. We have written a program to apply these approaches to give a lower bound of $\valsix(\CW_q)$, which leads to an upper bound of $\omega$. In this section, we will introduce the basic idea of our program.

We start by illustrating the framework of our optimization process. For convenience, we will use the following terminology throughout this section:

\begin{definition}
  For a level-$\lvl$ component $T_{i,j,k}$, a \emph{(restricted-splitting) value pair} is a pair $(V_{i, j, k}, \splres_{i, j, k})$ representing the lower bound $\valsix(T_{i,j,k}, \splres_{i,j,k}) \ge V_{i,j,k}$.
\end{definition}

\begin{algorithm}[!h]
  \caption{Optimization Framework}
  \label{alg:opt_frame}
  \DontPrintSemicolon
  For each level-1 component $T_{i,j,k}$ ($i + j + k = 2$), we obtain its value $\valsix(T_{i,j,k})$ via its closed form.\;
  \For{\textbf{each} level-2 componenet $T_{i,j,k}$ ($i + j + k = 4$)} {
    Choose a split distribution $\alpha^{(i,j,k)}$ of $(i, j, k)$.\;
    Obtain a value pair $(V_{i,j,k}, \splres_{i,j,k})$ (explained later). Both $V_{i,j,k}$ and $\splres_{i,j,k}$ are determined by the chosen distribution $\alpha^{(i,j,k)}$.\;
  }

  \For{$\lvl = 3$ to $\lvl^*$} {
    \For{\textbf{each} level-$\lvl$ component $T_{i,j,k}$ ($i + j + k = 2^{\lvl}$)} {
      \uIf{any of $i, j, k$ is zero} {
        Choose a split distribution $\alpha^{(i,j,k)}$ of $(i,j,k)$.\;
        Apply the merging approach in \cref{sec:component_value_zero} to obtain a value pair $(V_{i,j,k}, \splres_{i,j,k})$, using the chosen distribution $\alpha^{(i,j,k)}$ as its parameter.\;
      } \Else {
        Apply the algorithm in \cref{sec:component_value} with some chosen parameters, specifically $A_1 + A_2 + A_3 = 1$ and split distributions $\alpha^{(r)}$ ($r = 1, 2, 3$) of components $(i, j, k)$, $(j, k, i)$, and $(k, i, j)$. This results in a value pair $(V_{i,j,k}, \splres_{i,j,k})$, which depends on both the chosen parameters and the value pairs from level-$\lastlvl$ obtained in previous steps.
      }
    }
  }

  Apply the algorithm in \cref{sec:global_value} with chosen distribution $\alpha$, obtaining a lower bound $\valsix\big(\CW_q^{\otimes 2^{\lvl^*-1}}\big) \le \Vglob$, which is based on the value pairs from level-$\lvl^*$.\; \label{line:glob_val}
\end{algorithm}

\cref{alg:opt_frame}'s output is the lower bound $\Vglob$ of $\valsix(\CW_q^{\otimes 2^{\lvl^* - 1}})$ obtained in the last line, which is called the \emph{global value}. If the global value is large enough, \tauthm{} can imply $\omega \le 3 \tau$ for the $\tau$ we use. The framework of \cref{alg:opt_frame} is similar to prior works; the main difference is that we use (restricted-splitting) value pairs to connect different levels instead of values.

Note that we are choosing different approaches for different component $T_{i,j,k}$ to obtain its value pair. For the components $T_{i,j,k}$ of level $\lvl \ge 3$ where $0 \notin \midBK{i,j,k}$, we use the asymmetric hashing approach introduced in \cref{sec:component_value}. If $0 \in \midBK{i,j,k}$, we can use the ``merging approach'' to obtain a larger value, as discussed in \cref{sec:component_value_zero}. See, for example, Section 2.4 of \cite{ambainis2015fast} for details. If $\lvl = 2$, the asymmetric hashing approach is not applicable, so we use the symmetric hashing approach as in prior works.

\paragraph*{Component values from the symmetric hashing approach.}
In prior works, the symmetric hashing method is used to obtain a lower bound of $\valsix(T_{i,j,k})$, for some level-$\lvl$ component $T_{i,j,k}$ where $\lvl \ge 2$. However, we need a bound for $\valsix(T_{i,j,k}, \splres_{i,j,k})$, where the split distribution of $k$ is restricted. In fact, the symmetric hashing method is already enough to obtain such a bound.

Let $\T \defeq \sym_6(T_{i,j,k}^{\otimes n})$. The symmetric hashing method specifies a joint split distribution $\alpha$, then applies symmetric hashing to obtain some disjoint triples that are consistent with $\alpha$ (or its rotation) in each region. If we apply the same procedure on $\T' = \sym_6\bk{T_{i,j,k}^{\otimes n}[\splres_{i,j,k}]}$ where $\splres_{i,j,k} = \alphz$, the procedure is not affected. Therefore, a bound of the restricted-splitting value
\[
  \valsix(T_{i,j,k}, \splres_{i,j,k}) \ge 2^{\bk{H(\alphx) + H(\alphy) + H(\alphz)}/3} \cdot \; \smashoperator{\prod_{i'+j'+k'=2^{\lvl-1}}} \;\; \bk{\valsix(T_{i',j',k'}) \cdot \valsix(T_{i-i',\; j-j',\; k-k'})}^{\alpha(i',j',k')}
\]
is obtained.

This method is only used for level-2 components in our algorithm; the values of level-1 components are trivial.

\paragraph{Optimization problem.} \cref{alg:opt_frame} leads to the following optimization problem: maximizing $\Vglob$ (the output of the algorithm) by choosing a feasible set of parameters. Similar to prior works, we need to actually solve this optimization problem in order to give a bound on $\omega$. However, the optimization problem here is more challenging than that in the prior works.

We illustrate the difficulties by recalling the optimization framework in prior works (e.g., \cite{legall2014,alman2021}). They optimize the parameters level by level, component by component. For every component $T_{i,j,k}$, their goal is to maximize the lower bound of $\valsix(T_{i,j,k})$, namely $V_{i,j,k}$, since it is the only property of $T_{i,j,k}$ we used in subsequent levels. They solve an individual optimization (sub-)problem for maximizing $V_{i,j,k}$. Some heuristics are further applied to the subproblem so that it becomes a convex optimization problem, which enables the use of efficient solvers. The convex optimization problems are solved one by one, in the same order as \cref{alg:opt_frame}.

The first difference here is that we cannot simply decouple the optimization problem into different components and solve them separately. Assume for some component $T_{i,j,k}$ we can achieve two value pairs, $(V_{i,j,k}, \splres_{i,j,k})$ and $(V'_{i,j,k}, \splres'_{i,j,k})$. Even if $V_{i,j,k} > V'_{i,j,k}$, we cannot simply say that the first one is better, because the split distribution $\splres_{i,j,k}$ also affects subsequent levels. Even when $V_{i,j,k} = V'_{i,j,k}$, it is still hard to tell whether $\splres_{i,j,k}$ is better than $\splres'_{i,j,k}$ without providing the parameters of subsequent levels.

The second challenge is that our bounds from \cref{sec:global_value,sec:component_value} have more complicated forms than the prior works. This requires us to apply more complicated heuristics so that the objective function can become convex.

In the following subsections, we introduce our approach to address these difficulties.

\subsection{Alternating Optimization}

We still want to decouple the whole optimization problem, focusing on a single component at a time. We use a technique called \emph{alternating optimization} to achieve this. It is a simple idea that has been well studied in machine learning.

Recall that in \cref{alg:opt_frame}, for every component $T_{i,j,k}$, a set of parameters are introduced to obtain its value pair. We denote these parameters by $\param_{i,j,k}$. Similarly, in the last step, parameters $\paramglob$ are used to obtain the global value. Our final objective is the result of \cref{alg:opt_frame} running with these parameters $\midBK{\paramglob, \param_{i,j,k}}_{i + j + k = 2^{\lvl} \le 2^{\lvl^*}}$.

Suppose we already have a feasible solution $\midBK{\paramglob^{(t)}, \param^{(t)}_{i,j,k}}$. We enumerate all components $(i,j,k)$ from lower levels to higher levels, one at a time, running a subroutine to update the parameters from $\param^{(t)}_{i,j,k}$ to $\param^{(t+1)}_{i,j,k}$. (We also do this for $\paramglob$.) After the enumeration, the collection of current parameters becomes $\midBK{\paramglob^{(t)}, \param^{(t + 1)}_{i,j,k}}$. The above procedure is called an \emph{iteration} of alternating optimization. We expect the final objective to increase contiguously if we run many iterations.

We obtain the first feasible solution by using a similar approach of previous works. For each component $T_{i,j,k}$, we solve the following subproblem:
\[
  \begin{array}{ccc}
    \textup{maximize} & V_{i,j,k} \\
    \textup{subject to} & \param_{i,j,k} \textup{ is feasible}.
  \end{array}
\]
Here $V_{i,j,k}$ is regarded as a function of $\param_{i,j,k}$; its specific form depends on the type of $(i,j,k)$. We assume this subproblem can be solved efficiently. By solving it for all components $T_{i,j,k}$ from lower to higher levels, and finally for $\CW_q^{\otimes 2^{\lvl^* - 1}}$, we obtain a collection of feasible parameters. We let them be the starting parameters of the alternating optimization process, namely $\midBK{\param^{(0)}_{i,j,k}}$.

\paragraph{Objective for subproblems.} In later iterations, we specify the objective functions of the subproblems differently. The objective function should express the global value's ``demand'' for $V_{i,j,k}$ and $\splres_{i,j,k}$ through the subsequent levels. A straightforward attempt is to precompute $\partial \log \Vglob / \partial \log V_{i,j,k}$ and $\partial \log \Vglob / \partial \splres_{i,j,k}(k_l)$ for all $k_l \in [0, 2^{\lvl - 1}]$ ($i + j + k = 2^\lvl$), then set the objective function to
\[
  \frac{\partial \log \Vglob}{\partial \log V_{i,j,k}} \cdot \log V_{i,j,k} + \sum_{k_l = 0}^{2^{\lvl - 1}} \frac{\partial \Vglob}{\partial \splres_{i,j,k}(k_l)} \cdot \splres_{i,j,k}(k_l).
  \numberthis \label{eq:ideal_sub_obj}
\]
(Our programs will store the logarithm of values for easier implementation. Consequently, the partial derivatives here are taken with respect to logarithms.)
However, it incurs the following issue. Recall in \cref{sec:global_value}, our obtained bound is
\[
  \Vglob \defeq \min\bk{\frac{\alphant \cdot \alphabx}{\max_{\alpha' \in \distShareMargin} \alphant[\alpha']}, \, \frac{\alphabz}{\alphap}} \cdot \alphaval.
  \tag{\ref{eq:numeric_conclusion_g} revisited}
\]
It contains a min operation. When we take the partial derivative, only one branch of the min operation can keep its gradient information, while the other branch's gradient disappears. The same issue occurs for the algorithm in \cref{sec:component_value} as well. To develop intuition for this, imagine that $T_{i,j,k}$ is already in the last level, i.e., $i + j + k = 2^{\lvl^*}$. If we run \cref{alg:opt_frame} with the current parameters, when it computes $\Vglob$ according to \eqref{eq:numeric_conclusion_g}, the two branches of the min operation are equal. Suppose we have a chance to decrease $\alphap$ while keeping other quantities in \eqref{eq:numeric_conclusion_g} unchanged, by choosing different parameters $\param_{i,j,k}$ that result in a different $\splres_{i,j,k}$. This would probably benefit $\Vglob$ after we solve the subproblem for $\Vglob$ again (if we can slightly incrase the left branch and decrease the right branch, the resulting $\Vglob$ will increase). However, such benefit is not expressed in $\partial \log \Vglob / \partial \splres_{i,j,k}(k_l)$: If the gradient is passed through the left branch of the min operation, all $\splres_{i,j,k}(k_l)$ will have zero gradient.

To resolve this issue, we design the following ``soft-min'' function:
\[
  \mathop{\textup{softmin}}\bk{\frac{\alphant \cdot \alphabx}{\max_{\alpha' \in \distShareMargin} \alphant[\alpha']}, \, \frac{\alphabz}{\alphap}} \defeq \bk{\frac{\alphant \cdot \alphabx}{\max_{\alpha' \in \distShareMargin} \alphant[\alpha']}}^{2/3} \cdot \bk{\frac{\alphabz}{\alphap}}^{1/3}.
\]
The intuition behind this definition is that, we assume we can increase $\alphabz$ and decrease $\alphabx$ while keeping $\alphabx \alphaby \alphabz = \alphabx^2 \alphabz$ and other quantities in \eqref{eq:numeric_conclusion_g} unchanged.

Finally, we specify the objective function for subproblems as \eqref{eq:ideal_sub_obj}, but the partial derivatives are defined by replacing ``min'' with ``soft-min''.\footnote{
  The above description is only for illustrating the basic idea of this heuristic rather than defining the specific objective function. In practice, by observing how $\splres_{i,j,k}$ influences the logarithm of $\alphap$ in the subsequent level, we can derive a better estimation than directly using the gradient. This improved estimation is still concave, allowing us to use convex optimization solvers.
}
The remaining task is to solve the following optimization program for a component $T_{i,j,k}$:
\[
  \begin{array}{ccc}
    \textup{maximize} & \displaystyle\frac{\partial \log \Vglob}{\partial \log V_{i,j,k}} \cdot \log V_{i,j,k} + \sum_{k_l = 0}^{2^{\lvl - 1}} \frac{\partial \log \Vglob}{\partial \splres_{i,j,k}(k_l)} \cdot \splres_{i,j,k}(k_l) \\
    \textup{subject to} & \param_{i,j,k} \textup{ is feasible}.
  \end{array}
  \numberthis\label{eq:rem_opt_prog}
\]
We will introduce the approach in the next subsection.

\subsection{Heuristics within a Component}

The specific form of \eqref{eq:rem_opt_prog} depends on the type of $T_{i,j,k}$, or in other words, the approach used to obtain the bound $V_{i,j,k}$. Here, we illustrate the heuristics we used for the most challenging case: when $V_{i,j,k}$ is obtained using the algorithm from \cref{sec:component_value}. In this case, \eqref{eq:rem_opt_prog} is rewritten as
\[
  \begin{array}{cll}
    \textup{maximize} & \objfun(\log V_{i,j,k}, \splres_{i,j,k}) \\
    \textup{where} & \displaystyle V_{i,j,k} \defeq \min\bk{
                     \prod_{r=1}^3
                     \bk{\frac{\alphant^{(r)} \cdot \alphabx^{(r)}}{\max_{\alpha' \in \distShareMargin[\alpha^{(r)}]} \alphant'}}^{A_r}, \;
                     \prod_{r=1}^3 \bk{\frac{\alphabz^{(r)}}{\alphap^{(r)}}}^{A_r}
                     } \cdot \prod_{r=1}^3 \bk{\alphaval^{(r)}}^{A_r} \\
                      & \splres_{i,j,k} \defeq A_1 \cdot \alphz^{(1)} + A_2 \cdot \alphy^{(2)} + A_3 \cdot \alphx^{(3)} \vspace{0.5em} \\
    \textup{subject to} & A_1 + A_2 + A_3 = 1 \\
                      & A_1, A_2, A_3 \ge 0 \\
                      & \textup{split distributions } \alpha^{(1)}, \alpha^{(2)}, \alpha^{(3)} \textup{ of } (i,j,k), \, (j,k,i), \, (k,i,j).
  \end{array}
  \numberthis\label{eq:opt_comp}
\]
As discussed in the previous subsection, $\objfun$ is a non-negative linear combination of $\log V_{i,j,k}$ and $\splres_{i,j,k}$.

Below, we introduce several heuristics to solve \eqref{eq:opt_comp}. We first divide the parameters into two groups, $\midBK{A_1, A_2, A_3}$ and $\midBK{\alpha^{(1)}, \alpha^{(2)}, \alpha^{(3)}}$, then apply the alternating optimization technique. That is, we first fix $A_1, A_2, A_3$ as constants and optimize $\alpha^{(1)}, \alpha^{(2)}, \alpha^{(3)}$, then do it conversely; this process is repeated multiple times to refine our solution.

\paragraph{Optimize $A_1, A_2, A_3$.} This step is relatively easy, as we can rewrite
\begin{gather*}
  \log V_{i,j,k} = \min \bk{\sum_{r=1}^3 c_r \cdot A_r, \, \sum_{r=1}^3 c'_r \cdot A_r} + \sum_{r=1}^3 c''_r \cdot A_r
\end{gather*}
as a concave function of $A_1, A_2, A_3$, where $c_r, c'_r, c''_r$ are constants that only depend on $\alpha^{(r)}$. Moreover, $\splres_{i,j,k}$ is linear in the parameters $A_1, A_2, A_3$. So $\objfun$ is also concave in $A_1, A_2, A_3$ since it is a non-negative linear combination of two concave functions. This means \eqref{eq:opt_comp} is a concave maximization program (it is easy to check that all constraints on $A_1, A_2, A_3$ are linear) which can be solved precisely and efficiently.

\newcommand{\old}{(\textup{old})}
\newcommand{\new}{(\textup{new})}

\paragraph{Optimize $\alpha^{(1)}, \alpha^{(2)}, \alpha^{(3)}$.} In this step, \eqref{eq:opt_comp} is no longer concave, so we need to apply several heuristics to change its form. Let $\alpha^{(r)\old}$ and $\alpha^{(r)\new}$ ($r = 1, 2, 3$) represent the parameters before and after the current alternating optimization step, i.e., $\alpha^{(r)\old}$ is predetermined and $\alpha^{(r)\new}$ is what we want to optimize. The basic intuition for our heuristics is that we expect $\alpha^{(r)\new}$ to be close to $\alpha^{(r)\old}$, which means several complicated quantities in \eqref{eq:opt_comp} will not change too much. Then we compute these quantities with $\alpha^{(r)\old}$ and regard them as constants. Specifically:
\begin{enumerate}
\item Let $R \defeq \prod_{r=1}^3 (\alphant^{(r)} / \max_{\alpha' \in \distShareMargin[\alpha^{(r)}]} \alphant')^{A_r}$. This factor appears in the form of $V_{i,j,k}$. We calculate this formula with the old parameters $\alpha^{(r)\old}$, and denote the result by $R^{\old}$. Then, we replace $R$ with the fixed constant $R^{\old}$ in \eqref{eq:opt_comp}.
\item Similarly, we calculate $\alphap^{(r)}$ with the old parameters $\alpha^{(r)\old}$, and denote the result by $\alphap^{(r)\old}$. We replace the occurrence of $\alphap^{(r)}$ with $\alphap^{(r)\old}$ in \eqref{eq:opt_comp}.
\end{enumerate}
With the two heuristics above, we estimate $\log V_{i,j,k}$ by
\[
  \log V_{i,j,k} \approx \min \bk{\log R^{\old} + \sum_{r=1}^3 A_r \cdot \log \alphabx^{(r)}, \; \sum_{r=1}^3 A_r (\log \alphabz^{(r)} - \log \alphap^{(r)\old})} + \sum_{r=1}^3 A_r \cdot \log \alphaval^{(r)}.
\]
Note that $\log \alphabx = H(\alphx)$ is a concave function of $\alpha$. Therefore, one can check that our approximation for $\log V_{i,j,k}$ is concave in all the parameters $\alpha^{(1)}, \alpha^{(2)}, \alpha^{(3)}$. Then we can solve \eqref{eq:opt_comp} via convex optimization tools.

Inspired by \cite{alman2021}, we can further refine $\alpha^{(1)}, \alpha^{(2)}, \alpha^{(3)}$ by doing an optimization while fixing their marginals $\alphx^{(r)}, \alphy^{(r)}, \alphz^{(r)}$. Since $\max_{\alpha' \in \distShareMargin[\alpha^{(r)}]} \alphant'$ is determined by the marginal distributions, the factor $R$ becomes a concave function of the parameters, so we no longer use the first heuristic in this further adjustment.\footnote{We also replaced ``min'' with ``soft-min'' during this adjustment step, because it empirically improves the result.}

\subsection{Numerical Results}

We wrote a program to perform the above optimization procedure. The program is written in MATLAB~\cite{MATLAB:2022} while CVX package~\cite{cvx,gb08} and Mosek software~\cite{mosek} are used for convex optimization. We analyzed the eighth tensor power of the Coppersmith-Winograd tensor. For $\tau = \ourbound / 3$, the program shows $\valsix(\CW_5^{\otimes 8}) > 7^8 + 9$. By \cref{thm:Schonhage}, we have $\omega \le \ourbound$. We also have written a standalone program to verify this result, which recursively verifies the bounds of restricted-splitting values.\footnote{Data and code are available at \url{https://osf.io/dta6p/}. Parameters for the second-power result is given in \cref{sec:level-2-global}.}

Our program differs from \cref{alg:opt_frame} in that it optimizes multiple value pairs for every component instead of just one, allowing different components in the subsequent level to use different value pairs. In previous works, a single bound $V_{i,j,k}$ is obtained for each component $T_{i,j,k}$: If there were multiple valid bounds, we could retain the best one and discard the others. However, this is not the case for (restricted-splitting) value pairs, so we made this adjustment. As a result, the optimization process becomes significantly slower when analyzing higher powers, which is why our analysis stops at the eighth power.

When analyzing the second and fourth tensor power of the Coppersmith-Winograd tensor using our asymmetric hashing approach, we also notice the improvement from previous works, as shown in \cref{table:result}.

\begin{table}[!h]
  \captionsetup{font=small}
  \caption{Upper bounds of $\omega$ obtained by analyzing the $m$-th tensor power of the CW tensor.}
  \label{table:result}
  \centering
  \begin{tabular}{|c|c|c|c|}
    \hline
    $m$ & Bounds from Our Methods & Bounds from Prior Works & References \\
    \hline
    1 & N/A & 2.387190 & \cite{coppersmith1987matrix} \\
    \hline
    2 & 2.374631 & 2.375477 & \cite{coppersmith1987matrix} \\
    \hline
    4 & 2.371919 & 2.372927 & \cite{stothers2010, williams2012, legall2014} \\
    \hline
    8 & 2.371866 & 2.372865 & \cite{williams2012, legall2014} \\
    \hline
    16 & N/A & 2.372864 & \cite{legall2014} \\
    \hline
    32 & N/A & 2.372860 & \cite{alman2021} \\
    \hline
  \end{tabular}
\end{table}

We think a slightly better bound can be obtained by analyzing higher tensor powers.

\bibliographystyle{alphaurl}
\bibliography{sample}

\appendix
\section{Missing Proofs from Section 4}
\label{appendix:missing_proofs}

\MoreValues*

\begin{proofof}{\Cref{lem:non-rot-values}}
  (a) and (b) are clear because $T_{0,0,4} \cong \angbk{1,1,1}$ and $T_{0,1,3} \cong \angbk{1,1,2q}$.

  To show (c), we let $\alpha^{(0,2,2)}$ be a split distribution for the component $(0,2,2)$:
  \begin{align*}
    \alpha^{(0,2,2)}(0, 1, 1) &= a'', \\
    \alpha^{(0,2,2)}(0, 0, 2) &= \alpha^{(0,2,2)}(0, 2, 0) = b'',
  \end{align*}
  where $a'' + 2b'' = 1$. We can write the corresponding Z-marginal distribution
  \[
    \alphz^{(0,2,2)}(0) = \alphz^{(0,2,2)}(2) = b'', \qquad \alphz^{(0,2,2)}(1) = a''.
  \]
  Consider the tensor in which all Z-blocks inconsistent with $\alphz^{(0,2,2)}$ are zeroed-out, i.e., $T_{0,2,2}^{\otimes m}[\alphz^{(0,2,2)}]$. It is isomorphic to a matrix multiplication tensor where all remaining Z-variables are utilized:
  \[
    T_{0,2,2}^{\otimes m}[\alphz^{(0,2,2)}] \cong \angbk{1, \; 1, \; \binom{m}{a''m, \, b''m, \, b''m} q^{2a''m}}.
  \]
  The size of this matrix multiplication tensor on the right side is maximized at $b'' = 1 / (2 + q^2)$, which leads to the lower bound of the \emph{non-rotational restricted-splitting value}
  \[
    V_\tau^{\textup{(nrot)}}(T_{0,2,2}, \tilde\alpha_\B) \ge \lim_{m\to\infty} \bk{\binom{m}{a''m, \, b''m, \, b''m} q^{2a''m}}^{\tau/m} = (q^2+2)^\tau,
  \]
  where $\tilde\alpha_\B$ is the optimal choice of $\alphz^{(0,2,2)}$ which is specified in \cref{eq:tilde_B}. Thus (c) holds.

  To show (d), we analyze $\sym_3(T_{1,1,2}^{\otimes m})$ similarly to \cite{coppersmith1987matrix}. Let $\alpha^{(1, 1, 2)}$ be the split distribution of component $(1, 1, 2)$:
\begin{align*}
  \alpha^{(1, 1, 2)}(0, 1, 1) = \alpha^{(1, 1, 2)}(1, 0, 1) &= a', \\
  \alpha^{(1, 1, 2)}(1, 1, 0) = \alpha^{(1, 1, 2)}(0, 0, 2) &= b', 
\end{align*}
where $2a' + 2b' = 1$. To see its restricted-splitting value, we degenerate $\sym_3(T_{1,1,2}^{\otimes m}[\alphz^{(1,1,2)}])$ to independent matrix multiplication tensors, where
\[
  \alphz^{(1,1,2)}(0) = \alphz^{(1,1,2)}(2) = b', \qquad \alphz^{(1,1,2)}(1) = 2a'
\]
is the Z-marginal split distribution of $\alpha^{(1,1,2)}$.

Let $\T$ be the tensor obtained from $T_{1,1,2}^{\otimes m}$ by zeroing out all blocks inconsistent with the marginal distributions of $\alpha^{(1,1,2)}$. $\T$ is a subtensor of $T_{1,1,2}^{\otimes m}[\alphz^{(1,1,2)}]$; it can be obtained by zeroing out X and Y-blocks from $T_{1,1,2}^{\otimes m}[\alphz^{(1,1,2)}]$. Then, we take the 3-symmetrization of $\T$, denoted by $\sym_3(\T) \defeq \T \otimes \T^{\rot} \otimes \T^{\rot\,\rot}$, which is a subtensor of $\sym_3(T_{1,1,2}^{\otimes m}[\alpha^{(1,1,2)}])$. Next, symmetric hashing method is applied on $\sym_3(\T)$ to obtain $\binom{m}{m/2}^2\binom{m}{2a'm, \, b'm, \, b'm}\cdot 2^{-o(m)}$ disjoint triples. Each triple is isomorphic to
\[
  \angbk{q^{(4a' + 2b')m},\, q^{(4a' + 2b')m},\, q^{(4a' + 2b')m}}.
\]
This leads to the lower bound
\[
  V_\tau^{(3)}(T_{1,1,2}, \tilde\alphz^{(1,1,2)}) \ge \bk{\binom{m}{m/2}^2 \binom{m}{2a'm,\, b'm,\, b'm}}^{1/(3m)} \cdot q^{(4a' + 2b')\tau}.
\]
From \cite{coppersmith1987matrix}, we know that when $b' = 1 / (2 + q^{3\tau})$, this is optimized at $V_\tau^{(3)}(T_{1,1,2}, \alphz^{(1,1,2)}) \ge 2^{2/3} q^{\tau} (q^{3\tau} + 2)^{1/3}$. In this case, $\alphz^{(1,1,2)} = \tilde\alpha_\A$, where $\tilde\alpha_\A$ is defined in \cref{eq:tilde_A}. This implies (d).
\end{proofof}

\AdditionalZ*

\begin{proofof}{\Cref{lem:prop1}}
  We prove by contradiction. Let $(X_{\hat I}, Y_{\hat J}, Z_{\hat K})$ be a remaining level-1 triple. According to the zeroing-out rules, we know
  \begin{align}
    \Split(\hat K, S_{2}) &= \tilde\alpha_{\textup{average}}, \label{eq:split_avg_correct} \\
    \Split(\hat I, S_{2,0,2}) &= \tilde\alpha_\B^{(\revtext)}, \label{eq:split_x_correct} \\
    \Split(\hat J, S_{0,2,2}) &= \tilde\alpha_\B^{(\revtext)}. \label{eq:split_y_correct}
  \end{align}
  In component $(2,0,2)$, the split distribution in X leads to the split distribution in Z. So from \cref{eq:split_x_correct} we know $\split(\hat K, S_{2,0,2}) = \tilde\alpha_\B$. Similarly we have $\split(\hat K, S_{0,2,2}) = \tilde\alpha_\B$. Combined with \cref{eq:split_avg_correct}, we infer that $\split(\hat K, S_{1,1,2}) = \tilde\alpha_\A$. (Here we assume $\alpha(1,1,2) > 0$; otherwise the lemma is trivial.) These split distributions showed that $Z_{\hat K}$ must be compatible with $X_I, Y_J, Z_K$.
\end{proofof}

\EqualSize*

\begin{proofof}{\Cref{lem:equal-size}}
  The degeneration of $\T^*$ provided by \eqref{eq:value_LB} is formed by combining the degenerations for several factors, as shown in \Cref{lem:non-rot-values}. For each factor, it is easy to verify that its degeneration in our analysis is a zeroing out and produces a direct sum of equal-sized matrix multiplication tensors. Therefore, the degeneration of $\T^*$, as the tensor product of the zeroing-outs of these factors, is also a zeroing-out which produces equal-sized matrix multiplication tensors.
\end{proofof}

\section{Heuristics for Section 4}

\label{appendix:heuristics}
Consider the following approximated optimization program:
\begin{align*}
  \begin{array}{cl}
    \text{maximize} & \min(\alphabx, \alphabz / \alphap^*) \, \alphaval \\
    \text{subject to} & \alpha \textup{ is defined by parameters $a, b, c, d, e$}.
  \end{array}
\end{align*}
It has the following differences from the original program:
\begin{itemize}
\item The ratio $\alphant / \max_{\alpha' \in \distShareMargin} \alphant'$ is removed from the objective. This ratio is always close to 1 for the optimal $\alpha$ in practice, but it makes the optimization extremely difficult, hence we remove it to simplify the objective function.
\item The feasibility constraint $\alphabx \le \alphabz / \alphap$ is transformed to a minimum in the objective function. In fact, this is not a heuristic: From \cref{sec:global,sec:component} we can see that such modification also leads to valid lower bounds. In this section we did not prove such complicated version of the bound since we aim for better presentation.
\item $\alphap$ is replaced by a fixed reference value $\alphap^*$ (initially it is a function of optimization parameters). It is because $\alphap$ is a complicated function of $a, b, c, d, e$ which stops us to apply convex optimization tools. By replacing it with a fixed number, the new objective become logarithmic concave.
\end{itemize}
Imagine that we already know a distribution $\alpha^{(0)}$ that is close to the optimal value. By substituting $\alphap^*$ by $\alphap^{(0)}$ and solving the approximated program, we can hopefully get a better solution $\alpha^{(1)}$ than $\alpha^{(0)}$. Then, we substitute $\alphap^*$ by $\alphap^{(1)}$ and solve the approximate program again, obtaining its optimal solution $\alpha^{(2)}$. As shown in \cref{alg:optim_sec_power}, we repeat such procedure for a few iterations and take the output of the last iteration as our result solution. The initial solution $\alpha^{(0)}$ can be obtained by solving the approximated program with $\alphap^* = 1$.

\begin{algorithm}[h]
  \caption{Optimization with Heuristics}
  \label{alg:optim_sec_power}
  \DontPrintSemicolon
  $\alpha^{(0)} \gets$ the optimal solution of the approximated program where $\alphap^* = 1$.\;
  \For(\Comment{In practice, $t_{\textup{max}} \le 5$ is sufficient.}){$t \gets 1$ to $t_{\textup{max}}$} {
    $\alpha^{(t)} \gets$ the optimal solution of the approximated program where $\alphap^* = \alphap[\alpha^{(t - 1)}]$.\;
  }
  \label{line:perturb} Slightly perturb the solution $\alpha = \alpha^{(t_{\textup{max}})}$ so that $\alphabx \le \alphabz / \alphap$ holds.\;
  Run \cref{alg:verify_sec_power} on $\alpha$ to obtain the final bound.
\end{algorithm}

Practically, for the solution $\alpha^{(t_{\textup{max}})}$ before perturbation in line~\ref{line:perturb}, we observe $\alphabx \approx \alphabz / \alphap$. Therefore a perturbation is enough to satisfy the constraint without much loss on the solution's quality. In practice, this perturbation step is done manually.

\section{Mising Proofs from Section 7}
\label{appendix:missing_proof_7}
\newcommand{\irjrkr}{\complementshape}

\typicalnonneg*

\begin{proofof}{\Cref{lemma:typical_is_nonnegl}}
  Fix the level-$\lvl$ triple $(X_I, Y_J, Z_K)$, and recall that $S^{(r)}_{i', j', k'}$ is the set of positions $t$ in region $r$ such that $(I_t, J_t, K_t) = (i', j', k')$. We let $S_{i', j', k'}^{(r,\textup{L})} \defeq S_{i', j', k'}^{(r)} \cap (2 \mathbb{N} + 1)$ and $S_{i', j', k'}^{(r, \textup{R})} \defeq S_{i', j', k'}^{(r)} \cap 2\mathbb{N}$ represent the odd and even elements in $S_{i',j',k'}^{(r)}$, respectively. Every $t$ in $S^{(r, \textup{L})}_{i', j', k'}$ is odd, so the corresponding component $(i', j', k')$ on that position appears as the \emph{left part} of some level-$\nextlvl$ component $(i^{(r)}, j^{(r)}, k^{(r)})$; similarly, every $t \in S^{(r, \textup{R})}_{i', j', k'}$ appears as the \emph{right part}. These two sets form a partition $S^{(r)}_{i', j', k'} = S^{(r, \textup{L})}_{i', j', k'} \sqcup S^{(r, \textup{R})}_{i', j', k'}$.

  Recall that we say $Z_{\hat K} \in Z_K$ is useful for $(X_I, Y_J, Z_K)$ when the Z-marginal split distribution in $S^{(r)}_{i',j',k'}$ is the same as $\splres^{(r)}_{i', j', k'}$. Further, we say $Z_{\hat K}$ is \emph{strongly useful} for $(X_I, Y_J, Z_K)$, if its marginal split distribution in $S^{(r, \textup{L})}_{i', j', k'}$ and $S^{(r, \textup{R})}_{i', j', k'}$ are both identical to $\splres^{(r)}_{i', j', k'}$. It is a sufficient condition of usefulness. The set of $Z_{\hat K} \in Z_K$ that are strongly useful for $(X_I, Y_J, Z_K)$ is denoted by $\strongset$.

  The concept of strong usefulness has a tight connection with typicalness. We will show
  \begin{equation}
    \label{eq:typical_is_nonnegl_2side}
    \abs{\strongset \cap \typicalset} \ge \abs{\strongset} \cdot 2^{-o(n)}
  \end{equation}
  and
  \begin{equation}
    \label{eq:2side_is_nonnegl}
    \abs{\strongset} \ge \abs{\usefulset} \cdot 2^{-o(n)}.
  \end{equation}
  It is clear that these two inequalities together can imply \Cref{lemma:typical_is_nonnegl}.
  Next, we first show \eqref{eq:typical_is_nonnegl_2side} by calculating both sides of it.

  \bigskip

  \paragraph{Calculate $\normalfont |\strongset|$.} Below is an equivalent condition of $Z_{\hat K} \in Z_K$ being strongly useful:
  \begin{itemize}
  \item For each region $r \in [6]$ and level-$\lvl$ component $(i', j', k')$, $\split(\hat K, S^{(r, \textup{L})}_{i', j', k'}) = \splres_{i', j', k'}$, while\\ $\split(\hat K, S^{(r, \textup{R})}_{\complementshape}) = \splres_{\complementshape}$.
  \end{itemize}
  For every $r$ and $(i', j', k')$, this condition implies two constraints of the form $\split(\hat K, S) = \splres$, for which we have $\binom{|S|}{[|S| \cdot \splres(k'_l)]_{k'_l}} = 2^{|S| \cdot H(\splres) + o(n)}$ ways to split the Z-indices in $S$. All these constraints apply on disjoint position sets. Multiplying the number of ways over all constraints, we obtain the number of strongly useful blocks $Z_{\hat K} \in Z_K$:
  \begin{align*}
    &\phantom{{}={}} |\strongset| \\
    &= \prod_{r=1}^6 \prod_{i' + j' + k' = 2^{\lvl}} 2^{\Bigbk{\bigabs{S^{(r, \textup{L})}_{i', j', k'}} \cdot H\bigbk{\splres_{i', j', k'}}}} \cdot 2^{\Bigbk{\bigabs{S^{(r, \textup{R})}_{\complementshape}} \cdot H\bigbk{\splres_{\complementshape}}}} \cdot 2^{o(n)} \\
    &= \prod_{r=1}^6 \prod_{i' + j' + k' = 2^{\lvl}} 2^{\Bigbk{\bigabs{S^{(r, \textup{L})}_{i', j', k'}} \cdot \bigbk{H\bigbk{\splres_{i', j', k'}} + H\bigbk{\splres_{\complementshape}}}}} \cdot 2^{o(n)}.
      \numberthis \label{eq:abs_strongset}
  \end{align*}
  (The last equality holds since $\midabs{S^{(r, \textup{L})}} = \midabs{S^{(r, \textup{R})}}$.)

  \paragraph{Calculate $\normalfont |\strongset \cap \typicalset|$.} Consider the following sufficient condition for a block $Z_{\hat K} \in Z_K$ to be both strongly useful and typical:
  \begin{itemize}
  \item For each region $r \in [6]$ and level-$\lvl$ component $(i', j', k')$, and for all $k_1 + k_2 = k'$ and $k_3 + k_4 = k^{(r)} - k'$, there is
    \begin{gather*}
      \abs{\BK{
          t \in S^{(r, \textup{L})}_{i', j', k'} \mymiddle (\hat K_{2t-1}, \hat K_{2t}, \hat K_{2t + 1}, \hat K_{2t + 2}) = (k_1, k_2, k_3, k_4)
        }} \\
      = \splres^{(r)}_{i', j', k'}(k_1) \cdot \splres^{(r)}_{\complementshape}(k_3) \cdot \bigabs{S^{(r, \textup{L})}_{i', j', k'}}.
      \numberthis \label{eq:suf_cond}
    \end{gather*}
    That is, the frequency of occurrence of $(\hat K_{2t-1}, \hat K_{2t}, \hat K_{2t + 1}, \hat K_{2t + 2})$ matches the distribution $\splres^{(r)}_{i', j', k'} \times \splres^{(r)}_{\complementshape}$.
  \end{itemize}
  Suppose some block $Z_{\hat K}$ satisfies the above condition. By summing up \eqref{eq:suf_cond} over all $(i', j', k')$ and all $r \in [6]$, we see $Z_{\hat K}$ is typical. Moreover, the marginals of $\splres^{(r)}_{i', j', k'} \times \splres^{(r)}_{\complementshape}$ on $k_1$ and $k_3$ are identical to $\splres^{(r)}_{i', j', k'}$ and $\splres^{(r)}_{\complementshape}$, respectively, which implies that $Z_{\hat K}$ is strongly useful.

  Next, we count the number of $Z_{\hat K} \in Z_K$ satisfying \eqref{eq:suf_cond} to form a lower bound of $|\strongset \cap \typicalset|$. Similar to above, the constraints from different $(i', j', k')$ and $r$ are applying on disjoint sets of positions. For each of the constraints, the number of ways to split Z-indices $(K_t, K_{t + 1}) = (k', k^{(r)} - k')$ into $(k_1, k_2, k_3, k_4)$ equals
  \begin{gather*}
    \bigbinom{\bigabs{S^{(r, \textup{L})}_{i', j', k'}}}{\Bk{\bigabs{S^{(r, \textup{L})}_{i', j', k'}} \cdot \splres^{(r)}_{i', j', k'}(k_1) \cdot \splres^{(r)}_{\complementshape}(k_3)}_{k_1, k_3}} \\
    = 2^{\bigabs{S^{(r, \textup{L})}_{i', j', k'}} \cdot H\bigbk{\splres^{(r)}_{i', j', k'} \times \splres^{(r)}_{\complementshape}} + o(n)}
    = 2^{\bigabs{S^{(r, \textup{L})}_{i', j', k'}} \cdot \Bigbk{H\bigbk{\splres^{(r)}_{i', j', k'}} + H\bigbk{\splres^{(r)}_{\complementshape}}}} \cdot 2^{o(n)}.
  \end{gather*}
  Multiplying these numbers over all $(i', j', k')$ and $r$, we obtain
  \begin{align*}
    \phantom{{}\ge{}} |\strongset \cap \typicalset|
    &\ge
      \prod_{r=1}^6 \prod_{i' + j' + k' = 2^{\lvl}} 2^{\Bigbk{\bigabs{S^{(r, \textup{L})}_{i', j', k'}} \cdot \bigbk{H\bigbk{\splres_{i', j', k'}} + H\bigbk{\splres_{\complementshape}}}}} \cdot 2^{o(n)},
  \end{align*}
  which equals \eqref{eq:abs_strongset} up to a negligible factor $2^{o(n)}$. Thus, \cref{eq:typical_is_nonnegl_2side} holds.

  \paragraph{Proof of \cref{eq:2side_is_nonnegl}.} A similar argument is applied to show \eqref{eq:2side_is_nonnegl}. We start by transforming \eqref{eq:abs_strongset} into the following equivalent form:
  \begin{align*}
    &\phantom{{}={}} |\strongset| \\
    &= \prod_{r=1}^6 \prod_{i' + j' + k' = 2^{\lvl}} 2^{\Bigbk{\bigabs{S^{(r, \textup{L})}_{i', j', k'}} \cdot H\bigbk{\splres_{i', j', k'}}}} \cdot 2^{\Bigbk{\bigabs{S^{(r, \textup{R})}_{\complementshape}} \cdot H\bigbk{\splres_{\complementshape}}}} \cdot 2^{o(n)} \\
    &= \prod_{r=1}^6 \prod_{i' + j' + k' = 2^{\lvl}} 2^{\bigbk{\bigabs{S^{(r, \textup{L})}_{i', j', k'}} + \bigabs{S^{(r, \textup{R})}_{i', j', k'}}} \cdot H\bigbk{\splres^{(r)}_{i', j', k'}}} \cdot 2^{o(n)}.
      \numberthis \label{eq:abs_strongset_2}
  \end{align*}
  Then, we count the number of useful blocks $Z_{\hat K} \in Z_K$:
  \begin{align*}
    |\usefulset| &= \prod_{r=1}^6 \prod_{i' + j' + k' = 2^{\lvl}}
    \binom{|S^{(r)}_{i', j', k'}|}{[|S^{(r)}_{i', j', k'}| \cdot \splres^{(r)}_{i', j', k'}(k_l)]_{k_l}} \\
                 &= \prod_{r=1}^6 \prod_{i' + j' + k' = 2^\lvl} 2^{\bigabs{S^{(r)}_{i', j', k'}} \cdot H\bigbk{\splres^{(r)}_{i', j', k'}}} \cdot 2^{o(n)} \\
                 &= |\strongset| \cdot 2^{o(n)}.
  \end{align*}
  The last equality holds due to \eqref{eq:abs_strongset_2} and $\bigabs{S^{(r)}_{i', j', k'}} = \bigabs{S^{(r, \textup{L})}_{i', j', k'}} + \bigabs{S^{(r, \textup{R})}_{i', j', k'}}$. Thus, \eqref{eq:2side_is_nonnegl} holds. Combining \eqref{eq:2side_is_nonnegl} with \eqref{eq:typical_is_nonnegl_2side}, we conclude the proof.
\end{proofof}

\end{document}